\def\llncs{0}
\def\mnotes{0}
\def\colorson{0}
\definecolor{orange}{RGB}{127,127,255}
\colorlet{green}{green!80!black}
\newcommand{\mnote}[1]{}
\newcounter{mynotes}
\newcommand{\mnote}[1]{\addtocounter{mynotes}{1}{{\bf !}}%
\marginpar{\scriptsize  {\arabic{mynotes}.\ {\sf \textcolor{red}{#1}}}}}
\newenvironment{todo}{\noindent
\sf \footnotesize \textcolor{blue}{To go here}:
\begin{CompactItemize}\color{blue}}
{\color{black}\end{CompactItemize}\rm \normalsize}
\newenvironment{CompactEnumerate}{
  \vspace{-5pt}
  \begin{list}{\arabic{enumi}.}
  {%
      \usecounter{enumi} %
      \setlength{\leftmargin}{10pt}%
      \setlength{\itemsep}{0em}
      }}
  {\end{list}}
\newenvironment{CompactItemize}{
  \vspace{-10pt}
 \begin{list}{$\bullet$}{%
      \setlength{\leftmargin}{8pt}%
      \setlength{\itemsep}{0em}
      }}
{\end{list}}
\def\thm@space@setup{\thm@preskip=3pt
\thm@postskip=2pt}
\newtheoremstyle{newstyle}      
{} %Aboveskip 
{} %Below skip
{\mdseries} %Body font e.g.\mdseries,\bfseries,\scshape,\itshape
{} %Indent
{\bfseries} %Head font e.g.\bfseries,\scshape,\itshape
{.} %Punctuation afer theorem header
{ } %Space after theorem header
{} %Heading
\newtheorem{theorem}{Theorem}%[section]
\newtheorem{lemma}[theorem]{Lemma}
\newtheorem{claim}[theorem]{Claim}
\newtheorem{fact}[theorem]{Fact}
\newtheorem{corollary}[theorem]{Corollary}
\newtheorem{prob}{Problem}
\newtheorem*{theorem*}{Theorem}
\newtheorem{definition}[theorem]{Definition} %[section]
\theoremstyle{definition}
\theoremstyle{remark}
\newtheorem{myremark}{Remark} 
\newenvironment{remark}{\begin{myremark}}{\end{myremark}}
\newtheorem{myexample}{Example}
\newtheorem{mygraph}{Graph}
\spnewtheorem{protocol}{Protocol}{\bfseries}{\rmfamily}
\spnewtheorem{algm}{Algorithm}{\bfseries}{\rmfamily}
\spnewtheorem{fact}{Fact}{\bfseries}{\rmfamily}
\spnewtheorem{myclaim}{Claim}{\bfseries}{\itshape}
\newcommand{\thmref}[1]{\textcolor{red}{Theorem}~\ref{thm:#1}}
\newcommand{\lemref}[1]{\textcolor{red}{Lemma}~\ref{lem:#1}}
\newcommand{\factref}[1]{\textcolor{red}{Fact}~\ref{fact:#1}}
\newcommand{\claimref}[1]{\textcolor{red}{Claim}~\ref{claim:#1}}
\newcommand{\ldefref}[1]{\textcolor{red}{Definition}~\ref{def:#1}}
\newcommand{\secref}[1]{\textcolor{red}{Section}~\ref{sec:#1}}
\newcommand{\appref}[1]{\textcolor{red}{Appendix}~\ref{app:#1}}
\newcommand{\figref}[1]{\textcolor{red}{Figure}~\ref{fig:#1}}
\newcommand{\eqnref}[1]{\textcolor{red}{equation~(\ref{eq:#1})}}
\newcommand{\comment}[1]{}
\newcommand{\ignore}[1]{}
\newcommand{\h}[1]{\widehat{#1}}
\newcommand{\tr}{\mbox{Tr}}
\newcommand{\brak}[1]{{\langle {#1} \rangle}}
\newcommand{\set}[1]{\left\{ {#1} \right\}}
\newcommand{\paren}[1]{\left( {#1} \right)}
\newcommand{\sparen}[1]{\left[ {#1} \right]}
\newcommand{\eps}{\epsilon}
\DeclareMathOperator{\poly}{poly}
\renewcommand{\gets}{\leftarrow}
\newcommand{\E}{\mathbb{E}}
\newcommand{\I}{\mathbb{I}}
\newcommand{\bL}{\mathbb{L}}
\newcommand{\N}{\mathbb{N}}
\newcommand{\p}{\mathsf{Pr}}
\newcommand{\R}{\mathbb{R}}
\newcommand{\bS}{\mathbb{S}}
\newcommand{\bZ}{\textbf{Z}}
\newcommand{\bX}{\textbf{X}}
\newcommand{\by}{\textbf{y}}
\newcommand{\bx}{\textbf{x}}
\newcommand{\cA}{\mathcal{A}}
\newcommand{\cD}{\mathcal{D}}
\newcommand{\cE}{\mathcal{E}}
\newcommand{\cG}{\mathcal{G}}
\newcommand{\cL}{\mathcal{L}}
\newcommand{\cN}{\mathcal{N}}
\newcommand{\cP}{\mathcal{P}}
\newcommand{\cR}{\mathcal{R}}
\newcommand{\cV}{\mathcal{V}}
\newcommand{\beq}{\begin{equation}}
\newcommand{\eeq}{\end{equation}}
\newcommand{\bml}{{\begin{multline}}}
\newcommand{\eml}{{\end{multline}}}
\renewcommand{\paragraph}{%
 \@startsection{paragraph}{4}%
  {\z@}{1.5ex \@plus 1ex \@minus .2ex}{-.2em}%
  {\small\normalsize\bfseries}%
}
\newcommand{\vertical}[1]{{\left\vert\kern-0.25ex\left\vert\kern-0.25ex\left\vert #1  \right\vert\kern-0.25ex\right\vert\kern-0.25ex\right\vert}}
\newcommand{\bLambda}{\mathbf{\Lambda}}
\newcommand{\bSigma}{\mathbf{\Sigma}}
\newcommand{\bPsi}{\mathbf{\Psi}}
\newcommand{\bPhi}{\mathbf{\Phi}}
\newcommand{\bPi}{\mathbf{\Pi}}
\newcommand{\bA}{\mathbf{A}}
\newcommand{\bB}{\mathbf{B}}
\newcommand{\bC}{\mathbf{C}}
\newcommand{\bD}{\mathbf{D}}
\newcommand{\bN}{\mathbf{N}}
\newcommand{\bU}{\mathbf{U}}
\newcommand{\bV}{\mathbf{V}}
\newcommand{\bW}{\mathbf{W}}
\newcommand{\bR}{\mathbf{R}}
\newcommand{\bT}{\mathbf{T}}
\newcommand{\bY}{\mathbf{Y}}
\renewcommand{\bS}{\mathbf{S}}
\newcommand{\bOmega}{\mathbf{\Omega}}
\newcommand{\aind}{\mathsf{AIND}}
\newcommand{\ind}{\mathsf{ind}}
\newcommand{\bM}{\mathbf{M}}
\newcommand{\bv}{\mathbf{v}}
\newcommand{\cov}{\mathsf{COV}}
\newcommand{\lra}{\mathsf{LRA}}
\newcommand{\PDF}{\mathsf{PDF}}
\newcommand{\range}{\mathsf{range}}
\newcommand{\rad}{\mathsf{Rad}}
\newcommand{\lrf}{\mathsf{LRF}}
\newcommand{\priv}{\mathsf{Priv}}
\newcommand{\argmin}{\operatornamewithlimits{argmin}}
\renewcommand{\h}[1]{\widehat #1}
\renewcommand{\t}[1]{\widetilde #1}
\begin{document}
\title{The Price of Differential Privacy for Low-Rank Factorization}
\author{
Jalaj Upadhyay \thanks{Work was done when the author was at Pennsylvania State University. Research supported by NSF award IIS-1447700. } 
\\
Department of Computer Science\\
 Whiting School of Engineering\\
Johns Hopkins University \\
%Center for Applied Cryptographic Research \\
%David R. Cheriton School of Computer Science \\
%University of Waterloo. \\
\small{$\mathsf{jalaj@jhu.edu}$}
}
\date{}
\maketitle

\begin{comment}
The problem of {\em low-rank factorization} of an $m \times n$ matrix $\mathbf{A}$ requires outputting three matrices: an orthonormal $m \times k$ matrix $\mathbf{U}$, an orthonormal $n \times k$ matrix $\mathbf{V}$, and a $k \times k$ positive semidefinite diagonal matrix $\mathbf{\Sigma}$, where  $\mathbf{U} \mathbf{\Sigma} \mathbf{V}^{\mathsf T}$ approximates the matrix $\mathbf{A}$ in the Frobenius norm in the sense that 
$$  { \| \bA - {\bU} {\bSigma} {\bV}^{\mathsf T} \|_F \leq (1+\alpha)  \|\bA - [\bA]_k\|_F +\gamma }$$ with constant probability. 
{Here}~$[\bA]_k$ is the best rank-$k$ approximation of $\bA$.
\end{comment}

%In this paper, we consider the problem of computing a low-rank factorization of an $m \times n$ matrix in the  turnstile update model. 
%We show the following: %We also consider the problem of low-rank factorization in the local model of differential privacy.
 %\begin{enumerate}
 %	\item 
\pagenumbering{gobble}
\begin{abstract}
Computing a {\em low-rank factorization} of an $m \times n$ matrix $\mathbf{A}$ is a fundamental component in many applications, such as clustering, data mining, recommendation system, etc. It requires outputting three matrices: an orthonormal $m \times k$ matrix $\mathbf{U}$, an orthonormal $n \times k$ matrix $\mathbf{V}$, and a $k \times k$ positive semidefinite diagonal matrix $\mathbf{\Sigma}$, where  $\mathbf{U} \mathbf{\Sigma} \mathbf{V}^{\mathsf T}$ approximates the matrix $\mathbf{A}$ in the Frobenius norm in the sense that 
$$  { \| \mathbf{A} - \mathbf{U} \mathbf{\Sigma} \mathbf{\bV}^{\mathsf T} \|_F \leq (1+\alpha)  \|\mathbf{A} - [\mathbf{A}]_k\|_F +\gamma }$$ with constant probability. 
Here $[\mathbf{A}]_k$ is the best rank-$k$ approximation of $\mathbf{A}$.

In this paper, we study what  price one has to pay to release {\em differentially private low-rank factorization} of a matrix. We consider various settings that are close to the real world applications of low-rank factorization: (i) the manner in which matrices are updated (row by row or in an arbitrary manner), (ii) whether matrices are distributed or not, and (iii) how the output is produced (once at the end of all updates, also known as {\em one-shot algorithms}  or continually). Even though these settings are well studied without privacy, surprisingly, there are no private algorithm for these settings (except when a matrix is updated row by row). We present the first set of differentially private algorithms for all these settings.  

Our algorithms when private matrix is updated in an arbitrary manner promise differential privacy with respect to two stronger privacy guarantees than previously studied, use space and time {\em  comparable} to the non-private algorithm, and achieve {\em optimal accuracy}. To complement our positive results, we also prove that the space required by our algorithms is optimal up to logarithmic factors. When data matrices are distributed over multiple servers, we give a non-interactive differentially private algorithm  with communication cost independent of dimension. In concise, we give algorithms that incur optimal cost. 
We also perform experiments  to verify that all our algorithms  perform well in practice and outperform the best known algorithms until now for large range of parameters. We give  experimental results for total approximation error and additive error for varying dimensions, $\alpha$ and $k$. 

There are few key take aways from this paper: (i) maintaining a differentially private sketches of row space and column space of a matrix already give a sub-optimal accuracy, but this can be significantly improved by careful inexpensive post-processing, 
 and (ii) even though we can only use linear sketches when matrices are updated arbitrarily, the structural properties of linear sketches can be carefully exploited to get tight bound on the error. While most of the focus has been on careful analysis, the power of post-processing  has been largely ignored in the literature of differentially private low rank factorization. 
\end{abstract}
%	\item As our last contribution, we give the first  private algorithm for low-rank factorization under the more strict local-privacy model. Our algorithm does not incur any extra overhead in the form of space requirement, time required, or additive error than the private algorithm under $\priv_2$. The total communication complexity of our algorithm depends linearly in the number of users. 
 %	\item 
 %\subsubsection{Low-rank Approximation}

%\pagebreak \tableofcontents
\paragraph{Keywords} 
Low-rank approximation,
differential privacy,
local differential privacy
%,
%general turnstile update model

\newpage

%\subsection{Known Results}
%We list various works that have studied private computation of low-rank approximation in the chronological order
\pagenumbering{arabic}
%!TEX root = low.tex
\setlength{\textfloatsep}{-2pt plus 0pt minus 0pt}

\section{Introduction} \label{sec:introduction}
Low-rank factorization ($\lrf$) of matrices is a fundamental component used in many applications, %where sensitive data is represented in the form of an $m \times n$ matrix $\bA$. A partial list  includes 
such as
chemometrics~\cite{Stegeman08,WAHFK97}, 
clustering~\cite{CEMMP15,DFKVV04,McSherry01}, 
data mining~\cite{AFKMS01}, recommendation systems~\cite{DKR02}, 
%conic fitting problems~\cite{Zhang97}, 
information retrieval~\cite{PRTV00,SC04}, 
learning  distributions~\cite{AM05,KSV05}, 
system-control~\cite{Markovsky}, and
web search~\cite{AFKM01,Kleinberg99}.
%solving PDE~\cite{GH,GR}, 
 In these applications, given an $m \times n$ matrix $\bA$, a common approach is to first compute three matrices: a diagonal positive semidefinite matrix $ \widetilde{\bSigma}_k \in \R^{k \times k}$ and two  matrices with orthonormal columns, $ \widetilde{\bU}_k \in \R^{m \times k}$ and $ \widetilde{\bV}_k \in \R^{n \times k}$. The requirement then  is that the product $\bB:=\widetilde \bU \widetilde \bSigma \widetilde \bV^{\mathsf T}$ is as close to $\bA$ as possible.  %in the sense that $\| \bA - \bB \|_F$ is close to the optimal $\lrf$. 
%We first define the problem of central interest in this paper. %(see,~\secref{problem} for a precise definition). 
More formally, 
\begin{prob} %{\em ($(\alpha, \beta, \tau)$-Private Low-rank approximation).} 
\label{prob:private_factor} $(\alpha,\beta,\gamma,k)\mbox{-}{\lrf}$.
Given parameters $0 <\alpha,\beta <1$ and $\gamma$, an $m \times n$ matrix $\bA$  and the target rank $k$, compute a rank-$k$ matrix factorization $\widetilde{\bU}_k, \widetilde{\bSigma}_k$, and $\widetilde{\bV}_k$ such that with probability at least $1-\beta$
$$  { \| \bA - \widetilde{\bU}_k \widetilde{\bSigma}_k \widetilde{\bV}_k^{\mathsf T} \|_F \leq (1+\alpha)  \Delta_k +\gamma },$$ 
{where}~$\| \cdot \|_F$ denotes the Frobenius norm, $\Delta_k:= \|\bA - [\bA]_k\|_F$ with $[\bA]_k$ being the best rank-$k$ approximation of $\bA$. % and $  \bB=\widetilde{\bU}_k \widetilde{\bSigma}_k \widetilde{\bV}_k^{\mathsf T}$. 
%$\bB=\widetilde{\bU}_k \widetilde{\bSigma}_k \widetilde{\bV}_k^{\mathsf T}$ and 
%We consider two matrices are neighboring if their difference has Frobenius norm at most $1$.
 % (see Definition~\ref{def:approxdp} for precise definition of differential privacy).%with probability at least $1-\beta$ for both Frobenius and spectral norm. 
We refer to the parameter $\gamma$ as the {\em additive error} and the parameter $\alpha$ as the {\em multiplicative error}. %, and the parameters $(\varepsilon,\delta)$ as the {\em privacy parameters.}
\end{prob}

%We then use $\bB$ in all the applications with the guarantee that the result from using $\bB$ does not ``differ by much" from the result when $\bA$ is used.  
Unfortunately, most of the applications listed above either does not use any privacy mechanism or use ad hoc privacy mechanism. This has led to serious privacy leaks as exemplified by denonymization of Netflix datasets~\cite{NS08}. To assuage such attacks, McSherry and Mironov~\cite{MM09} suggested that one should use a  robust privacy guarantee, such as $(\eps,\delta)$-{\em differential privacy}, defined by Dwork {\it et al.}~\cite{DMNS06}:
\begin{definition} \label{def:approxdpstreaming} %{\em (Differential privacy in the turnstile update model).}
	A randomized algorithm $\mathfrak{M}$ gives {\em $(\varepsilon, \delta)$-differential privacy}, if for all neighboring datasets $\bA$ and ${\bA}'$, and all subsets $S$ in the range of $\mathfrak{M}$, 	$ \p[\mathfrak{M}(\bA) \in S] \leq \exp (\varepsilon) \p[\mathfrak{M}({\bA}') \in S] + \delta, $ where the probability is over the coin tosses of $\mathfrak{M}$. %When $\delta=0$, we get the traditional definition of {\em differential privacy}.
\end{definition}

%Following McSherry and Mironov~\cite{MM09}, 
%studied a related problem known as {\em low-rank approximation} ($\lra$), where the goal is to output a matrix $\bB$ such that $\|\bA - \bB\|_F$ is close to the optimal $\lra$.  Since then, 
%many differentially private algorithm for $\lrf$ have been proposed% in the literature
%\footnote{\cite{BDMN05} was proposed before, but analyzed later in~\cite{DTTZ14}. 
All currently known differentially private algorithms~\cite{BDMN05,CSS12,DTTZ14,HP14,HR12,JXZ15,KT13,Upadhyay14} output a low-rank matrix  if a single entity stores a static matrix. One can then factorized the output in $O(mn^2)$ time and $O(mn)$ space. %  and without any privacy concerns~\cite{AM07,BWZ16,CW09,DKM06,DRVW06,DM05,FKV04,KN14,PTRV98,RV07,Sarlos06}.
%{If there were no space or time constraints or the matrix is not updated, then one can use these algorithms to get a $\lrf$. 
%These algorithms gives good guarantees only when . 
However, practical matrices used in above applications are often large, distributed over many servers, and are dynamically updated~\cite{MM11,SC}. 
%From an implementation point of view~\cite{rappor, apple}, 
Therefore, for any practical deployment~\cite{apple,rappor}, one would like to simultaneously maintain strong privacy guarantee and minimize space requirements, communication between servers and computational time, in order to 
%it is desirable that private algorithms that compute $\lrf$ should be also efficient (and practical) with respect to  in order to 
reduce bandwidth, latency, synchronization issues and resource overhead. %, while still maintaining strong privacy guarantee. 
%. Since the recent deployment of { differential privacy}~\cite{rappor, apple}, it is natural to give equal consideration to . 

We initialize the study of space, time, and communication cost efficient differentially private algorithm for solving \textcolor{red}{Problem}~\ref{prob:private_factor} under various settings: (i) the manner in which matrices are updated (row by row or in arbitrary manner), (ii) whether matrices are distributed (local model of privacy) or not (central model of privacy), and (iii) how the output is produced (once at the end of all updates, also known as {\em one-shot algorithms}  or continually). Without privacy, there are many algorithms for solving \textcolor{red}{Problem}~\ref{prob:private_factor} under all these settings~\cite{BCL05, BWZ16, CMM17, CW13, CW17, DV06, GSS17, MZ11,MM13,NDT09,Sarlos06}. %These settings captures many practical scenarios and have been extensively explored without privacy 
%Even though these settings are closer to practical scenarios,  

Surprisingly, there are no private algorithm for these settings (except in central model when matrices are received row by row~\cite{DTTZ14}). For example,  known algorithms either use multiple pass over the data matrix~\cite{HP14,HR12, HR13, KT13} or cannot handle arbitrary updates~\cite{DTTZ14}. Similarly,  known algorithms that continually release output are for {\em monotonic functions}~\cite{DNPR10}, thereby excluding \textcolor{red}{Problem}~\ref{prob:private_factor}. Private algorithms like~\cite{DTTZ14,HR12} that can be extended to distributed setting use multiple rounds of interaction, large communication cost, and/or result in trivial error bounds. Moreover,  known private algorithms are inefficient compared to  non-private algorithms: $O(mnk)$ time and $O(mn)$ space compared to time linear in the sparsity of the matrix and $O((m+n)k/\alpha)$ space~\cite{BWZ16,CW13}. In fact, for rank-$k$ matrices, Musco and Woodruff~\cite{MW17} state that \textcolor{red}{Problem}~\ref{prob:private_factor} is equivalent to the well studied {\em matrix completion} problem for which  one can have $\widetilde O(n\cdot \poly(k))$ time non private algorithm~\cite{JNS13}. Under same assumptions, private algorithm takes $O(mnk)$ time~\cite{HR12}. This motivates the central thesis of this paper: can we have non-trivial private algorithms under all these settings, and if yes, do we  have to pay for privacy? % and results of~\cite{WFS15} are one-off examples.
%under all the variants of differential privacy: central model, streaming model, continual release model, and local model in an effort to make such algorithms more practical. %We also perform empirical study to verify the practicality of our algorithms. %The study of differentially private $\lrf$ has practical applications especially in the streaming and local model. %For example, %In past, differentially private algorithms we
% under stronger privacy guarantees. 
%The definition of neighboring databases is  central to the granularity of privacy guaranteed by differential privacy. %
%A key notion defining the granularity of privacy is the notion of neighboring databases. 

%\subsection{Our Computational Model and Contributions} 
\paragraph{Our Results.} We give a unified approach and  first set of algorithms  for solving \textcolor{red}{Problem}~\ref{prob:private_factor} in all the settings mentioned above while showing that {\em one does not have to pay the price of privacy} (more than what is required in terms of additive error and space, see~\secref{optimality}). Our algorithms  are also practical as shown by our empirical evaluations (see~\appref{empirical}).   
% and still match space and run-time of non-private algorithms. 
 On a high level, we show the following:
\begin{CompactEnumerate}
	\item When a private matrix is streamed, we propose  differentially private algorithms with respect to two stronger privacy guarantees than previously studied. %(see \textcolor{red}{Remark}~\ref{rem:privacy})  
	Our algorithms are space and time efficient and also achieve {\em optimal accuracy}. % (see \textcolor{red}{Remark}~\ref{optimality}). %Efficiency wise, our algorithms are {\em  comparable} to most efficient non-private algorithm~\cite{BWZ16,Woodruff16}. 
	We complement our positive results with a matching lower bound on the space required  (\thmref{informallower}). \secref{meta} covers this in more details.
	\item When a private matrix is distributed over multiple servers, we give a non-interactive differentially private algorithm  with communication cost independent of dimension.  \secref{informallocal} covers this in more details. 
\end{CompactEnumerate}
\begin{table}[t]
\small{
 \begin{center}
\begin{tabular}{|c|c|c|c|c|c|}
\hline
Our Results  & Updates & Output Produced & Privacy  & Additive Error & Local? \\ \hline
%\thmref{meta} & Central Model ($\priv_1$) & $\widetilde{O} \paren{   \paren{ \sqrt{{mk} \alpha^{-1}} + \sqrt{kn}}{\varepsilon}^{-1}}$ & $\widetilde O((m+n)k\alpha^{-1})$ \\ \hline

%\thmref{naive_meta} & Central Model ($\priv_2$) & $\widetilde{O} \paren{   \paren{ \sqrt{{mk} \alpha^{-2}} + \sqrt{kn}}{\varepsilon}^{-1}}$ & $\widetilde O((m\alpha^{-1}+n)k\alpha^{-1})$ \\ \hline
%\multirow{4}{*} {$\lrf$} &
\thmref{low_space_private} & Turnstile & At the end &  $\bA - {\bA}' =\mathbf{u} \mathbf{v}^{\mathsf T}$ & $\widetilde{O} \paren{   \paren{ \sqrt{{mk} \alpha^{-1}} + \sqrt{kn}}{\varepsilon}^{-1}}$  & $\times$
%& $\widetilde O((m+n)k\alpha^{-1})$
 \\ \cline{1-6}

%& 
\thmref{streaming_naive} & Turnstile  & At the end& $\| \bA - {\bA}'\|_F = 1$ & $\widetilde{O} \paren{   \paren{ \sqrt{{mk} \alpha^{-2}} + \sqrt{kn}}{\varepsilon}^{-1}}$   & $\times$
%& $\widetilde O((m\alpha^{-1}+n)k\alpha^{-1})$ 
\\  \cline{1-6}

%& 
\thmref{continualfrob} & Turnstile & Continually &  $\bA - {\bA}' =\mathbf{u} \mathbf{v}^{\mathsf T}$ & $\widetilde{O} \paren{   \paren{ \sqrt{{mk} \alpha^{-1}} + \sqrt{kn}}{\varepsilon}^{-1} \log T}$   & $\times$%& $-$ 
\\ \cline{1-6}

%&
\thmref{continuallowspace} & Turnstile & Continually & $\| \bA - {\bA}'\|_F = 1$ & $\widetilde{O} \paren{   \paren{ \sqrt{mk} \alpha^{-1} + \sqrt{kn}}{\varepsilon}^{-1} \log T}$   & $\times$%& $-$ 
\\ \hline

%\multirow{2}{*} {$\mathsf{PCA}$} &
%& $-$ 
\thmref{streaming_covariance} & Row by row & At the end & $\| \bA - {\bA}'\|_F = 1$ & $\widetilde{O} \paren{   \paren{\alpha \varepsilon}^{-1} \sqrt{nk} }$   & $\times$%& $\widetilde O(nk\alpha^{-2})$
\\ \cline{1-6}

%&
\thmref{local} & Row by row & At the end &  $\| \bA - {\bA}'\|_F = 1$ & $\widetilde{O}\paren{ k \alpha^{-2} \eps^{-1}  \sqrt{m} }$   & $\surd$

 \\ \hline

 \end{tabular}
 \caption{\small{Our Results for $(\varepsilon,\Theta(n^{-c}))$-Differentially Private Algorithms When $k>1/\alpha, c \geq 2$, stream length $T$)}.} \label{tab:results}
\end{center}
}
\end{table}
%We  show that one can  privately compute an $\lrf$ with optimal accuracy using {\em appropriate and private}
 
 Our results relies crucially on  {\em careful analysis} and our algorithms heavily exploit advance yet inexpensive {\em post-processing}. The power of careful analysis was shown by Dwork {\it et al.}~\cite{DTTZ14} who improved an earlier result of Blum {\it et al.}~\cite{BDMN05}. However, only two known private algorithms for \textcolor{red}{Problem}~\ref{prob:private_factor}  use any form of post-processing: %~\cite{KT13,HR13,HP14} or used  trivially~\cite{DTTZ14,HR12} in the literature of low-rank factorization. 
Hardt and Roth~\cite{HR12}  uses simple pruning of entries of a matrix formed in the intermediate step, while Dwork {\it et al.}~\cite{DTTZ14} uses  best rank-$k$ approximation of the privatized matrix.  These post-processing either  make the algorithm suited only for static matrices or are expensive.

In what follows, we give more details of our result when matrices are streamed and compare it with previous works. All our results are summarized  in \textcolor{red}{Table}~\ref{tab:results}. 
Unless specified, for the ease of presentation, we assume that  $k \geq 1/\alpha, \delta = \Theta(n^{-c})$ for $c \geq 2$, and $\widetilde O(\cdot)$ hides a $\poly \log n$ factor for the rest of this section.  
  
%\subsection{Our Contribution} 
%\subsection{Privacy Model and Models of Differential Privacy Studied in This Paper} 
\subsection{Differentially Private Low-Rank Factorization When Matrices are Streamed} \label{sec:meta}
%As mentioned earlier, one of the drawbacks of all known private algorithms is that they %~\cite{BDMN05,CSS12,HP14,HR12,HR13,KT13,MM09,Upadhyay14} for $\lra$   
%use $O(mn)$ space, require $O(mnk)$ time, and 
%give good accuracy guarantee {\em only when the matrix is static}. 
All currently known differentially private algorithms to solve \textcolor{red}{Problem}~\ref{prob:private_factor} assume that a single central server holds the entire private matrix which is not updated over a period of time (see~\figref{streaming}). However, practical matrices are constantly updated. In this section, we  consider when a private matrix is updated arbitrarily, known as the {\em turnstile update model} to capture many practical scenarios (see 
%and {\em row-wise update model}
the survey~\cite{Muthu05} for further motivations).   %, where data is streamed in an arbitrary manner and the goal is to answer questions on the data  while using small memory space.
%A matrix  can be received either in any arbitrary manner (turnstile update) or in the  form of row-wise  or column-wise updates. %For the  output interface, we consider whether  the output is produced at the end of the stream or is produced continually at every time epoch.
Formally, in a turnstile update model,
 a matrix $\bA \in \R^{m \times n}$ is initialized  to an all zero-matrix  and is
  updated by a sequence of triples $\set{i,j,\Delta}$, where $1 \leq i \leq m, 1\leq j \leq n,$ and $ \Delta \in \R$.
 \begin{wrapfigure}{r}{0.65\textwidth}
  \includegraphics[height=0.65in,width=1.8in]{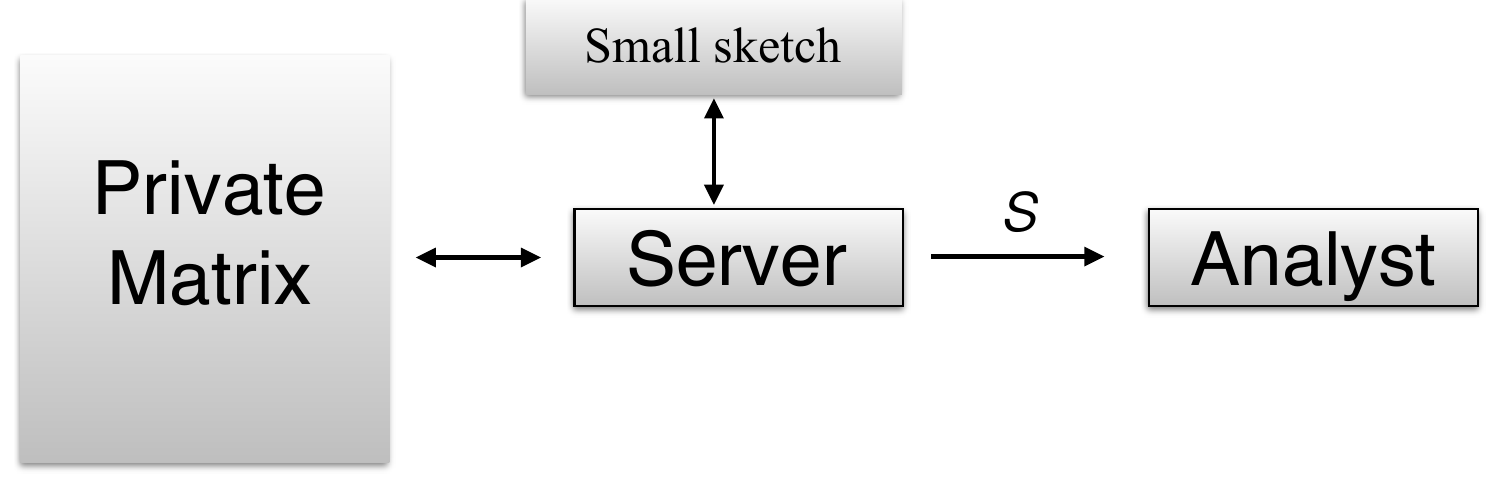}
  \hspace{6pt}  \rule{0.4pt}{.65in}
  \hspace{6pt}  \includegraphics[height=0.65in,width=1.8in]{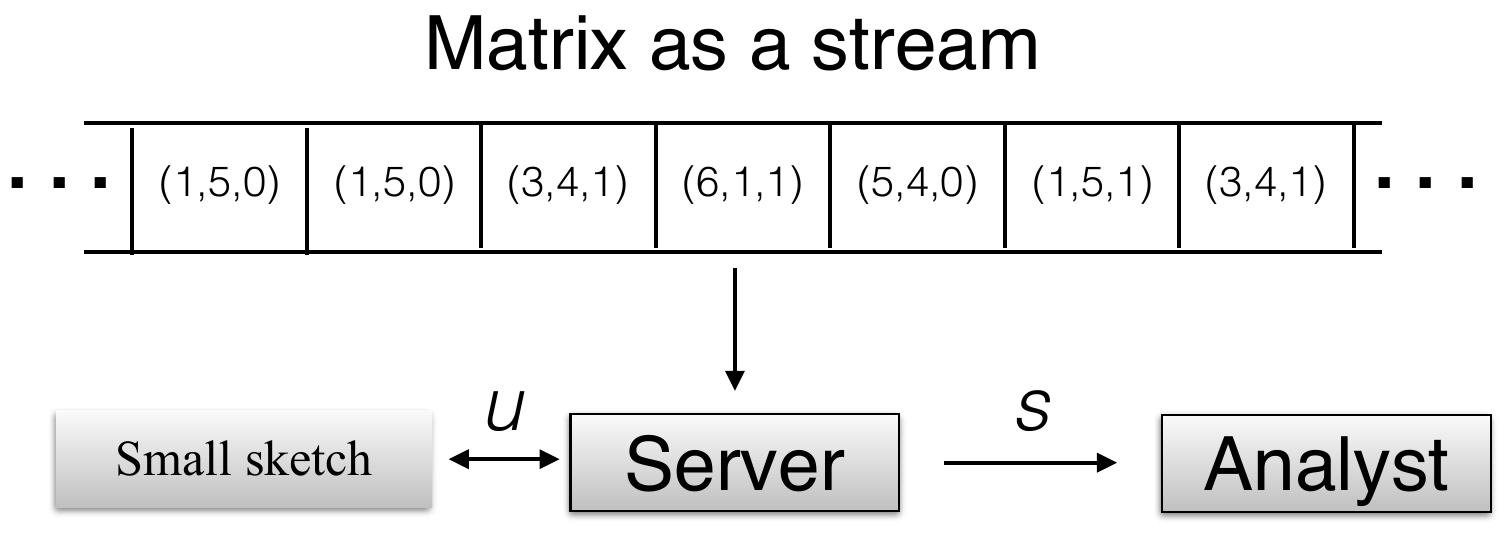}  
  \caption{The central model (left) and the turnstile model (right).} % In the central model of computation, the server has access to the private matrix all the time and it generates a sketch which it can later use to answer the analyst. In the general turnstile model of computation, the server sees every update in the form of tuple $\set{i,j,\Delta}$ and updates the corresponding sketch.}
  \label{fig:streaming}
\end{wrapfigure} 
   Each update results in a change in the $(i,j)$-th entry of $\bA$ as follows:
  $\bA_{i,j} \gets \bA_{i,j}+\Delta$.  
 %  If all the updates at time $\tau \leq T$ are of the form $\set{i_\tau,\bA^{(\tau)}}$, where $1 \leq i_\tau \leq m$, $\bA^{(\tau)} \in \R^{n}$, and $i_\tau \neq i_{\tau'}$ for all $\tau \neq \tau'$, then this is called  {\em row-wise update model}.   
 An algorithm is 
{\em differentially private under  turnstile update model}
% (row-wise update model, respectively), 
 if, for all possible matrices updated in the turnstile 
 %(row-wise, respectively) 
 update model and runs of the
 algorithm, the output of the server is $(\varepsilon,\delta)$-differentially private. 
 % INSERT HERE!!!!!!
% INSERT HERE!!!!!!
 %\footnote{Previous algorithms~\cite{DTTZ14,Upadhyay14}   works in the row-wise update model and not in the turnstile update model.}, 
 %In contrast to this, real-world data matrices are constantly updated making  space and time constraints as significant as accuracy and privacy.  These scenarios are well captured by the so called {\em turnstile update model}~\cite{Muthu05}, where data is streamed in an arbitrary manner and the goal is to answer questions on the data  while using small memory space (see~\secref{problem} for a formal definition).  
  In the non-private setting, Boutsidis {\it et al.}~\cite{BWZ16} gave a space-optimal algorithm for low-rank factorization in the {turnstile update model}. 
  Unfortunately, a straightforward application of known privacy techniques to make their algorithm   differentially private  incurs a large additive error (see~\appref{BWZ16} for details).
  %that is no better than the trivial solution of outputting an all zero matrix, when the input matrix is a binary  matrix.
This raises the following natural question:
\begin{quote}
 {\bf Question 1.} Can we solve \textcolor{red}{Problem}~\ref{prob:private_factor} with good accuracy while preserving differential privacy and receiving the matrix in the turnstile update model? 
\end{quote}
We answer this question positively. To better comprehend the privacy aspect of our results presented in this section, we first need to define our notion of neighboring datasets (streaming matrices in our context). % and how our algorithm receives the private matrix and how they produce the output. 
%\paragraph{Privacy Model.} 
%Previous works have  considered  

We consider two  stronger privacy guarantees than  previously studied: $\priv_1$ and $\priv_2$ (see~\secref{compare} for comparison and~\secref{discussion} for discussion). %In previous works, 
 In $\priv_1$, we call two  matrices $\bA$ and $\bA'$  neighboring if $\bA - \bA' = \mathbf{u}\mathbf{v}^{\mathsf T}$ for some unit vectors $\mathbf{u}$ and $\mathbf{v}$. In $\priv_2$, we consider two  matrices $\bA$ and $\bA'$ neighboring if $\| \bA - \bA' \|_F \leq 1$.
\begin{l1norm}
 In $\priv_3$, we consider two  matrices $\bA$ and $\bA'$ neighboring if $\| \bA - \bA' \|_1 \leq 1$.
\end{l1norm}
 We say two data streams are neighboring if they are formed by neighboring matrices. For these two definitions, we show the following:
\begin{theorem} (\thmref{low_space_private} and \thmref{streaming_naive}, informal).  \label{thm:informalspace}Let $\bA$ be an $m \times n$ matrix (with $\mathsf{nn}(\bA)$ non-zero entries and $m \leq n$) streamed in a  turnstile update model. %Let $\eta=\max \set{k,\alpha^{-1}}$. 
Then 
\begin{CompactEnumerate}
	\item There is an $(\varepsilon,\delta)$-differentially private algorithm under $\priv_1$ that  uses $ \widetilde{O}((m+n)k  \alpha^{-1})$ space and outputs a rank-$k$ factorization $\t{\bU}_k, \t{\bSigma}_k, {\t{\bV}}_k$ in time $\widetilde{O}(\mathsf{nn}(\bA) + m^2k  \alpha^{-1}+ nk^2 \alpha^{-2} )$ such that, %with probability $9/10$,
	%The factorization outputted by the algorithm presented in~\figref{spacelow} computes a $k$-rank factorization of a matrix $\bA$ such that, with probability $99/100$ over $\bPhi, \bPsi \sim \cD_R$ and $\bS, \bT \sim \cD_A$, 
	 \begin{align*}
   \| \bA - \t{\bU}_k \t{\bSigma}_k \t{\bV}_k^{\mathsf T} \|_F \leq (1+\alpha)  \Delta_k +
  \widetilde{O} ({   ({ \sqrt{{mk}\alpha^{-1}} + \sqrt{kn}})/{\varepsilon}}). %, \text{where $\bB=\bU_k \t{\bSigma}_k \bV_k^{\mathsf T}$.}% \sqrt{\paren{1+\alpha}/\paren{1-\alpha}}  }.
  %O \paren{ \frac{ \sqrt{\log(1/\delta)}}{\varepsilon}\paren{ \sqrt{  kp \paren{1+\frac{1}{\alpha}}} +  \sqrt{\frac{kmn}{ (p + p\alpha)}}  } }.
   \end{align*} 
   \item There  is an $(\varepsilon,\delta)$-differentially private algorithm under $\priv_2$ that  uses $\widetilde O( (m\alpha^{-1}+n)k \alpha^{-1})$ space, and outputs a rank-$k$ factorization $\t{\bU}_k, \t{\bSigma}_k, \t{\bV}_k$ in $\widetilde O(\mathsf{nn}(\bA)+(m\alpha^{-2} + n)k^2 \alpha^{-2} )$ time such that
	$$ \| \bA - \t{\bU}_k \t{\bSigma}_k \t{\bV}_k^{\mathsf T} \|_F \leq  (1+\alpha)  \Delta_k +
	  \widetilde O ({ { ( \sqrt{ m k \alpha^{-2}} + \sqrt{ kn  } )}/{\varepsilon}   }). % , \text{where $\bB=\t{\bU}_k \t{\bSigma}_k \bV_k^{\mathsf T}$.}
	  $$
\begin{l1norm}
	   \item There  is an $(\varepsilon,\delta)$-differentially private algorithm under $\priv_3$ that  uses $\widetilde O( (m\alpha^{-1}+n)k \alpha^{-1})$ space, and outputs a rank-$k$ factorization $\bU_k, \t{\bSigma}_k, \bV_k$ in $\widetilde O(\mathsf{nn}(\bA)+(m\alpha^{-2} + n)k^2 \alpha^{-2} )$ time such that
	$$ \| \bA - \bU_k \t{\bSigma}_k \bV_k^{\mathsf T} \|_1 \leq  (1+\alpha)  \min_{\text{rank-k matrices }\bX} \| \bA - \bX \|_1 + 
	  \widetilde O ({ { ( \sqrt{ m k \alpha^{-2}} + \sqrt{ kn  } )}/{\varepsilon}   }). % , \text{where $\bB=\bU_k \t{\bSigma}_k \bV_k^{\mathsf T}$.}
	  $$
\end{l1norm}
 \end{CompactEnumerate}
\end{theorem}

\begin{comment}
\begin{CompactEnumerate}
	\item There is an $(\varepsilon,\delta)$-differentially private algorithm under $\priv_1$ that  uses $ \widetilde{O}((m+n)\eta  \alpha^{-1})$ space and outputs a rank-$k$ factorization $\bU_k, \bSigma_k, \bV_k$ in time $\widetilde{O}(\mathsf{nn}(\bA) + m^2\eta  \alpha^{-1}+ (m+n)\eta^2 \alpha^{-2} )$ such that, %with probability $9/10$,
	%The factorization outputted by the algorithm presented in~\figref{spacelow} computes a $k$-rank factorization of a matrix $\bA$ such that, with probability $99/100$ over $\bPhi, \bPsi \sim \cD_R$ and $\bS, \bT \sim \cD_A$, 
	 \begin{align*}
   \| \bA - \bB \|_F \leq (1+\alpha)  \Delta_k +
  \widetilde{O} ({   ({ \sqrt{{m\eta}\alpha^{-1}} + \sqrt{kn}}){\varepsilon}^{-1}}), \text{where $\bB=\bU_k \bSigma_k \bV_k^{\mathsf T}$.}% \sqrt{\paren{1+\alpha}/\paren{1-\alpha}}  }.
  %O \paren{ \frac{ \sqrt{\log(1/\delta)}}{\varepsilon}\paren{ \sqrt{  kp \paren{1+\frac{1}{\alpha}}} +  \sqrt{\frac{kmn}{ (p + p\alpha)}}  } }.
   \end{align*} 
   \item There  is an $(\varepsilon,\delta)$-differentially private algorithm under $\priv_2$ that  uses $\widetilde O( (m\alpha^{-1}+n)\eta \alpha^{-1})$ space, and outputs a rank-$k$ factorization $\bU_k, \bSigma_k, \bV_k$ in $\widetilde O(\mathsf{nn}(\bA)+(m\alpha^{-2} + n)\eta^2 \alpha^{-2} )$ time such that
	$$ \| \bA - \bB \|_F \leq  (1+\alpha)  \Delta_k +
	  \widetilde O ({ { ( \sqrt{ m \eta \alpha^{-2}} + \sqrt{ kn  } )}{\varepsilon}^{-1}   }) , \text{where $\bB=\bU_k \bSigma_k \bV_k^{\mathsf T}$.}$$
 \end{CompactEnumerate}
%   where   $\bA_k$ is the best rank-$k$ approximation. %The space used by the algorithm is . 
%	The total time to perform the factorization is .%, where  denotes the sparsity of the matrix $\bA.$
\end{comment}
%For constant $\alpha$, the earlier best known result by Hardt and Roth~\cite{HR12} was suboptimal  by a factor of $\sqrt{k}$ in the general case when $m \ll n$. Our result tightens their result and achieves optimal additive error. %

\subsubsection{Comparison With Previous Works}
%We first compare our result with previous works. % \begin{remark} 
\label{sec:compare}
All the previous private algorithms compute a low rank approximation of either  the matrix $\bA$ or its covariance $\bA^{\mathsf T}\bA$. One can compute a factorization from their output at the expense of an extra $O(mn^2)$ time and $O(mn)$ space (Dwork {\it et al.}~\cite{DTTZ14} requires an extra $O(n^3)$ time and $O(n^2)$ space to output an $\lrf$ of $\bA^{\mathsf T}\bA$).  Some  works like~\cite{CSS12,KT13, HR12, HP14} compute $\lrf$ under the spectral norm instead of Frobenius norm. %\end{remark}
We state the previously studied problems    and their differences with \textcolor{red}{Problem}~\ref{prob:private_factor} in~\appref{problems}. %They are equivalent if there is no time or space constraints. % while Problem~\ref{prob:private_spectral} and Problem~\ref{prob:private_DTTZ} are very different. 
%Problem~\ref{prob:private_spectral} (studied by~\cite{HR13,HP14,KT13}) measures the performance of the output with respect to the spectral norm while Problem~\ref{prob:private_DTTZ} (studied by~\cite{DTTZ14}) approximates only the right-singular vectors. The only difference between Problems~\ref{prob:private_factor} and~\ref{prob:private_frob} is that Problem~\ref{prob:private_factor} requires low-rank factorization while Problem~\ref{prob:private_frob} requires low-rank approximation.   
To summarize the discussion in~\appref{problems}, only Hardt and Roth~\cite{HR12} and Upadhyay~\cite{Upadhyay14} study a problem closest to ours (the differences being that  they do not consider turnstile updates and output a low rank approximation). Therefore, we compare~\thmref{informalspace} only with these two results. 

%Therefore, we only compare this work with those two results~\cite{HR12,Upadhyay14}. 
We do not make any assumptions on the input matrix. %On the other hand, many matrices, such as adjacency matrix corresponding to graph defined on page~\pageref{graph} during further discussions, can have an arbitrary large Frobenius norm. 
This allows us to cover matrices of all form in an unified manner. We cover this in more detail in~\secref{discussion}.  In what follows, we compare the additive error, space and time required, and privacy guarantees in more detail. \textcolor{red}{Table}~\ref{table:basicresults} enumerates these comparisons. 
Both Hardt and Roth~\cite{HR12} and Upadhyay~\cite{Upadhyay14} incur a multiplicative error of  $\sqrt{1 + k/p}$, where $p$ is an oversampling parameter (typically, $p=\Theta( k)$), and $m \leq n$. Therefore, for a reasonable comparison, we consider~\thmref{informalspace} when $\alpha = \Theta(1)$ and $m \leq n$. % \thmref{informalspace} provide stronger privacy guarantee than these results, incur less additive error, use less space,  and compute the factorization faster.  % to be a constant.% (our results are in blue color). 

\begin{table} [t]
{
\small{
 \begin{center}
\begin{tabular}{|c|c|c|c|c|}
\hline
	  		&	Privacy Notion		  & Additive Error  & Space Required & Streaming  \\ \hline
%\multirow{4}{*} {Problem~\ref{prob:private_factor}} 
%\multirow{4}{*} {Problem~\ref{prob:private_factor}} 
\multirow{1}{*}{\thmref{low_space_private}}						
& $\bA - {\bA}' =\mathbf{u} \mathbf{v}^{\mathsf T}$	  
%& $1+\alpha$  \\% \cline{2-5}
% $(\varepsilon,\delta)$-differential privacy  
% & No assumptions 
 & \textcolor{blue}{$\widetilde{O} \paren{   \paren{ \sqrt{{km}\alpha^{-1}} + \sqrt{kn}}{\varepsilon}^{-1}} $} & \textcolor{blue}{$\widetilde O((m+n)k \alpha^{-1})$} & Turnstile
 %\multicolumn{2}{l |}{$\qquad \qquad \qquad \alpha, \qquad \gamma=O \paren{ \frac{\sqrt{\log(1/\delta)}}{\varepsilon} \paren{ \sqrt{\frac{k(1+\alpha)m}{\alpha^2(1-\alpha)}} + \frac{\sqrt{kn}}{(1-\alpha)}} } $}

\\ \hline

\multirow{1}{*}{\thmref{streaming_naive}}						
& $\| \bA - {\bA}'\|_F = 1$	  
%& $1+\alpha$ \\% \cline{2-5}
 %$(\varepsilon,\delta)$-differential privacy  
% & No assumptions  
 & \textcolor{blue}{$ \widetilde O \paren{ { ( \sqrt{ km  \alpha^{-2}} + \sqrt{ kn  } )}{\varepsilon}^{-1}   }$} & \textcolor{blue}{$\widetilde O((m+n)k \alpha^{-2})$} & Turnstile
 %\multicolumn{2}{l |}{$\qquad \alpha, \qquad \gamma={O}\paren{\frac{(1+\alpha)\sqrt{\log(1/\delta)}}{\varepsilon} \paren{ \sqrt{km} + \sqrt{\frac{n(1+\alpha)}{\alpha^3}\paren{k + \frac{1}{\alpha}} } }    }$}
\\ \hline %\cline{2-4}

\multirow{1}{*}{Hardt-Roth~\cite{HR12} }				
& $\bA - {\bA}' = \mathbf{e}_s \mathbf{v}^{\mathsf T}$		 	 	
%& $\sqrt{2}$ \\ %\cline{2-5}
 %$(\varepsilon,\delta)$-differential privacy 
 %& No coherence assumption 
 & $\widetilde O \paren{ \paren{ \sqrt{km}  + k c \sqrt{n}} {\varepsilon}^{-1}  }$  & $O(mn)$ & $\times$
% \multicolumn{2}{l |}{ $\qquad \alpha=O(1), \qquad \gamma =O \paren{ \frac{ \sqrt{km} \log (k/\delta)}{\varepsilon} + \frac{k \log(1/\delta) \sqrt{n}}{\varepsilon}  }$  }  
  \\ \hline %\cline{2-4} \cline{2-4}
  
\multirow{1}{*}{Upadhyay~\cite{Upadhyay14}}						
& $\bA - {\bA}' = \mathbf{e}_s \mathbf{v}^{\mathsf T}$	  
%& $\sqrt{2}$  \\% \cline{2-5}
% $(\varepsilon,\delta)$-differential privacy  
% & No assumptions 
 & $\widetilde O\paren{ \paren{ k^2 \sqrt{n+m} }{\varepsilon}^{-1}  }$ & $\widetilde O((m+n)k\alpha^{-1})$ & Row-wise
 %\multicolumn{2}{l |}{$\qquad \alpha=O(1), \qquad \quad \gamma=O\paren{ \frac{ k^2 \sqrt{n+m} \log (k/\delta)}{\varepsilon}  }$}
\\ \hline%\cline{2-4}

\multirow{1}{*}{Lower Bounds}						
& All of the above	  
%& $1+\alpha$ \\% \cline{2-5}
 %$(\varepsilon,\delta)$-differential privacy  
% & No assumptions  
 & {$  \Omega \paren{  \sqrt{  km} + \sqrt{ kn  } }   $}~\cite{HR12} & \textcolor{blue}{$ \Omega((m+n)k \alpha^{-1})$} (\textcolor{red}{Thm.}~\ref{thm:lower}) & Turnstile
 %\multicolumn{2}{l |}{$\qquad \alpha, \qquad \gamma={O}\paren{\frac{(1+\alpha)\sqrt{\log(1/\delta)}}{\varepsilon} \paren{ \sqrt{km} + \sqrt{\frac{n(1+\alpha)}{\alpha^3}\paren{k + \frac{1}{\alpha}} } }    }$}
\\ \hline %\cline{2-4}
\end{tabular}
\caption{\small{Comparison of Private $k$-Rank Approximation ($\|\mathbf{u}\|_2, \|\mathbf{v}\|_2=1$, $\mathbf{e}_s$ is a standard basis, $k \leq 1/\alpha$).}} \label{table:basicresults}% of an $n \times d$ rank-$\mathsf{rk}(A)$ matrix A.  $\mu$ denotes the coherence of $\bA$, $\gamma=\paren{\sigma_k/\sigma_{k+1}}-1$.} 
\label{table}
 \end{center}
}}
\end{table}

%The first parameter of interest is the additive error. 
Our additive error is smaller than Upadhyay~\cite{Upadhyay14} by a factor of $\widetilde{O}(k^{3/2})$. To make a reasonable comparison with Hardt and Roth~\cite{HR12}, we consider their result without coherence assumption:  which roughly says that no single row of the matrix is  significantly correlated with any of the right singular vectors of the matrix.  Then combining Theorem 4.2 and 4.7 in Hardt and Roth~\cite{HR12} results in an additive error $\widetilde{O}((\sqrt{km} + ck  \sqrt{n})\varepsilon^{-1})$, where $c$ is the maximum entry in their projection matrix. 
	In other words, we improve the result of Hardt and Roth~\cite{HR12} by an $\widetilde{O}(c\sqrt{k})$ factor. 
	%We do an empirical comparison between our algorithm and that of Hardt and Roth~\cite{HR12} in
	%This is also affirmed by our empirical evaluations (see~\appref{empirical} for more details).

Our algorithms are more efficient than previous algorithms in terms of space and time even though earlier algorithms  output a rank-$k$ matrix and cannot handle updates in the turnstile model. Upadhyay~\cite{Upadhyay14} takes more time than Hardt and Roth~\cite{HR12}. The algorithm of Hardt and Roth~\cite{HR12} uses $O(mn)$ space since it is a private version of Halko {\it et al.}~\cite{HMT11} and has to store the entire matrix: both the  stages of Halko {\it et al.}~\cite{HMT11}  require the matrix explicitly.
% The algorithm of Upadhyay~\cite{Upadhyay14}  assumes that the matrix is given in the form of row-wise updates. 
One of the motivations mentioned in Hardt and Roth~\cite{HR12} is sparse private incoherent matrices (see the discussion in Hardt and Roth~\cite[Sec 1.1]{HR12}), but their algorithm uses this only to reduce the additive error and not the running time. On the other hand, 
our algorithms use sublinear space. In fact, when $m \cdot \poly(k,1/\alpha) \leq \mathsf{nn}(\bA)$,  our algorithm under $\priv_2$ runs in time linear in $\mathsf{nn}(\bA)$ and almost matches the running time of most efficient 
%our algorithms  output a factorization in input sparsity time and use sublinear space.   
%The run-time of the 
non-private algorithm 
%for computing $\lrf$ 
in the turnstile model
 %is 
 ~\cite{CW13, BWZ16}, which run in time $O(\mathsf{nn}(\bA) + nk^2\alpha^{-2} + k^3\alpha^{-5})$. %Therefore, our algorithms also run in time comparable to the most efficient non-private algorithm. %The running time of algorithm under the privacy guarantee $\priv_2$ is better than the one under $\priv_1$. In particular,  

%Even if the matrix is not sparse,  our algorithms are  more efficient than both of the previous algorithms in terms of both space and time. %In the 

%\begin{remark}  \label{rem:privacy} \textbf{(Comparison of privacy guarantee).} 
%The privacy guarantees considered in this paper are more general than the one considered in   previous works on differentially private $\lrf$~\cite{DTTZ14,HR12,HR13,HP14,Upadhyay14}.
 %Blocki {\it et al.}~\cite{BBDS12}, Dwork {\it et al.}~\cite{DTTZ14}, Hardt and Roth~\cite{HR12,HR13}, Hardt and Price~\cite{HP14}, and Upadhyay~\cite{Upadhyay14}. 
Previous works consider two matrices neighboring either if they differ in one entry~\cite{HP14,HR13,JXZ15}  (i.e., $\bA - \bA' = \mathbf{e}_i \mathbf{e}_j^{\mathsf T}$) or when they differ in one row by a unit norm~\cite{DTTZ14,HR12,Upadhyay14} (i.e., $\bA - \bA' = \mathbf{e}_i \mathbf{v}^{\mathsf T}$ for some unit vector $\mathbf{v}$), depending on whether a user's data is an  entry of the matrix or a row of the matrix. 
%Since Dwork {\it et al.}~\cite{DTTZ14} assume normalized rows,  Dwork {\it et al.}~\cite{DTTZ14} and  Hardt and Roth~\cite{HR12} consider two matrices neighboring if $\bA - \bA' = \mathbf{e}_i \mathbf{v}^{\mathsf T}$ for some unit vector $\mathbf{v}$, and Hardt and Roth~\cite{HR13} and Hardt and Price~\cite{HP14} consider two matrices neighboring if $\bA - \bA' = \mathbf{e}_i \mathbf{e}_j^{\mathsf T}$. 
Both these privacy guarantees are special case of $\priv_1$ by setting either $\mathbf{u}$ (in the case of~\cite{DTTZ14,HR12}) or both $\mathbf{v}$ and $\mathbf{u}$ (in the case of~\cite{HR13,HP14}) as a standard basis vector. 

 \subsubsection{Optimality of Our Results} \label{sec:optimality}
%?
\noindent {\bf Tightness of  Additive Error.} 
Hardt and Roth~\cite{HR12} showed %that any  private algorithm to compute a $\lra$ incurs an 
 a lower bound of $\Omega(\sqrt{kn} + \sqrt{km})$ on additive error  by showing a reduction to the linear reconstruction attack~\cite{DN03}. In other words, any algorithm that outputs a low rank matrix with additive error  $o(\sqrt{kn} + \sqrt{km})$ cannot be {differentially private}! %, one cannot achieve better additive error than the one achieved by our algorithm}.  
%\begin{theorem} \label{thm:lower}
%	There does not exist an $(\varepsilon,\delta)$-differentially private algorithm for computing $k$-rank approximation of an $m \times n$ matrix with additive error $o(\sqrt{kn} + \sqrt{km})$.
%\end{theorem}
This  lower bound  holds even when the  private algorithm can access the private matrix any number of times. Our results show that one can match the lower bound for constant $\alpha$, a setting considered in Hardt and Roth~\cite{HR12}, up to a small logarithmic factor, while allowing  access to the private matrix only  in the turnstile model. %Note that, Hardt and Roth~\cite{HR12} and Upadhyay~\cite{Upadhyay14} have $\alpha = \sqrt{2}-1$, a constant.
%The additive error  matches the lower bound (\thmref{lower}) up to a factor of $\sqrt{1+\alpha}/\alpha$ while allowing the private mechanism to access the private matrix only once and that, too, in a turnstile manner.
%\end{quote}
%The idea behind our algorithm is that $\bY_c= \bA\widehat{\bPhi} $ approximates the column-space and $\bY_r = \bPsi\bA $ approximates the row-space of the matrix $\bA$. 
%\end{remark}

\begin{l1norm}
We use the reduction to the linear reconstruction attack in order to prove the lower bound. 
The reduction follows from the fact that we can always encode a bit-valued database $D \in \{0,1\}^{N}$ for $N =kn$ as k rows of an $n \times d$ matrix, simply by zeroing out all additional  $n- k$ rows. Note that the resulting matrix only has rank k, and so the optimal rank k approximation to this matrix has zero error. If we could recover a matrix $A'$ such that $\|A - A' \|_p = o(\paren{kn})$, this would mean that for a typical nonzero row $A_i$ of the matrix with $i \in [k]$, we would have $\| A_i - A_i' \|_p = o(n) $ or  
$\| A_i '-A'_i\|_2 \leq n^{-1/2} \| A_i - A_i' \|_p =  o(\sqrt{n}),$ and $ \| A_i - A'_i \|_1 =  o(n)$. Then, by simply rounding the entries, we could reconstruct the original database $D$ in almost all of its entries, giving blatant non-privacy. Since it is known that $(\eps,\delta)$-differentially private algorithms do not admit such reconstruction attacks,
\end{l1norm}

\paragraph{Space Lower Bound and Optimality of the Algorithm Under $\priv_1$.}  \label{sec:informallower}
%Finally, one of the important questions in the turnstile update model is whether we can design arbitrary small space private algorithm. 
Our algorithm under $\priv_1$ uses the same space as non-private algorithm up to a logarithmic factor, which is known to be optimal  for $\gamma=0$~\cite{CW09}.  However, we incur a non-zero additive error, $\gamma$, which is inevitable~\cite{HR12}, and it is not clear if we can achieve better space algorithm when $\gamma \neq 0$.  % showed that any  algorithm for $\lra$ with  incurs some additive error.
 This raises the following question:
\begin{quote}
 {\bf Question 2.} What is the space complexity for computing $(\alpha,\beta,\gamma,k)\mbox{-}\lrf$ when $\gamma \neq 0 $? % in the general turnstile update model?
\end{quote}
We complement our positive results by showing a lower bound on the space complexity of computing low-rank approximation with non-trivial additive error. The result holds for any randomized algorithm; therefore, also hold for differentially private algorithm. This we believe makes our result of independent interest. 
%We show the following.

%\noindent{\bf Theorem} 
\begin{theorem} (\thmref{lower}, informal).  \label{thm:informallower}
The space required by any randomized algorithm to solve $(\alpha,1/6,O(m+n),k)\mbox{-}{\lrf}$ in the turnstile update model is  $\Omega((n+m)k\alpha^{-1})$. 
\end{theorem}
As  mentioned above, any  differentially private incurs an additive error $\Omega(\sqrt{km}+\sqrt{kn})$~\cite{HR12}. 
%On the other hand, the only known space lower bound of Clarkson and Woodruff~\cite{CW09} holds when $\gamma =0$; therefore, one might hope to construct an improve space algorithm when we allow $\gamma \neq 0$. We show that our algorithm,  {\scshape Private-Space-Optimal-$\lrf$}, achieves optimal space  when $\gamma = O(m+n)$ and $k \geq 1/\alpha$.  
%It should be noted that any non-trivial differentially private algorithm should have $\gamma = O(m+n)$. 
Moreover, earlier literature in differentially private low-rank approximation (i.e., Hardt and Roth~\cite{HR12})  set $\alpha=\sqrt{2}-1$. \thmref{informallower} thus answers Question 3  for all $k \geq 3$  for the parameters studied in the literature, and prove the space optimality of our algorithm.
Our proof uses a reduction from the two-party communication complexity of  %We achieve the space lower bound for solving $\lrf$ in the turnstile model by reducing it to solving the 
$\mathsf{AIND}$ problem. % by extending the reduction of Clarkson and Woodruff~\cite{CW09} for the case when $\gamma \neq 0$.
%This proves~\thmref{informallower} using Miltersen {\it et al.}~\cite{MNSW98}.  
%. This proves~\thmref{informallower} using the lower bound for solving $\mathsf{AIND}$
In the $\mathsf{AIND}$ problem, Alice is given  $\bx \in \set{0,1}^N$ and Bob is given an index $i$ together with $\bx_{i+1}, \cdots, \bx_N$. The goal of Bob is to output $\bx_{i}$. 
We use the input of Alice and Bob to construct an instance of low-rank approximation
%.  The instance  
 and show that any  algorithm for $\lrf$ in the turnstile update model for this instance yields a single round communication protocol for $\mathsf{AIND}$. 

%\thmref{low_space_private} and \thmref{streaming_naive} do 

\subsubsection{Some Discussion on Privacy Guarantees, Assumptions, and Relaxation} \label{sec:discussion} 
%We end this section by more discussion on our assumptions, privacy guarantees, and compare the problem studied in this paper with that studied in previous works. 
The two privacy guarantees considered in this paper have natural reasons to be considered. %, described below. %We consider   
%In $\priv_1$, two  matrices $\bA$ and $\bA'$ are neighboring if  $\bA - \bA' = \mathbf{u}\mathbf{v}^{\mathsf T}$ for some unit vectors $\mathbf{u}$ and $\mathbf{v}$. 
$\priv_1$ generalizes the earlier privacy guarantees and captures the setting where any two matrices differ in only one spectrum. Since $\priv_1$ is defined in  terms of the spectrum of  matrices, $\priv_1$ captures one of the natural privacy requirements in all the applications of $\lrf$. % where spectrum of matrices play an important role. 
% (in our case, neighboring stream). In this paper, we consider datasets presented is presented by streaming entries of a matrix. 	
%	In  $\priv_2$, we consider two matrices $\bA$ and $\bA'$ are neighboring if $\| \bA - \bA' \|_F \leq 1$. %For example, in the graphs of form $\cG$, this corresponds to the presence or absence of a single invidual.
$\priv_2$ is more stronger than $\priv_1$. To motivate the definition of $\priv_2$, consider a graph, $\cG :=(\cV,\cE)$ that stores career information of people in a set $\cP$ since their graduation. The vertex set $\cV$  is the set of all companies. An edge $e = (u,v) \in \cE$ has weight  $\sum_{p \in \cP} ({t_{p,e}}/t_p)$, where $t_{p,e}$ is the time for which the person $p$ held a job at $v$ after leaving his/her job at $u$,
%after moving from $u$ to $v$ 
and $t_p$ is the total time lapsed since his/her graduation.
%is defined as follows. The contribution of a person $p$ to the weight of an edge  $e=(u,v)$ is proportional to the time spent by $p$ after moving from $u$ to $v$. The weight on the edge $e$ is the sum of the contributions of all $p \in \cP$. 
%\end{graph}
Graphs like $\cG$ are useful because  the weight on every edge $e=(u,v)$  depends on the number of people  who changed their job status from $u$ to $v$ (and the time they spent at $v$).  Therefore, data analysts might want to  mine such graphs for various statistics. 
 In the past, graph statistics have been extensively studied in the literature of differential privacy for static graph under {\em edge-level privacy} (see, for e.g.,~\cite{DTTZ14,HR12,HP14,Upadhyay13,Upadhyay14}):  the presence or absence of a person corresponds to  a change in a single edge. On the other hand, in graphs like $\cG$,  presence or absence of a person would be reflected on many edges.  If we use earlier results on edge-level privacy to such graphs, it would  lead to either a large additive error or a loss in  privacy parameters $\varepsilon,\delta$. $\priv_2$ is an attempt to understand whether 
 we can achieve any non-trivial guarantee on the additive error without depreciating  the privacy parameters. %, and motivates us to consider the privacy guarantee $\priv_2$. 

Our choice to  {\em not make any assumptions} such as  symmetry, {incoherence}, or a bound on the Frobenius norm of the input matrix, as made in previous works~\cite{DTTZ14, HR12, HP14, KT13},  allows us to give  results as general as possible and cover  many practical scenarios not covered by previous works. 
For example, this allows us to have a unified algorithmic approach for all the settings mentioned earlier in the introduction (see Corollaries~\ref{cor:streaming_covariance} and~\ref{cor:informalcontinualspace}).  \label{graph}
Moreover, by not making any assumption on the Frobenius norm or coherence of the matrix, we are able to cover many general cases not covered by previous results. For example, adjacency matrices corresponding to graph $\cG$ can have arbitrary large Frobenius norm. Such graphs can be distributed over all the companies and updated every time a person changes his job, and the effect is (possibly) reflected on many edges.  
On the other hand, one would like to protect private information about an individual in $\cG$ or data packets in communication network.   %For illustration, %All these factors make privately releasing low-rank factorization of the corresponding edge-vertex incidence matrix of $\cG$  (for efficient data-mining) challenging. 

There are practical learning application such as $\mathsf{word2vec}$ where private data is used to build a relationship graph on objects (such as words-to-words, queries-to-urls, etc) also share all these properties. Other applications   like recommendation systems, clustering, and network analysis, have their input datasets store dynamically changing private information in the form of  $m \times n$ matrices.  For  example,  communication networks  that stores the routing information  passing through various nodes of  the network and online video streaming websites that stores users' streaming history, both store private information in the form of matrix which is updated in a turnstile manner and has arbitrary large Frobenius norm. These data are useful for their respective businesses since they can be used for various analyses~\cite{DFKVV04,LKLS13,Newman06}. 

On the other hand, in some practical applications, matrices are more structured. One such special case is when  the rows of  private matrix $\bA$ have bounded norm and one would like to approximate $\bA^{\mathsf T}\bA$. This problem was studied by Dwork {\it et al.}~\cite{DTTZ14} (see~\appref{problems_DTTZ14} for their formal problem statement). We consider the matrices are updated by inserting one row at a time: all the updates at time $\tau \leq T$ are of the form $\set{i_\tau,\bA^{(\tau)}}$, where $1 \leq i_\tau \leq m$, $\bA^{(\tau)} \in \R^{n}$,
 and $i_\tau \neq i_{\tau'}$ for all $\tau \neq \tau'$. %, then this is called  {\em row-wise update model}.  
We get the following corollary by using $\bA^\mathsf{T} \bA$ as the input matrix (see~\appref{normalized} for the formal description of the algorithm): % is the following. 
\begin{corollary} (\thmref{streaming_covariance}, informal).  \label{cor:streaming_covariance}Let $\bA$ be an $m \times n$ matrix updated by inserting one row at a time such that every row has a bounded norm $1$ and $m >n$. %Let $\eta=\max \set{k,\alpha^{-1}}$. 
Then 
    there  is an $(\varepsilon,\delta)$-differentially private algorithm under $\priv_2$ that  uses $\widetilde O( nk \alpha^{-2})$ space, and outputs a rank-$k$ matrix $\bB$  such that with probability at least $9/10$, 
	$ \| \bA^{\mathsf T} \bA - \bB \|_F \leq  (1+\alpha)   \| \bA^{\mathsf T} \bA - [\bA^{\mathsf T} \bA]_k \|_F +
	  \widetilde O \paren{ \sqrt{ n k  }/{(\alpha \varepsilon)} } . % , \text{where $\bB=\bU_k \bSigma_k \bU_k^{\mathsf T}$.}
	  $
%   where   $\bA_k$ is the best rank-$k$ approximation. %The space used by the algorithm is . 
%	The total time to perform the factorization is .%, where  denotes the sparsity of the matrix $\bA.$
\end{corollary}

%The model studied in Dwork {\it et al.}~\cite{DTTZ14} does not have any multiplicative error, so their lower bound does not  apply to our case. 
We do not violate the lower bound of Dwork {\it et al.}~\cite{DTTZ14} because their lower bound is valid when $\alpha=0$, which is not possible for low space algorithms due to~\thmref{informallower}. Dwork {\it et al.}~\cite{DTTZ14} bypassed their lower bounds under a stronger assumption known as {\em singular value separation}: the difference between $k$-th singular value and all $k'$-th singular values for $k'>k$ is at least $\omega(\sqrt{n})$.  
In other words, our result shows that we do not need singular value separation while using significantly less space---$\widetilde O(nk/\alpha^2)$ as compared to $O(n^2)$---if we are ready to pay for a small multiplicative error (this also happens for both the problems studied by Blocki {\it et al.}~\cite{BBDS12} though the authors do not mention it explicitly). There are  scenarios where our algorithm perform better, such as when $ \| \bA^{\mathsf T} \bA - [\bA^{\mathsf T} \bA]_k \|_F = O(\sqrt{kn})$ with small singular value separation. For example, a class of rank-$ck$ matrices (for $c \geq 2$) with singular values $\sigma_1, \cdots, \sigma_{ck}$, such that   $ \sigma_k^2 = \sigma_{k'}^2 + O(\sqrt{n/k})$ for $k \leq k' \leq ck$, does not satisfy singular value separation. Moreover, $\| \bA^{\mathsf T} \bA - [\bA^{\mathsf T} \bA]_k \|_F = O( \sqrt{kn})$ and $\| \bA^{\mathsf T}\bA \|_F = O(\sqrt{m})$. In this case, we have  $ \| \bA^{\mathsf T} \bA - \bB \|_F \leq  O ((\alpha \varepsilon)^{-1}  \sqrt{ kn  }) $.  On the other hand, Dwork {\it et al.}~\cite{DTTZ14} guarantees $ \| \bA^{\mathsf T} \bA - \bB \|_F \leq  O ( \varepsilon^{-1}  k\sqrt{ n  }) $. That is, we improve on Dwork {\it et al.}~\cite{DTTZ14} when $\alpha=\Theta(1)$ for a large class of matrices. Some examples include the matrices considered in the literature of matrix completion (see~\cite{MW17} for details).% as we expect rank-$k$ approximation to be useful only when $\| \bA^{\mathsf T} \bA - [\bA^{\mathsf T} \bA]_k \|_F$ is small. % (which in practice implies that the difference between $k$-th and $k'$-th singular values is not that large). %in terms of space required as well as accuracy.
%would output a matrix $\bB$ such that $ \| \bA^{\mathsf T} \bA - \bB \|_F \leq  O (  \varepsilon^{-1} k \sqrt{ n  } ) $. %One special case when $\bA$ is a rank-$(k+1)$ kernel matrix. 
%This is specially significant when $\| \bA^{\mathsf T} \bA - [\bA^{\mathsf T} \bA]_k \|_F \ll \sqrt{kn}$. Such scenarios are prevalent in most of the practical settings; for example, kernel matrices, where spectrum decay rapidly as shown by  Williams and Seeger~\cite{WS00}. % showed that the spectrum decay rapidly. 

%\medskip \noindent \textbf{Comparison of problem studied.} 

%\subsection{Some Additional Remarks} \label{sec:compare}
%\medskip 

\section{A Meta Low Space Differentially Private Algorithm Under $\priv_1$} \label{sec:metapriv1}
  \begin{figure}[t!]
 \begin{center}
\fbox{
\begin{minipage}[l]{6.3in}
%\small
{
%\begin{center} \underline{\scshape Private-Space-Optimal-$\lrf$} \end{center}
{\bf Initialization.} Set $\rho_1={\sqrt{(1+\alpha)\ln(1/\delta)}}/{\varepsilon}$ and $\rho_2=\sqrt{(1+\alpha)} \rho_1$. Set $\eta=\max\set{k,\alpha^{-1}}$, $t=O(\eta \alpha^{-1}  \log (k/\delta))$, $v=O(\eta \alpha^{-2} \log(k/\delta))$ and $\sigma_\mathsf{min}={16 \log(1/\delta) \sqrt{t (1+\alpha)(1-\alpha)^{-1} \ln(1/\delta)}}/{\varepsilon}$. 
%\sqrt{\ln(1/\delta)}}/{\varepsilon}$. %Sample $\bN_1 \sim \cN(0,\rho_1^2)^{t \times (m+n)}$ and $\bN_2 \sim \cN(0,\rho_2^2)^{v \times v}$. Set $\sigma_\mathsf{min}={16 \log(1/\delta) \sqrt{t (1+\alpha)(1-\alpha)^{-1} \ln(1/\delta)}}/{\varepsilon}$.
%, and $u=\min \set{m,n}$. 
%Sample $\bOmega \sim \cN(0,1)^{u \times t}$. Let  $\bPhi \in \R^{(m+n) \times u}, \bPsi  \in \R^{t \times u}$ such that $\bPhi^{\mathsf T} \sim \cD_R$, $ \bPsi \sim \cD_R$  satisfies~\lemref{CW13}. 
%Let  $\bS\in \R^{v \times u}, \bT \in \R^{v \times (m+n)}$  such that $\bS \sim \cD_A$,  $\bT^{\mathsf T} \sim \cD_A$  satisfies~\lemref{S}. 
%\begin{CompactItemize}
	 Sample $\bOmega \sim \cN(0,1/t)^{m \times t}$, and  $\bPhi \in \R^{(m+n) \times m}, \bPsi  \in \R^{t \times m}$ such that $\bPhi^{\mathsf T} \sim \cD_R$, $ \bPsi \sim \cD_R$  satisfies~\lemref{CW13}. 
	 Sample  $\bS\in \R^{v \times m}, \bT \in \R^{v \times (m+n)}$  such that $\bS \sim \cD_A$,  $\bT^{\mathsf T} \sim \cD_A$  satisfies~\lemref{S}. 
	 Define $\widehat{\bPhi} = \bPhi \bOmega
	 % \in \R^{(m+n) \times t}
	 $.  Sample $\bN_1 \sim \cN(0,\rho_1^2)^{t \times (m+n)}$ and $\bN_2 \sim \cN(0,\rho_2^2)^{v \times v}$. 
	Keep $\bN_1,\bN_2,\widehat \bPhi$ private.
%If $m \leq n$, then 
 %, and  $\bOmega \sim \cN(0,1)^{m \times t}$. \\
%If $m > n$, then sample $\bPsi \in \R^{(n \times (m+n)}, \bPhi  \in \R^{n \times t}$ from the distribution $\cD_R$ which satisfies~\lemref{CW13}, and  $\bOmega \sim \cN(0,1)^{t \times n}$. Sample ~$\bN_1 \sim \cN(0,\rho^2)^{t \times (m+n)}$, $\bS\in \R^{v \times m}, \bT \in \R^{v \times n}$  from the distribution $\cD_A$ which satisfies~\lemref{S}, and~$\bN_2 \sim \cN(0,\rho^2)^{v \times v}$. % such that each entries are sampled i.i.d. from $\cN(0,1)$. 
%\begin{description}
%\item [When  $u=m$.] 
%\end{CompactItemize}

\medskip \noindent \textbf{Computing the factorization.} 	  
 On input a matrix $\bA \in \R^{m \times n}$,
 %   we transform $\bA$ to $\widehat{\bA}:=\begin{pmatrix}\bA & \sigma_{\min} \I_m  \end{pmatrix}$ and follow the following steps:
\begin{CompactEnumerate}
	\item Set $\widehat{\bA} = \begin{pmatrix} \bA & \sigma_\mathsf{min} \I_m \end{pmatrix}$ by padding a scaled identity matrix $\sigma_\mathsf{min} \I_m$ to the columns of $\bA$, where $\I_m$ denotes an $m \times m$ identity matrix. Compute $\bY_c = \widehat{\bA} \widehat{\bPhi}$,  $\bY_r=  \bPsi \widehat{\bA} + \bN_1,$ and $\bZ= \bS\widehat{\bA} \bT^{\mathsf T} + \bN_2.$
%	\item Compute $\bY_r=  \bY_r'  = \bPsi \widehat{\bA} + \bN_1,$ and $\bZ= \bZ'  = \bS\widehat{\bA} \bT^{\mathsf T}+\bN_2$.% in a single pass.
%	If $m \leq n$, then when an entry is streamed, update 
%	set $\widehat{\bA}= \begin{pmatrix}\bA & \sigma_{\min} \I  \end{pmatrix} $; else set $\widehat{\bA}= \begin{pmatrix}\bA^{\mathsf T} & \sigma_{\min} \I  \end{pmatrix}^{\mathsf T} $. Compute $\bY_c'=\widehat{\bA} \bPhi $,  $\bY_r'=  \bPsi \widehat{\bA},$ and $\bZ= \bS \widehat{\bA} \bT^{\mathsf T} + \bN_2$ in a single pass. If $m \leq n$, compute $\bY_r=\bY_r +\bN_1$ and $\bY_c = \bY_c' \bOmega$; else compute $\bY_c=\bY_c +\bN_1$ and $\bY_r = \bOmega \bY_c' $.
%	\item %Return the 
%	Output  {\scshape Factor}$(\bY_c,\bY_r,\bZ,\bS,\bT,m,m+n,k,t,v)$.
%\end{CompactEnumerate}
%\end{description}
%\medskip \noindent \textbf{ {\scshape Factor}$(\bY_c,\bY_r,\bZ,\bS,\bT,m,n,k,t,v)$}
%\begin{CompactEnumerate}
	\item Compute a matrix $\bU \in \R^{m \times t}$ whose columns are orthonormal basis  for the column space of $\bY_c$ and matrix $\bV \in \R^{t \times n}$ whose rows are the orthonormal basis for the row space of $\bY_r$.
	\item Compute a SVD of $\mathbf{S} \bU := \widetilde{\bU}_s \widetilde{\bSigma}_s \widetilde{\bV}_s^{\mathsf T} \in \R^{v \times t}$ 
and a SVD of $ \bV\mathbf{T}^{\mathsf T}:=\widetilde{\bU}_t \widetilde{\bSigma}_t \widetilde{\bV}_t^{\mathsf T} \in \R^{t \times v}.$ 
	\item Compute a SVD of $ \widetilde{\bV}_s \widetilde{\bSigma}_s^{\dagger} [\widetilde{\bU}_s^{\mathsf T} \mathbf{Z} \widetilde{\bV}_t]_k \widetilde{\bSigma}_t^{\dagger} \widetilde{\bU}_t^{\mathsf T} $. Let it is be $\bU' \bSigma' \bV'^{\mathsf T}$. 
	\item Output the matrix  $\widetilde{\bU}=\bU \bU'$ comprising of left singular vectors, diagonal matrix $\widetilde{\bSigma}=\bSigma'$, and the matrix $\widetilde{\bV}=\bV^{\mathsf T} \bV' $ with right-singular vectors. 
\end{CompactEnumerate}
}
\end{minipage}
} \caption{Space Optimal $(\varepsilon,\delta)$-Differentially Private $\lrf$ Under $\priv_1$ ({\scshape Private-Space-Optimal-$\lrf$})} \label{fig:spacelowprivate}
\end{center}
\end{figure}
%We first give the main idea behind the algorithm presented in~\figref{spacelowprivate} keeping in mind that we wish to extend 
Our aim in this section is to present a unified algorithmic approach in the form of meta algorithm (see, \figref{spacelowprivate}) for all the settings mentioned in the introduction. This restricts a lot in terms of algorithmic design especially when privacy is the main focus. For example, since one of our goals is private algorithms under turnstile update model, we are restricted to only use linear sketches due to the result of Li {\it et al.}~\cite{LNW14}. % For this, we present a meta algorithm (\figref{spacelowprivate}) that forms the basis of most of the algorithms presented in this paper. Our aim is . 
%Both these post-processing are fairly expensive process. 

Even though we are restricted by what approach we can take, we show that advance yet inexpensive post-processing combined with careful analysis can lead to a  unified algorithmic approach for outputting $\lrf$ with significantly small error in all the settings mentioned earlier in the introduction. 
In particular, we make three key observations: (i) there is a way to maintain differentially private sketches of $\bA$ (henceforth, we call such sketches {\em noisy sketches}) that  give sub-optimal additive error,  (ii) one can apply post-processing to these noisy sketches to obtain optimal additive error, and (iii) even though we can only use linear sketches when matrices are updated arbitrarily, we can judiciously exploit the properties of these sketches in our analysis to get tight bounds on the additive error for all the settings considered in this paper.  % terms, each of which can be tightly bound. 
%Often, in the literature of differential privacy, the power of post-processing is ignored. %In retrospect, this paper shows how post-proessing can help improve the accuracy. 
%Here, we use the property that differential privacy is preserved under post-processing. 
%Suppose $\bU \bSigma \bV^{\mathsf T}$ be a SVD of the input matrix $\bA$. 

To illustrate point (i) and why we need post-processing, consider the following vanilla algorithm, which is reminiscent of Dwork {\it et al.}~\cite{DTTZ14}, for approximating the right singular vector: compute $\bB = \bPhi \bA + \bN_1$, where $\bPhi$ satisfies~\cite[Theorem 1]{MM13} and $\bN_1 \sim \cN(0,\rho_1^2)^{\widetilde O(n^2) \times n}$ for $\rho_1$ as defined in~\figref{spacelowprivate}. The output is $[\bB]_k$, the best rank $k$ approximation of $\bB$. This already gives a good approximation. Let $m \gg n^2$ and let $[\t{\bU}]_k [\t{\bSigma}]_k [\t{\bV}]_k^{\mathsf T}$ be the singular value decomposition of $[\bB]_k$. % and $\bU_k \bSigma_k \bV_k^{\mathsf T}$ be that of $\bA_k$. 
Then by~\cite[Theorem 1]{MM13},
\begin{align*}
	\| \bA - \bA [\t{\bV}]_k [\t{\bV}]_k^{\mathsf T} \|_F & \leq \| (\bA + \bPhi^\dagger \bN_1) - (\bA + \bPhi^\dagger \bN_1) [\t{\bV}]_k [\t{\bV}]_k^{\mathsf T} \|_F+ \| \bPhi^\dagger \bN_1  +  \bPhi^\dagger \bN_1 [\t{\bV}]_k [\t{\bV}]_k^{\mathsf T} \|_F \\
	& \leq \paren{1 - \alpha}^{-1}{\| \bB (\I -  [\t{\bV}]_k [\t{\bV}]_k^{\mathsf T}) \|_F} + O({\|  \bPhi^\dagger \bN_1 + \bPhi^\dagger \bN_1 [\t{\bV}]_k [\t{\bV}]_k^{\mathsf T} \|_F}) \\
	& = \paren{1 - \alpha}^{-1}{\| \bB - [\bB]_k \|_F}  + O({\|  \bPhi^\dagger \bN_1 + \bPhi^\dagger \bN_1 [\t{\bV}]_k [\t{\bV}]_k^{\mathsf T} \|_F})  \\
%			& = \paren{1 - \alpha}^{-1} {\min_{\mathsf{rank}(\bX) \leq k} {\| \bB - \bX \|_F}} +O({\|  \bPhi^\dagger \bN_1 \|_F + \| \bPhi^\dagger \bN_1 [\t{\bV}]_k [\t{\bV}]_k^{\mathsf T} \|_F})  \\
			& \leq \paren{1 - \alpha}{ \| \bB(\I -  [\bV]_k [\bV]_k^{\mathsf T} )\|_F}  + O({\|  \bPhi^\dagger \bN_1 + \bPhi^\dagger \bN_1 [\t{\bV}]_k [\t{\bV}]_k^{\mathsf T} \|_F}) \\
			& \leq {\paren{1 + \alpha }\paren{1 - \alpha}^{-1}} \Delta_k + O({\|  \bPhi^\dagger \bN_1 \|_F + \| \bPhi^\dagger \bN_1 [\t{\bV}]_k [\t{\bV}]_k^{\mathsf T} \|_F}) .
\end{align*}

The term in $O(\cdot)$ can be bounded using the projection property of $\bPhi$ to get a sub-optimal additive error. The question is whether we can further improve it to get  optimal additive error. We show that with careful algorithmic design and analysis as pointed out in point (ii) above, we can indeed achieve optimal additive error. That is, we can extract top-$k$ singular components of the input matrix $\bA$ from  sketches that are appropriately perturbed to preserve differential privacy. The underlying idea is as follows:
suppose we know the singular value decomposition of $[\bA]_k:=[\bU]_k [\bSigma]_k [\bV]_k^{\mathsf T}$, where $[\bA]_k$ is a best rank-$k$ approximation to $\bA$~\cite{EY36}. Then for finding a matrix $\bB$ such that $\bB \approx \bA$, it suffices to compute $\widetilde{\bU}$ that approximates $[\bU]_k$, $\widetilde{\bSigma}$ that approximates $[\bSigma]_k$, and $\widetilde{\bV}$ that approximates $[\bV]_k$, and set   $\bB := \widetilde{\bU} \widetilde{\bSigma} \widetilde{\bV}^{\mathsf T}$. 
However, this over simplistic overview does not guarantee privacy. The question next is which privacy preserving technique one should use to  ensure  small additive error (and if possible, optimal additive error).

   The two  traditional methods to preserve privacy---input perturbation and output perturbation---do not provide both privacy and small additive error.  For example, if we use  output perturbation to compute the sketches $\bY_c= {\bA} {\bPhi} + \bN$ and $\bY_r = \bPsi {\bA} + \bN'$ for appropriate sketching matrices $\bPhi$ and $\bPsi$ and noise matrices $\bN$ and $\bN'$, then we get an additive error term that can be arbitrarily large (more specifically, depends on the Frobenius norm of $\bA$ and has the form
%$\| \bN ([\bA]_k^{\mathsf T} \bPhi)^{\dagger} (\bPsi \bA \bPhi ([\bA]_k^{\mathsf T} \bPhi)^{\dagger})^\dagger \mathbf{N}' \|_F$). 
$\| \bN \cL_\bA \mathbf{N}' \|_F$ for some  linear function $\cL_\bA$ of $\bA$). On the other hand,  input perturbation of the linear sketches $\bY_r = \bPsi \bA$ and  $\bY_c = \bA \bPhi$ followed by a multiplication by Gaussian matrices $\bOmega_1$ and $\bOmega_2$ as in~\cite{BBDS12, Upadhyay14,Upadhyay14b} can leak private data due to a subtle reason. Every row of $\bOmega_1 \bY_c$ (and columns of $\bY_r \bOmega
_2$) has a multivariate Gaussian distribution if the determinant of $\bY_c^{\mathsf T} \bY_c$ ($ \bY_r \bY_r^{\mathsf T}$, respectively) is non zero. If $m < n$, one can prove that computing $\bY_c \bOmega_1$ preserves privacy, but, since, $\bY_r$ is not a full-column rank matrix, the multivariate Gaussian distribution is not defined. The  trick to consider the subspace orthogonal to the kernel space of $\bY_r$~\cite{BBDS12} does not work because span of $\bPsi {\bA}$ and $\bPsi {\bA}'$  may not coincide for neighboring matrices $\bA$ and $\bA'$. If the span do not coincide, then one can easily differentiate the two cases with high probability, violating differential privacy. In fact, until this work, it was not even clear whether using input perturbation yields low rank approximation (see the comment after Theorem IV.2 and discussion in Section V in Blocki {\it et al.}~\cite{BBDS12}). 

All of this presents a pessimistic picture. However, we  show that we incur optimal additive error and preserve privacy if we use input perturbation with a careful choice of parameter to one of the sketches (depending on whether $m \leq n$ or not) and output perturbation to the other two sketches. The intuitive reason why this incurs small  additive error   is the fact that only one of the sketches, $\bY_r$ or $\bY_c$, undergoes output perturbation, so there is no term like $\| \bN \cL_\bA \mathbf{N}' \|_F$  as above. 
 While helpful to the intuition, this does not directly yield optimal additive error. This is because, 
%However, this does not directly gives us additive error terms that are easy to analyze---
even if we do not get an additive error term with large value like $\| \bN \cL_\bA \mathbf{N}' \|_F$, if not analyzed precisely,  one can either get a non-analytic expression for the error terms or one that is difficult to analyze. %Therefore, for bounding the additive error, ideally, we would like an analytic solution to the  error terms. 
%Moreover, we would like all the error terms to have a nice analytic expression. 
To get  analytic expressions for all the error terms that are also easier to analyze, we introduce two carefully chosen optimization problems (see~\eqnref{optimizationproblems} below). We carefully tune our analysis so that the intermediate terms satisfy certain properties (see the proof sketch below for exact requirements). %This requires a different proof for one of the theorems of Boutsidis {\it et al.}~\cite{BWZ16}. 
This allows us to show that $\bY_c$ and $\bY_r$ (or equivalently, their orthonormal bases ${\bU}$ and ${\bV}$),  as formed in~\figref{spacelowprivate}, approximates the span of $[\bU]_{k}$ and $[\bV]_{k}$ up to a small additive error. % by analyzing two  optimization problems. These    %We show the following:
\begin{comment}
	\begin{align*} 
	 \min_{ r(\bX) \leq k} \| \bY_c \bX \bY_r &- \bB \|_F \leq (1+\alpha)^2 \| \widehat{\bA} - [\widehat{\bA}]_k \|_F + \| \bS^\dagger \bN_2 (\bT^\dagger)^{\mathsf T} \|_F +  +  \| { \widehat{\bA} \widehat{\bPhi}  ([\widehat{\bA}]_k \widehat{\bPhi})^{\dagger} (\bPsi \widehat{\bA} \widehat{\bPhi} ([\widehat{\bA}]_k \widehat{\bPhi})^{\dagger})^\dagger \bN_1}\|_F . \label{eq:stepinformal1}
	\end{align*}
\end{comment}
%The choice of optimization problems results in an additive term with a closed form expression and a relation between $\min_{\mathsf{rk}(\bX)\leq k} \| \bY_c \bX \bY_r \|_F $ and $\Delta_k$. Getting a closed form expression for additive error is crucial as it allows us to bound the additive error. % and an additive term. 

Once we have extracted a ``good enough" ${\bU}$ and ${\bV}$, our problem reduces to computing $\argmin_{\mathsf{rk}(\bX)\leq k } \| \bA -  \bU \bX  \bV \|_F $. This would require storing the whole matrix $\bA$, something that we wish to avoid. To avoid storing the whole $\bA$, we use the fact that $\bS$ and $\bT$ are sampled from a distribution of random matrices with a property that, for all appropriate $\bX$, $ \| \bA -  \bU \bX  \bV \|_F \approx  \| \bS( \bA -  \bU \bX  \bV) \bT^{\mathsf T} \|_F$. In other words, without privacy, $\argmin_{\mathsf{rk}(\bX)\leq k} \|\bS(\bA  -  \bU \bX  \bV)\bT^{\mathsf T}  \|_F$ can be used to get a ``good" approximation of $[\bSigma]_k$. 

The exact method to perform and analyze the approximation of $[\bSigma]_k$ is slightly more involved because we only have access to the noisy version of $\bS\bA \bT$. We show that careful post processing 
%To approximate $\bSigma_k$, we give a subroutine $\mathsf{Factor}$. 
%Our algorithm  maintains three noisy sketches: $\bY_c$ that approximates the column space, $\bY_r$ that approximates the row space, and $\bZ$ that approximates the matrix. 
% that is no better than the trivial solution of outputting an all zero matrix, when the input matrices are binary matrices.The naive method to convert the non-private algorithm of Boutsidis {\it et al.}~\cite{BWZ16} (either by adding noise to the input matrix and then running their algorithm or perturbation of sketch and then running their algorithm) leads to a sub-optimal additive error due to one of the steps in their algorithm which uses optimization over rank-$k$ matrices. More concretely, both the approaches leads to an additive error that depends linearly on $m$ and $n$. 
%We then use simple linear algebra on these sketches to get a private $\lrf$. %Our algorithm does not require solving  any optimization problem, which is one of the main sources of suboptimal additive error. 
%This we feel makes our algorithm very simple to state and analyze. 
  allows one to output an approximation to $\bSigma_k$ under a rotation and a small additive error. %This step exhibits the power of careful post-processing.  %The privacy proof still has some subtleties. 
%Since $\priv_1$ is more general than the earlier works, %~\cite{BBDS12, Upadhyay14,Upadhyay14b}, %In our case, we consider two matrices are neighboring if they have the form $\mathbf{u} \mathbf{v}^{\mathsf T}$ for unit vectors $\mathbf{u},\mathbf{v}$.To recall, previous works considered two matrices neighboring if they differ in at most one row by a unit norm. 
%we also need to generalize the previous privacy results to prove that the input perturbation is private. 
Finally, we arrive at the following result stated for the case when $m \leq n$. The result when $m >n$ can be derived by just swapping $m$ and $n$ in the theorem below.% in our proof. % using basic properties of random projections. 
%we incur an additive error that depends on $\sqrt{m}$ and $\sqrt{n}$ as compared to a large additive error (see~\appref{BWZ16}) incurred by the algorithm of Boutsidis {\it et al.}~\cite{BWZ16} for the same amount of noise. 
%This reduction in error is significant when the matrices have low singular values. 

%Moreover, it exemplifies the fact that if we have noisy storage, then the algorithm of Boutsidis {\it et al.}~\cite{BWZ16} gives weaker accuracy bound in comparison to our algorithm.

  \begin{theorem} 
 \label{thm:meta} 
Let $m,n,k \in \N$ and $\alpha,\varepsilon,\delta$ be the input parameters (with $m \leq n$).  Let %$s=\max\{m,n\}, u=\min \set{m,n}$, 
$\kappa=(1+\alpha)/(1-\alpha)$, $\eta=\max\set{k,\alpha^{-1}}$, and $\sigma_\mathsf{min}={16 \log(1/\delta) \sqrt{t \kappa \ln(1/\delta)}}/{\varepsilon}$. Given an $m \times n$ matrix $\bA$,   {\scshape Private-Space-Optimal-$\lrf$}, described in~\figref{spacelowprivate}, outputs a factorization $\widetilde{\bU}, \widetilde{\bSigma}, \widetilde{\bV}$ such that 
\begin{CompactEnumerate}
 	\item {\scshape Private-Space-Optimal-$\lrf$} is $(3\varepsilon,3\delta)$ differentially private under $\priv_1$.  \label{privacy}
	\item   With probability $9/10$ over the random coins of  {\scshape Private-Space-Optimal-$\lrf$},  \label{privatecorrectness1}
 \begin{align*} 
%&\text{When}~ \bPhi \sim \cD_R,~
\| \begin{pmatrix} \bA & \mathbf{0} \end{pmatrix} - \widetilde{\bU} \widetilde{\bSigma} \widetilde{\bV}^{\mathsf T} 
\|_F \leq (1+\alpha) \Delta_k  + O ( \sigma_\mathsf{min} \sqrt{m} + \varepsilon^{-1}  {\sqrt{kn  \ln(1/\delta)}} ),~\text{where}~\mathbf{M}_k=\widetilde{\bU} \widetilde{\bSigma} \widetilde{\bV}^{\mathsf T}.
%&\text{When}~\widehat \bPhi \sim \cN(0,1)^{(m+n)\times t},~\| \begin{pmatrix} \bA & \mathbf{0} \end{pmatrix} - \widetilde{\bU} \widetilde{\bSigma} \widetilde{\bV}^{\mathsf T}
%\mathbf{M}_k 
%\|_F \leq (1+\alpha) \Delta_k  + O ( \sigma_\mathsf{min} \sqrt{u} + \varepsilon^{-1}  {\sqrt{ks  \ln(1/\delta)}} ) %,~\text{where}~\mathbf{M}_k=\widetilde{\bU} \widetilde{\bSigma} \widetilde{\bV}^{\mathsf T}.
\end{align*}
	\item The space used by {\scshape Private-Space-Optimal-$\lrf$} is $O((m+n)\eta \alpha^{-1}\log(k/\delta))$.  \label{privatespace}
	\item The  total computational time is 
	$O\paren{  {\mathsf{nn}(\bA) \log (1/\delta) +  {(m^2 +n\eta)\eta {\alpha^{-2}} \log^2(k/\delta) }  + {\eta^3 {\alpha^{-5}}  \log^3 (k/\delta)} } }.$ %\label{privatetime}
%where $\mathbf{M}_k=\widetilde{\bU} \widetilde{\bSigma} \widetilde{\bV}^{\mathsf T}. $% with probability $1-O(\delta)$.% over $\bPhi, \bPsi \sim \cD_R$ and $\bS, \bT \sim \cD_A$.
%	\item The initialization time of the algorithm is $O_\delta((m+n)\log k)$ and the total time to perform the factorization is $O(\mathsf{nn}(\bA)\log(1/\delta)) + {O}((nk^2 \alpha^{-2}  + mk^2 \alpha^{-2} + k^3 \alpha^{-4})\log(1/\delta))$. 
\end{CompactEnumerate}
Here $\begin{pmatrix} \bA & \mathbf{0} \end{pmatrix}$ is the matrix formed by appending an all zero $m \times m$ matrix to the columns of $\bA$.
 \end{theorem}
\begin{proof}[Proof Sketch.] 
The proof of~\thmref{meta} is presented in~\appref{lowspaceprivate}. Here, we give a brief sketch of \textcolor{red}{part}~\ref{privatecorrectness1} (for $m \leq n$) to illustrate the key points that helps us in the discussion of other results in this paper. Let $\h{\bA}$ be as defined in~\figref{spacelowprivate}. \textcolor{red}{Part}~\ref{privatecorrectness1} follows from  the following %in three main steps: 
%(i) lower bound $ \| \mathbf{M}_k - \widehat{\bA} \|_F$ by $\| \mathbf{M}_k - \begin{pmatrix} \bA & \mathbf{0} \end{pmatrix} \|_F$ up to an additive term using subadditivity of norm,
%	(ii) relate  $ \| \mathbf{M}_k - \widehat{\bA} \|_F$ and $ \| \widehat{\bA} - [\widehat{\bA}]_k \|_F$, and  
%	(iii) upper bound $ \| \widehat{\bA} - [\widehat{\bA}]_k \|_F$ by a term linear in $\| \bA - [\bA]_k \|_F$ up to an additive term 
%More specifically, we prove the 
 chain of inequalities and bounding $\| \h{\bA} \bPhi \cL_\bA  \bN_1  \|_F$:
\begin{align}
 \| \mathbf{M}_k - \begin{pmatrix} \bA & \mathbf{0} \end{pmatrix} \|_F  &\leq  \| \mathbf{M}_k - \widehat{\bA} \|_F + O(\sigma_{\min} \sqrt{m}) \nonumber \\
 	& \leq (1+\alpha) \|\widehat{\bA} - [\widehat{\bA}]_k\|_F  +  \| \h{\bA} \bPhi \cL_\bA  \bN_1  \|_F + O(\sigma_{\min} \sqrt{m}) \nonumber \\
& \leq  (1+\alpha)  \| \bA - [\bA]_k \|_F   +  \| \h{\bA} \bPhi \cL_\bA  \bN_1  \|_F + O( \sigma_{\min} \sqrt{m}),  \label{eq:equation3}
\end{align}
where the matrix $\cL_\bA$ satisfies the following properties: (a) $
   	\|  \h{\bA} \bPhi \cL_\bA \bPsi \h{\bA} - \h{\bA} \|_F \leq (1+\alpha) \| \h{\bA} - [\h{\bA}]_k \|_F$, (b) $\cL_\bA$ has rank at most $k$, and (c) $ \bPsi \bA \bPhi \cL_\bA$ is a rank-$k$ projection matrix. We use subadditivity of norm to prove the first inequality and Weyl's perturbation theorem (\thmref{Weyl}) to prove the third inequality. Proving the second inequality is the  technically involved part. For this, we need to find a candidate $\cL_\bA$. Let suppose we have such a candidate $\cL_\bA$ with all the three properties. Then we can show that%from subadditivity of norm and the fact that , we derive the following inequalities: 
\begin{align}
 \min_{\mathsf{rk}(\bX) \leq k} \| \bU \bX \bV - \bB \|_F &\leq \|  \h{\bA} \bPhi \cL_\bA \bPsi \h{\bA} - \h{\bA}\|_F + \| \h{\bA} \bPhi \cL_\bA  \bN_1  \|_F + \| \bS^\dagger \bN_1 (\bT^\dagger)^\mathsf T \|_F \nonumber \\
 	&\leq (1+\alpha) \|\widehat{\bA} - [\widehat{\bA}]_k\|_F +  \| \h{\bA} \bPhi \cL_\bA  \bN_1  \|_F  +  \|  \bS^\dagger \bN_2 (\bT^{\mathsf T})^\dagger \|_F,  \label{eq:overallbound}   
\end{align}
where $\bB= \bA + \bS^\dagger \bN_1 (\bT^\dagger)^\mathsf T$. The first inequality follows from the subadditivity of Frobenius norm, the fact that $\bU$ and $\bV$ are orthonormal bases of $\bY_c$ and $\bY_r$, and property (b) to exploit that minimum on the left hand side is over rank-$k$ matrices. We then use the approximation guarantee of property (a) to get the second inequality. 
\begin{comment}
Further, since $\bS$ and $\bT$ satisfies~\lemref{S}, we can show
\begin{align}
\min_{\mathsf{rk}(\bX) \leq k} \| \bS(\bU \bX \bV - \bB)\bT^{\mathsf T} \|_F 
 &\leq (1+\alpha)^2 \min_{\mathsf{rk}(\bX) \leq k} \| \bU \bX \bV - \bB \|_F.  \label{eq:overallbound2s}
\end{align}
 %, and property (iii) to upper bound the first additive error term.
A lower bound to the left hand side of~\eqnref{overallbound2s} can be computed using \lemref{orthonormal}. 
\end{comment}
Using~\lemref{orthonormal} and the fact that $\bS$ and $\bT$ satisfies~\lemref{S}, we can lower bound the left hand side of~\eqnref{overallbound} up to an additive term as follows:
\[ \| \begin{pmatrix} \bA & \mathbf{0} \end{pmatrix} - \widetilde{\bU} \widetilde{\bSigma} \widetilde{\bV}^{\mathsf T} 
\|_F -  \|  \bS^\dagger \bN_2 (\bT^{\mathsf T})^\dagger \|_F \leq (1+\alpha)^3 \min_{\mathsf{rk}(\bX) \leq k} \| \bU \bX \bV - \bB \|_F , \]
where $\t{\bU},\t{\bSigma}, $ and $\t{\bV}$ are as in~\figref{spacelowprivate}. 
We upper bound the right hand side of~\eqnref{overallbound} by using Markov's inequality combined with the fact that both $\bS$ and $\bT$ satisfy~\lemref{S} and $\cL_\bA$ satisfies property (c). 
Scaling the value of $\alpha$ by a constant gives \textcolor{red}{part}~\ref{privatecorrectness1}. So all that remains is to find a candidate matrix $\cL_\bA$. 
To get such an $\cL_\bA$,  we construct and solve the following two closely related  optimization problems: %These regression problems are as follows:
 \begin{align}
& \mathsf{Prob}_1: \min_{\bX} \| \bPsi( \h{\bA} \bPhi ([\h{\bA}]_k \bPhi)^{\dagger} \bX - \h{\bA}) \|_F \quad \text{and} & \mathsf{Prob}_2: \min_{\bX} \| \h{\bA} \bPhi ([\h{\bA}]_k \bPhi)^{\dagger} \bX - \h{\bA} \|_F.  \label{eq:optimizationproblems}
\end{align} 
We prove that a solution to $\mathsf{Prob}_1$ gives us a candidate  $\cL_\bA$. This completes the proof.  %A simple application of Markov's inequality now completes the proof. % by using %The final bound in the theorem follows by combining~\eqnref{equation1}-(\ref{eq:equation3}) and 
%Markov's inequality.
%W  
\end{proof}

\paragraph{Adapting to Continual Release Model.} 
Until now, we gave algorithms that produce the output only at the end of the stream. %, then it is called {\em one-shot $(\varepsilon,\delta)$-differentially private algorithm}.  
There is a related model called {\em $(\varepsilon,\delta)$-differential privacy  under $T$-continual release}~\cite{DNPR10}. In this model, the server receives a stream of length $T$ and produces an output  after every  update,
% In this model, 
  such that every output 
is $(\varepsilon,\delta)$-differentially private. 
Our meta algorithm can be easily modified to work in this model while incurring an extra $\log T$ factor. 
We utilize the fact that we only store noisy linear sketches  of the private matrix during the updates % to make our  algorithm output in the continual release model.  %We also consider two ways in which  output is produced. If 
%\paragraph{Continual release model.} 
%Recall that in a $T$-continual release model~\cite{CSS12,DNPR10},  one is required to output a low-rank factorization at every time epoch $1 \leq t \leq T$. 
%This raises an important question: Is it possible to privately release $\lrf$ in the continual release model?
%, i.e.,  %This leads to the following question:%, first asked by Dwork {\it et al.}~\cite{DNPR10} and Chen {\it et al.}~\cite{CSS12}. 
%In this model, the analyst has to output the low-rank factorization at every time epoch.
%We show that one can convert our algorithms to continual-release low-rank factorization using the binary tree mechanism~\cite{CSS12,DNPR10} (see~\secref{informalcontinualspace} for more details).
%This is because our algorithm generates and maintains noisy linear sketches during the updates 
and  low-rank factorization is computed through post-processing on only the noisy sketches. So, we can use the generic transformation~\cite{CSS12,DNPR10} to  maintain the sketch form of the updates. A factorization for any range of time can be then done by aggregating all the sketches for the specified range using range queries. 
%All our sketches are linear; therefore, we can use the binary tree mechanism~\cite{CSS,DNPR10} to get low-rank factorization under continual release model. 
This gives the {\em first instance} of algorithm that provides differentially private continual release of $\lrf$. We show the following.
\begin{corollary}  \label{cor:informalcontinualspace} (\thmref{continuallowspace}, informal) Let $\bA$ be an $m \times n$ private matrix  with $m \leq n$ streamed over $T$ time epochs. %Let $\eta=\max \set{k,\alpha^{-1}}$. 
Then 
there is an $(\varepsilon,\delta)$-differentially private algorithm under $\priv_1$ that outputs a rank-$k$ factorization  under the continual release for  $T$ time epochs such that 
	%The factorization outputted by the algorithm presented in~\figref{spacelow} computes a $k$-rank factorization of a matrix $\bA$ such that, with probability $99/100$ over $\bPhi, \bPsi \sim \cD_R$ and $\bS, \bT \sim \cD_A$, 
%	 \begin{align*}
  %\| \bA - \bU_k \bSigma_k \bV_k^{\mathsf T}\|_F  \leq (1+\alpha) \|\bA - [\bA]_k\|_F + 
 $ \gamma = \widetilde O ({  {\varepsilon}^{-1} ({ \sqrt{{mk} \alpha^{-1}} + \sqrt{kn}}) \log T  })$ with probability $9/10$.
  %O \paren{ \frac{ \sqrt{\log(1/\delta)}}{\varepsilon}\paren{ \sqrt{  kp \paren{1+\frac{1}{\alpha}}} +  \sqrt{\frac{kmn}{ (p + p\alpha)}}  } }.
 %  Moreover, the amortized cost  to construct the data structure is $\widetilde{O}(\mathsf{nn}(\bA) + m^2\eta/\alpha+ (m+n)\eta^2/ \alpha^{2} )$.
 %  \end{align*} %where $O_\delta$ hides the $\poly \log(1/\delta)$ factor. 
%   where   $\bA_k$ is the best rank-$k$ approximation. %The space used by the algorithm is . 
%	The total time to perform the factorization is .%, where  denotes the sparsity of the matrix $\bA.$
\end{corollary}

\paragraph{From $\priv_1$ to $\priv_2$.} If we try to use the idea described above to prove differential privacy under $\priv_2$, we end up with an additive error  that depends linearly on $\min \set{{m},{n}}$. This is because we need to perturb the input matrix by a noise proportional to $\min \{\sqrt{km},\sqrt{kn} \}$ to preserve differential privacy under $\priv_2$.  We show that by maintaining noisy sketches $\bY= \bA \bPhi + \bN_1$ and $\bZ=\bS \bA + \bN_2$ for appropriately chosen noise matrices $\bN_1$ and $\bN_2$ and sketching matrices $\bPhi$ and $\bS$, followed by some post processing, we can have  an optimal error differentially private algorithm under $\priv_2$.  Here, we require  $\bS$ to satisfy the same property as in the case of $\priv_1$.  However, the lack of symmetry between $\bS$ and $\bPhi$ requires us to decouple the effects of  noise matrices to get a tight bound on the additive error.
%The idea is that the subroutine {\scshape Factor} mentioned earlier can be modified to approximately compute $\bSigma_k \bV^\mathsf T_k$. 

%\medskip \noindent {\bf On lack of assumptions in~\thmref{meta}.} 
\section{Local Differentially Private Algorithm} \label{sec:informallocal}
%\paragraph{Local Differential Privacy.} 
%\paragraph{Bounded Row Norm Matrices and Row-wise Update.} 
%Traditionally, low-rank factorization has been studied where a single server has the whole matrix and can use polynomial space~\cite{BDMN05,MM09,HR12,HP14,DTTZ14}.
  %Generally, local model entails a different set of algorithmic techniques from the streaming model. 
%\paragraph {\bf Local Differentially Private Algorithm.} 
Till now, we have considered a single server that receives the private matrix in a streamed manner. We next consider a stronger
 variant of differential privacy known as {\em local differential privacy} (LDP) \cite{DJW13,DMNS06,EGS03,W65}. In the \emph{local} model, each individual applies a differentially
private algorithm locally to their data and shares only the output of
the algorithm---called a report---with a server that
aggregates users' reports. A multi-player protocol is $(\varepsilon,\delta)$-LDP if for all possible inputs and runs of the
protocol, the transcript of player $i$'s interactions with the server
is $(\varepsilon,\delta)$-LDP.    
%In the central model,  the server has access to the private matrix all the time.  %In the first variant, known as {\em  turnstile update model} (see~\figref{streaming}), we consider the scenario when the matrix is updated constantly and the server can  access any update just once. In the {\em row-wise update model}, the matrix is updated row-wise.  

One can study two variants of local differential privacy depending on whether the server and the users interact more than once or not (see  \figref{LDP}).    
In the interactive
     variant, the server sends several messages, each to a subset of
    users.  In
    the noninteractive variant, the server sends
 a single message
   to all the users at the start of the protocol and sends no message after that.         
    Smith, Thakurta, and Upadhyay~\cite{STU16}   
%\label{def:DP}
%  An algorithm $\mathfrak{M}_i$ is $(\varepsilon,\delta)$-locally differentially private (LDP) if  for all neighboring  matrices $\bA$ and ${\bA}'$, and for all events $S$ in the output  space of $\mathfrak{M}$, we have $\Pr [\mathfrak{M}(\bA)\in S] \leq e^\varepsilon \Pr[\mathfrak{M}(\bA') \in S] + \delta \, .$ 
 \begin{wrapfigure}{r}{0.75\textwidth}
%  \centering
  \includegraphics[height=1in]{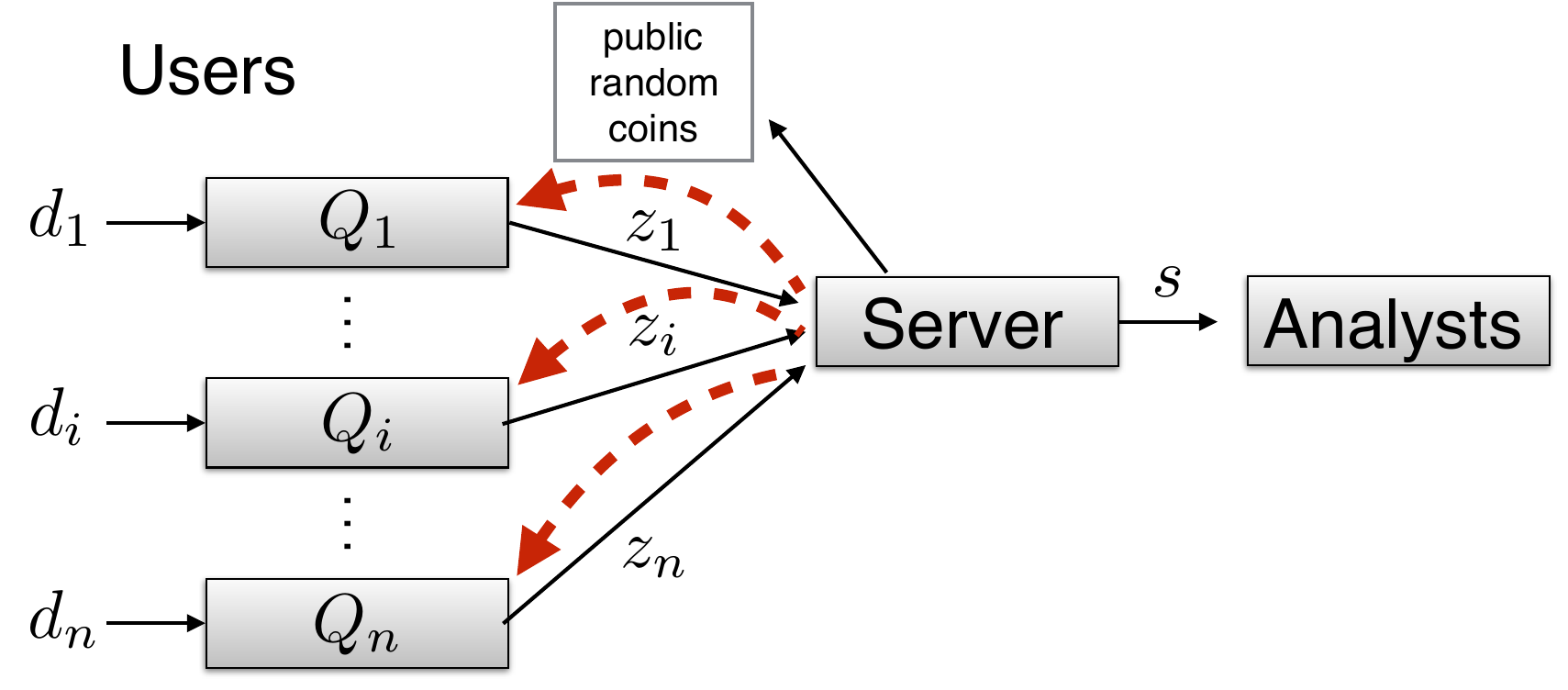}
  \hspace{6pt}  \rule{0.4pt}{1in}
  \hspace{6pt}  \includegraphics[height=1in]{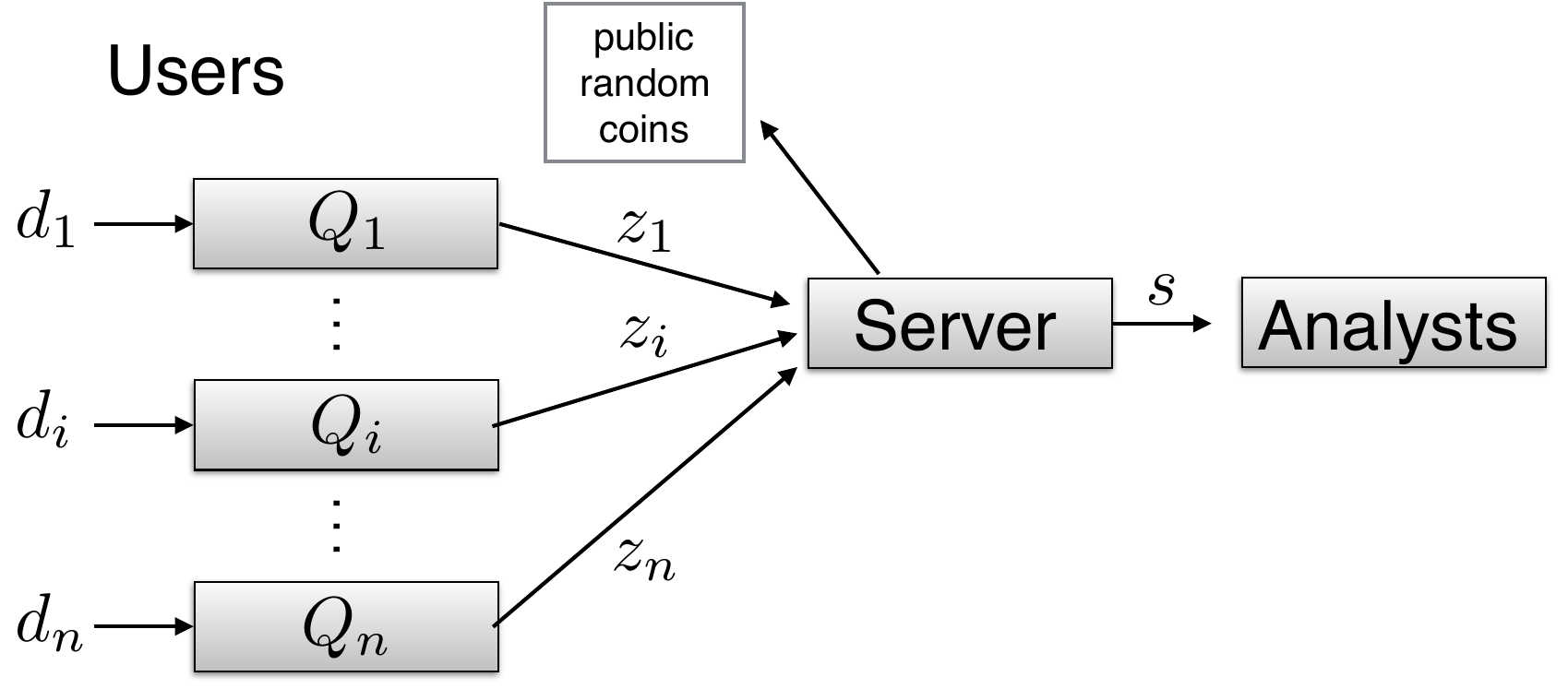}  
  \caption{LDP with (left) and
    without (right) interaction.}
  \label{fig:LDP}
\end{wrapfigure} %\begin{definition}%[\cite{EGS03,DMNS}]
argued that noninteractive locally private algorithms are ideal for  implementation.  %   One could
% use cryptographic techniques %such as secure function evaluation
%to simulate our meta algorithm in a local model, but such
%algorithms impose bandwidth and liveness constraints making them impractical.
%\paragraph{Local differential privacy.} 

The natural extension of \textcolor{red}{Problem}~\ref{prob:private_factor} in the local model    is when the matrix  is distributed among the users
     such that every user has one row of the matrix and users are responsible for the privacy of their row vector. %(see~\figref{LDP}). 
%This allows users to retain control of their data. % since privacy guarantees are enforceddirectly by their devices. 
%In the second model, known as {\em local model of computation}, %In this paper, we consider differential privacy when the private data matrix is streamed to the central curator or is .
%We next formally define differential privacy and the problem considered in this paper.%Differential privacy is a well-established notion of privacy with many desirable features such as robustness and modular composition. %We consider the following well-established definition of differential privacy.
%In the streaming model, we consider that data to be streamed online. 
%\medskip \noindent \textbf
%\subsubsection
%\paragraph
%\paragraph
%\subsubsection
Unfortunately, known private algorithms (including the results presented till now) do not yield non trivial additive error in the local model.  For example, if we convert~\thmref{informalspace} to the local model, we end up with an additive error $\widetilde O(\sqrt{kmn})$. This is worse than the trivial bound of $O(\sqrt{mn})$, for example, when $\bA \in \set{0,1}^{m \times n}$, a trivial output of all zero matrix incurs an error at most $O(\sqrt{mn})$. In fact, existing lower bounds in the local model suggests that one is likely to incur an error which is $O(\sqrt{m})$ factor worse than in the central model, where $m$ is the number of users. However, owing to the result of Dwork {\it et al.}~\cite{DTTZ14}, we can hope to achieve non-trivial result for differentially private principal component analysis (see,~\textcolor{red}{Definition} \ref{def:dppca}) % using known techniques. 
%On the other hand, low-rank factorization is a special optimization problem and the role of interaction in local differentially private optimization was recently investigated by Smith {\it et al.}~\cite{STU16}. 
leading us to ask
\begin{quote}
 {\bf Question 3.} Is there a  locally  private algorithm for low rank principal component analysis?
\end{quote}
This problem has been studied without privacy under the {\em row-partition model}~\cite{BCL05, BKLW14, BRB08, BWZ16,GSS17,elemental,QOSG02,TD99}). % setting, when the private data matrix is distributed among various users. 
% \cite{DworkMNS06,MishraS06,KLNRS08,etc} (arguably, the study of such
% algorithms predates differential privacy, since randomized
% response~\cite{Warner65} and $\gamma$-amplification~\cite{EGS03}
% each satisfy some version of LDP).
%Tight upper and lower bounds known on the achievable accuracy for many problems; h
 Even though there is a rich literature on local differentially private  algorithms~\cite{agrawal2009frapp,BS15,DJW13,rappor,EGS03,HKR,KLNRS11,MS06,W65}, the known approaches to convert existing (private or distributed non-private) algorithms  to locally private algorithms either leads to a large additive error or require interaction. %We give an efficient noninteractive local differentially private algorithm with additive error sublinear in the dimension of the matrix (see,~\secref{informallocal},~\thmref{informallocal} for more details). 
%Non-private low-rank factorization with the matrix distributed among $m$ users have been extensively studied in a model known as {\em row-partition model}~\cite{elemental}. However, this has not been studied at all in the model of local differential privacy. 
%One can convert the algorithm of Boutsidis {\it et al.}~\cite{BWZ16} in the row partition model to LDP algorithm with two rounds of interaction. 
 We exploit the fact that our meta algorithm only stores differentially private sketches of the input matrix  to give a noninteractive algorithm for low-rank principal component analysis (PCA) under local differential privacy. %More precisely, all the post-processing can be done by the curator. % using slight modification to our algorithms. 

\begin{theorem} (\thmref{local}, informal).  \label{thm:informallocal}Let $\bA$ be an $m \times n$ matrix (where $m > n$) such that every user holds one row of $\bA$. %Let $\eta=\max \set{k,\alpha^{-1}}$. 
Then 
%\begin{CompactEnumerate}
%	\item There is a noninteractive $(\varepsilon,\delta)$-LDP algorithm under $\priv_1$ that  uses $ \widetilde{O}((m+n)\eta/\alpha)$ bits of communication and outputs a rank-$k$ factorization $\bU_k \bSigma_k \bV_k^{\mathsf T}$ %in time $\widetilde{O}(\mathsf{nn}(\bA) + m^2\eta/\alpha+ (m+n)\eta^2/ \alpha^{2} + \eta^3 /\alpha^{5})$ 
%	such that, %with probability $9/10$,
	%The factorization outputted by the algorithm presented in~\figref{spacelow} computes a $k$-rank factorization of a matrix $\bA$ such that, with probability $99/100$ over $\bPhi, \bPsi \sim \cD_R$ and $\bS, \bT \sim \cD_A$, 
%	 \begin{align*}
%   \| \bA - \bB \|_F \leq (1+\alpha)  \Delta_k +  \widetilde{O} \paren{   \paren{ \sqrt{{m\eta}/\alpha} + \sqrt{kn}}{\varepsilon}^{-1}}, \text{where $\bB=\bU_k \bSigma_k \bV_k^{\mathsf T}$.}% \sqrt{\paren{1+\alpha}/\paren{1-\alpha}}  }.
  %O \paren{ \frac{ \sqrt{\log(1/\delta)}}{\varepsilon}\paren{ \sqrt{  kp \paren{1+\frac{1}{\alpha}}} +  \sqrt{\frac{kmn}{ (p + p\alpha)}}  } }.
%   \end{align*} 
 %  \item 
   there  is a noninteractive $(\varepsilon,\delta)$-LDP under $\priv_2$ that  
   %uses a total of $\widetilde O( (m+n\alpha^{-1})\eta \alpha^{-1})$ words of communication, and 
   outputs a rank-$k$ orthonormal matrix  $\bU$ %in $O_\delta(\mathsf{nn}(\bA)+(m + n\alpha^{-2})\eta^2 \alpha^{-2} + \eta^3 \alpha^{-3})$ time 
   such that
	$  	  \| \bA - \bU \bU^{\mathsf T}\bA \|_F \leq (1+ \alpha)\Delta_k + {O}{( k \alpha^{-2} \eps^{-1}  \sqrt{m} \log(k/\delta) \sqrt{\log(1/\delta)} )}$ with probability at least $9/10$. Moreover, every user sends $ \widetilde O (k^2 \alpha^{-4} \log^2(k/\delta))$ bits of communication. 
% \end{CompactEnumerate}
%   where   $\bA_k$ is the best rank-$k$ approximation. %The space used by the algorithm is . 
%	The total time to perform the factorization is .%, where  denotes the sparsity of the matrix $\bA.$
\end{theorem}
The above theorem gives the first instance of non-interactive algorithm that computes  low-rank principal component analysis in the model of  local differential privacy. The best known lower bound on additive error for differentially private PCA is by Dwork {\it et al.}~\cite{DTTZ14} in the central model in the static data setting. Their lower bound on the additive error is $\widetilde \Omega(k \sqrt{n})$ for squared Frobenius norm and has no multiplicative error. %Existing results in the local model indicates that our bound is optimal when $m \approx k^2n$. However, 
An interesting question from our result is to investigate how close or far we are from optimal error. %we leave proving the lower bound as an open problem. 

%\begin{comment}
%\paragraph{Empirical Evaluations.} \label{sec:empirical}
	
\paragraph{Roadmap to Appendices.} All our proofs and algorithms are in appendices.	
%In~\appref{related}, we cover the work most related to our work. 
%In~\appref{prelims}, we present basic definition and preliminaries.
%In~\appref{auxiliary}, we cover the auxiliary results and known results that we use in our proofs.
 We prove~\thmref{meta} in \appref{spaceprivate}, and, also provide and analyze a space efficient algorithm under $\priv_2$. %~\secref{meta_time}. 
	 \appref{streaming} covers the differentially private $\lrf$  in the  turnstile update model under both $\priv_1$ and $\priv_2$ answering Question 1. We extend these algorithms to continual release model in
	 \appref{continual}. % answering Question 2.
	 \appref{lower} gives the space lower bound when $\gamma \neq 0$ answering Question 2 and discussed in~\secref{informallower}.
	 \appref{local} gives noninteractive local differentially private $\lrf$ algorithm under $\priv_2$ answering Question 3 and discussed in~\secref{informallocal}.
%	 \appref{nonprivate} gives our non-private algorithm robust with respect to noisy storage. % answering Question 5.
	 \appref{empirical} gives the empirical evaluation of our results. % (our source codes are in~\appref{sourcecodes}).
%	 \appref{BWZ16} covers the analysis of Boutsidis {\it et al.}~\cite{BWZ16} when the sketches are noisy. % showing why converting their algorithm to a private algorithm incurs a large additive error. 
%	 \appref{problems} covers the  differences of the variation of differentially private low rank approximation studied in the literature. 

\paragraph{Acknowledgement.} The author wishes to thank Adam Smith for many useful discussions. Even though he declined to be coauthor in this paper, this paper would have not existed without his insightful inputs. 	

\pagebreak   

\tableofcontents
\newpage
\listoffigures
 
\listoftables

%\input{problem.tex}
%\input{previous.tex}
%\input{CW09.tex}
%\input{CW09private.tex}

%\input{prelims}
%\input{BWZ16}
%\input{BWZ16gaussian}
%\input{CW09private.tex}

%\input{lower}

% Clarkson and Woodruff~\cite{CW13} showed that for a matrix $\bA$ and $\bB$ as above, $\bW \mathbf{D} \mathbf{E} \in \R^{O(k/\alpha) \times m}$ is an $\alpha$-subspace embedding for generalized regression where $ \mathbf{E}$ is an $\alpha$-sparse subspace embedding and $\bW \mathbf{D}$ is a subsampled randomized Hadamard transform. 
 
% \noindent{\bf Algorithm 3}: 

%Note that Dwork {\it et al.}~\cite{DTTZ14} consider two matrices neighbouring if one row differs. However, they assume that the rows are normalized. 

%\paragraph{Acknowledgement.} I would like to thank Adam Smith for many useful discussions. Even though he declined to co-author the paper, without his contributions, this paper would not have existed. I would also like to thank Nithin Varma for proof reading the paper.

%\pagebreak

%\pagebreak 

\newpage

\begin{appendix}

\section{Related Work} \label{sec:related} \label{app:related}
Low-rank approximation ($\lra$), where the goal is to output a matrix $\bB$ such that $\|\bA - \bB\|_F$ is close to the optimal $\lra$, of large data-matrices  has received a lot of attention in the recent past in the private as well as the non-private setting.  In what follows, we give a brief exposition of works  most relevant to this work.

In the non-private setting, previous works have either used random projection or random sampling (at a cost of  a small additive error) to give low-rank approximation~\cite{AM07,CW09,DKM06,DM05,DRVW06,FKV04,KN14,PTRV98,RV07,Sarlos06}. %Many of the latter algorithms were improved independently by Deshpande and Vempala~\cite{} and Sarlos~\cite{}.  
Subsequent works~\cite{CMM17, CW13, CW17, DV06,MZ11,MM13,NDT09,Sarlos06} achieved a run-time that depends linearly on the input sparsity of the matrix. %, the run time of computing low-rank approximation has  and the follow-up work. 
%Recently, Boutsidis {\it et al.}~\cite{BWZ16} gave the first space-optimal and low communication cost algorithm to compute $\lra$; however, they do not optimize the computational cost. % of their algorithm. 
%There has been a lot of work that used random projection~\cite{CW09,KN14,Sarlos06} or random sampling~\cite{AM07,DKM06,DM05,DRVW06,FKV04,PTRV98,RV07} to give low-rank approximation. This was followed by many subsequent improvements in the run-time~\cite{BWZ16,CW13, DV06, MZ11,MM13,NDT09, Sarlos06}.  
In a series of works, Clarkson and Woodruff~\cite{CW09,CW13} showed space lower bounds and almost matching space algorithms. 
Recently, Boutsidis {\it et al.}~\cite{BWZ16} gave the first space-optimal algorithm for low-rank approximation under turnstile update mode, but they do not optimize for run-time. A optimal space algorithm was recently proposed by Yurtsever {\it et al.}~\cite{YUTC17}. 
Distributed PCA algorithms in the row partition model and arbitrary partition model has been long studied~\cite{BCL05, BKLW14, BRB08, BWZ16,GSS17,elemental,QOSG02,TD99}.

In the private setting, low-rank approximation $\lra$ has been studied under a privacy guarantee called differential privacy. Differential privacy was  introduced by Dwork {\it et al.}~\cite{DMNS06}. %The Gaussian variant of this basic sanitizer was proven to preserve differential privacy by ~\cite{DKMMN06} in a follow-up work. 
Since then, many algorithms for preserving differential privacy have been proposed in the literature~\cite{DR14}.  All these mechanisms have a common theme: they perturb the output before responding to  queries. Recently, Blocki {\it et al.}~\cite{BBDS12} and Upadhyay~\cite{Upadhyay13} took a complementary approach. They perturb the input reversibly and then perform a random projection of the perturbed matrix.

Blum  {\it et al.}~\cite{BDMN05} first studied the problem of differentially private $\lra$ in the Frobenius norm. This was improved by Hardt and Roth~\cite{HR12}  under the low coherence assumption. Upadhyay~\cite{Upadhyay14} later made it a single-pass. Differentially private $\lra$ has been studied in the spectral norm as well by many works~\cite{CSS12,KT13,HR13,HP14}.
Kapralov and Talwar~\cite{KT13} and Chaudhary  {\it et al.}~\cite{CSS12} studied the spectral $\lra$ of a matrix by giving a matching upper and lower bounds for privately computing the top $k$ eigenvectors of a matrix with pure differential privacy (i.e., $\delta=0$). In subsequent works Hardt and Roth~\cite{HR13} and Hardt and Price~\cite{HP14} improved the approximation guarantee with respect to the spectral norm by using {\em robust private subspace iteration} algorithm. 
Recently, Dwork  {\it et al.}~\cite{DTTZ14} gave a tighter analysis of Blum {\it et al.}~\cite{BDMN05} to give an optimal approximation to the right singular space, i.e., they gave a $\lra$ for the covariance matrix.  %We use the same approach.
%The literature on non-private algorithms that makes few passes over the input matrix is so  extensive that we cannot hope to cover it in any detail here. In the private setting, 
Dwork  {\it et al.}~\cite{DNPR10} first considered streaming algorithms with privacy under the model of {\em pan-privacy}, where the internal state is known to the adversary. They gave private analogues of known sampling based streaming algorithms to answer various counting tasks.
%like estimating  distinct elements, cropped means, number of heavy hitters, and frequency counts. 
This was followed by results on online private learning~\cite{DTTZ14,JKT12,TS13}.

%!TEX root = low.tex

\section{Notations and Preliminaries} \label{sec:prelims}  \label{app:prelims}
We give a brief exposition of notations and linear algebra to the level required to understand this paper. We refer the readers to standard textbook on this topic for more details~\cite{Bhatia}. We let $\N$ to denote the set of natural numbers and $\R$ to denote the set of real numbers.  For a real number $x \in \R$, we denote by $|x|$ the absolute value of $x$.  We use boldface lowercase letters to denote vectors, for example, $\bx$, and $\bx_1, \ldots, \bx_n$ to denote the entries of $\bx$, and  bold-face capital letters to denote matrices, for example, $\bA$. For two vectors $\bx$ and $\by$, we denote by $\brak{\bx,\by} = \sum_i \bx_i \by_i$ the inner product of $\bx$ and $\by$. %We denote the number of non-zero entries of $\bx$ by $\| \bx \|_{0}$ -- this  is also called the {\em Hamming weight} of the vector $\bx$. The {\em Euclidean norm} of a vector $\bx$ is $\sqrt{\sum_{i}  \bx_i^2}$ and the {\em max norm} of a vector $\bx$ is $\max_i \set{\bx_i}$. We denote by $\| \bx \|_2$ the Euclidean norm of vector $\bx$ and by $\|\bx\|_\infty$ the max-norm of the vector $\bx$. 
We let $\mathbf{e}_1, \ldots , \mathbf{e}_n$ denote the standard basis vectors in $\R^n$, i.e., $\mathbf{e}_i$ has entries $0$ everywhere except for the position $i$ where the entry is $1$. We denote by $\begin{pmatrix} \mathbf{A} & \mathbf{b} \end{pmatrix}$ the matrix formed by appending the matrix $\mathbf{A}$ with the vector $\mathbf{b}$. We use the notation $\I_n$ to denote the identity matrix of order $n$ and $\mathbf{0}^{m \times n}$ the all-zero $m \times n$ matrix. Where it is clear from the context, we drop the subscript.

% We denote by  and by $\I_n$ the $n \times n$ identity matrix. 
For a $m \times n$ matrix $\bA$, we denote by $\bA_{ij}$  the $(i,j)$-th entry of $\bA$. We denote by $\mathsf{vec}(\bA)$  the vector of length $mn$ formed by the entries of the matrix $\bA$, i.e., for an $m \times n$ matrix $\bA$, the $((i-1)n+j)$-th entry of $\mathsf{vec}(\bA)$ is $\bA_{ij}$, where $1\leq i \leq m, 1\leq j \leq n$. The {\em transpose} of a matrix $\bA$ is a matrix $\mathbf{B}$ such that $\mathbf{B}_{ij}=\bA_{ji}$. We use the notation $\bA^{\mathsf T}$\label{defn:transpose} to denote the transpose of a matrix $\bA$.  For a matrix $\bA$, we denote the best $k$-rank approximation of $\bA$ by $[\bA]_k$ and its Frobenius norm by $\| \bA \|_F$. For a matrix $\bA$, we use the symbol $\mathsf{r}(\bA)$ to denote its {\em rank} and $\det(\bA)$ to denote its {\em determinant}.  A matrix $\bA$ is  a {\em non-singular matrix} if $\det(\bA) \neq 0.$

\subsection{Definitons}

\begin{definition} 
(Differentially Private Low Rank Factorizataion in the General Turnstile Update Model). 
\label{def:dpturnstile}
Let the private matrix $\bA \in \R^{m \times n}$ initially be all zeroes. At every time epoch, the curator receives an update in the form of the tuple $\set{i,j,\Delta}$, where $1 \leq i \leq m, 1\leq j \leq n,$ and $ \Delta \in \R$---each update results in a change in the $(i,j)$-th entry of the matrix $\bA$ as follows: $\bA_{i,j} \gets \bA_{i,j}+\Delta$.
Given parameters $0 <\alpha,\beta <1$ and $\gamma$,   and the target rank $k$, the curator   is required  to output 
a differentially-private  rank-$k$ matrix factorization $\widetilde{\bU}_k, \widetilde{\bSigma}_k$, and $\widetilde{\bV}_k$ of $\bA$ at the end of the stream, such that, with probability at least $1-\beta$,
$$  { \| \bA - \widetilde{\bU}_k \widetilde{\bSigma}_k \widetilde{\bV}_k^{\mathsf T} \|_F \leq (1+\alpha)  \Delta_k +\gamma },$$ 
{where}~$\Delta_k:= \|\bA - [\bA]_k\|_F$ with $[\bA]_k$ being the best rank-$k$ approximation of $\bA$. % and $  \bB=\widetilde{\bU}_k \widetilde{\bSigma}_k \widetilde{\bV}_k^{\mathsf T}$. 
%$\bB=\widetilde{\bU}_k \widetilde{\bSigma}_k \widetilde{\bV}_k^{\mathsf T}$ and 
%We consider two matrices are neighboring if their difference has Frobenius norm at most $1$.
 % (see Definition~\ref{def:approxdp} for precise definition of differential privacy).%with probability at least $1-\beta$ for both Frobenius and spectral norm. 
\end{definition}

\begin{definition}
(Differentially Private Low Rank Factorizataion in the Continual Release Model).
\label{def:dpcontinual}
Let the private matrix $\bA \in \R^{m \times n}$ initially be all zeroes. At every time epoch $t$, the curator receives an update in the form of the tuple $\set{i,j,\Delta}$, where $1 \leq i \leq m, 1\leq j \leq n,$ and $ \Delta \in \R$---each update results in a change in the $(i,j)$-th entry of the matrix $\bA$ as follows: $\bA_{i,j}^{(t)} \gets \bA_{i,j}^{(t-1)}+\Delta$.
Given parameters $0 <\alpha,\beta <1$ and $\gamma$,   and the target rank $k$, the curator   is required  to output 
a differentially-private  rank-$k$ matrix factorization $\widetilde{\bU}_k^{(t)}, \widetilde{\bSigma}_k^{(t)}$, and $\widetilde{\bV}_k^{(t)}$ of $\bA$ at every time epoch, such that, with probability at least $1-\beta$,
$$  { \| \bA^{(t)} - \widetilde{\bU}_k^{(t)} \widetilde{\bSigma}_k^{(t)} (\widetilde{\bV}_k^{(t)})^{\mathsf T} \|_F \leq (1+\alpha)  \Delta_k^{(t)} +\gamma },$$ 
{where}~$\Delta_k^{(t)}:= \|\bA^{(t)} - [\bA^{(t)}]_k\|_F$ with $[\bA^{(t)}]_k$ being the best rank-$k$ approximation of $\bA^{(t)}$. % and $  \bB=\widetilde{\bU}_k \widetilde{\bSigma}_k \widetilde{\bV}_k^{\mathsf T}$. 
%$\bB=\widetilde{\bU}_k \widetilde{\bSigma}_k \widetilde{\bV}_k^{\mathsf T}$ and 
%We consider two matrices are neighboring if their difference has Frobenius norm at most $1$.
 % (see Definition~\ref{def:approxdp} for precise definition of differential privacy).%with probability at least $1-\beta$ for both Frobenius and spectral norm. 
\end{definition}

\begin{definition}
(Local Differentially Private Principal Component Analysis Problem). 
\label{def:dppca}
Given an $m \times n$ matrix $\bA$ distributed among $m$ users where each user holds a row of the matrix $\bA$, a rank parameter $k$, and an accuracy parameter $0 < \alpha < 1$, design a local differentially private algorithm  which, upon termination, outputs an orthonormal matrix $\bU$ such that
$$  { \| \bA - \bU \bU^{\mathsf T}\bA \|_F \leq (1+\alpha) \| \bA - [\bA]_k  \|_F +\gamma }$$ with probability at least $1-\beta$,  
 and the communication cost of the algorithm is as small as possible. Furthermore, the algorithm should satisfies $(\varepsilon,\delta)$-LDP. 
\end{definition}

In this paper, we use various concepts and results from the theory of random projections, more specifically the Johnson-Lindenstrauss transform and its variants. %Let $\alpha, \beta >0$. A distribution $\cD$ over ${t \times n}$ random matrices satisfies {\em $(\alpha,\beta)$-Johnson-Lindenstrauss property} {(\em {\sf JLP})} if, for   any unit vector $\bx \in \R^n$, we have $\|\bPhi \bx \|_2^2  = (1 \pm \alpha)$ with probability $1-\beta$ over $\bPhi \sim \cD$.

\begin{definition} \label{thm:JL} %{\em (Johnson-Lindenstrauss Lemma).} 
Let $\alpha, \delta >0$. A distribution $\cD$ over ${t \times n}$ random matrices satisfies {\em $(\alpha,\beta)$-Johnson Lindenstrauss property} {(\em {\sf JLP})} if, for   any unit vector $\bx \in \R^n$, we have $$\p_{\bPhi \sim \cD} [\|\bPhi \bx \|_2^2  \in (1 \pm \alpha)] \geq 1 - \beta.$$
\end{definition}

\begin{definition} \label{def:genreg}
 A distribution $\cD_R$ of $t \times m$ matrices satisfies {\em $(\alpha,\delta)$-subspace embedding for generalized regression} if it has  the following property:  for any matrices $\mathbf{P} \in \R^{m \times n}$ and $\mathbf{Q} \in \R^{m \times n'}$ such that $\mathsf{r}(\mathbf{P}) \leq r$, with probability $1-\delta$ over $\bPhi \sim \cD_R$, if 
 $$\widetilde{\bX} = \argmin_{\bX} \| \bPhi (\mathbf{P} \bX - \mathbf{Q}) \|_F \quad \text{and} \quad \widehat{\bX}= \argmin_{\bX \in \R^{n \times n'}}  \| \mathbf{P} {\bX} - \mathbf{Q} \|_F ,$$ then 
	$ \| \mathbf{P} \widetilde{\bX} - \mathbf{Q} \|_F \leq (1+\alpha)  \| \mathbf{P} \widehat{\bX} - \mathbf{Q} \|_F.$
 \end{definition}

 \begin{definition} \label{def:affine}
A distribution $\cD_A$ over $v \times m$  matrices satisfies {\em $(\alpha,\delta)$-affine subspace  embedding}  if it has  the following property: for any matrices $\mathbf{D} \in \R^{m \times n}$ and $\mathbf{E} \in \R^{m \times n'}$ such that $\mathsf{r}(\mathbf{D}) \leq r$, with probability $1-\delta$ over $\bS \sim \cD_A$, simultaneously for all $\bX \in \R^{n \times n'}$, %with probability $1-\delta$,
	$ \| \bS( \mathbf{D} {\bX} - \mathbf{E}) \|_F^2 = (1\pm \alpha)  \| \mathbf{D} \bX - \mathbf{E} \|_F^2.$
 \end{definition}

We use the symbol $\cD_R$ to denote a distribution that satisfies $(\alpha,\delta)$-subspace embedding for generalized regression and $\cD_A$ to denote a distribution that satisfies $(\alpha,\delta)$-affine subspace embedding. 
\subsection{Linear Algebra}
A matrix is called a {\em diagonal matrix} if the  non-zero entries are all along the principal diagonal.   An $m \times m$ matrix $\bA$ is a {\em unitary matrix}  if $\bA^{\mathsf T} \bA = \bA \bA^\mathsf{T} = \I_m$. Additionally, if the entries of the matrix $\bA$ are real, then such a matrix is called an {\em orthogonal matrix}. For an $m \times m$ matrix $\bA$, the {\em trace} of $\bA$ is the sum of its diagonal elements.  We use the symbol $\tr(\bA)$  to denote the trace of matrix $\bA$.\label{defn:trace} We use the symbol $\det(\bA) $ to denote the determinant of matrix $\bA$.
	 
Let $\bA$ be an $m \times m $ matrix. Its {\em singular values} are the eigenvalues of the matrix $\sqrt{\bA^{\mathsf T} \bA}$. The eigenvalues of the matrix $\sqrt{\bA^{\mathsf T} \bA}$ are real because ${\bA^{\mathsf T} \bA}$ is a symmetric matrix and has a well-defined {\em spectral decomposition}\footnote{A spectral decomposition of a symmetric matrix $\bA$ is the representation of a matrix in form of its eigenvalues and eigenvectors: $\sum_{i} \lambda_i \mathbf{v}_i \mathbf{v}_i^{\mathsf T}$, where $\lambda_i$ are the eigenvalues of $\bA$ and $\mathbf{v}_i$ is the eigenvector corresponding to $\lambda_i$.}~\cite{Bhatia}. 

The {\em singular-value decomposition} (SVD) of an $m \times n$ rank-$r$ matrix $\bA$ is a decomposition of $\bA$ as a product of three matrices, $\bA = \bU \bSigma \bV^{\mathsf T}$ such that $\bU \in \R^{m \times r}$ and $\bV \in \R^{n \times r}$ have orthonormal columns and $\bSigma \in \R^{r \times r}$ is a diagonal matrix with singular values of $\bA$ on its diagonal. One can equivalently write it in the following form:
\[ \bA = \sum_{i=1}^{\mathsf{r}(\bA)} \sigma_i\mathbf{u}_i  \bv^{\mathsf T}_i, \]
where $ \mathbf{u}_i $ is the $i$-th column of $\bU$, $\bv_i$ is the $i$-th column of $\bV$, and $\sigma_i$ is the $i$-th diagonal entry of $\Sigma$. 

One can derive a lot of things from the singular value decomposition of a matrix. For example, 
\begin{enumerate}
%If $\bA = \bU \bSigma \bV^{\mathsf T}$ with the diagonal entries of $\bSigma$ being $\set{\sigma_1, \cdots, \sigma_r}$, then $\det(\bA) = \Pi_i \sigma_i$. 
	\item The {\em Moore-Penrose pseudo-inverse} of a matrix $\bA =  \bU \bSigma \bV^{\mathsf T}$ is denoted by $\bA^\dagger$ and has a SVD $\bA^\dagger =  \bV \bSigma^\dagger \bU^{\mathsf T}$, where $\bSigma^\dagger$ consists of inverses of only non-zero singular values of $\bA$. In other words, 
	\[ \bA^\dagger =  \sum_{i=1}^{\mathsf{r}(\bA)} \frac{1}{\sigma_i} \mathbf{u}_i  \bv^{\mathsf T}_i, \]
	where $\mathsf{r}(\bA)$ is the number of non-zero singular values of $\bA$.
	%A matrix $\bA$ has a left-inverse if $\bA^\dagger \bA $ is an identity matrix and a right inverse if $\bA \bA^\dagger$ is an identity matrix. 
%A matrix $\bA$ has a {\em left-inverse} ({\em right-inverse}, respectively) if and only if it has full column rank (full row rank, respectively). %If $\bA$ has a left-inverse, then $\bA^\dagger = (\bA^{\mathsf T} \bA)^{-1} \bA^{\mathsf T}$ and if $\bA$ has right-inverse, then $\bA^\dagger =  \bA^{\mathsf T} (\bA \bA^{\mathsf T})^{-1}$. 
%The following result about pseudo-inverse of product of two matrices is used in our analysis. 
%\medskip \noindent \textbf{Probability theory.}
	\item Let $\sigma_1\geq \cdots \geq \sigma_k \geq \cdots \geq \sigma_{\mathsf{r}(\bA)}$ be the singular values of $\bA$. Then 
	\[ [\bA]_k = \sum_{i=1}^{k} \sigma_i\mathbf{u}_i  \bv^{\mathsf T}_i, \]
	\item The trace of a matrix $\bA$ can be represented in form of the singular values of $\bA$ as follows: $\tr(\bA) =  \sum_i \sigma_i$. Similarly, the determinant of a matrix $\bA$ is $\det(\bA)= \prod_i \sigma_i$. Moreover, the Frobenius norm of $\bA$ is $\sum_i \sigma_i^2$.
\end{enumerate}

%The singular value decomposition ($\mathsf{SVD}$) \label{defn:svd} of $\bA$ is a  decomposition of the form $\bA=\bU \bSigma \bV^{\mathsf T}  $, where $\bU$ and $\bV$  are unitary matrices whose columns forms the left and right eigenvectors of $\bA$ respectively,  and $\bSigma $ is a diagonal matrix. The entries of $\bSigma $ are  the {\em singular values} of $\bA$. Since $\bU$ and $\bV$ are unitary  matrices, one can write $\bA^\ell = \bU \bSigma^\ell \bV^{\mathsf T}  $ for any natural value $e$. 
%We let  $\mathsf{rank}(\bA)$ denote the rank of the matrix $\bA$ \label{defn:rank} and $\sigma_i(\bA)$ its singular values, where $1 \leq i \leq \mathsf{rank}(\bA)$. Where it is clear from the context, we simply write $\sigma_i$ for the singular values for $1 \leq i \leq \mathsf{rank}(\bA)$.

We use few unitary matrices, which we define next. Let $N$ be a power of $2$. A Walsh-Hadamard matrix of order $N$ is an $N \times N$ matrix formed recursively as follows:
\[ \bW_N = \frac{1}{\sqrt{2}} \begin{pmatrix} \bW_{N/2} & \bW_{N/2} \\ \bW_{N/2} & -\bW_{N/2} \end{pmatrix}  \qquad \bW_1 :=(1).  \]
	A Walsh-Hadamard matrix is a unitary matrix. We often drop the subscript wherever it is clear from the context. A {\em randomized Walsh-Hadamard matrix} is a matrix product of a Walsh-Hadamard matrix and a random diagonal matrix with entries $\pm 1$ picked according to the following probability distribution:
	\[ \p [X=1] = \p[X=-1]=1/2.  \]

A {\em discrete Fourier matrix} of order $n$ is an $n \times n$ matrix such that the $(i,j)$-th entry is $\omega^{(i-1)(j-1)}$, where $\omega$ is the $n$-th root of unity, i.e., $\omega = e^{-2\pi \iota/n}$.

\subsection{Gaussian Distribution}

Given a random variable $x$, we denote by $\cN(\mu, \rho^2)$ the fact that  $x$ has a normal Gaussian distribution with mean $\mu$ and variance $\rho^2$. The Gaussian distribution is invariant under affine transformation, i.e., if $X \sim \cN(\mu_x, \sigma_x)$ and $Y \sim \cN(\mu_y, \sigma_y)$, then $Z=aX +bY$ has the distribution $Z \sim \cN(a \mu_x + b \mu_y, a \sigma_x^2 + b\sigma_y^2)$. This is also called the {\em rotational invariance} of Gaussian distribution. By simple computation, one can verify that the tail of a standard Gaussian variable decays exponentially. More specifically, for a random variable $X \sim \cN(0,1)$, we have  $ \p \sparen{|X| >t}  \leq 2 e^{-t^2/2}. $

Our proof uses an analysis of multivariate Gaussian distribution. The multivariate Gaussian distribution is a generalization of univariate Gaussian distribution. Let $\mu$ be an $N$-dimensional vector. An $N$-dimensional multivariate random variable, $\bx \sim \cN(\mathbf{\mu}, \bLambda)$, where $\bLambda = \E [(\bx - \mu)(\bx -\mu)^{\mathsf T}]$ is the $N \times N$ covariance matrix, has the probability density function given by 
%\begin{align} 
$\PDF_\mathbf{X} (\bx) := \frac{e^{- \bx^{\mathsf T}    \bLambda^\dagger   \bx/2}}{\sqrt{(2 \pi)^{\mathsf{r}(\bSigma)} \det(\bLambda)}}   \label{eq:pdfmultivariate}.$ % \end{align}   
%The matrix $\bLambda$ is called the {\em covariance matrix} of random variable $\bx$.  
 If $\bLambda$ has a non-trivial kernel space, then the multivariate distribution is undefined. However, in this paper, all our covariance matrices have only trivial kernel.
%compare the probability distribution of two random variables which are defined over the same subspace. Therefore, wherever required, we restrict our attention to the (sub)space orthogonal to the kernel space of $\bLambda$. 
Multivariate Gaussian distributions is invariant under affine transformation, 
%maintains many key properties of univariate Gaussian distributions. For example, any (non-empty) subset of multivariate Gaussians is a multivariate Gaussian and  
%linear functions of multivariate Gaussian random variables are multivariate Gaussian random variables, 
i.e., if $\by = \bA\bx+\mathbf{b}$, where $\bA \in \R^{M \times N}$ is a rank-$M$ matrix and $\mathbf{b} \in \R^M$, then $\by \sim \cN (\bA\mu+\mathbf{b},\bA \bLambda \bA^{\mathsf T}   )$.

%!TEX root = low.tex

\section{Basic Results Used in This Paper} \label{app:auxiliary}
Our proofs uses various concepts and known results about random projections, pseudo-inverse of matrices and gaussian distribution. In this section, we cover them up to the level of exposition required to understand this paper. We refer to the excellent book by Bhatia~\cite{Bhatia} for more exposition on pseudo-inverses, and Woodruff~\cite{Woodruff14}.
%\medskip \noindent \textbf
\subsection{Random Projections.}
Random projection has been used in computer science for a really long time. Some partial application includes metric and graph embeddings~\cite{Bou85, LLR94}, computational speedups~\cite{Sarlos06, Vem05}, machine learning~\cite{BBV06, Sch00},  nearest-neighbor search~\cite{IM98, AC06}, and compressed sensing~\cite{BDDW08}.

In this paper, we use random projections that satisfy~\ldefref{genreg} and~\ldefref{affine}. An example distribution $\cD_R$ with $t = O(\alpha^{-2} \log (1/\delta))$ is the distribution of random matrices whose entries are sampled i.i.d. from $\cN(0,1/t)$. %Sarlos~\cite{Sarlos06} observed that a random matrix sampled from a distribution that satisfies the {\sf JLP} is also a $(\alpha,\delta)$-subspace embedding for generalized regression for rank-$r$ matrix as long as it approximates matrix multiplication with multiplicative error $\sqrt{\alpha/r}$. The latter follows from the standard result on any transform with {\sf JLP} with projected dimension $t=O(r + \log (1/\delta) /\alpha^2)$. Using Indyk and Motwani~\cite{IM98}, we only need to project to a $t=O(r\alpha^{-1} \log(1/\delta))$-dimensional subspace. Since random Gaussian matrices satisfies the {\sf JLP}, this implies the following fact.
Recently, Clarkson and Woodruff~\cite{CW13full} proposed other distribution of random matrices that satisfies {\em $(\alpha,\delta)$-subspace embedding for generalized regression} and {\em $(\alpha,\delta)$-affine subspace  embedding}. % in the full version of their paper.
%We mentioned in~\secref{prelims} that Clarkson and Woodruff~\cite{CW13full} proposed a distribution of random matrices that satisfies {\em $(\alpha,\delta)$-subspace embedding for generalized regression} and {\em $(\alpha,\delta)$-affine subspace  embedding}. 
They showed the following: % in the full version of their paper.

\begin{lemma} \label{lem:phi}  \label{lem:CW13} (\cite[Lem 41, Lem 46]{CW13full})
	There is a distribution $\cD_R$  over $\R^{t \times m}$ such that satisfies % , and  
	\begin{description}
		\item [(i)] if $t=O(\alpha^{-2} \log^2 m)$, then for $\bPhi \sim \cD_R$ and any $m \times n$ matrix $\mathbf{Q}$, $  \| \bPhi\mathbf{Q}  \|_F^2 = (1 \pm \alpha) \| \mathbf{Q} \|_F^2,$ and  
		\item [(ii)] if  $t = O(r/\alpha \log(r/\delta))$, then $\cD_R$ satisfies $(\alpha,\delta)$-subspace embedding for generalized regression for $\mathbf{P}$ and $\mathbf{Q}$. 
 	\end{description}
	Further, for any matrix  $\mathbf{L} \in \R^{m \times n}$, $\bPhi \mathbf{L}$ can be computed in $O(\mathsf{nn}(\mathbf{L}) + tn \log t )$ time. Here $\mathbf{P}$, $\mathbf{Q}$, and $r$ 
 are as in~\ldefref{genreg}.	%Let $\mathbf{P} \in \R^{m \times n}, \mathbf{Q} \in \R^{m \times n'}$ be two matrices with $\mathsf{r}(\mathbf{P})=r$. Then . %such that with high probability over $\bPhi \sim \cD_R$, 
	%if $\bX'= \argmin_\bX \|\bPhi( \mathbf{P} \bX - \mathbf{Q} )\|_F$, then 
%	\begin{align*} \| \mathbf{P} \bX' - \mathbf{Q} \|_F \leq (1+\alpha) \min_\bX \| \mathbf{P} \bX - \mathbf{Q} \|_F \end{align*}
\end{lemma}

\begin{lemma} \label{lem:S} {\cite[Thm 39, Thm 42]{CW13full})}
	There exists a distribution $\mathcal{D}_A$  over $\R^{v \times m}$ such that 
	\begin{description}
		\item [(i)]  if $v=\Theta(\alpha^{-2})$, then for $\bS \sim \cD_A$ and any $m \times d$ matrix $\mathbf{D}$, $\| \bS \mathbf{D}  \|_F^2 = (1 \pm \alpha) \| \mathbf{D} \|_F^2$. 
		\item [(ii)]  if $v = O(p/\alpha^2 \log(p/\delta))$, then $\cD_A$  satisfies $(\alpha,\delta)$-affine embedding for $\mathbf{D}$ and $\mathbf{E}$.
	\end{description} 
	Further, for any matrix  $\mathbf{L} \in \R^{m \times n}$,  $\mathbf{S}\mathbf{L}$  can be computed in  $O(\mathsf{nn}(\mathbf{L}) + nv \log v)$ time. Here $\mathbf{E}$, $\mathbf{D}$, and $p$ 
 are as in~\ldefref{affine}.%, 
	%Let $\mathbf{D} \in \R^{m \times n}$ be a matrix of rank $p$, $\mathbf{E} \in \R^{m \times n'}$ be an $m \times n'$ matrix.  Then there is a $v = O(p/\alpha^2 \log(r) \log(1/\delta))$ such that, if $\mathbf{S} \sim \cD_A$, then the operator $\mathbf{S}\mathbf{E}$ and $\mathbf{S}\mathbf{D}$  can be computed in  $O(\mathsf{nn}(\bD)) + \mathsf{nn}(\mathbf{E})) + {O}((n+n')v^2 \log v)$ time. %such that with high probability %if $\widetilde{\bX}= \argmin_\bX \| \bS (\mathbf{D} \bX - \mathbf{E}) \|_F$ 
%	over $\bS \sim \mathcal{D}_A$, simultaneously for all~$\bX$,
%	\[ \| \bS(\mathbf{D} {\bX} - \mathbf{E}) \|_F = (1\pm \alpha)  \| \mathbf{D} \bX - \mathbf{E} \|_F \quad \text{and} \quad \| \bS \bD \|_F = (1 \pm \alpha) \| \mathbf{D} \|_F. \]
%Moreover, for any 
\end{lemma}

In the theorems above, $\bPhi$ and $\bS$ are oblivious to the matrices $\mathbf{P},\mathbf{Q},\mathbf{D},$ and $\mathbf{E}$. That is, we design the distribution $\cD_A$ over linear maps such that for any fixed matrices $\mathbf{D},\mathbf{E}$, if we chose $\bS \sim \cD_A$, then $\bS$ is an $(\alpha,\beta)$-affine embedding for $\mathbf{D},\mathbf{E}$. Similarly, we design the distribution $\cD_R$ over linear maps such that for any fixed matrices $\mathbf{P},\mathbf{Q}$, if we chose $\bPhi \sim \cD_A$, then $\bPhi$ is an $(\alpha,\beta)$ embedding for $\mathbf{P},\mathbf{Q}$.

%\medskip \noindent \textbf
\subsection{Differential privacy}
Differential privacy is a very robust guarantee of privacy which makes confidential data available widely for accurate analysis while still preserving the privacy of individual data. Achieving these two requirements at the same time seems paradoxical. On one hand, we do not wish to leak information about an individual. On the other hand, we want to answer the query on the entire database as accurately as possible. This makes designing differentially private mechanisms challenging. 

\subsubsection{Robustness of Differential Privacy}
One of the key features of differential privacy is that it is preserved under arbitrary post-processing, i.e., an analyst, without additional information about the private database, cannot compute a function that makes an output less differentially private. In other words,
\begin{lemma} \label{lem:post}  {\em (Dwork {\it et al.}~\cite{DKMMN06}).}
Let $\mathfrak{M}(\mathbf{D})$ be an $(\alpha, \beta)$-differential private mechanism for a database $\mathbf{D}$ , and let $h$ be any function, then any mechanism $\mathfrak{M}':=h(\mathfrak{M}(\mathbf{D}))$ is also $(\alpha,\beta)$-differentially private for the same set of tasks.
\end{lemma}
\begin{proof}
Let $\mathfrak{M}$ be a differentially private mechanism. Let $\range(\mathfrak{M})$ denote the range the of $\mathfrak{M}$.Let $R$ be the range of the function $h(\cdot)$. Without loss of generality, we assume that $h(\cdot): \range(\mathfrak{M}) \rightarrow \cR$ is a deterministic function. This is because any randomized function can be decomposed into a convex combination of deterministic function, and a convex combination of differentially private mechanisms is differentially private. Fix any pair of neighbouring data-sets $\mathbf{DB}$ and $\widetilde{\mathbf{DB}}$ and an event $S \subseteq \cR$. Let $T=\set{y \in \range(\mathfrak{M}): f(r) \in S }$. Then
	\begin{align*}
		\p [f(\mathfrak{M}(\mathbf{DB})) \in S] &= \p [\mathfrak{M}(\mathbf{DB}) \in T] \\
					&\leq \exp(\alpha) \p [\mathfrak{M}(\widetilde{\mathbf{DB}}) \in T] + \beta \\
					&= \exp(\alpha) \p [f(\mathfrak{M}(\widetilde{\mathbf{DB}})) \in S] + \beta.
\end{align*}
\end{proof}

\subsubsection{Composition} 
Before we begin, we discuss what does it mean by the term ``composition" of differentially private mechanism. The composition that we consider covers the following two cases:
\begin{enumerate}
	\item Repeated use of differentially private mechanism on the same database.
	\item Repeated use of differentially private mechanism on different database that might contain information relating to a particular individual.
\end{enumerate} 
The first case covers the case when we wish to use the same mechanism multiple times while the second case covers the case of cumulative loss of privacy of a single individual whose data might be spread across many databases.

It is easy to see that the composition of pure differentially private mechanisms yields another pure differentially private mechanism, i.e., composition of an $(\alpha_1,0)$-differentially private and an $(\alpha_2,0)$-differentially private mechanism results in an $(\alpha_1+\alpha_2,0)$-differentially private mechanism. In other words, the privacy guarantee depreciates linearly with the number of compositions. In the case of approximate differential privacy, we can improve on the degradation of $\alpha$ parameter at the cost of slight depreciation of the $\beta$ factor. We use this strengthening in our proofs. In our proofs of differential privacy, we prove that each row of the published matrix preserves $(\alpha_0, \beta_0)$-differential privacy for some appropriate $\alpha_0,\beta_0$, and then invoke a composition theorem  by Dwork, Rothblum, and Vadhan~\cite{DRV10} to prove that the published matrix preserves $(\alpha,\beta)$-differential privacy. The following theorem is the composition theorem that we use.
\begin{theorem} \label{thm:DRV10}   {\em (Dwork {\it et al.}~\cite{DRV10}).}
	Let $\alpha_0, \beta_0 \in (0,1)$, and $\beta'>0$. If $\mathfrak{M}_1, \cdots , \mathfrak{M}_\ell$ are each $(\alpha, \beta)$-differential private mechanism, then the mechanism $\mathfrak{M}(\mathbf{D}):= (\mathfrak{M}_1(\mathbf{D}), \cdots , \mathfrak{M}_\ell(\mathbf{D}))$ releasing the concatenation of each algorithm is $(\alpha', \ell \beta+0+\beta')$-differentially private for $\alpha' < \sqrt{2\ell \ln (1/\beta')}\alpha_0 + 2\ell \alpha_0^2$.
\end{theorem}
A proof of this theorem could be found in~\cite[Chapter 3]{DR14}.

\paragraph{Gaussian Mechanism.} \label{gaussianmechanism} The Gaussian variant of the Laplace mechanism was proven to preserve differential privacy by Dwork {\it et al.}~\cite{DKMMN06} in a follow-up work. Let $f(\cdot)$ be a function from a class of $\Delta$-sensitive functions. The Gaussian mechanism is 
\[\mathfrak{M}(\mathbf{D}, f(\cdot), \alpha) := f(\mathbf{D}) + (X_1, \cdots, X_k), \text{ where } X_i \sim \cN\paren{0,\frac{\Delta^2}{ \eps^{2}} \log(1.25/\delta)}. \]
Dwork {\it et al.}~\cite{DKMMN06} proved the following.

\begin{theorem} (Gaussian mechanism~\cite{DKMMN06}.) \label{thm:gaussian}
Let $\bx, \by \in \R^n$ be any two vectors such that $\| \bx - \by \|_2 \leq c$. Let $\rho = c\varepsilon^{-1} \sqrt{\log(1/\delta)}$ and $\mathbf{g} \sim \cN(0,\rho^2)^n$ be a vector with each entries sampled i.i.d. Then for any $\mathbf{s} \subset \R^n$, $  \p [\bx + \mathbf{g} \in \mathbf{s}] \leq e^{\varepsilon} \p [\by + \mathbf{g} \in \mathbf{s}] + \delta. $
\end{theorem}
%A proof of the above  these results can be found in~\cite[Chapter 3]{DR14}.

%\medskip \noindent \textbf
\subsection{Properties of Gaussian distribution.} We need the following property of a random Gaussian matrices. 

\begin{fact} \label{fact:gaussian} (\cite{JL84,Sarlos06})
Let $\mathbf{P} \in \R^{m \times n}$ be a matrix of rank $r$ and $\mathbf{Q} \in \R^{m \times n'}$ be an $m \times n'$ matrix. Let $\cD$ be a distribution of matrices over $\R^{t \times n}$ with entries  sampled i.i.d. from $\cN(0,1/t)$. Then there exists a $t = O(r/\alpha \log(r/\beta) )$ such that $\cD$ is an $(\alpha,\beta)$-subspace embedding for generalized regression.
\end{fact}

\begin{lemma} \label{lem:N}
Let $\bN \sim \cN(0,\rho^2)^{m \times n}$ Gaussian matrix. Then with probability $99/100$, $\| \bN \|_F = O( \rho \sqrt{mn})$.
\end{lemma}
\begin{proof}
The lemma follows from the following computation.
 	\[
	\E[ \| \bC \bPhi  \|_F^2] = \E \sparen{ \sum_{i,j}  (\widetilde{\bN}_1)_{ij}^2} =  \sum_{i,j} \E[  (\widetilde{\bN}_1)_{ij}^2] = mn \rho^2.
	\]
The result follows using Markov's inequality. % and the fact that $\| \bC \bPhi \|_F^2 = (1+\alpha) \| \bC \|_F^2$.
 \end{proof}

%\medskip \noindent \textbf
\subsection{Properties of pseudo-inverse of a matrix.} We need the following results about product of pseudo-inverse in the proof of~\lemref{SRHTinverse} and~\lemref{efficiency}.
\begin{fact}
If $\bA$ has a left-inverse, then $\bA^\dagger = (\bA^{\mathsf T} \bA)^{-1} \bA^{\mathsf T}$ and if $\bA$ has right-inverse, then $\bA^\dagger =  \bA^{\mathsf T} (\bA \bA^{\mathsf T})^{-1}$. 
\end{fact}

\begin{theorem} \label{thm:pseudoinverse}
Let $\bA$ and $\bB$ be conforming matrices and either,
\begin{enumerate}
	\item $\bA$ has orthonormal columns (i.e., $\bA^{\mathsf T} \bA$ is an identity matrix) or,
	\item $\bB$ has orthonormal rows (i.e., $\bB \bB{^\mathsf T}$ is an identity matrix),
	\item $\bA $ has all columns linearly independent (full column rank) and $\bB$ has all rows linearly independent (full row rank) or,
	\item $\bB = \bA^{\mathsf T}$ (i.e., $\bB$ is the conjugate transpose of $\bA$),
\end{enumerate}
then $(\bA \bB)^\dagger = \bB^\dagger \bA^\dagger$.
\end{theorem}

We use the following variant of Pythagorean theorem in the proof of~\lemref{orthonormal}.
\begin{theorem} \label{thm:phyth} (Pythagorean theorem). 
	Let $\bA$ and $\bB$ be two matrices such that $\bA^{\mathsf T} \bB$ is an zero matrix. Then for any $\bC = \bA + \bB$, we have $\| \bC \|_F^2 = \| \bA \|_F^2 + \| \bB \|_F^2.$
\end{theorem}

\subsection{Linear Algebraic Results Used in This Paper}
We also need the following results for the privacy proof.
\begin{theorem} \label{thm:lidskii} (Lidskii Theorem~\cite{Bhatia}). 
Let $\bA, \mathbf{B}$ be $n \times n$ Hermittian matrices. Then for any choice of indices $1 \leq i_1 \leq \cdots \leq i_k \leq n$,  
\[  \sum_{j=1}^k  \lambda_{i_j}(\bA + \mathbf{B}) \leq \sum_{j=1}^k  \lambda_{i_j}(\bA) + \sum_{j=1}^k  \lambda_{i_j}( \mathbf{B}), \]
where $\set{\lambda_i(\bA)}_{i=1}^n$ are the eigen-values of $\bA$ in decreasing order.
\end{theorem}

\begin{theorem} \label{thm:Weyl} (Weyl's Pertubation Theorem~\cite{Bhatia}). 
For any $m \times n$ matrices $\mathbf{P}, \mathbf{Q}$, we have $| \sigma_i(\mathbf{P} + \mathbf{Q}) - \sigma_i(\mathbf{P})| \leq \| \mathbf{Q} \|_2$, where $\sigma_i(\cdot)$ denotes the $i$-th singular value and $\| \mathbf{Q} \|_2$ is the spectral norm of the matrix $\mathbf{Q}$.
\end{theorem}

% In order to finish the proof of~\thmref{naive}, we need to compute the value of $\rho$ that provides differential privacy. 
We use the notation $\rad(p)$ to denote a distribution with support $\pm 1$ such that $+1$ is sampled with probability  $p$ and $-1$ is sampled with probability  $1-p$. 
An $n \times n$ Walsh-Hadamard matrix $\mathbf{H}_n$ is constructed recursively as follows:
$$  \mathbf{H}_n = \begin{pmatrix}  \mathbf{H}_{n/2} & \mathbf{H}_{n/2} \\  \mathbf{H}_{n/2} & -\mathbf{H}_{n/2} \end{pmatrix}~\text{and}~\mathbf{H}_1 :=1.$$
A randomized Walsh-Hadamard matrix $\bW_n$ is formed by multiplying $\mathbf{H}_n$ with a diagonal matrix whose diagonal entries are picked i.i.d. from $\rad(1/2).$ We drop the subscript $n$ where it is clear from the context. A subsampled randomized Hadamard matrix is construct by multiplying $\bPi_{1..r}$ from the left to a randomized Hadamard matrix, where $\bPi_{1..r}$ is the matrix formed by the first $r$ rows of a random permutation matrix. 

\begin{lemma} \label{lem:SRHTinverse}
Let $\bS$ be a $v \times m$ subsampled randomized Hadamard matrix, where $v\leq m$ and $\bN \in \R^{v \times n}$. Then we have, % Then for any  $v \times n$ matrix $\bN$, we have
\[ \|  \bS^\dagger \bN_2 \|_F = \| \bN_2 \|_F.\]  
\end{lemma} 
 \begin{proof}
One way to look at the action of $\bS$ when it is a subsampled Hadamard transform is that it is a product of matrices $\bW$ and $\bPi_{1..r}$, where $\bPi_{1..r}$ is the matrix formed by the first $r$ rows of a random permutation matrix and $\bW$ is a randomized Walsh-Hadamard matrix formed by multiplying a Walsh-Hadamard matrix with a diagonal matrix whose non-zero entries are picked i.i.d. from $\rad(1/2)$. 

 Since $\bW\bD$ has orthonormal rows, $\bS^\dagger=(\bPi_{1..v} \bW \bD)^\dagger= (\bW \bD)^{\mathsf T} (\bPi_{1..v})^{\dagger}$.  %Let $\widehat{\bPi}_{1..v}= \frac{m}{v} \bPi_{1..v}$. Therefore, 
 This implies
\begin{align*}
 \|  \bS^\dagger \bN \|_F &= \| ({\bPi}_{1..v} \bW \bD)^\dagger \bN \|_F = \| (\bW \bD)^{\mathsf T} {\bPi}_{1..v}^{\dagger} \bN \|_F \\
 & = \| {\bPi}_{1..v}^{\dagger} \bN \|_F. \end{align*}

Using the fact that $\bPi_{1..v}$ is a full row rank matrix and $\widehat{\bPi}_{1..v}  \widehat{\bPi}_{1..v}^{\mathsf T}$ is an identity matrix, we have 
$\widehat{\bPi}_{1..v}^{\dagger} = \widehat{\bPi}_{1..v}^{\mathsf T} ( \widehat{\bPi}_{1..v}  \bPi_{1..v}^{\mathsf T})^{-1} =  \widehat{\bPi}_{1..v}^{\mathsf T}. $ The result follows. 
 \end{proof}

We reprove the following theorem of Boutsidis {\it et al.}~\cite{BWZ16}. Our proof allows us a tighter control on the intermediate results which helps us to get a tighter bound on the additive error. 
\begin{theorem} \label{thm:BWZ16}
Let $\cD_R$  be an $(\alpha,\delta)$-subspace embedding for generalized regression (\textcolor{red}{Definition}~\ref{def:genreg}). 
%random sketching matrices  with the following property: for any two matrices $\bA$ and $\bB$ with $m$ rows such that $r(\bA) \leq r$. If $\widetilde{\bX} = \argmin_{\bX} \| \bPhi (\bA \bX - \bB) \|_F$, then 
%	\[ \| \bA \widetilde{\bX} - \bB \|_F \leq (1+\alpha) \min_{\bX} \| \bA \bX - \bB \|_F.  \]
%	Similarly, let $\bPsi$  be random sketching matrices  with the following property: for any two matrices $\bA$ and $\bB$ with $m$ rows such that $r(\bA) \leq r$. If $\widetilde{\bX} = \argmin_{\bX} \|  (\bX \bA  - \bB) \bPhi \|_F$, then 
%	\[ \| \widetilde{\bX} \bA  - \bB \|_F \leq (1+\alpha) \min_{\bX} \| \bX \bA  - \bB \|_F.  \]
Then with probability $1-2\delta$ over $\bPhi^{\mathsf T} \sim \cD_R$ and $\bPsi \sim \cD_R$, for any arbitrary $m \times n$ matrix $\bA$,
	\begin{align} \min_{\bX, \mathsf{r}(\bX) \leq k} \| \bA \bPhi \bX \bPsi \bA  - \bA \|_F \leq (1+\alpha)^2 \| \bA - [\bA]_k \|_F. \label{eq:step14} \end{align}
\end{theorem}	
%We give an alternate proof of~\thmref{BWZ16}. 
Our proof uses two optimization problems and uses the solution to those optimization problem in a clever way. We feel that our proof is simpler. It also has explicit solutions to the two optimization problems, which makes it easy to extend to the case of private low-rank factorization and get a tight bound.	
\begin{proof}
Let $[\bA]_k=\bU_k \bSigma_k \bV_k^{\mathsf T}$. 
We will use~\lemref{phi} to prove the theorem. Set $\bPhi= \bPhi^{\mathsf T}$, $\mathbf{P}=[\bA]_k^{\mathsf T}$, $\mathbf{Q}=\bA^{\mathsf T}$. Then for $\widetilde{\bX}=\argmin_{\bX} \|\bPhi^{\mathsf T} ([\bA]_k^{\mathsf T} \bX - \bA^{\mathsf T}) \|_F$, we have with probability $1-\delta$ over $\bPhi^{\mathsf T} \sim \cD_R$,
  \begin{align*} \| [\bA]_k^{\mathsf T} \widetilde{\bX} - \bA^{\mathsf T} \|_F &\leq (1+\alpha) \min_{\bX} \| [\bA]_k^{\mathsf T} \bX - \bA^{\mathsf T} \|_F \\
  		&\leq (1+\alpha)  \| [\bA]_k^{\mathsf T} - \bA^{\mathsf T} \|_F %\leq (1+\alpha) \| \bA^{\mathsf T} - [\bA]_k^{\mathsf T} \|_F
  \end{align*}
where the last inequality follows by setting $\bX = \bU_k \bU_k^\mathsf T$. 
 Here $\widetilde{\bX} = (\bPhi^{\mathsf T} [\bA]_k^{\mathsf T})^\dagger (\bA \bPhi)^{\mathsf T}$. Since Frobenius norm is preserved under transpose, we have by substituting the value of $\widetilde \bX$,
 \begin{align}
  \| \bA \bPhi ([\bA]_k \bPhi)^\dagger [\bA]_k - \bA \|_F \leq (1+\alpha) \| \bA - [\bA]_k \|_F. \label{eq:phipsi1}
 \end{align}
 
We now use~\lemref{phi} on the following regression problem: 
  \[ \min_{\bX} \|  \mathbf{W} \bX - \bA \|_F, \quad \text{where}\quad \mathbf{W} = \bA \bPhi ([\bA]_k \bPhi)^{\dagger} .\]

  Let $\widehat{\bX} = \argmin_{\bX} \| \bPsi (\mathbf{W} \bX - \bA) \|_F.$ Since $[\bA]_k$ has rank $k$,~\lemref{phi} and~\eqnref{phipsi1} gives with probability $1-\delta$ over $\bPsi \sim \cD_R$ %using ,  we have
  \begin{align*}
  	\|  \mathbf{W} \widehat{\bX} - \bA \|_F &= \|  \bA \bPhi ([\bA]_k \bPhi)^{\dagger} \widehat{\bX} - \bA \|_F \\ 
		&\leq  (1+\alpha)  \min_{\bX} \| (\bA \bPhi) ([\bA]_k \bPhi)^{\dagger} \bX - \bA \|_F \\
		&\leq (1+\alpha)  \| \bA \bPhi ([\bA]_k \bPhi)^{\dagger} [\bA]_k - \bA \|_F  \\
		&\leq (1+\alpha)^2 \| \bA - [\bA]_k \|_F.
  \end{align*}

 {Substituting the value of $\widehat{\bX} = (\bPsi \mathbf{W})^\dagger \bPsi \bA$, with probability $1-2\delta$ over $\bPhi^{\mathsf T}, \bPsi \sim \cD_R$, we have}
  \begin{align}
  &\| \bA \bPhi ([\bA]_k \bPhi)^{\dagger} (\bPsi \bA \bPhi ([\bA]_k \bPhi)^{\dagger})^\dagger \bPsi \bA - \bA \|_F 
    \quad \leq (1+\alpha)^2 \| \bA - [\bA]_k \|_F. \label{eq:BWZ16last}
  \end{align}
 
% Since $ ([\bA]_k \bPhi)^{\dagger} (\bPsi \bA \bPhi ([\bA]_k \bPhi)^{\dagger})^\dagger$ has rank at most $k$ and scaling the value of $\alpha$ by a constant, we have the result. 
Since $ ([\bA]_k \bPhi)^{\dagger} (\bPsi \bA \bPhi ([\bA]_k \bPhi)^{\dagger})^\dagger$ has rank at most $k$, this completes the proof because $\bPhi ([\bA]_k \bPhi)^{\dagger})^\dagger \bPsi$ is a rank-$k$ matrix.
\end{proof}

We prove the  following key lemma, which can be seen as a generalization of one of the previous results of Clarkson and Woodruff~\cite{CW09}. This lemma would be required in proving all our results.
%This is done in the last step.
%	\item [Step 3:] 
\begin{lemma} \label{lem:orthonormal}
Let $\mathbf{R}$ be a matrix with orthonormal rows and $\bC$ have orthonormal columns. Then % we have the following:
\[ \min_{\bX, \mathsf{r}(\bX)=k} \|\bC \bX \mathbf{R} - \mathbf{F}  \|_F = \| \bC[\bC^{\mathsf T} \mathbf{F} \mathbf{R}^{\mathsf T}]_k \mathbf{R} - \mathbf{F} \|_F. \]  
 \end{lemma}
 \begin{proof}
For any matrix $\bY$ of appropriate dimension, we have $\brak{\mathbf{F}-\bC \bC^{\mathsf T} \mathbf{F}, \bC \bC^{\mathsf T} \mathbf{F} - \bC \bY \mathbf{R}}=0$. This is because $\mathbf{F}-\bC \bC^{\mathsf T} \mathbf{F} = ( \I -\bC \bC^{\mathsf T}) \mathbf{F}$ lies in space orthogonal to $\bC (\bC^{\mathsf T} \mathbf{F} -  \bY \mathbf{R})$. By~\thmref{phyth}, 
 \begin{align}
 \| \mathbf{F} - \bC \bY \mathbf{R} \|_F^2 
 	&= \| \mathbf{F} - \bC \bC^{\mathsf T} \mathbf{F} \|_F^2 + \| \bC \bC^{\mathsf T} \mathbf{F} - \bC \bY \mathbf{R} \|_F^2   \nonumber \\
  	&= \| \mathbf{F} - \bC \bC^{\mathsf T} \mathbf{F} \|_F^2 + \|  \bC^{\mathsf T} \mathbf{F} -  \bY \mathbf{R} \|_F^2, \label{eq:orthonormal1}
\end{align}	 
where the second equality follows from the properties of unitary matrices. 

{Again, for any matrix $\bY$ of appropriate dimensions, we have $\brak{\bC^{\mathsf T} \mathbf{F} \mathbf{R}^{\mathsf T} \mathbf{R} - \bY \mathbf{R}, \bC^{\mathsf T} \mathbf{F} -  \bC^{\mathsf T} \mathbf{F} \mathbf{R}^{\mathsf T} \mathbf{R}  } =0$. This is because $\bC^{\mathsf T} \mathbf{F} \mathbf{R}^{\mathsf T} \mathbf{R} - \bY \mathbf{R}= (\bC^{\mathsf T} \mathbf{F} \mathbf{R}^{\mathsf T}  - \bY) \mathbf{R}$ lies in the space spanned by $\mathbf{R}$, and $\bC^{\mathsf T} \mathbf{F} -  \bC^{\mathsf T} \mathbf{F} \mathbf{R}^{\mathsf T} \mathbf{R} = \bC^{\mathsf T} \mathbf{F} (\I  - \mathbf{R}^{\mathsf T} \mathbf{R})$ lies in the orthogonal space. By~\thmref{phyth}, we have }
\begin{align}
 \|  \bC^{\mathsf T} \mathbf{F} -  \bY \mathbf{R} \|_F^2
    	&=  \| \bC^{\mathsf T} \mathbf{F} - \bC^{\mathsf T} \mathbf{F} \mathbf{R}^{\mathsf T} \mathbf{R} \|_F^2 %\nonumber \\
	%& \qquad 
	+ \| \bC^{\mathsf T} \mathbf{F} \mathbf{R}^{\mathsf T} \mathbf{R} -\bY \mathbf{R} \|_F^2 \label{eq:orthonormal2}
\end{align}	 

{Since $\| \bC^{\mathsf T} \mathbf{F} - \bC^{\mathsf T} \mathbf{F} \mathbf{R}^{\mathsf T} \mathbf{R} \|_F^2 $ is independent of $\bY$, we just bound the term $\| \bC^{\mathsf T} \mathbf{F} \mathbf{R}^{\mathsf T} \mathbf{R} -\bY \mathbf{R} \|_F^2$. Substituting $\bY = [ \bC \mathbf{F} \mathbf{R}]_k$ and using the fact that multiplying $\mathbf{R}$ from the right does not change the Frobenius norm and $[\bC^{\mathsf T} \mathbf{F} \mathbf{R}^{\mathsf T}]_k $ is the best $k$-rank approximation to the matrix $\bC^{\mathsf T} \mathbf{F} \mathbf{R}^{\mathsf T}$, for all rank-k matrices $\bZ$, we have } 
 \begin{align}
\| \bC^{\mathsf T} \mathbf{F} \mathbf{R}^{\mathsf T} \mathbf{R} -[\bC^{\mathsf T} \mathbf{F} \mathbf{R}^{\mathsf T}]_k \mathbf{R} \|_F^2 &\leq \| \bC^{\mathsf T} \mathbf{F} \mathbf{R}^{\mathsf T} \mathbf{R} -\bZ \mathbf{R} \|_F^2. \label{eq:orthonormal3}
%\end{align}
% \| \mathbf{F} - \bC [\bC^{\mathsf T} \mathbf{F} \mathbf{R}^{\mathsf T}]_k \mathbf{R} \|_F^2
%    	&= \| \mathbf{F} - \bC \bC^{\mathsf T} \mathbf{F} \|_F^2+ \| \bC^{\mathsf T} \mathbf{F} - \bC^{\mathsf T} \mathbf{F} \mathbf{R}^{\mathsf T} \mathbf{R} \|_F^2 \\ 
%	&\quad  + \| \bC^{\mathsf T} \mathbf{F} \mathbf{R}^{\mathsf T} \mathbf{R} -[\bC^{\mathsf T} \mathbf{F} \mathbf{R}^{\mathsf T}]_k \mathbf{R}  \|_F^2 \\ 
 %   	&= \| \mathbf{F} - \bC \bC^{\mathsf T} \mathbf{F} \|_F^2 + \| \bC^{\mathsf T} \mathbf{F} - \bC^{\mathsf T} \mathbf{F} \mathbf{R}^{\mathsf T} \mathbf{R} \|_F^2 + \| (\bC^{\mathsf T} \mathbf{F} \mathbf{R}^{\mathsf T}  -[\bC^{\mathsf T} \mathbf{F} \mathbf{R}^{\mathsf T}]_k )\mathbf{R}  \|_F^2  \\ 
%	&\leq  \| \mathbf{F} - \bC \bC^{\mathsf T} \mathbf{F} \|_F^2 + \| \bC^{\mathsf T} \mathbf{F} - \bC^{\mathsf T} \mathbf{F} \mathbf{R}^{\mathsf T} \mathbf{R} \|_F^2+ \| (\bC^{\mathsf T} \mathbf{F} \mathbf{R}^{\mathsf T}  -\bZ )\mathbf{R}  \|_F^2  \\
%	&=  \| \mathbf{F} - \bC \bC^{\mathsf T} \mathbf{F} \|_F^2 + \| \bC^{\mathsf T} \mathbf{F}   -\bZ \mathbf{R}   \|_F^2  \\
%	&=  \| \mathbf{F} - \bC \bC^{\mathsf T} \mathbf{F} \|_F^2 + \| \bC \bC^{\mathsf T} \mathbf{F}   -  \bC \bZ \mathbf{R}   \|_F^2  \\
%	&=  \| \mathbf{F} -   \bC \bZ \mathbf{R}   \|_F^2,
\end{align}
{Combining~\eqnref{orthonormal3} with~\eqnref{orthonormal2} and~\thmref{phyth}, we have}
 \begin{align}
\|  \bC^{\mathsf T} \mathbf{F} -   [ \bC \mathbf{F} \mathbf{R}]_k \mathbf{R} \|_F^2 & \leq \| \bC^{\mathsf T} \mathbf{F} - \bC^{\mathsf T} \mathbf{F} \mathbf{R}^{\mathsf T} \mathbf{R} \|_F^2  + \| \bC^{\mathsf T} \mathbf{F} \mathbf{R}^{\mathsf T} \mathbf{R} -   \bZ \mathbf{R} \|_F^2  = \| \bC^{\mathsf T} \mathbf{F} -  \bZ \mathbf{R} \|_F^2. \label{eq:orthonormal4}
\end{align}
{Combining~\eqnref{orthonormal4} with~\eqnref{orthonormal1}, the fact that $\bC$ has orthonormal columns, and~\thmref{phyth}, we have}
\begin{align} \| \mathbf{F} - \bC [ \bC \mathbf{F} \mathbf{R}]_k \mathbf{R} \|_F^2 
 	&\leq \| \mathbf{F} - \bC \bC^{\mathsf T} \mathbf{F} \|_F^2 + \| \bC^{\mathsf T} \mathbf{F} -  \bZ \mathbf{R} \|_F^2 \nonumber \\
 	&= \| \mathbf{F} - \bC \bC^{\mathsf T} \mathbf{F} \|_F^2 + \| \bC \bC^{\mathsf T} \mathbf{F} - \bC \bZ \mathbf{R} \|_F^2 \nonumber \\
 	&= \| \mathbf{F}  - \bC \bZ \mathbf{R} \|_F^2. \nonumber 
 \end{align}
 This completes the proof of~\lemref{orthonormal}.
 % From~\lemref{CW09}, we have 
 %\[ \min_{\bX,\mathsf{r}(\bX)=k} \| \bC \bX - \bB \|_F = \|  \bC[\bC^{\mathsf T} \bB ]_k - \bB \|_F. \]
 %From~\lemref{CW09row}, we have
 %\[ \min_{\bX,\mathsf{r}(\bX)=k} \|  \bX \mathbf{R} - \bD \|_F = \|  [ \bD \mathbf{R}^{\mathsf T}]_k\mathbf{R} - \bD \|_F. \]
%Substitute $\bD=\bC^{\mathsf T} \bB$ in above. Then we have  
 %\[ \min_{\bX,\mathsf{r}(\bX)=k} \|  \bX \mathbf{R} - \bC^{\mathsf T} \bB \|_F = \|  [ \bC^{\mathsf T} \bB \mathbf{R}^{\mathsf T}]_k\mathbf{R} - \bC^{\mathsf T} \bB \|_F. \]
%Using the fact that norm does not change under unitary operation, we have the result. 
 \end{proof}

%!TEX root = low.tex

\section{Low Space Differentially private  Low-rank Factorization} \label{app:spaceprivate}
In this section, we give our basic low space algorithms for various granularity of privacy. These algorithms serve as a meta algorithm on which we built algorithms in various model of computations, like streaming model under turnstile update, continual release model, and local model. In~\appref{lowspaceprivate}, we analyze the algorithm presented earlier in~\secref{meta}. In~\appref{meta_time}, we give our low space differentially private algorithm under $\priv_2$, a stronger privacy guarantee. %Later, we show how we can convert these algorithms in various models, . 

\subsection{Low Space Differentially Private Low-rank Factorization Under $\priv_1$} \label{app:lowspaceprivate}
In this section, we analyze the algorithm presented earlier in~\secref{meta}. For the ease of the readers, we first restate~\thmref{meta} here.

\medskip \noindent {\bf Restatement of~\thmref{meta}.} 
Let $m,n,k \in \N$ and $\alpha,\varepsilon,\delta$ be the input parameters.  Let $s=\max\{m,n\}, u=\min \set{m,n}$, $\kappa=(1+\alpha)/(1-\alpha)$, $\eta=\max\set{k,\alpha^{-1}}$, and $\sigma_\mathsf{min}={16 \log(1/\delta) \sqrt{t \kappa \ln(1/\delta)}}/{\varepsilon}$. Given an $m \times n$ matrix $\bA$,  {\scshape Private-Space-Optimal-$\lrf$}, described in~\figref{spacelowprivate}, outputs a $k$-rank factorization $\widetilde{\bU}, \widetilde{\bSigma}$, and $ \widetilde{\bV}$, such that 
\begin{enumerate}
	\item {\scshape Private-Space-Optimal-$\lrf$} is $(3\varepsilon,3\delta)$ differentially private under $\priv_1$.  \label{privacy}
	\item  Let $\mathbf{M}_k=\widetilde{\bU} \widetilde{\bSigma} \widetilde{\bV}^{\mathsf T}. $ Then with probability $9/10$ over the random coins of  {\scshape Private-Space-Optimal-$\lrf$},  \label{privatecorrectness}
 \begin{align*} 
 &\| \begin{pmatrix} \bA & \mathbf{0} \end{pmatrix} - \mathbf{M}_k \|_F \leq (1+\alpha) \| \bA - [\bA]_k \|_F  + O ( \sigma_\mathsf{min} \sqrt{u} + \varepsilon^{-1}  {\sqrt{ks  \ln(1/\delta)}} ),
\end{align*}
%where $\mathbf{M}_k=\widetilde{\bU} \widetilde{\bSigma} \widetilde{\bV}^{\mathsf T}. $% with probability $1-O(\delta)$.% over $\bPhi, \bPsi \sim \cD_R$ and $\bS, \bT \sim \cD_A$.
 	\item The space used by {\scshape Private-Space-Optimal-$\lrf$} is $O((m+n)\eta \alpha^{-1}\log(k/\delta))$.  \label{privatespace}
%	\item The initialization time of the algorithm is $O_\delta((m+n)\log k)$ and the total time to perform the factorization is $O(\mathsf{nn}(\bA)\log(1/\delta)) + {O}((nk^2 \alpha^{-2}  + mk^2 \alpha^{-2} + k^3 \alpha^{-4})\log(1/\delta))$. 
	\item The  total computational time is 
	$O\paren{  \mathsf{nn}(\bA) \log(1/\delta) +  \frac{(u^2+s\eta)\eta \log^2(k/\delta) }{\alpha^{2}}  + \frac{\eta^3  \log^3 (k/\delta)}{\alpha^{5}} }.$ \label{privatetime}
\end{enumerate}

First note that if we do not aim for run time efficiency, then we can simply use Gaussian random matrices instead of sampling $\bPhi, \Psi \sim \cD_R$ and $\bS,\bT \sim \cD_A$ as per~\lemref{phi} and~\lemref{S}. This would simplify the privacy proof as we will see later. Secondly, the probability of success can be amplified to get a high probability bound by standard techniques. We leave these details as they are standard arguments.
\begin{proof} [Proof  of~\thmref{meta}]
Part~\ref{privatespace} follows immediately by setting the values of $t$ and $v$.
Part~\ref{privatetime} of~\thmref{meta} requires some computation.
%follows as in the proof of~\thmref{space} by setting the values of $t$ and $v$. 
More precisely, we have the following.
\begin{enumerate}
	 \item 
	Computing $\bY_c$ requires $ O(\mathsf{nn}(\bA) \log(1/\delta))+ m(m+n)t$ time and computing $\bY_r$ requires $O(\mathsf{nn}(\bA)\log(1/\delta)) + (m+n)t^2$ time. %Therefore, computing $\bY_c$ and $\bY_r$ requires $ O(\mathsf{nn}(\bA) \log( k/\delta) +(m+n)k^2 \alpha^{-2} \log^2(k/\delta)$ time. 
	 \item 
	 Computing %the orthonormal basis 
	 $\bU$ and $\bV$ requires $O(nt^2 + mt^2) = 
	 O((m+n)\eta^2 \alpha^{-2} \log^2( k/\delta))$ time.
	 \item 
	 Computing a SVD of  matrices $\bS \bU$  and $\mathbf{T} \bV^{\mathsf T}$ requires $vt^2 + tv^2=O(k^3 \alpha^{-5} \log^2( k/\delta))$.
	 \item 
	 Computing $\bZ$ requires $O(\mathsf{nn}(\bA) \log( 1/\delta)  +  \eta^2/\alpha^2 \log^2( k/\delta) + n\eta^2\alpha^{-2} \log^2( k/\delta))$
	 Computation of $[\widetilde{\bU}_s^{\mathsf T}\bZ \widetilde{\bV}]_k$ requires $O(\mathsf{nn}(\bA) \log (1/\delta)) + tv^2= O(\mathsf{nn}(\bA) \log(1/\delta) + k^{3} \alpha^{-5} \log^3 ( k/\delta))$ time.
	 \item 
	 Computation of the last 
	 %a  
	 SVD 
	 %of $ \widetilde{\bV}_s \widetilde{\bSigma}_s^{\dagger} [\widetilde{\bU}_s^{\mathsf T} \mathbf{Z} \widetilde{\bV}_t]_k \widetilde{\bSigma}_t^{\dagger} \widetilde{\bU}_t^{\mathsf T} $  
	 requires %$(m+n)t^2=
	 $O((m+n) \eta^2 \alpha^{-2} \log^2( k/\delta))$ time.
\end{enumerate}
Combining all these terms, we have our claim on the running time. 
\subsubsection{Correctness Proof of {\scshape Private-Space-Optimal-$\lrf$}} \label{app:correctness_space}
We now prove the correctness guarantee of {\scshape Private-Space-Optimal-$\lrf$}. In what follows, we analyze the case when $m \leq n$. The case when $n \leq m$ follows analogously due to the symmetry of {\scshape Private-Space-Optimal-$\lrf$}.  First note that appending $\bA$ with an all zero matrix $\mathbf{0}^{m \times m}$ has no effect on its $k$-rank approximation, i.e., we can analyze the approximation guarantee for $\widehat \bA := \begin{pmatrix} \bA & \mathbf{0} \end{pmatrix}$ instead of $\bA$. %and the fact that $\bOmega$, being a Johnson-Lindenstrauss transform, is also a $(\alpha,\delta)$-subspace embedding for generalized regression~\cite{Sarlos06}. 
Let $\mathbf{M}_k$ be as defined in~\thmref{meta}. We break our proof in three main steps.
\begin{description}
	\item [(i)] Lower bound $ \| \mathbf{M}_k - \widehat{\bA} \|_F$ by $\| \mathbf{M}_k - \begin{pmatrix} \bA & \mathbf{0} \end{pmatrix} \|_F$ up to an additive term (\lemref{first}). \label{item:lower}
	\item [(ii)] Relate  $ \| \mathbf{M}_k - \widehat{\bA} \|_F$ and $ \| \widehat{\bA} - [\widehat{\bA}]_k \|_F$ (\lemref{third}). \label{item:relate}
	\item [(iii)] Upper bound $ \| \widehat{\bA} - [\widehat{\bA}]_k \|_F$ by a term linear in $\| \bA - [\bA]_k \|_F$ up to an additive term (\lemref{second}). \label{item:upper} 
\end{description}
Part~\ref{privatecorrectness} of~\thmref{meta} follows by combining these three items together. 

\medskip \noindent \textbf{Performing step~(i).} We start by proving a bound on  $ \| \mathbf{M}_k - \widehat{\bA} \|_F$ by $\| \mathbf{M}_k - \bA \|_F$ and  a small additive term. The following lemma provides such a bound.
\begin{lemma} \label{lem:first}
 Let $\bA$ be an $m \times n$ input matrix, and let $\widehat{\bA}=  \begin{pmatrix} \bA & \sigma_\mathsf{min} \I_m \end{pmatrix}$ for $\sigma_\mathsf{min}$ defined in~\thmref{meta}. Denote by $\mathbf{M}_k:= \widetilde{\bU}  \widetilde{\bSigma} \widetilde{\bV}^{\mathsf T} $ the output of {\scshape Private-Optimal-Space}-$\lrf$. Then 
%  \begin{align*}
$$ \| \mathbf{M}_k - \begin{pmatrix} \bA & \mathbf{0} \end{pmatrix}\|_F  \leq  \| \mathbf{M}_k - \widehat{\bA} \|_F + \sigma_{\min} \sqrt{m}.$$
%   \end{align*}
\end{lemma}
\begin{proof}
The lemma is immediate from the following observation:
%\begin{align*}
% &\| \mathbf{M}_k - \begin{pmatrix} \bA & \mathbf{0} \end{pmatrix} \|_F - \sigma_{\min} \| \I_m  \|_F  \\ &\quad \leq  \| \mathbf{M}_k - \begin{pmatrix} \bA & \mathbf{0} \end{pmatrix} - \begin{pmatrix}  \mathbf{0} & \sigma_{\min} \I \end{pmatrix} \|_F =  \| \mathbf{M}_k - \widehat{\bA} \|_F, 
 %\end{align*}
$ \| \mathbf{M}_k - \begin{pmatrix} \bA & \mathbf{0} \end{pmatrix} \|_F - \sigma_{\min} \| \I_m  \|_F   \leq  \| \mathbf{M}_k - \begin{pmatrix} \bA &  \mathbf{0} \end{pmatrix} - \begin{pmatrix}  \mathbf{0} &  \sigma_{\min} \I \end{pmatrix} \|_F =  \| \mathbf{M}_k - \widehat{\bA} \|_F, 
$ where the first inequality follows from the sub-additivity of the Frobenius norm.
\end{proof}

%~\lemref{first} implies that, in order to part~\ref{privatecorrectness} of~\thmref{low_space_private}, we need to relate $\| \mathbf{M}_k - \widehat{\bA} \|_F$ with $\| \bA - [\bA]_k \|_F$. This is achieved by~\lemref{third} and~\lemref{second}.

\medskip \noindent \textbf{Performing step~(ii).} This is the most involved part of the proof and uses multiple lemmas as follows.
%~\lemref{third} is the first step in performing item~(ii).
\begin{lemma} \label{lem:third} 
 Let $\widehat{\bA}=  \begin{pmatrix} \bA & \sigma_{\min} \I \end{pmatrix}$  and denote by $\mathbf{M}_k:= \widetilde{\bU}  \widetilde{\bSigma} \widetilde{\bV}^{\mathsf T} $. Let $\widehat{\bPhi} = t^{-1} \bOmega \bPhi$. Then  with probability $1-O(\delta)$ over the random coins of the algorithm {\scshape Private-Space-Optimal-$\lrf$}, 
  \begin{align*}
 \|\mathbf{M}_k - \widehat{\bA} \|_F  &\leq (1+\alpha) \|\widehat{\bA} - [\widehat{\bA}]_k\|_F + 2 \|  \bS^\dagger \bN (\bT^{\mathsf T})^\dagger \|_F +  \| { \widehat{\bA} \widehat{\bPhi}  ([\widehat{\bA}]_k \widehat{\bPhi})^{\dagger} (\bPsi \widehat{\bA} \widehat{\bPhi} ([\widehat{\bA}]_k \widehat{\bPhi})^{\dagger})^\dagger \bN_1}\|_F .
   \end{align*}
\end{lemma}
\begin{proof}
Let $\bB = \widehat{\bA} + \bS^\dagger \bN (\bT^\dagger)^{\mathsf T}$ and $\widehat{\bPhi} =  \bOmega \bPhi $. %We use the following four steps to prove the approximation guarantee.
 %\begin{description}
%	\item [Step 1:] 
	We first use the relation between $\min_{\bX, \mathsf{r}(\bX) \leq k} \| \widehat{\bA} \widehat{\bPhi} \bX \bPsi \widehat{\bA}  - \widehat{\bA} \| $ and $(1+\alpha) \| \widehat{\bA} - [\widehat{\bA}]_k \|_F$ from the proof of~\thmref{BWZ16}.
	Using~\lemref{phi} and ~\factref{gaussian}, if we set $\bA = \widehat{\bA}$ in~\eqnref{BWZ16last}, then with probability $1-3\delta$ over $\widehat{\bPhi}^{\mathsf T},\bPsi \sim \cD_R$, 
	$   \| \widehat{\bA} \widehat{\bPhi} ([\widehat{\bA}]_k \widehat{\bPhi})^{\dagger} (\bPsi \widehat{\bA} \widehat{\bPhi} ([\widehat{\bA}]_k \widehat{\bPhi})^{\dagger})^\dagger \bPsi \widehat{\bA} - \widehat{\bA} \|_F \leq (1+\alpha)^2 \| \widehat{\bA} - [\widehat{\bA}]_k \|_F. $
Now define a rank-$k$ matrix $\mathbf{P}_k := ([\widehat{\bA}]_k \widehat{\bPhi})^{\dagger} (\bPsi \widehat{\bA} \widehat{\bPhi} ([\widehat{\bA}]_k \widehat{\bPhi})^{\dagger})^\dagger$. 
Let us consider the following optimization problem: 
$$ \min_{\bX \atop \mathsf{r}(\bX) \leq k} \|\bY_c \bX \bY_r - \bB \|_F.$$
Since $\mathbf{P}_k$ is a rank-$k$ matrix, using the subadditivity of the Frobenius norm, we have
\begin{align}
  \min_{\bX, \atop \mathsf{r}(\bX) \leq k}  \|\bY_c \bX \bY_r - \bB \|_F & \leq \| \bY_c \mathbf{P}_k\bY_r - \bB \|_F \nonumber \\
  &   = \| \bY_c \mathbf{P}_k\bY_r - (\widehat{\bA} + \bS^\dagger \bN (\bT^\dagger)^{\mathsf T}) \|_F  \nonumber \\
   & \leq \| \bY_c \mathbf{P}_k\bY_r - \widehat{\bA} \|_F   + \| \bS^\dagger \bN (\bT^\dagger)^{\mathsf T} \|_F \nonumber \\
  	&= \| \widehat{\bA} \widehat{\bPhi} \mathbf{P}_k (\bPsi \widehat{\bA} + \bN_1) - \widehat{\bA} \|_F  + \| \bS^\dagger \bN (\bT^\dagger)^{\mathsf T} \|_F\nonumber \\
%	& + \| \underbrace{  \bN_1 ([\widehat{\bA}]_k \widehat{\bPhi})^{\dagger} (\bPsi \bA \widehat{\bPhi} ([\widehat{\bA}]_k \widehat{\bPhi})^{\dagger})^\dagger \bPsi \bA}_{\bS_1}  \|_F \\
  	&\leq \| \widehat{\bA} \widehat{\bPhi} \mathbf{P}_k \bPsi \widehat{\bA} - \widehat{\bA} \|_F  + \| \bS^\dagger \bN_2 (\bT^\dagger)^{\mathsf T} \|_F
	% \nonumber \\
	%& \qquad 
	+ \| { \widehat{\bA} \widehat{\bPhi}  \mathbf{P}_k \bN_1}\|_F\nonumber  \\	
	&\leq (1+\alpha)^2 \| \widehat{\bA} - [\widehat{\bA}]_k \|_F  + \| \bS^\dagger \bN_2 (\bT^\dagger)^{\mathsf T} \|_F  + \| { \widehat{\bA} \widehat{\bPhi}  \mathbf{P}_k \bN_1}\|_F  \label{eq:step14lowspace}
%	& + \| \underbrace{\bN_1 ([\widehat{\bA}]_k \widehat{\bPhi})^{\dagger} (\bPsi \bA \widehat{\bPhi} ([\widehat{\bA}]_k \widehat{\bPhi})^{\dagger})^\dagger \bN_2}_{\bS_3} \|_F \\
  %	&\leq (1+\alpha)^2 \| \bA - [\widehat{\bA}]_k \|_F  + \| \bS^\dagger \bN (\bT^\dagger)^{\mathsf T} \|_F + \|  \bS_1 \| + \| \bS_2 \| + \| \bS_3 \|. 
 \end{align}

%	\item [Step 2:] 
	Let $\bS_2=\widehat{\bA} \widehat{\bPhi} \mathbf{P}_k \bN_1 $. %= \widehat{\bA} \widehat{\bPhi} ([\widehat{\bA}]_k \widehat{\bPhi})^{\dagger} (\bPsi \widehat{\bA} \widehat{\bPhi} ([\widehat{\bA}]_k \widehat{\bPhi})^{\dagger})^\dagger \bN_1$. 
	%We have an upper bound on $ \min_{\bX, \mathsf{r}(\bX) \leq k} \| \bY_c \bX \bY_r  - \bB \|_F$ in the terms of $(1+\alpha) \| \widehat{\bA} - [\widehat{\bA}]_k \|_F$ and some additive terms. %We now lower bound $ \min_{\bX, \mathsf{r}(\bX) \leq k} \| \bY_c \bX \bY_r  - \bB \|_F$ in terms of a basis for the column space of $\bY_c$ and row space of $\bY_r$. 
	By definition, $\bV$ is a matrix whose rows are an orthonormal basis for the row space of $\bY_r$  and $\bU$ is a matrix whose columns are an orthonormal basis for the column space of $\bY_c$. Therefore,
	\begin{align} 
		\min_{\bY \atop r(\bY) \leq k} \| \bU \bY \bV - \bB \|_F & \leq \min_{\bX \atop \mathsf{r}(\bX) \leq k} \| \bY_c \bX \bV - \bB \|_F  \leq \min_{\bX \atop \mathsf{r}(\bX) \leq k} \| \bY_c \bX \bY_r - \bB \|_F. \label{eq:basis6}  
	\end{align}
	%Let $s=\|  \bS_1 \|_F + \| \bS_2 \|_F + \| \bS_3 \|_F$. 
	
	Combining~\eqnref{step14lowspace} and~\eqnref{basis6}, we have 
	\begin{align} 
	 \min_{\bY \atop r(\bY) \leq k} \| \bU \bY \bV &- \bB \|_F \leq (1+\alpha)^2 \| \widehat{\bA} - [\widehat{\bA}]_k \|_F + \| \bS^\dagger \bN_2 (\bT^\dagger)^{\mathsf T} \|_F +  \| \bS_2 \|_F. \label{eq:step26}
	\end{align}
%	 In the next two steps, we show why the choice of $\bY=\widetilde{\bV}_s \widetilde{\bSigma}_s^\dagger [\widetilde{\bU}_s^{\mathsf T} \bS \bB \bT^{\mathsf T} \widetilde{\bV}_t ]_k \widetilde{\bSigma}_t^\dagger  \widetilde{\bU}_t^{\mathsf T}$ satisfies~\eqnref{basis6}.
%	\item [Step 3:]  

\begin{claim} \label{claim:thirdspace}
Let $\bU, \bV, \bB, \bA, \bS, \bT$ and $\bN_2$ be as above. Let $\cD_A$ be a distribution that satisfies  $(\alpha,\delta)$-affine embedding.  Let $\widetilde{\bX}= \argmin_{\bX, \mathsf{r}(\bX)=k} \| \bS(\bU \bX -\bB) \|_F$. Then  with probability $1-O(\delta)$ over $\bS, \bT^{\mathsf T} \sim \cD_A$,
\begin{align*} \
	\|( \bU \widetilde{\bX} \bV  - \bB) \|_F &\leq {(1+\alpha)^4} \| \widehat{\bA} - [\widehat{\bA}]_k \|_F+ 4 \| \bS^\dagger \bN_2 (\bT^\dagger)^{\mathsf T} \|_F + 4\| \bS_2 \|_F . \label{eq:step36} 
\end{align*}
\end{claim}
\begin{proof}
	Set $p=t$, $\mathbf{D}=\bU$ and $\mathbf{E}=\bB$ in the statement of~\lemref{S}. Let us restrict our attention to matrices $\bX$ with rank at most $k$ and denote by $$\widehat{\bX}= \argmin_{\bX, \mathsf{r}(\bX) \leq k} \| \bU \bX\bV -\bB \|_F \quad \text{and} \quad \widetilde{\bX}= \argmin_{\bX, \mathsf{r}(\bX) \leq k} \| \bS(\bU \bX \bV  -\bB)\bT^{\mathsf T} \|_F.$$ Then we have with probability $1-3\delta$ over $\bS \sim \cD_A$, 
	\begin{align}  
		\min_{\bX \atop \mathsf{r}(\bX)=k} \| \bU \bX \bV -\bB \|_F  = \| \bU \widehat{\bX}  \bV- \bB \|_F    \geq (1 + \alpha)^{-1/2} \| \bS(\bU \widehat{\bX}  \bV - \bB ) \|_F.    
	\end{align}
		{Substituting $\mathbf{D}=\bV^{\mathsf T}$, $\bX = (\bS \bU \widehat{\bX})^{\mathsf T}$ and $\mathbf{E}=(\bS\bB)^{\mathsf T}$ in the statement of~\lemref{S}, with probability $1-4\delta$,}
	\begin{align}  
		 (1 + \alpha)^{-1/2} \| \bS(\bU \widehat{\bX}  \bV - \bB ) \|_F 
		 &  = (1 + \alpha)^{-1/2} \| \bV^{\mathsf T} (\bS\bU \widehat{\bX}) ^{\mathsf T} - (\bS\bB)^{\mathsf T} ) \|_F  \nonumber \\
		 	& \geq (1 + \alpha)^{-1} \| \bT(\bV^{\mathsf T} (\bS\bU \widehat{\bX}) ^{\mathsf T} - (\bS\bB)^{\mathsf T} ) \|_F  \nonumber \\
		 	& = (1 + \alpha)^{-1}  \| \bS(\bU \widehat{\bX}  \bV - \bB ) \bT^{\mathsf T}\|_F \nonumber \\
		 	& \geq (1 + \alpha)^{-1}  \min_{\bX \atop \mathsf{r}(\bX) \leq k} \| \bS(\bU {\bX}  \bV - \bB ) \bT^{\mathsf T}\|_F \nonumber \\
			& = (1 + \alpha)^{-1}  \| \bS(\bU \widetilde{\bX}  \bV - \bB  )  \bT^{\mathsf T} \|_F   \nonumber \\
			&\geq (1 + \alpha)^{-2}  \| (\bU \widetilde{\bX}  \bV  - \bB  ) \|_F.	 		  \label{eq:S6}  
	\end{align}
	
	Combining~\eqnref{S6} with~\eqnref{step26},  with probability $1-O(\delta)$ over the random coins of {\scshape Private-Optimal-Space}-$\lrf$,
	\begin{align}
	\|( \bU \widetilde{\bX} \bV  - \bB) \|_F &\leq {(1+\alpha)^4} \| \widehat{\bA} - [\widehat{\bA}]_k \|_F  + 4 (\| \bS^\dagger \bN_2 (\bT^\dagger)^{\mathsf T} \|_F + \| \bS_2 \|_F)  \label{eq:step36} 
	\end{align}
	as $\alpha \in (0,1)$. This completes the proof of~\claimref{thirdspace}.
\end{proof}

	To finalize the proof, we need to compute $$\widetilde{\bX}=\argmin_{\bX \atop \mathsf{r}(\bX) \leq k} \| \bS(\bU \bX \bV - \bB) \bT^{\mathsf T} \|_F.$$ 
	We use~\lemref{orthonormal} to compute $\widetilde{\bX}.$ %This is done in the last step.
%	\item [Step 4:] 
	Recall $\bS \bU = \bU_s \bSigma_s \bV_s^{\mathsf T}$ and $\bT \bV^{\mathsf T} = \bU_t \bSigma_t \bV_t^{\mathsf T}$. Using~\lemref{orthonormal} with $\mathbf{C} = \widetilde{\bU}_s, \mathbf{R} = \widetilde{\bV}_t^{\mathsf T}$ and $\mathbf{F} = \bZ= \bS \widehat{\bA} \bT^{\mathsf T} + \bN_2$, we  get 
	
	 \begin{align}
  %[\widetilde{\bU}_s^{\mathsf T} \bZ \widetilde{\bV}_t]_k = 
  [ \widetilde{\bU}_s^{\mathsf T} \bS \bB \bT^{\mathsf T} \widetilde{\bV}_t]_k &=  \argmin_{\bX \atop \mathsf{r}(\bX) \leq k} \| \widetilde{\bU}_s \bX \widetilde{\bV}_t^{\mathsf T}- \bS \bB \bT^{\mathsf T} \|_F \nonumber  
 \end{align}
 {This implies that $\argmin_{\bX , \mathsf{r}(\bX) \leq k} \| \bS(\bU \bX \bV - \bB) \bT^{\mathsf T} \|_F$ has closed form}
\begin{align}	  	\widetilde{\bX} =\widetilde{\bV}_s \widetilde{\bSigma}_s^\dagger [\widetilde{\bU}_s^{\mathsf T} \bZ \widetilde{\bV}_t ]_k \widetilde{\bSigma}_t^\dagger \widetilde{\bU}_t^{\mathsf T} \label{eq:solution6}
	 \end{align}
	  
%\end{description}
Recall, $\widetilde{\bX} =\widetilde{\bV}_s \widetilde{\bSigma}_s^\dagger [\widetilde{\bU}_s^{\mathsf T} \bZ \widetilde{\bV}_t ]_k \widetilde{\bSigma}_t^\dagger \widetilde{\bU}_t^{\mathsf T} = \bU'\bSigma'\bV'^{\mathsf T}$.  Substituting~\eqnref{solution6} in~\eqnref{step36} and the fact that $\bB= \widehat{\bA} + \bS^\dagger \bN_2 (\bT^\dagger)^{\mathsf T}$, we have% the final result.
 \begin{align*}
 \| \bU \bU' \bSigma' (\bV^{\mathsf T} \bV')^{\mathsf T} - \widehat{\bA} \|_F - \| \bS^\dagger \bN_2 (\bT^{\mathsf T})^\dagger \|_F 
	&  \leq  \| \bU \bU' \bSigma' (\bV^{\mathsf T} \bV')^{\mathsf T} - {\bB} \|_F \\
 	&  \leq (1+\alpha)^6 \|\widehat{\bA} - [\widehat{\bA}]_k\|_F +  O( \|  \bS^\dagger \bN_2 (\bT^{\mathsf T})^\dagger \|_F +  \| \bS_2 \|_F ).
\end{align*}
This in particular implies that
\begin{align*} &   \| \bU \bU' \bSigma' (\bV^{\mathsf T} \bV')^{\mathsf T} - \widehat{\bA} \|_F \leq 	(1+\alpha)^6 \|\widehat{\bA} - [\widehat{\bA}]_k\|_F  + O( \|  \bS^\dagger \bN_2 (\bT^{\mathsf T})^\dagger \|_F +  \| \bS_2 \|_F).
   \end{align*}
 Scaling the value of $\alpha$ by a constant completes the proof of~\lemref{third}.
\end{proof}

\medskip \noindent \textbf{Performing step~(iii).} In order to complete the proof, we compute an upper bound on $\| \widehat{\bA} - [\widehat{\bA}]_k \|_F$. For this, we  need the Weyl's perturbation theorem (\thmref{Weyl}).

\begin{lemma} \label{lem:second}
Let $d$ be the maximum of the rank of $\bA$ and $\widehat{\bA}$. Let $\sigma_1,\cdots, \sigma_d$ be the singular values of $\bA$ and $\sigma_1',\cdots, \sigma_d'$ be the singular values of $\widehat{\bA}$. Then $| \sigma_i - \sigma_i' | \leq \sigma$ for all $1 \leq i \leq d$.
\end{lemma}
\begin{proof}
The lemma follows from the basic application of~\thmref{Weyl}. We can write $\widehat{\bA} =\begin{pmatrix} \bA & \mathbf{0} \end{pmatrix} + \begin{pmatrix}  \mathbf{0} & \sigma_\mathsf{min} \I_m \end{pmatrix}.$ The lemma follows since, by construction, all the singular values of $ \begin{pmatrix}  \mathbf{0} &  \sigma_\mathsf{min} \I_m \end{pmatrix}$ are $ \sigma_\mathsf{min}$. 
\end{proof}

%\medskip \noindent \textbf{Completing the proof of~\thmref{low_space_private}.} 
To compute the additive error, we need to bound $ \|  \bS^\dagger \bN_2 (\bT^{\mathsf T})^\dagger \|_F$ and $\| \bS_2 \|_F$. This is done by the following two lemmas.

\begin{claim}  \label{claim:S_2}
Let $\cD_R$ be  a distribution that satisfies $(\alpha,\delta)$-subspace embedding for generalized regression. Let $\bS_2$ be as defined above. Then with probability $99/100$ over $\widehat{\bPhi} \sim \cD_R$, $\| \bS_2 \|_F =  \rho_1 \sqrt{kn(1+ \alpha)}.$
\end{claim}
\begin{proof}
%Recall that $\bS_2= \widehat{\bA} \widehat{\bPhi} ([\widehat{\bA}]_k \widehat{\bPhi})^{\dagger} (\bPsi \widehat{\bA} \widehat{\bPhi} ([\widehat{\bA}]_k \widehat{\bPhi})^{\dagger})^\dagger \bN_1$.
Let $\mathbf{G}= \bPsi \bA \bPhi ([\widehat{\bA}]_k \bPhi)^\dagger$. $\mathbf{G}$ is an $m \times k$ matrix. When $\alpha \leq 1$,  $\mathbf{G}$ has rank $k$. This implies that there exist a $t \times k$ matrix $\widehat{\bU}$ with orthonormal columns such that $\mathbf{G} \mathbf{G}^\dagger = \widehat{\bU} \widehat{\bU}^{\mathsf T}$. Therefore, 
%\begin{align*}
	$\bPsi \bS_2 =  \mathbf{G}  \mathbf{G}^\dagger  \bN_2 = \widehat{\bU} \widehat{\bU}^{\mathsf T} \bN_1.$ %\end{align*}
%where $\bPi_\bF$ denotes the projection matrix corresponding to the column space spanned by $\bF$.
 From the second claim of~\lemref{phi} and the choice of the parameter $t$, $\| \bS_2 \|_F^2 \leq (1+ \alpha) \| \widehat{\bU} \widehat{\bU}^{\mathsf T} \bN_1 \|_F^2$.
 %= (1+\alpha)^{-1} \|  \bN_2^{\mathsf T} \widehat{\bU} \|_F^2$. 
 Since every entries of $\bN_1$ are picked i.i.d. and $\widehat{\bU} \widehat{\bU}^{\mathsf T}$  is an orthonormal projection onto a $k$-dimensional subspace, we have $\| \bS_2 \|_F = O( \rho_1 \sqrt{kn(1+ \alpha)}).$
\end{proof}

The following claim follows from~\lemref{SRHTinverse}.
\begin{claim} \label{claim:SRHTinversetwice}
Let $\bS, \bT \sim \cD_A$. Then for any matrix $\bN_2$ of appropriate dimension,
 $ \|  \bS^\dagger \bN_2 (\bT^{\mathsf T})^\dagger \|_F = \| \bN_2 \|_F.$
\end{claim}
 \begin{proof} %Using~\lemref{N}, we have $\E[ \| \bN_2 \|_F^2] = vn \rho^2.$ The result follows by invoking~\lemref{SRHTinverse} %(or~\lemref{efficiency})  and Markov's inequality.
   Let $\bC=\bS^\dagger \bN_2 (\bT^{\mathsf T})^\dagger  $. Then $\bS \bC \bT^\mathsf T = \bS \bS^\dagger \bN_2 (\bT \bT^\dagger)^\mathsf T$. Now $\bS \bS^\dagger$ (similarly, $\bT \bT^\dagger$) is a projection unto a random subspace of dimension $k$.  Since every entries of $\bN_2$ is picked i.i.d. from $\cN(0,\rho^2)$, $\bS \bC  \bT^\mathsf T =  \widetilde{\bN}_2$, where $\widetilde{\bN}_1$ is an $v \times v$ matrix with every entries picked i.i.d. from $\cN(0,\rho_2^2)$.  Using~\lemref{N}, this implies that 
 %Since  with high probability, $[\bA]_k \bPhi$ is a full-column rank matrix, $([\bA]_k \bPhi)^\dagger = (([\bA]_k \bPhi)^{\mathsf T} ([\bA]_k \bPhi))^{-1}([\bA]_k \bPhi)^{\mathsf T}$. Therefore, $\bC \bPhi = \bN_1$  with high probability. From the second claim of~\lemref{phi},  $\| \bC \bPhi \|_F = (1+\alpha) \| \bC \|_F$. This implies, $ \| \bC \|_F = (1+\alpha)^{-1} \| \bN_1 \|_F$. Now,
 	\[
	\E[ \| \bS \bC \bT^\mathsf T   \|_F^2] = \E \sparen{ \| \widetilde{\bN}_2\|_F ^2} =  \sum_{i,j} \E[  (\widetilde{\bN}_2)_{ij}^2] = v^2 \rho_2^2.
	\]
The result follows using Markov's inequality and the fact that $\|\bS \bC \bT^\mathsf T \|_F^2 = (1+\alpha)^2 \| \bC \|_F^2$ and $\alpha \leq 1$.
 \end{proof}
The above claim implies  that $ \| \bN_2 \|_F = O(\rho_2 v)$ with  probability $99/100$. %If we instead use the concentration bound of $\chi^2$-distribution, then $ \| \bN_2 \|_F = O(\rho v \ell)$ with probability $1-2^{-\ell^2 v^2/8}$.

Since $\| \widehat{\bA} - [\widehat{\bA}]_k \|_F^2 \leq \sum_{i >k} \sigma_i'^2$ and $\| {\bA} - {\bA}_k \|_F^2 \leq \sum_{i >k} \sigma_i^2$, combining  \lemref{first}, \lemref{third}, \claimref{S_2}, \claimref{SRHTinversetwice},   \lemref{second}, and~\lemref{N}, we have the final utility bound.

\begin{lemma} \label{lem:low_space_utility}
Let $\rho_1,\rho_2$, and $\sigma_{\mathsf{min}}$ be as defined in~\thmref{meta}. With probability $99/100$ over the coins of the algorithm {\scshape Private-Optimal-Space}-$\lrf$,  the output of {\scshape Private-Optimal-Space}-$\lrf$ satisfies
 \begin{align*}
 & \| \begin{pmatrix} \bA & \mathbf{0} \end{pmatrix} - \mathbf{M}_k \|_F \leq (1+\alpha) \| \bA - [\bA]_k \|_F + O(\sigma_\mathsf{min} \sqrt{m}+  \rho_1 \sqrt{kn(1+ \alpha)}  + \rho_2 v). \end{align*}
\end{lemma}
%\end{proof}

Now observe that $v=O((\eta/\alpha^2) \log(k/\delta)) \ll \min \set{m,n}$ and $1+\alpha \leq 2$. If former is not the case, then there is no reason to do a random projection. Therefore, the term $\rho_2v$ is subsumed by the rest of the term. The result follows by setting the values of $\rho_1$ and $\sigma_\mathsf{min}$.
\end{proof}

%\end{proof}
%\end{extended}

\subsubsection{Privacy Proof of {\scshape Private-Space-Optimal-$\lrf$}} \label{app:privacyJL}
Our privacy result can be restated as the following lemma.
 \begin{lemma} \label{lem:low_space_private}
 If $\sigma_\mathsf{min}, \rho_1$ and $\rho_2$ be as in~\thmref{meta}, then the algorithm presented in~\figref{spacelowprivate}, {\scshape Algorithm 2}, is $(3\varepsilon,3\delta)$-differentially private.
 \end{lemma}
%The proof that publishing $\bZ$ is differentially private is similar to~\lemref{rho}. 
%In what follows, we prove that publishing $\bY_r$ and $\bY_c$ is  differentially private. The case for $\bZ$ follows from this because we can see multiplying  $\bT$ from the left as post-processing step and differential privacy is preserved under post-processing (\lemref{post}). Therefore, part~\ref{privacy} follows from~\lemref{post} and~\thmref{DRV10}. 
We prove the lemma when $m \leq n$. The case for $m \geq n$ is analogous after inverting the roles of $\widehat{\bPhi}$ and $\bPsi$. Let $\bA$ and $\bA'$ be two neighboring matrices, i.e., $\mathbf{E}= \bA - \bA' = \mathbf{u} \mathbf{v}^{\mathsf T}$. Then $\widehat{\bA}$ and $\widehat{\bA}'$, constructed by {\scshape Optimal-Space-Private-}$\lrf$, has the following property: $\widehat{\bA}' = \widehat{\bA} + \begin{pmatrix}   \mathbf{E} & \mathbf{0}  \end{pmatrix}$.

\begin{claim} 
If $\rho_1 =\frac{\sqrt{(1+\alpha)\ln(1/\delta)}}{\varepsilon}$ and $\rho_2 =\frac{(1+\alpha)\sqrt{\ln(1/\delta)}}{\varepsilon}$, then  publishing $\bY_r$ and $\bZ$ preserves $(2\varepsilon,2\delta)$-differential privacy. 
\end{claim}
\begin{proof}
We use the second claims of~\lemref{phi} and \lemref{S}, i.e., $\| \bS \mathbf{D} \|_F^2 = (1 \pm \alpha) \|\mathbf{D}\|_F^2$ and $\| \bPsi \mathbf{D}  \|_F^2 = (1 \pm \alpha) \|\mathbf{D}\|_F^2$ for all $\mathbf{D}$, where $\bS \sim \cD_A$ and $\bPsi \sim \cD_R$. Let $\bA$ and $\bA'$ be two neighboring matrices such that $\mathbf{E}= \bA - \bA' = \mathbf{u} \mathbf{v}^{\mathsf T}$. Then $\| \bS \begin{pmatrix} \mathbf{E} & \mathbf{0} \end{pmatrix} \bT^{\mathsf T} \|_F^2 \leq (1+\alpha) \|  \begin{pmatrix} \mathbf{E} & \mathbf{0} \end{pmatrix}  \bT^{\mathsf T}\|_F^2 \leq (1+\alpha)^2$. Publishing $\bZ$ preserves $(\varepsilon,\delta)$-differential privacy follows from considering the vector form of the matrix $\bS \widehat{\bA} \bT^{\mathsf T}$ and $\bN_2$ and applying~\thmref{gaussian}. Similarly, we use~\thmref{gaussian} and the fact that,  for any matrix $\bC$ of appropriate dimension, $\| \bPsi \bC  \|^2 \leq (1 + \alpha) \| \bC \|_F^2$, to prove that publishing $\bPsi \widehat{\bA}  + \bN_1$ preserves differential privacy. 
\end{proof}

We next prove that $\bY_c$ is $(\varepsilon,\delta)$-differentially private. This would complete the proof of~\lemref{low_space_private}  by combining~\lemref{post} and~\thmref{DRV10} with the above claim. 
Let $\bA - \bA' = \mathbf{E}=\mathbf{u} \mathbf{v}^{\mathsf T}$ and let $\widehat{\mathbf{v}} = \begin{pmatrix} \mathbf{v} & \mathbf{0}^m \end{pmatrix}$. Then $\widehat{\bA} - \widehat{\bA}' =  \mathbf{u}\widehat{\mathbf{v}}^{\mathsf T}$. Since $\bPhi^{\mathsf T}$ is sampled from $\cD_R$, we have $\| \bPhi^{\mathsf T} \mathbf{W} \|_F^2 = (1+\alpha) \| \mathbf{W} \|_F^2$ for any matrix $\bW$ with probability $1-\delta$ (second claim of~\lemref{phi}). Therefore, $\mathbf{u} \mathbf{v}^{\mathsf T}\bPhi =  (1+\alpha)^{1/2}\mathbf{u} \widetilde{\mathbf{v}}^{\mathsf T} = \widetilde{\mathbf{u}} \widetilde{\mathbf{v}}^{\mathsf T}$ for some unit vectors $\mathbf{u}$, $\widetilde{\mathbf{v}}$ and $ \widetilde{\mathbf{u}} = (1+\alpha)^{1/2}\mathbf{u}$. % Similarly, $\bPsi \mathbf{u} {\mathbf{v}}^{\mathsf T} = (1+\alpha)^{1/2} \widetilde{\mathbf{u}} {\mathbf{v}}^{\mathsf T}$ for some unit vectors  $\widetilde{\mathbf{u}}$ and $\mathbf{v}$. %We next show that multiplying a Gaussian matrix to a matrix of the form $ \mathbf{u} \mathbf{v}^{\mathsf T}$ preserves differential privacy. 
We now show that   $ \widehat{\bA}\bPhi \bOmega_1$ preserves privacy. 
%For notational brevity, in what follows we write $\mathbf{u}$ for $ \widetilde{\mathbf{u}} $ and $\mathbf{v}$ for $ \widetilde{\mathbf{v}}.$ 
We prove that each row of the published matrix preserves $(\varepsilon_0, \delta_0)$-differential privacy for some appropriate $\varepsilon_0,\delta_0$, and then invoke~\thmref{DRV10} to prove that the published matrix preserves $(\varepsilon,\delta)$-differential privacy. 

It may seem that the privacy of $\bY_c$ follows from the result of Blocki {\it et al.}~\cite{BBDS12}, but this is not the case because of the following reasons.
\begin{enumerate}
	\item The definition of neighboring matrices considered in this paper is different from that of Blocki {\it et al.}~\cite{BBDS12}. To recall, Blocki {\it et al.}~\cite{BBDS12} considered two matrices neighboring if they differ in at most one row by a unit norm. In our case, we consider two matrices are neighboring if they have the form $\mathbf{u} \mathbf{v}^{\mathsf T}$ for unit vectors $\mathbf{u},\mathbf{v}$. 
	\item We multiply the Gaussian matrix to a random projection of $\widehat{\bA}$ and not to $\bA$ as in the case of Blocki {\it et al.}~\cite{BBDS12}, i.e., to $\widehat{\bA} \bPhi$ and not to $\widehat{\bA}$. 
\end{enumerate}
%In this sense, we generalize the privacy result of Blocki {\it et al.}~\cite{BBDS12}. %For the sake of completeness, we give the detail proof in~\appref{BBDS12}.
If we do not care about the run time efficiency of the algorithm, then we can set $\widehat{\bPhi}:=\Omega$ instead of $\widehat{\bPhi}:=\bOmega \bPhi$. In this case, we would not need to deal with the second issue mentioned above. 

%Here, w
We first  give a brief overview of how to deal with these issues here. The first issue is resolved by analyzing $(\widehat{\bA} - \widehat{\bA}')\bPhi.$ We observe that this expression can be represented in the form of $\widetilde{\mathbf{u}} \widetilde{\mathbf{v}}^{\mathsf T}$, where $ \widetilde{\mathbf{u}} = (1+\alpha)^{1/2}\mathbf{u}$  for some  $\|\mathbf{u}\|_2=1$, $\|\widetilde{\mathbf{v}}\|_2=1$. 
The second issue can be resolved by observing that $\bPhi$ satisfies $(\alpha,\delta)$-{\scshape JLP} because of the choice of $t$. Since the rank of $\widehat{\bA}$ and $\widehat{\bA} \bPhi$ are the same, the singular values of $\widehat{\bA} \bPhi$ are within a multiplicative factor of $(1 \pm \alpha)^{1/2}$ of the singular values of $\bPhi$ with probability $1-\delta$  due to Sarlos~\cite{Sarlos06}. Therefore, %our proof goes through if 
we scale the singular values of $\widehat{\bA}$ appropriately.

We now return to the proof. Denote by $\widehat{\bA}=  \begin{pmatrix} \bA & \sigma_\mathsf{min} \I_m \end{pmatrix}$ and by $\widehat{\bA}'= \begin{pmatrix} {\bA}' &  \sigma_\mathsf{min} \I_m \end{pmatrix}$, where $\bA - \bA' = \mathbf{u} \mathbf{v}^{\mathsf T}$. Then $\widehat{\bA}' - \widehat{\bA} = \begin{pmatrix} \mathbf{u} \mathbf{v}^{\mathsf T} & \mathbf{0} \end{pmatrix}$.  Let $\bU_\bC \bSigma_\bC \bV_\bC^{\mathsf{T}}$ be the $\mathsf{SVD}$ of $\bC= \widehat{\bA} \bPhi$ and $\widetilde{\bU}_\bC \widetilde{\bSigma}_\bC \widetilde{\bV}_\bC^{\mathsf{T}}$ be the $\mathsf{SVD}$ of $\widetilde{\bC} =  \widehat{\bA}' \bPhi$. From above discussion, we know that if $\bA - \bA' = \mathbf{u} \mathbf{v}^{\mathsf T}$, then $\bC - \widetilde{\bC} = (1+\alpha)^{1/2}\widetilde{\mathbf{u}} \widetilde{\mathbf{v}}^{\mathsf T}$ for some unit vectors $ \widetilde{\mathbf{u}} $ and $ \widetilde{\mathbf{v}}.$ 
 For notational brevity, in what follows we write $\mathbf{u}$ for $ \widetilde{\mathbf{u}} $ and $\mathbf{v}$ for $ \widetilde{\mathbf{v}}.$

Note that both $\bC$ and $\widetilde{\bC}$ are  full rank matrices because of the construction; therefore $\bC \bC^{\mathsf T}$ (respectively, $\widetilde \bC \widetilde \bC^{\mathsf T}$) is a full dimensional $m \times m$ matrix. This implies that the affine transformation of the multi-variate Gaussian is well-defined (both the covariance $(\bC \bC^{\mathsf T})^{-1}$ has full rank and $\det(\bC \bC^{\mathsf T})$ is non-zero). That is, the {\scshape PDF} of the  distributions of the  rows, corresponding to $\bC$ and $\widetilde{\bC}$, is just a linear transformation of $\cN(\mathbf{0},\I_{m \times m})$. Let $\by \sim \cN(0,1)^t$.  %Therefore,
\begin{align*}
	\PDF_{\bC Y} (\bx) & = \frac{1}{\sqrt{(2\pi)^t \det(\bC  \bC^{\mathsf T})}} e^{(- \frac{1}{2} \bx (\bC  \bC^{\mathsf T})^{-1} \bx^{\mathsf T})} \\
	\PDF_{\widetilde{\bC} Y} (\bx) & = \frac{1}{\sqrt{(2\pi)^t \det(\widetilde{\bC}  \widetilde{\bC}^{\mathsf T})}} e^{(- \frac{1}{2} \bx (\widetilde{\bC}  \widetilde{\bC}^{\mathsf T})^{-1} \bx^{\mathsf T}) }
\end{align*}

 Let  $\varepsilon_0 = \frac{\varepsilon}{\sqrt{4 t \ln (1/\delta)} \log(1/\delta)}$ and  $\delta_0 = {\delta}/{2t},$ 
We prove that every row of the published matrix is $(\varepsilon_0,\delta_0)$ differentially private. %; the theorem follows from Theorem~\ref{thm:DRV10}.
Let $\bx$ be sampled either from $\cN(\mathbf{0},\bC  \bC^{\mathsf T})$ or  $\cN(\mathbf{0},\widetilde{\bC} \widetilde{\bC}^{\mathsf T})$.
It is straightforward to see that the combination of~\claimref{claim1} and~\claimref{claim2} below proves differential privacy for a row of published matrix. The lemma then follows by an application of~\thmref{DRV10} and our choice of $\varepsilon_0$ and $\delta_0$.

\begin{claim} \label{claim:claim1}
 Let $\bC$ and $\varepsilon_0$ be as defined above. Then 
\begin{align*} e^{- \varepsilon_0}  \leq  \sqrt{\frac{\det(\bC  \bC^{\mathsf T})}{\det(\widetilde{\bC}  \widetilde{\bC}^{\mathsf T})}} \leq  e^{\varepsilon_0}. 
\end{align*}
\end{claim}

\begin{claim} \label{claim:claim2}
Let $\bC, \varepsilon_0$, and $\delta_0$ be as defined earlier. Let $\by \sim \cN(0,1)^m$. If $\bx$ is sampled either from $\bC \by$ or $\widetilde{\bC} \by$, then we have
\begin{align*}
\p \sparen{ \left| \bx^{\mathsf T}  (\bC  \bC^{\mathsf T})^{-1} \bx - \bx^{\mathsf T}  (\widetilde{\bC}  \widetilde{\bC}^{\mathsf T})^{-1} \bx\right|   \leq \varepsilon_0} \geq 1 -\delta_0.\end{align*} 
\end{claim}

\begin{proof}[Proof of~\claimref{claim1}]
The claim follows simply as in~\cite{BBDS12} after a slight modification. More concretely, we have $\det(\bC  \bC^{\mathsf T}) = \prod_i \sigma_i^2$, where $\sigma_1 \geq \cdots \geq \sigma_m \geq  \sigma_{\mathsf{min}}(\bC)$ are the singular values of $\bC$. 
 Let $\widetilde{\sigma}_1 \geq \cdots \geq \widetilde{\sigma}_m \geq  \sigma_{\mathsf{min}}(\widetilde{\bC})$ be its singular value  for $\widetilde{\bC}$. The matrix $\mathbf{E}$ has only one singular value $\sqrt{1+\alpha}$. This is because $ \mathbf{E}\mathbf{E}^{\mathsf T} = (1+\alpha) \mathbf{v} \mathbf{v}^{\mathsf T}$. To finish the proof of this claim, we use~\thmref{lidskii}. 
% We use the following version of Lidskii's theorem.

Since the singular values of $\bC - \widetilde{\bC}$ and $\widetilde{\bC} -\bC$ are the same,  Lidskii's theorem (\thmref{lidskii}) gives $\sum_i(\sigma_i - \widetilde{\sigma}_i) \leq \sqrt{1+\alpha}$. % using~\thmref{lidskii} stated below. 
 Therefore, with probability $1-\delta$,
\begin{align*} 
\sqrt{\prod_{i: \widetilde{\sigma}_i \geq \sigma_i} \frac{\widetilde{\sigma}_i^2}{\sigma_i^2}} &= \prod_{i: \widetilde{\sigma}_i \geq \sigma_i} \paren{1 + \frac{\widetilde{\sigma}_i-\sigma_i}{\sigma_i} } \\
&\quad \leq \exp \paren{\frac{\varepsilon}{32\sqrt{(1+\alpha) t \log (2/\delta)} \log (t/\delta)} \sum_i (\widetilde{\sigma}_i - \sigma_i)} \\
	&\quad \leq e^{\varepsilon_0/2}. \end{align*}

The first inequality holds because $\bPhi \sim \cD_R$ satisfies $(\alpha,\delta)$-{\scshape JLP}  due to the choice of $t$ (second claim of~\lemref{phi}). Since $\bC$ and $\bA$ have same rank, this implies that all the singular values of $\bC$ are  within a $(1\pm \alpha)^{1/2}$ multiplicative factor of $\widehat{\bA}$ due to a result by Sarlos~\cite{Sarlos06}.  In other words, $\sigma_i \geq \sigma_{\mathsf{min}}(\bC)  \geq (1- \alpha)^{1/2} \sigma_\mathsf{min}$.
The case for all $i \in [m]$ when ${\widetilde{\sigma}_i \leq \sigma_i}$ follows similarly as the singular values of $\mathbf{E}$ and $-\mathbf{E}$ are the same. This completes the proof of~\claimref{claim1}.
\end{proof}

\begin{proof}[Proof of~\claimref{claim2}]
%The second part of the proof is slightly more involved. 
Without any loss of generality, we can assume $\bx = \bC \by$. The case for $\bx = \widetilde{\bC} \by$ is analogous. Let  $\bC - \widetilde{\bC}= \mathbf{v} \mathbf{u}^{\mathsf T}$. Note that $\E[(\bOmega)_{i,j}]={0}$ for all $1\leq i,j \leq m$ and $\cov((\bOmega)_{i,j})=1$ if and only if $i=j$; and $0$ otherwise. First note that the following
\begin{align*}	 \bx^{\mathsf T}  (\bC  \bC^{\mathsf T})^{-1} \bx - \bx^{\mathsf T}  (\widetilde{\bC}  \widetilde{\bC}^{\mathsf T})^{-1} \bx &= \bx^{\mathsf T}  (\bC  \bC^{\mathsf T})^{-1} (\widetilde{\bC}  \widetilde{\bC}^{\mathsf T}) (\widetilde{\bC}  \widetilde{\bC}^{\mathsf T})^{-1}  \bx - \bx^{\mathsf T}  (\widetilde{\bC}  \widetilde{\bC}^{\mathsf T})^{-1} \bx \\
%	&=  \bx^{\mathsf T}  (\bC  \bC^{\mathsf T})^{-1} (\bC + \mathbf{u} \mathbf{v}^{\mathsf T})  (\bC^{\mathsf T} + \mathbf{v} \mathbf{u}^{\mathsf T} ) (\widetilde{\bC}  \widetilde{\bC})^{-1}  \bx - \bx^{\mathsf T}  (\widetilde{\bC}  \widetilde{\bC})^{-1} \bx \\
	&={\bx^{\mathsf T}   \sparen{(\bC  \bC^{\mathsf T})^{-1} (\bC   \mathbf{u} \mathbf{v}^{\mathsf T} + \mathbf{v}  \mathbf{u}^{\mathsf T}  \widetilde{\bC}^{\mathsf T} ) (\widetilde{\bC}  \widetilde{\bC}^{\mathsf T})^{-1} }   \bx}.  \end{align*}

Using the singular value decomposition of $\bC=\bU_\bC \bSigma_\bC \bV_\bC^{\mathsf T}  $ and $\widetilde{\bC} = \widetilde{\bU}_\bC \widetilde{\bSigma}_\bC \widetilde{\bV}_\bC^{\mathsf T}  $, we have %this simplifies as
\begin{align*} 
&\paren{ \bx ^{\mathsf T}  (\bU_\bC \bSigma_\bC^{-1}\bV_\bC^{\mathsf T}  ) \mathbf{u}} \paren{\mathbf{v}^{\mathsf T}   (\widetilde{\bU}_\bC \widetilde{\bSigma}_\bC^{-2} \widetilde{\bU}_\bC^{\mathsf T}  )  \bx} +   \paren{\bx^{\mathsf T} (\bU_\bC \bSigma_\bC^{-2}\bU_\bC^{\mathsf T}  )\mathbf{v}} \paren{\mathbf{u}^{\mathsf T}   (\widetilde{\bV}_\bC \widetilde{\bSigma}_\bC^{-1} \widetilde{\bU}_\bC^{\mathsf T}  )  \bx} 
%&+ (1-c)\paren{ \bx   \bOmega_1^{\mathsf T}  (\bV_\bC \bSigma_\bC^{-1}\bU_\bC^{\mathsf T}  ) \mathbf{v}} \paren{\mathbf{u}^{\mathsf T}   (\widetilde{\bV}_\bC \widetilde{\bSigma}_\bC^{-2} \widetilde{\bV}_\bC^{\mathsf T}  ) \bOmega_1 \bx^{\mathsf T}} 
%+   (1-c) \paren{\bx   \bOmega_1^{\mathsf T} (\bV_\bC \bSigma_\bC^{-2}\bV_\bC^{\mathsf T}  )\mathbf{u}} \paren{\mathbf{v}^{\mathsf T}   (\widetilde{\bU}_\bC \widetilde{\bSigma}_\bC^{-1} \widetilde{\bV}_\bC^{\mathsf T}  ) \bOmega_1 \bx^{\mathsf T}}
\end{align*}

Since $\bx \sim \bC   \by$, where $\by \sim \cN(0,1)^m$, we can write the above expression as   $\tau_1\tau_2 + \tau_3\tau_4$, where
	\begin{align*} 
		\tau_1&=  \paren{ \by^{\mathsf T}  \bC^{\mathsf T}  (\bU_\bC \bSigma_\bC^{-1}\bV_\bC^{\mathsf T}  ) \mathbf{u} } 
	        & \tau_2= \paren{\mathbf{v}^{\mathsf T}    (\widetilde{\bU}_\bC \widetilde{\bSigma}_\bC^{-2} \widetilde{\bU}_\bC^{\mathsf T}  )  \bC  \by}  \\
		\tau_3&= {\paren{\by^{\mathsf T}  \bC^{\mathsf T} (\bU_\bC \bSigma_\bC^{-2}\bU_\bC^{\mathsf T}  ) \mathbf{v}}}
		& \tau_4={\paren{\mathbf{u}^{\mathsf T}    (\widetilde{\bV}_\bC \widetilde{\bSigma}_\bC^{-1} \widetilde{\bU}_\bC^{\mathsf T}  )\bC  \by}}. 
		\end{align*}

Now since $\| \widetilde{\bSigma}_\bC \|_2, \| \bSigma_\bC \|_2 \geq \sigma_{\min}(\bC)$, plugging in the $\mathsf{SVD}$ of $\bC$ and $\bC - \widetilde{\bC} = \mathbf{v} \mathbf{u}^{\mathsf T}$, and that every term $\tau_i$ in the above expression is a linear combination of a Gaussian, i.e., each term is distributed as per $ \cN(0,\|\tau_i\|^2)$, we have the following: %calculate $\|\tau_i\|$ as below.
\begin{align*}
\| \tau_1\|_2 &= \| (\bV_\bC \bSigma_\bC \bU_\bC^{\mathsf T} )    (\bU_\bC \bSigma_\bC^{-1} \bV_\bC^{\mathsf T}  ) \mathbf{u} \|_2  \leq \| \mathbf{u}   \|_2 \leq \sqrt{1+\alpha}, \\
\| \tau_2\|_2 &= \| \mathbf{v}^{\mathsf T}   (\widetilde{\bU}_\bC \widetilde{\bSigma}_\bC^{-2} \widetilde{\bU}_\bC^{\mathsf T}  )  (\widetilde{\bU}_\bC \widetilde{\bSigma}_\bC \widetilde{\bV}_\bC^{\mathsf T}   - \mathbf{v}\mathbf{u}^{\mathsf T}  )\|_2  \\
	&  \leq \| \mathbf{v}^{\mathsf T}   (\widetilde{\bU}_\bC \widetilde{\bSigma}_\bC^{-2} \widetilde{\bU}_\bC^{\mathsf T}  )    \widetilde{\bU}_\bC \widetilde{\bSigma}_\bC  \widetilde{\bU}_\bC^{\mathsf T}  \|_2 
	 + \| \mathbf{v}^{\mathsf T}   (\widetilde{\bU}_\bC \widetilde{\bSigma}_\bC^{-2} \widetilde{\bU}_\bC^{\mathsf T}  )    \mathbf{v}\mathbf{u}^{\mathsf T}   \|_2   \leq \frac{1}{ \sigma_{\mathsf{min}}(\bC)} + \frac{\sqrt{1+\alpha} }{ \sigma_{\mathsf{min}}^2(\bC)},  \\
\| \tau_3\|_2 &= \| (\bV_\bC \bSigma_\bC \bU_\bC^{\mathsf T} )(\bU_\bC \bSigma_\bC^{-2}\bU_\bC^{\mathsf T}  )\mathbf{v} \|_2   \leq  \| \bSigma_\bC^{-1} \|_2 
\leq \frac{ 1}{ \sigma_{\mathsf{min}}(\bC)},  \\
\| \tau_4\|_2 
&= \| \mathbf{u}^{\mathsf T}   (\widetilde{\bV}_\bC \widetilde{\bSigma}_\bC^{-1} \widetilde{\bU}_\bC^{\mathsf T}  )   (\widetilde{\bU}_\bC \widetilde{\bSigma}_\bC \widetilde{\bV}_\bC^{\mathsf T}   - \mathbf{v}\mathbf{u}^{\mathsf T}  ) \|_2  \\ 
		& \quad \leq \|  \mathbf{u}^{\mathsf T}   (\widetilde{\bV}_\bC \widetilde{\bSigma}_\bC^{-1} \widetilde{\bU}_\bC^{\mathsf T}  )  (\widetilde{\bU}_\bC \widetilde{\bSigma}_\bC 
\widetilde{\bV}_\bC^{\mathsf T}   \|_2  + \|\mathbf{u}^{\mathsf T}   (\widetilde{\bV}_\bC \widetilde{\bSigma}_\bC^{-1} \widetilde{\bU}_\bC^{\mathsf T}  )  \mathbf{v}  \|_2   \leq \sqrt{1+\alpha}  + \frac{\sqrt{1+\alpha}}{ \sigma_{\mathsf{min}}(\bC)}.
\end{align*}
% where $ \sigma_{\mathsf{min}}=\paren{ \frac{ \sqrt{r\log (2/\delta)} \log (r/ \delta)}{\varepsilon}}$.
 
 Using the concentration bound on the Gaussian distribution, each term, $\tau_1,\tau_2,\tau_3$, and $\tau_4$, is less than $\|\tau_i\| \ln (4/\delta_0)$ with probability $1 - \delta_0/2$.  The second claim now follows because with probability $1-\delta_0$,
 \begin{align*}
 & \left| \bx^{\mathsf T}  (\bC  \bC^{\mathsf T})^{-1} \bx - \bx^{\mathsf T}  (\widetilde{\bC}^{\mathsf T}   \widetilde{\bC})^{-1} \bx\right|   \leq  2\paren{\frac{\sqrt{1+\alpha}}{ \sigma_{\mathsf{min}}(\bC)} + \frac{{1+\alpha}}{ \sigma_{\mathsf{min}}^2(\bC)}} \ln (4/\delta_0)  \leq \varepsilon_0 ,
\end{align*} %we have
%the second part of the proof. 
where the second inequality follows from the choice of $\sigma_\mathsf{min}$ and the fact that $\sigma_\mathsf{min}(\bC) \geq (1-\alpha)^{1/2} \sigma_\mathsf{min}$.
\end{proof}
\lemref{low_space_private} follows by combining~\claimref{claim1} and~\claimref{claim2}.

%\end{proof}

\subsection{{Low Space Differentially private Low-rank Factorization Under $\priv_2$}} \label{app:meta_time}
In~\appref{lowspaceprivate}, we gave an optimal space algorithm  for computing $\lrf$ under $\priv_1$. 
However, we cannot use the algorithm of~\thmref{meta} to simultaneously prove differential privacy under $\priv_2$ and get optimal additive error. This is because we need to perturb the input matrix by a noise proportional to $\min \set{\sqrt{km},\sqrt{kn}}$ to preserve differential privacy under $\priv_2$. As a result, the additive error  would depend linearly on $\min \set{{m},{n}}$. We show that by maintaining noisy sketches $\bY $ and $\bZ$ and some basic linear algebra, we can have a differentially private algorithm that outputs an optimal error $\lrf$ of an $m \times n$ matrix under $\priv_2$. More concretely, we prove the following theorem. 

\begin{figure} [t!]
\begin{center} \fbox{
\begin{minipage}[l]{6.2in}
%\small
{
\begin{center} \underline{\scshape Private-Frobenius-$\lrf$} \end{center}
{\bf Initialization.} Set $\eta=\max \set{k,\alpha^{-1}}$ $t=O(\eta \alpha^{-1} \log (k/\delta)), v=O( \eta\alpha^{-2}   \log(k/\delta))$, and $\rho={\sqrt{(1+\alpha)\ln(1/\delta)}}/{\varepsilon}$. 
	  Sample $\bPhi \in \R^{m \times t}$ from $\cD_R$ as in~\lemref{phi} and $\bS \in \R^{v \times n}$ from $\cD_R$ as in~\lemref{S}. %(here, $\beta=\delta$). 

\medskip
\textbf{Computing the factorization.} On input the matrix $\bA$,
\begin{CompactEnumerate}
	 \item Sample  $\bN_1\sim \cN(0,\rho^2)^{m \times t}, \bN_2\sim \cN(0,\rho^2)^{v \times n}$.  
	 \item Compute $\bY = \bA \bPhi + \bN_1$ and  $\bZ = \bS \bA +\bN_2$.
	\item Compute a matrix $\bU \in \R^{m \times t}$ whose columns are an orthonormal basis  for the column space of $\bY$.
	\item Compute the singular value decomposition of $\mathbf{S} \bU \in \R^{v \times t}$. Let it be $\widetilde{\bU} \widetilde{\bSigma} \widetilde{\bV}^{\mathsf T}.$ 
	\item Compute the singular value decomposition of $ \widetilde{\bV} \widetilde{\bSigma}^{\dagger} [\widetilde{\bU}^{\mathsf T}\bZ]_k$. %, where $[\cdot]_k$ denotes the best rank-$k$ approximation. 
	Let it be $\bU' \bSigma' \bV'^{\mathsf T}$. 
	\item Output $\widetilde{\bU}= \bU \bU'$, $\widetilde{\bSigma}= \bSigma'$ and $\widetilde{\bV}=\bV'$. %Let $\mathbf{M}_k = \widetilde{\bU} \widetilde{\bSigma} \widetilde{\bV}^{\mathsf T}.$
\end{CompactEnumerate}
}
\end{minipage}
} \caption{Differentially private $\lrf$ Under $\priv_2$} \label{fig:private}
\end{center}
\end{figure}

 \begin{theorem} \label{thm:naive}  \label{thm:naive_meta} \label{thm:meta_naive}
%\noindent {\bf Restatement of~\thmref{naive_meta}.} 
	Let  $m, n \in \N$ and $\alpha,\varepsilon,\delta$ be the input parameters.  Let $k$ be the desired rank of the factorization,  $s=\max\{m,n\}$, $u=\min\{m,n\}$,  and $\eta=\max \set{k,\alpha^{-1}}$. Given a private input matrix $\bA \in \R^{m \times n}$, the factorization $\widetilde{\bU}, \widetilde{\bSigma}, \widetilde{\bV}$ outputted by  the algorithm, {\scshape Private-Frobenius-$\lrf$}, presented in~\figref{private}, is a $k$-rank factorization and 
	satisfies the following properties:
	\begin{enumerate}
	\item  {\scshape Private-Frobenius-$\lrf$} is $(\varepsilon,\delta)$-differentially private under $\priv_2$. \label{frob_privacy}
	\item Let $\mathbf{M}_k:=\widetilde{\bU} \widetilde{\bSigma}\widetilde{\bV}^{\mathsf T}$. With probability $9/10$ over the coins of {\scshape Private-Frobenius-$\lrf$}, \label{frob_correctness}
	 \begin{align*}
	   \| \bA - \mathbf{M}_k \|_F \leq (1+\alpha) \| \bA - [\bA]_k \|_F + {O}\paren{\paren{ \sqrt{k s} + \sqrt{ \frac{u \eta}{\alpha^2} } }  \frac{\sqrt{ \log(1/\delta)}}{\varepsilon}  } .
	  \end{align*} 
	\item The space used by {\scshape Private-Frobenius-$\lrf$} is $O((m+n\alpha^{-1}) \eta \alpha^{-1}  \log(k/\delta))$. \label{frob_space}
	\item The time required to compute the factorization is $O\paren{(\mathsf{nn}(\bA) \log(1/\delta) + \frac{(m+n	 \alpha^{-2})  \eta^2  \log^2(k/\delta)}{\alpha^{2}} +\frac{ \eta^3  \log^3(k/\delta)}{\alpha^{s3}}}$. \label{frob_time}
	\end{enumerate}
 \end{theorem}

We present the algorithm when $m \geq n$, i.e., $s = m$ and $u=n$. 
%We describe and analyze the algorithm when $m \geq n$, i.e., $s = m$ and $u=n$ in the theorem that follows. 
The case when $m < n$ follows by symmetry. 
The space required by the algorithm is the space required to store $\bY$ and $\bZ$, which is $mt + nv = O((m+n\alpha^{-1})\eta \alpha^{-1} \log(k/\delta))$. This proves part~\ref{frob_space} of~\thmref{naive_meta}. %For the update time in part~\ref{time} of~\thmref{space}, note that $\bA_\tau$ has only one non-zero entry; therefore,  $\bPhi$ and $\bS$ needs to generate sketch of only one row and column, respectively. That is $n=1$ and $\mathsf{nn}(\bA)=1$ in~\lemref{phi} and~\lemref{S}. This implies that the update time is  $O(\eta \alpha^{-2} \log(k/\delta))$.   
 For the running time of part~\ref{frob_time} of~\thmref{naive_meta}, we have the following.
\begin{enumerate}
	 \item 
	 Computing the sketch $\bY$ requires $O(\mathsf{nn}(\bA)\log(1/\delta))  +mt^2$ time and computing the sketch $\bZ$ requires $O(\mathsf{nn}(\bA) \log(k/\delta)) + nv^2$. %Therefore, computing $\bY$ and $\bZ$ requires $O(\mathsf{nn}(\bA)\log(k/\delta) +m \eta^2 \alpha^{-2} \log^2(k/\delta))$ and $O(\mathsf{nn}(\bA)\log(k/\delta) + n \eta \alpha^{-2} \log^2(k/\delta))$, respectively.
	 \item 
	 Computing the orthonormal basis $\bU$ requires $mt^2 = O(m \eta^2 \alpha^{-2}\log^2(k/\delta))$ time.
	 \item 
	 Computing a SVD of the matrix $\bS \bU$ requires $vt^2=O(\eta^3 \alpha^{3}\log^3(k/\delta)).$
	 \item 
	 Computation of $[\widetilde{\bU}^{\mathsf T}\bZ]_k$ requires $O(n\eta^{2} \alpha^{-2}\log^2(k/\delta)).$
	 \item 
	 Computing a SVD %of $ \widetilde{\bV} \widetilde{\bSigma}^{\dagger} [\widetilde{\bU}^{\mathsf T}\bZ]_k$ 
	 in Step 4 requires $nt^2=O(n \eta^2 \alpha^{-2}\log^2(k/\delta))$ time.
\end{enumerate}
Combining all these terms, we have our claim on running time.

\subsubsection{Privacy Proof of {\scshape Private-Frobenius-$\lrf$}} 
The following  lemma proves Part~\ref{frob_privacy} of~\thmref{naive_meta}.
 \begin{lemma} \label{lem:rho}
 %If $\rho=\frac{\sqrt{(1+\alpha)\ln(1/\delta)}}{\varepsilon}$, then 
 {\scshape Private-Frobenius-$\lrf$} is $(\varepsilon,\delta)$ differentially private.
 \end{lemma}
 \begin{proof}
 %To prove this lemma, 
 We use the second claim of both~\lemref{phi} and~\lemref{S}, i.e., for all $\mathbf{D}$, $\|  \mathbf{D} \bT \|_F^2 \leq (1 - \alpha) \|\mathbf{D}\|_F^2$ for $\bT \sim \cD_A$ and $\|  \mathbf{D} \bPhi  \|_F^2 \leq (1 - \alpha) \|\mathbf{D}\|_F^2$ for  $\bPhi \sim \cD_R$. Let $\bA$ and $\bA'$ be two neighboring matrices such that $\mathbf{E}= \bA - \bA' $ has Frobenius norm $1$. Then $\| \bS \mathbf{E} \|_F^2 \leq (1+\alpha) \| \mathbf{E} \|_F^2 = 1+\alpha$. Publishing $\bZ$ preserves $(\varepsilon,\delta)$-differential privacy follows from considering the vector form of the matrix $\bS \bA$ and $\bN_2$ and~\thmref{gaussian}. Similarly, we use the fact that,  for any matrix $\bC$ of appropriate dimension, $\| \bPhi\bC  \|^2 \leq (1 - \alpha) \| \bC \|_F^2$, to prove that publishing $\bA \bPhi + \bN_1$ preserves differential privacy. The lemma follows by applying~\lemref{post} and~\thmref{DRV10}.
 \end{proof}

\subsubsection{Correctness Proof of {\scshape Private-Frobenius-$\lrf$}} \label{app:privacyJL}
We now prove Part~\ref{frob_correctness} of~\thmref{naive_meta}. We first show the following result.
\begin{theorem} \label{thm:complete}
Let $\mathbf{M}_k=\bU \widetilde{\bV} \widetilde{\bSigma}^\dagger [\widetilde{\bU}^{\mathsf T} \bZ]_k$ be the product of the factorization outputted by the algorithm in~\figref{private}. Then with probability $1-O(\delta)$ over $\bPhi \sim \cD_R$ and $\bS \sim \cD_A$,
\begin{align*}
 \| \mathbf{M}_k - \bA \|_F  &\leq (1+\alpha) \|\bA - [\bA]_k\|_F + 3\| \bS^\dagger \bN_2 \|_F   + 2\| \bN_1 ([\bA]_k \bPhi)^\dagger [\bA]_k \|_F.  \end{align*}
\end{theorem}
 
\begin{proof}
We prove the result by proving a series of results. We provide an upper and a lower bound on $\min_{\bX, \mathsf{r}(\bX) \leq k} \| \bY  \bX - \bB \|_F$ in terms of $\| \bA - [\bA]_k \|_F$ and the output of the algorithm.
%\begin{description}
%	\item [Step 1:] In the first step, 
%	We first relate $\min_{\bX, \mathsf{r}(\bX) \leq k} \| \bY \bX - \bB \| $ with $(1+\alpha) \| \bA - [\bA]_k \|_F$. 
%		Since $([\bA]_k \bPhi)^{\dagger} [\bA]_k$ has rank at most $k$, we can say that 
%	\begin{align} 
%		\min_{\bX, \mathsf{r}(\bX) \leq k} \| \bY \bX - \bA \|_F & \leq  \| \bA \bPhi ([\bA]_k \bPhi)^{\dagger} [\bA]_k + \bN_1 ([\bA]_k \bPhi)^\dagger [\bA]_k - \bA \|_F \nonumber \\ 
%			& \leq  \| \bA \bPhi ([\bA]_k \bPhi)^{\dagger} [\bA]_k  - \bA \|_F + \| \bN_1 ([\bA]_k \bPhi)^\dagger [\bA]_k \|_F \label{eq:set2}  
%	\end{align}

\begin{lemma} \label{lem:firstcomplete}	
	Let $\bA$ be the input matrix. Let $\bPhi \sim \cD_R, \bS \sim \cD_A$ be as in~\figref{private} . Let $\bY = \bPhi \bA +\bN_1$ and $\bB=\bA+\bS^\dagger \bN_2$ for $\bN_1, \bN_2$ as defined in~\figref{private}. Then with probability $1-\delta$ over $\bPhi \sim \cD_R$,
	\begin{align*} \min_{\bX \atop \mathsf{r}(\bX) \leq k} \| \bY \bX - \bB \|_F &\leq (1+\alpha) \| \bA - [\bA]_k \|_F  + \| \bS^\dagger \bN_2 \|_F  + \| \bN_1 ([\bA]_k \bPhi)^\dagger [\bA]_k \|_F.  \end{align*}
\end{lemma}
\begin{proof}
	
	Set $r=k$, $\mathbf{P} = [\bA]_k^{\mathsf T} $, and $\mathbf{Q}=\bA^{\mathsf T} $ in~\lemref{phi}. Then using~\lemref{phi}, we have %$t = O(k/\alpha \log(r/\alpha))$ and 
	\begin{align*} 
		 \| [\bA]_k^{\mathsf T}   \bX' - \bA^{\mathsf T} \| \leq (1+\alpha) \min_\bX \| [\bA]_k^{\mathsf T} \bX - \bA^{\mathsf T} \|_F, 
	\end{align*}
	where $\bX'=\argmin_{\bX} \| \bPhi^{\mathsf T} ([\bA]_k^{\mathsf T}\bX - \bA^{\mathsf T}  ) \|_F . $
	Let $[\bA]_k = \bU_k \bSigma_k \bV_k^\mathsf T$. Taking the transpose and the fact that the Frobenius norm is preserved under transpose and $\bX'=(([\bA]_k \bPhi)^{\mathsf T})^\dagger (\bA \bPhi)^{\mathsf T}$, we have with probability $1-\delta$ over $\bPhi \sim \cD_R$,  
	\begin{align} 
		\| \bA \bPhi ([\bA]_k \bPhi)^{\dagger} [\bA]_k - \bA \|_F   \leq (1+\alpha) \min_\bX \| [\bA]_k^{\mathsf T} \bX - \bA^{\mathsf T} \|_F \leq (1+\alpha) \| \bA - [\bA]_k \|_F, \label{eq:transpose2}  
	\end{align}
	where the inequality follows by setting $\bX= \bU_k \bU_k^{\mathsf T}$.
	
	Moreover, since $([\bA]_k \bPhi)^{\dagger} [\bA]_k$ has rank at most $k$, $\bB=\bA+\bS^\dagger \bN_2$,  and $\bY=\bA \bPhi + \bN_1$, with probability $1-\delta$ over $\bPhi \sim \cD_R$,  
	\begin{align} 
		\min_{\bX \atop \mathsf{r}(\bX) \leq k} \| \bY  \bX - \bB \|_F
		 & \leq  \| \bA \bPhi ([\bA]_k \bPhi)^{\dagger} [\bA]_k + \bN_1([\bA]_k \bPhi)^\dagger [\bA]_k - \bB \|_F \nonumber \\
		&=  \| \bA \bPhi ([\bA]_k \bPhi)^{\dagger} [\bA]_k + \bN_1([\bA]_k \bPhi)^\dagger [\bA]_k - \bA - \bS^\dagger \bN_2 \|_F \nonumber \\
		& \leq  \| \bA \bPhi ([\bA]_k \bPhi)^{\dagger} [\bA]_k - \bA \|_F + \| \bN_1([\bA]_k \bPhi)^\dagger [\bA]_k \|_F+ \| \bS^\dagger \bN_2 \|_F\label{eq:set2}  
	\end{align}
	Combining~\eqnref{transpose2} and~\eqnref{set2}, we have with probability $1-\delta$ over $\bPhi \sim \cD_R$, 
	\begin{align} 
	& \min_{\bX \atop \mathsf{r}(\bX) \leq k} \| \bY \bX - \bB \|_F \leq (1+\alpha) \| \bA - [\bA]_k \|_F + \| \bN_1 ([\bA]_k \bPhi)^\dagger [\bA]_k \|_F + \| \bS^\dagger \bN_2 \|_F. \label{eq:step12} \end{align}
	This completes the proof of~\lemref{firstcomplete}.
\end{proof}	
%	\item [Step 2:] 
	~\lemref{firstcomplete} relates $\min_{\bX, \mathsf{r}(\bX) \leq k} \| \bY \bX - \bB \| $ with $(1+\alpha) \| \bA - [\bA]_k \|_F$, $ \| \bN_1([\bA]_k \bPhi)^\dagger [\bA]_k \|_F$,  and $\| \bS^\dagger \bN_2 \|_F$. %We now lower bound $\min_{\bX, \mathsf{r}(\bX) \leq k} \| \bY \bX - \bB \|_F $ in terms of a basis for the column space of $\bY$. 
	Since $\bU$ is the orthonormal basis for the column space of $\bY$, we further have
	\begin{align} 
		\min_{\bX \atop \mathsf{r}(\bX) \leq k} \| \bU \bX - \bB \|_F \leq \min_{\bX \atop \mathsf{r}(\bX) \leq k} \| \bY \bX - \bB \|_F. \label{eq:basis2}  
	\end{align}
	Combining~\eqnref{step12} and~\eqnref{basis2}, we have with probability $1-\delta$ over $\bPhi \sim \cD_R$, 
	\begin{align} 
	&\min_{\bX \atop \mathsf{r}(\bX) \leq k} \| \bU \bX - \bB \|_F \leq (1+\alpha) \| \bA - [\bA]_k \|_F  
+  (\| \bN_1 ([\bA]_k \bPhi)^\dagger [\bA]_k \|_F + \| \bS^\dagger \bN_2 \|_F). \label{eq:step22}
	\end{align}
%	 In the next two steps, we show why the choice of $\bX=\widetilde{\bV} \widetilde{\bSigma}^\dagger [\widetilde{\bU}^{\mathsf T} \bZ]_k$ satisfies~\eqnref{basis2}.
%	\item [Step 3:]  
\begin{lemma} \label{lem:thirdcomplete}
Let $\bU, \bB, \bA, \bS, \bN_1$, and $\bN_2$ be as  above, and let $\widetilde{\bX}= \argmin_{\bX, \mathsf{r}(\bX)=k} \| \bS(\bU \bX -\bB) \|_F$. Let $\cD_A$ be a distribution that satisfies  $(\alpha,\delta)$-subspace embedding.  Then  with probability $1-4\delta$ over $\bS \sim \cD_A$,
\begin{align*}
	& \|( \bU \widetilde{\bX} - \bB) \|_F \leq {(1+\alpha)^2} \| \bA - [\bA]_k \|_F 
+ (1+\alpha)(\| \bS^\dagger \bN_2 \|_F  + \| \bN_1 ([\bA]_k \bPhi)^\dagger [\bA]_k \|_F). 
\end{align*}
\end{lemma}
\begin{proof}
	Set $p=k/\alpha$, $\mathbf{D}=\bU$ and $\mathbf{E}=\bB$ in the statement of~\lemref{S}. Let us restrict our attention to  rank $k$ matrices $\bX$ and denote by $\widehat{\bX}= \argmin_{\bX, \mathsf{r}(\bX)=k} \| \bU \bX -\mathbf{E} \|_F$ and $\widetilde{\bX}= \argmin_{\bX, \mathsf{r}(\bX)=k} \| \bS(\bU \bX -\bB) \|_F$. Then we have with probability $1-\delta$ over $\bS \sim \cD_A$, 
	\begin{align}  
		(1 + \alpha)  \min_{\bX \atop \mathsf{r}(\bX)=k} \| \bU \bX -\bB \|_F& = \| \bU \widehat{\bX} - \bB \|_F  \geq (1 + \alpha)^{1/2} \| \bS(\bU \widehat{\bX} - \bB ) \|_F\nonumber \\
			& \geq (1 + \alpha)^{1/2}  \min_{\bX \atop \mathsf{r}(\bX)} \| \bS(\bU {\bX} - \bB ) \|_F \nonumber \\
			& = (1 + \alpha)^{1/2}  \| \bS(\bU \widetilde{\bX} - \bB  ) \|_F %\nonumber \\
			 \geq    \| (\bU \widetilde{\bX} - \bB  ) \|_F %		\min_{\bX, \mathsf{r}(\bX)\leq k}\|\bS( \bU {\bX} - \bA) \|_F &\leq \min_{\bX, \mathsf{r}(\bX) \leq k} \| \bU \bX - \bA \|_F 
			.  \label{eq:S2}  
	\end{align}
	Combining~\eqnref{S2} with~\eqnref{step22},  we have with probability $1-2\delta$ over $\bPhi \sim \cD_R$ and $\bS \sim \cD_A$, 
	\begin{align}
	& \|( \bU \widetilde{\bX} - \bB) \|_F \leq {(1+\alpha)}^2 \| \bA - [\bA]_k \|_F 
	+ (1+\alpha) (\| \bS^\dagger \bN_2 \|_F  + \| \bN_1 ([\bA]_k \bPhi)^\dagger [\bA]_k \|_F). \label{eq:step32} 
	\end{align}
This completes the proof of~\lemref{thirdcomplete}.
\end{proof}

	To finalize the proof of~\thmref{complete}, we need to compute $\widetilde{\bX}=\argmin_{\bX, \mathsf{r}(\bX) \leq k} \| \bS(\bU \bX - \bB) \|_F$ and lower bound $\|( \bU \widetilde{\bX} - \bB) \|_F$. 
	%\item [Step 4:]  
	Invoking~\cite[Lem 4.2]{CW09} with $\mathbf{O} = \widetilde{\bU}$ and $\bZ = \bS \bB$, we  get $$ [ \widetilde{\bU}^{\mathsf T} \bZ]_k = [ \widetilde{\bU}^{\mathsf T} \bS \bB]_k = \argmin_{\bX \atop \mathsf{r}(\bX) \leq k} \| \widetilde{\bU} \bX - \bS \bB \|_F.$$ 

This in particular implies that 
	 \begin{align}
	 	\widetilde{\bX} =\widetilde{\bV} \widetilde{\bSigma}^\dagger [\widetilde{\bU}^{\mathsf T} \bZ]_k= \argmin_{\bX \atop \mathsf{r}(\bX) \leq k} \| \bS(\bU \bX - \bB) \|_F. \label{eq:solution2}
	 \end{align}
%\end{description}

Using~\eqnref{solution2} in~\eqnref{step32},  and the fact that $\bB= \bA + \bS^\dagger \bN_2$, we have the final result.
 \begin{align*}
 	  \| \bU \widetilde{\bV} \widetilde{\bSigma}^\dagger [\widetilde{\bU}^{\mathsf T} \bZ]_k - \bA \|_F - \| \bS^\dagger \bN_2 \|_F &\leq  \| \bU \widetilde{\bV} \widetilde{\bSigma}^\dagger [\widetilde{\bU}^{\mathsf T} \bS \bA]_k - \bB \|_F \\
 			&\leq (1+\alpha)^2 \|\bA - [\bA]_k\|_F+ (1+\alpha) (\| \bS^\dagger \bN_2 \|_F + \| \bN_1 ([\bA]_k \bPhi)^\dagger [\bA]_k \|_F). 
  \end{align*}
			
			{This implies that}
  \begin{align*}
 \| \bU \widetilde{\bV} \widetilde{\bSigma}^\dagger [\widetilde{\bU}^{\mathsf T} \bZ]_k - \bA \|_F 
 			\leq (1+3\alpha) \|\bA - [\bA]_k\|_F  + 3 \| \bS^\dagger \bN_2 \|_F  + 2 \| \bN_1 ([\bA]_k \bPhi)^\dagger [\bA]_k \|_F.  
  \end{align*}
 This completes the proof of~\thmref{complete}.
\end{proof}

\begin{lemma} \label{lem:naiveN_1}
	Let $\bN_1 \sim \cN(0,\rho^2)^{m \times t}$. Then $$ \| \bN_1 ([\bA]_k \bPhi)^\dagger [\bA]_k \|_F= O( \rho \sqrt{km})$$ with probability $99/100$ over $\bPhi \sim \cD_R$.
 \end{lemma}
 \begin{proof}
 Let $\bC=\bN_1 ([\bA]_k \bPhi)^\dagger [\bA]_k$. Then $\bC \bPhi = \bN_1 ([\bA]_k \bPhi)^\dagger [\bA]_k \bPhi$. Now $([\bA]_k \bPhi)^\dagger [\bA]_k \bPhi$ is a projection unto a random subspace of dimension $k$.  Since every entries of $\bN_1$ is picked i.i.d. from $\cN(0,\rho^2)$, $\bC \bPhi = \bN_1 ([\bA]_k \bPhi)^\dagger [\bA]_k \bPhi = \widetilde{\bN}_1$, where $\widetilde{\bN}_1$ is an $m \times k$ matrix with every entries picked i.i.d. from $\cN(0,\rho^2)$. This is because we can write $([\bA]_k \bPhi)^\dagger [\bA]_k \bPhi$  is a projection unto a random subspace of dimension $k$. Using~\lemref{N}, this implies that 
 %Since  with high probability, $[\bA]_k \bPhi$ is a full-column rank matrix, $([\bA]_k \bPhi)^\dagger = (([\bA]_k \bPhi)^{\mathsf T} ([\bA]_k \bPhi))^{-1}([\bA]_k \bPhi)^{\mathsf T}$. Therefore, $\bC \bPhi = \bN_1$  with high probability. From the second claim of~\lemref{phi},  $\| \bC \bPhi \|_F = (1+\alpha) \| \bC \|_F$. This implies, $ \| \bC \|_F = (1+\alpha)^{-1} \| \bN_1 \|_F$. Now,
 	\[
	\E[ \| \bC \bPhi  \|_F^2] = \E \sparen{ \| \widetilde{\bN}_1\|_F ^2} =  \sum_{i,j} \E[  (\widetilde{\bN}_1)_{ij}^2] = km \rho^2.
	\]
The result follows using Markov's inequality,  the fact that $\| \bC \bPhi \|_F^2 = (1+\alpha) \| \bC \|_F^2$, and $\alpha < 1$.
 \end{proof}
 
  \begin{lemma} \label{lem:noise} \label{lem:naiveN_2}
 Let $\bN_2 \sim \cN(0,\rho^2)^{v \times n}$. Then  $\| \bS^\dagger \bN_2 \|_F = O( \rho \sqrt{vn})$ with probability $99/100$ over $\bS \sim \cD_A$. 
 \end{lemma}
 \begin{proof} %Using~\lemref{N}, we have $\E[ \| \bN_2 \|_F^2] = vn \rho^2.$ The result follows by invoking~\lemref{SRHTinverse} %(or~\lemref{efficiency})  and Markov's inequality.
   Let $\bC=\bS^\dagger \bN_2$. Then $\bS \bC = \bS \bS^\dagger \bN_2$. Now $\bS \bS^\dagger$ is a projection unto a random subspace of dimension $k$.  Since every entries of $\bN_2$ is picked i.i.d. from $\cN(0,\rho^2)$, $\bS \bC  =  \widetilde{\bN}_2$, where $\widetilde{\bN}_1$ is an $v \times n$ matrix with every entries picked i.i.d. from $\cN(0,\rho^2)$.  Using~\lemref{N}, this implies that 
 %Since  with high probability, $[\bA]_k \bPhi$ is a full-column rank matrix, $([\bA]_k \bPhi)^\dagger = (([\bA]_k \bPhi)^{\mathsf T} ([\bA]_k \bPhi))^{-1}([\bA]_k \bPhi)^{\mathsf T}$. Therefore, $\bC \bPhi = \bN_1$  with high probability. From the second claim of~\lemref{phi},  $\| \bC \bPhi \|_F = (1+\alpha) \| \bC \|_F$. This implies, $ \| \bC \|_F = (1+\alpha)^{-1} \| \bN_1 \|_F$. Now,
 	\[
	\E[ \| \bS \bC   \|_F^2] = \E \sparen{ \| \widetilde{\bN}_2\|_F ^2} =  \sum_{i,j} \E[  (\widetilde{\bN}_2)_{ij}^2] = vn \rho^2.
	\]
The result follows using Markov's inequality and the fact that $\|\bS \bC  \|_F^2 = (1+\alpha) \| \bC \|_F^2$ and $\alpha \leq 1$.
 \end{proof}
 
% In order to finish the proof of~\thmref{naive}, we need to compute the value of $\rho$ that provides differential privacy. 
 ~\thmref{naive_meta} now follows from~\thmref{complete},~\lemref{naiveN_1},~\lemref{naiveN_2}, and the choice of $\rho$ in~\lemref{rho}.  

%\end{extended}

%\input{privacyJL} 

\begin{l1norm}
\input{l1norm}
\end{l1norm}
% We give a proof of~\thmref{space} in~\appref{BBDS12}.  
%!TEX root = low.tex
\section{Differentially private $\lrf$ Under General Turnstile Model} \label{app:streaming}
In this section, we are interested in computing a low-rank factorization of a private matrix in the general turnstile update model while preserving differential privacy. In this setting, we are allowed only one pass over the private matrix, and by the end of the stream, we are required to output a low-rank factorization.  In~\appref{time}, we give a differentially private low rank factorization under $\priv_1$. We give a differentially private low rank factorization under a much stronger privacy guarantee, $\priv_2$, in~\appref{streaming_priv_2}.

\subsection{{Space Optimal Differentially private Low-rank Factorization Under $\priv_1$}} \label{app:time}
The main idea behind the differentially private algorithm for low-rank factorization under $\priv_1$ in turnstile update model is that the corresponding algorithm ({\scshape Private-Space-Optimal-$\lrf$}) maintains linear sketches. It has been shown by Li {\it et al.}~\cite{LNW14} that in general turnstile update model, it is better off to just use linear sketches. Together with our low space algorithm, this gives us the insight to develop a private algorithm in the general turnstile update model.~\figref{streaming_spacelowprivate} gives the detail description of our algorithm. We show the following:
  \begin{figure}[t!]
 \begin{center}
\fbox{
\begin{minipage}[l]{6in}
%\small
{
\begin{center} \underline{\scshape Private-Streaming-Space-Optimal-$\lrf$} \end{center}
{\bf Initialization.} Set $\eta=\max\set{k,\alpha^{-1}}$, and the dimension of random projections to be $t=O(\eta \alpha^{-1}  \log (k/\delta)), v=O(\eta \alpha^{-2} \log(k/\delta))$. Let $\rho_1={\sqrt{(1+\alpha)\ln(1/\delta)}}/{\varepsilon}$ and $\rho_2={(1+\alpha)\sqrt{\ln(1/\delta)}}/{\varepsilon}$. Set $\sigma_\mathsf{min}={16 \log(1/\delta) \sqrt{t (1+\alpha)(1-\alpha)^{-1} \ln(1/\delta)}}/{\varepsilon}$.

%, and $u=\min \set{m,n}$. 
%Sample $\bOmega \sim \cN(0,1)^{u \times t}$. Let  $\bPhi \in \R^{(m+n) \times u}, \bPsi  \in \R^{t \times u}$ such that $\bPhi^{\mathsf T} \sim \cD_R$, $ \bPsi \sim \cD_R$  satisfies~\lemref{CW13}. 
%Let  $\bS\in \R^{v \times u}, \bT \in \R^{v \times (m+n)}$  such that $\bS \sim \cD_A$,  $\bT^{\mathsf T} \sim \cD_A$  satisfies~\lemref{S}. 
\begin{enumerate}
	\item Sample $\bOmega \sim \cN(0,1)^{m \times t}$. Let  $\bPhi \in \R^{(m+n) \times m}, \bPsi  \in \R^{t \times m}$ such that $\bPhi^{\mathsf T} \sim \cD_R$, $ \bPsi \sim \cD_R$  satisfies~\lemref{CW13}. 
	\item Let  $\bS\in \R^{v \times m}, \bT \in \R^{v \times (m+n)}$  such that $\bS \sim \cD_A$,  $\bT^{\mathsf T} \sim \cD_A$  satisfies~\lemref{S}. 
	\item Sample $\bN_1 \sim \cN(0,\rho_1^2)^{t \times (m+n)}$ and $\bN_2 \sim \cN(0,\rho_2^2)^{v \times v}$. Define $\widehat{\bPhi} = t^{-1}\bPhi \bOmega \in \R^{(m+n) \times t}$. 
\end{enumerate}
%If $m \leq n$, then 
 %, and  $\bOmega \sim \cN(0,1)^{m \times t}$. \\
%If $m > n$, then sample $\bPsi \in \R^{(n \times (m+n)}, \bPhi  \in \R^{n \times t}$ from the distribution $\cD_R$ which satisfies~\lemref{CW13}, and  $\bOmega \sim \cN(0,1)^{t \times n}$. Sample ~$\bN_1 \sim \cN(0,\rho^2)^{t \times (m+n)}$, $\bS\in \R^{v \times m}, \bT \in \R^{v \times n}$  from the distribution $\cD_A$ which satisfies~\lemref{S}, and~$\bN_2 \sim \cN(0,\rho^2)^{v \times v}$. % such that each entries are sampled i.i.d. from $\cN(0,1)$. 
%\begin{description}
%\item [When  $u=m$.] 

\medskip \noindent \textbf{Update Stage.}
Set $\widehat{\bA} = \begin{pmatrix} \mathbf{0}^{m \times n} & \sigma_\mathsf{min} \I_m \end{pmatrix}$. Compute $\bY_c = \widehat{\bA} \widehat{\bPhi}$,  $\bY_r'=  \bPsi \widehat{\bA},$ and $\bZ'= \bS\widehat{\bA} \bT^{\mathsf T}.$

\medskip \noindent \textbf{Update rule.}  When $(i_\tau,j_\tau,\Delta_{\tau})$, where $(i_\tau,j_\tau) \in [m] \times [n]$ and $\Delta_\tau \in \R$, is streamed, update the matrices by the following rule: 
(i) $\bY_c = \bY_c + \bA_\tau \widehat{\bPhi}$, 
(ii) $\bY_r' = \bY_r' + \bPsi \bA_\tau $, and 
(iii) $\bZ'= \bZ'+ \bS \bA_\tau \bT^{\mathsf T}$, where  $\bA_\tau$ is an $m \times (n+m)$ matrix with only non-zero entry $\Delta_\tau$ in position $(i_\tau,j_\tau)$..

\medskip \noindent \textbf{Computing the factorization.} Once the matrix is streamed, we follow the following steps.
%   we transform $\bA$ to $\widehat{\bA}:=\begin{pmatrix}\bA & \sigma_{\min} \I_m  \end{pmatrix}$ and follow the following steps:
\begin{enumerate}
	\item Compute $\bY_r=  \bY_r' + \bN_1 = \bPsi \widehat{\bA} + \bN_1,$ and $\bZ= \bZ' + \bN_2 = \bS\widehat{\bA} \bT^{\mathsf T}+\bN_2$.% in a single pass.
%	If $m \leq n$, then when an entry is streamed, update 
%	set $\widehat{\bA}= \begin{pmatrix}\bA & \sigma_{\min} \I  \end{pmatrix} $; else set $\widehat{\bA}= \begin{pmatrix}\bA^{\mathsf T} & \sigma_{\min} \I  \end{pmatrix}^{\mathsf T} $. Compute $\bY_c'=\widehat{\bA} \bPhi $,  $\bY_r'=  \bPsi \widehat{\bA},$ and $\bZ= \bS \widehat{\bA} \bT^{\mathsf T} + \bN_2$ in a single pass. If $m \leq n$, compute $\bY_r=\bY_r +\bN_1$ and $\bY_c = \bY_c' \bOmega$; else compute $\bY_c=\bY_c +\bN_1$ and $\bY_r = \bOmega \bY_c' $.
	\item %Return the 
	Output  {\scshape Factor}$(\bY_c,\bY_r,\bZ,\bS,\bT,m,m+n,k,t,v)$.
\end{enumerate}
%\end{description}

\medskip \noindent \textbf{ {\scshape Factor}$(\bY_c,\bY_r,\bZ,\bS,\bT,m,n,k,t,v)$}
\begin{enumerate}
	\item Compute a matrix $\bU \in \R^{m \times t}$ whose columns are orthonormal basis  for the column space of $\bY_c$ and matrix $\bV \in \R^{t \times n}$ whose rows are the orthonormal basis for the row space of $\bY_r$.
	\item Compute a SVD of $\mathbf{S} \bU := \widetilde{\bU}_s \widetilde{\bSigma}_s \widetilde{\bV}_s^{\mathsf T} \in \R^{v \times t}$ 
and a SVD of $ \bV\mathbf{T}^{\mathsf T}:=\widetilde{\bU}_t \widetilde{\bSigma}_t \widetilde{\bV}_t^{\mathsf T} \in \R^{t \times v}.$ 
	\item Compute a SVD of $ \widetilde{\bV}_s \widetilde{\bSigma}_s^{\dagger} [\widetilde{\bU}_s^{\mathsf T} \mathbf{Z} \widetilde{\bV}_t]_k \widetilde{\bSigma}_t^{\dagger} \widetilde{\bU}_t^{\mathsf T} $. Let it is be $\bU' \bSigma' \bV'^{\mathsf T}$. 
	\item Output the matrix  $\widetilde{\bU}=\bU \bU'$ compromising of left singular vectors, diagonal matrix $\widetilde{\bSigma}=\bSigma'$, and the matrix $\widetilde{\bV}=\bV^{\mathsf T} \bV' $ with right-singular vectors. 
\end{enumerate}

}
\end{minipage}
} \caption{Differentially private $\lrf$ Under $\priv_1$  in Turnstile Update Model} \label{fig:streaming_spacelowprivate}
\end{center}
\end{figure}

\begin{theorem}  \label{thm:low_space_private} 
Let $m,n,k \in \N$ and $\alpha,\varepsilon,\delta$ be the input parameters.  Let $s=\max\{m,n\}, u=\min \set{m,n}$, $\kappa=(1+\alpha)/(1-\alpha)$, $\eta=\max\set{k,\alpha^{-1}}$, and $\sigma_\mathsf{min}={16 \log(1/\delta) \sqrt{t \kappa \ln(1/\delta)}}/{\varepsilon}$. Given an $m \times n$ matrix $\bA$ in a turnstile update model,  {\scshape Private-Streaming-Space-Optimal-$\lrf$}, described in~\figref{streaming_spacelowprivate}, outputs a factorization $\widetilde{\bU}, \widetilde{\bSigma}, \widetilde{\bV}$ such that 
\begin{enumerate}
	\item {\scshape Private-Streaming-Space-Optimal-$\lrf$} is $(3\varepsilon,3\delta)$ differentially private under $\priv_1$.  \label{streaming_privacy}
	\item  Let $\mathbf{M}_k=\widetilde{\bU} \widetilde{\bSigma} \widetilde{\bV}^{\mathsf T}. $ With probability $9/10$ over the random coins of  {\scshape Private-Streaming-Space-Optimal-$\lrf$},  \label{streaming_privatecorrectness}
 \begin{align*} 
 &\| \begin{pmatrix} \bA & \mathbf{0} \end{pmatrix} - \mathbf{M}_k \|_F \leq (1+\alpha) \| \bA - [\bA]_k \|_F  + O ( \sigma_\mathsf{min} \sqrt{u} + \varepsilon^{-1}  {\sqrt{ks  \ln(1/\delta)}} ),
\end{align*}
%where $\mathbf{M}_k=\widetilde{\bU} \widetilde{\bSigma} \widetilde{\bV}^{\mathsf T}. $% with probability $1-O(\delta)$.% over $\bPhi, \bPsi \sim \cD_R$ and $\bS, \bT \sim \cD_A$.
 	\item The space used by {\scshape Private-Streaming-Space-Optimal-$\lrf$} is $O((m+n)\eta \alpha^{-1}\log(k/\delta))$.  \label{streaming_privatespace}
%	\item The initialization time of the algorithm is $O_\delta((m+n)\log k)$ and the total time to perform the factorization is $O(\mathsf{nn}(\bA)\log(1/\delta)) + {O}((nk^2 \alpha^{-2}  + mk^2 \alpha^{-2} + k^3 \alpha^{-4})\log(1/\delta))$. 
	\item The initialization time  is $O((m+n)u\log (k/\delta))$ and the total computational time is 
	$$O\paren{  \mathsf{nn}(\bA) \log(1/\delta)+  \frac{(m+n)\eta^2 \log^2(k/\delta) }{\alpha^{2}}  + \frac{\eta^3  \log^3 (k/\delta)}{\alpha^{5}} }.$$ \label{streaming_privatetime}
\end{enumerate}
\end{theorem}
\begin{proof}
Part~\ref{streaming_privatespace} follows immediately by setting the values of $t$ and $v$.
Part~\ref{streaming_privatetime} of~\thmref{low_space_private} requires some computation.
%follows as in the proof of~\thmref{space} by setting the values of $t$ and $v$. 
More precisely, we have the following.
%\begin{enumerate}
	% \item 
	Computing $\bY_c$ requires $ O(\mathsf{nn}(\bA) \log(1/\delta))+ mt^2$ time and computing $\bY_r$ requires $O(\mathsf{nn}(\bA)\log(1/\delta)) + (m+n)t^2$ time. %Therefore, computing $\bY_c$ and $\bY_r$ requires $ O(\mathsf{nn}(\bA) \log( k/\delta) +(m+n)k^2 \alpha^{-2} \log^2(k/\delta)$ time. 
	 %\item 
	 Computing %the orthonormal basis 
	 $\bU$ and $\bV$ requires %$nt^2 + mt^2 = 
	 $O((m+n)\eta^2 \alpha^{-2} \log^2( k/\delta))$ time.
	 %\item 
	 Computing a SVD of the matrix $\bS \bU$  and $\mathbf{T} \bV^{\mathsf T}$ requires $vt^2 + tv^2=O(k^3 \alpha^{-5} \log^2( k/\delta))$.
	 %\item 
	 Computing $\bZ$ requires $O(\mathsf{nn}(\bA) \log( k/\delta)  +  \eta^2/\alpha^2 \log^2( k/\delta) + n\eta^2\alpha^{-2} \log^2( k/\delta))$
	 Computation of $[\widetilde{\bU}_s^{\mathsf T}\bZ \widetilde{\bV}]_k$ requires $O_\delta(\mathsf{nn}(\bA)) + tv^2= O(\mathsf{nn}(\bA) + k^{3} \alpha^{-5} \log^3 ( k/\delta))$ time.
	 %\item 
	 Computation of the last 
	 %a  
	 SVD 
	 %of $ \widetilde{\bV}_s \widetilde{\bSigma}_s^{\dagger} [\widetilde{\bU}_s^{\mathsf T} \mathbf{Z} \widetilde{\bV}_t]_k \widetilde{\bSigma}_t^{\dagger} \widetilde{\bU}_t^{\mathsf T} $  
	 requires %$(m+n)t^2=
	 $O((m+n) \eta^2 \alpha^{-2} \log^2( k/\delta))$ time.
%\end{enumerate}
Combining all these terms, we have our claim on the running time.

Furthermore, combining  ~\lemref{N},~\lemref{first}, \lemref{third}, \claimref{S_2},  \claimref{SRHTinversetwice}, and   \lemref{second}, we have part~\ref{streaming_privatecorrectness} while part~\ref{streaming_privacy} follows from~\lemref{low_space_private}. 
This completes the proof of~\thmref{low_space_private}.
\end{proof}

\subsection{{Differentially private Low-rank Factorization Under $\priv_2$ in Turnstile Update Model}} \label{app:streaming_priv_2}
We describe and analyze the algorithm when $m \geq n$, i.e., $s = m$ and $u=n$ in the theorem that follows. The case when $m < n$ follows by symmetry.  We prove the following. % (i.e., considering $\bA^\mathsf T$ in the algorithm and our analysis). 
\begin{figure} [t!]
\begin{center} \fbox{
\begin{minipage}[l]{6in}
%\small
{
\begin{center} \underline{\scshape Private-Streaming-Frobenius-$\lrf$} \end{center}
{\bf Initialization.} Set $\eta=\max \set{k,\alpha^{-1}}$ $t=O(\eta \alpha^{-1} \log (k/\delta)), v=O( \eta\alpha^{-2}   \log(k/\delta))$, and $\rho={\sqrt{(1+\alpha)\ln(1/\delta)}}/{\varepsilon}$. 
	 Sample  $\bN_1\sim \cN(0,\rho^2)^{m \times t}, \bN_2\sim \cN(0,\rho^2)^{v \times n}$.  
	 Sample $\bPhi \in \R^{m \times t}$ from $\cD_R$ as in~\lemref{phi} and $\bS \in \R^{v \times n}$ from $\cD_R$ as in~\lemref{S}. %(here, $\beta=\delta$). 
	 Initialize an all zero  $m \times t$ matrix $\bY'$ and an all zero $v \times n$ matrix $\bZ'$.

\medskip
\textbf{Update rule.}  Suppose at time $\tau$, the stream is $(i_\tau,j_\tau, \Delta_{\tau})$, where $(i_\tau,j_\tau) \in [m] \times [n]$. Let $\bA_\tau$ be a matrix with only non-zero entry $\Delta_\tau$ in position $(i_\tau,j_\tau)$. Update the matrices by the following rule: $\bY' = \bY' + \bA_\tau \bPhi $ and  $\bZ' = \bZ' + \bS \bA_\tau $. %, and (iii) $\bZ_{i,j}= \bZ_{i,j} + \bS_{i,i_\tau} \bA_\tau \bT_{j,j_\tau}$ for all $1 \leq i,j \leq v$.

\medskip
\textbf{Computing the factorization.} Once the matrix is streamed, we follow the following steps.
\begin{enumerate}
	\item  Compute $\bY= \bY' + \bN_1$ and $\bZ=\bZ'+\bN_2$. % in a single pass.
	\item Compute a matrix $\bU \in \R^{m \times t}$ whose columns are an orthonormal basis  for the column space of $\bY$.
	\item Compute the singular value decomposition of $\mathbf{S} \bU \in \R^{v \times t}$. Let it be $\widetilde{\bU} \widetilde{\bSigma} \widetilde{\bV}^{\mathsf T}.$ 
	\item Compute the singular value decomposition of $ \widetilde{\bV} \widetilde{\bSigma}^{\dagger} [\widetilde{\bU}^{\mathsf T}\bZ]_k$. %, where $[\cdot]_k$ denotes the best rank-$k$ approximation. 
	Let it be $\bU' \bSigma' \bV'^{\mathsf T}$. 
	\item Output $\widetilde{\bU}= \bU \bU'$, $\widetilde{\bSigma}= \bSigma'$ and $\widetilde{\bV}=\bV'$. %Let $\mathbf{M}_k = \widetilde{\bU} \widetilde{\bSigma} \widetilde{\bV}^{\mathsf T}.$
\end{enumerate}
}
\end{minipage}
} \caption{Differentially private $\lrf$ Under $\priv_2$  in Turnstile Update Model} \label{fig:streaming_private}
\end{center}
\end{figure}

 \begin{theorem} \label{thm:streaming_naive}
 	Let  $m, n \in \N$ and $\alpha,\varepsilon,\delta$ be the input parameters.  Let $k$ be the desired rank of the factorization and $\eta=\max \set{k,\alpha^{-1}}$. Let $s = \max \set{m,n}$ and $u =\min \set{m,n} $. Given a private input matrix $\bA \in \R^{m \times n}$ recieved in a turnstile model, the factorization $\widetilde{\bU}, \widetilde{\bSigma}, \widetilde{\bV}$ outputted by  the algorithm, {\scshape Private-Streaming-Frobenius-$\lrf$}, presented in~\figref{streaming_private}, is a $k$-rank factorization and 
	satisfies the following properties:
	\begin{enumerate}
	\item  {\scshape Private-Streaming-Frobenius-$\lrf$} is $(\varepsilon,\delta)$-differentially private under $\priv_2$. \label{frob_streaming_privacy}
	\item Let $\mathbf{M}_k:=\widetilde{\bU} \widetilde{\bSigma}\widetilde{\bV}^{\mathsf T}$. With probability $9/10$ over the coins of {\scshape Private-Streaming-Frobenius-$\lrf$}, \label{frob_streaming_correctness}
	 \begin{align*}
	   \| \bA - \mathbf{M}_k \|_F \leq (1+\alpha) \| \bA - [\bA]_k \|_F + {O}\paren{\paren{ \sqrt{k s} + \sqrt{ \frac{u \eta}{\alpha^2} } }  \frac{\sqrt{ \log(1/\delta)}}{\varepsilon}  } .
	  \end{align*} 
	\item The space used by {\scshape Private-Streaming-Frobenius-$\lrf$} is $O((m+n\alpha^{-1}) \eta \alpha^{-1}  \log(k/\delta))$. \label{frob_space}
	\item The time required to compute the factorization is $$O\paren{(\mathsf{nn}(\bA) \log(1/\delta) + \frac{(m+n	 \alpha^{-2})  \eta^2  \log^2(k/\delta)}{\alpha^{2}} +\frac{ \eta^3  \log^3(k/\delta)}{\alpha^{s3}}}.$$ \label{frob_streaming_time}
	\end{enumerate}
 \end{theorem}
\begin{proof}
 The space required by the algorithm is the space required to store $\bY$ and $\bZ$, which is $mt + nv = O((m+n\alpha^{-1})k\alpha^{-1} \log k \log(1/\delta))$. This proves part~\ref{frob_space} of~\thmref{streaming_naive}. %For the update time in part~\ref{time} of~\thmref{space}, note that $\bA_\tau$ has only one non-zero entry; therefore,  $\bPhi$ and $\bS$ needs to generate sketch of only one row and column, respectively. That is $n=1$ and $\mathsf{nn}(\bA)=1$ in~\lemref{phi} and~\lemref{S}. This implies that the update time is  $O(\eta \alpha^{-2} \log(k/\delta))$.   
 For the running time of part~\ref{frob_time} of~\thmref{streaming_naive}, we have the following.
%\begin{enumerate}
	 %\item 
	 Computing the sketch $\bY$ requires $O(\mathsf{nn}(\bA)\log(k/\delta))  +mt^2$ time and computing the sketch $\bZ$ requires $O(\mathsf{nn}(\bA) \log(k/\delta)) + nv^2$. %Therefore, computing $\bY$ and $\bZ$ requires $O(\mathsf{nn}(\bA)\log(k/\delta) +m \eta^2 \alpha^{-2} \log^2(k/\delta))$ and $O(\mathsf{nn}(\bA)\log(k/\delta) + n \eta \alpha^{-2} \log^2(k/\delta))$, respectively.
	 %\item 
	 Computing the orthonormal basis $\bU$ requires $mt^2 = O(m \eta^2 \alpha^{-2}\log^2(k/\delta))$ time.
	 %\item 
	 Computing a SVD of the matrix $\bS \bU$ requires $vt^2=O(\eta^3 \alpha^{3}\log^3(k/\delta)).$
	 %\item 
	 Computation of $[\widetilde{\bU}^{\mathsf T}\bZ]_k$ requires $O(n\eta^{2} \alpha^{-2}\log^2(k/\delta)).$
	 %\item 
	 Computing a SVD %of $ \widetilde{\bV} \widetilde{\bSigma}^{\dagger} [\widetilde{\bU}^{\mathsf T}\bZ]_k$ 
	 in Step 4 requires $nt^2=O(n \eta^2 \alpha^{-2}\log^2(k/\delta))$ time.
%\end{enumerate}
Combining all these terms, we have our claim on running time.

Part~\ref{frob_streaming_correctness} follows from~\lemref{naiveN_2},~\lemref{naiveN_1}, and~\thmref{complete}.
Part~\ref{frob_streaming_privacy} follows from~\lemref{rho}. 
This completes the proof of~\thmref{streaming_naive}.
\end{proof}

%\begin{extended}
%\pagebreak

%!TEX root = low.tex

\section{Case Study: Normalized Row Matrices} \label{app:normalized}
An important class of matrices is matrices with normalized rows. In this section, we give a bound on such matrices.~\figref{covariance_private} is the detailed description of the algorithm. It receives the matrix row-wise and computes the low-rank factorization.
\begin{figure} [t!]
\begin{center} \fbox{
\begin{minipage}[l]{6in}
%\small
{
\begin{center} \underline{\scshape Private-Covariance-$\lrf$} \end{center}
{\bf Initialization.} Set $\eta=\max \set{k,\alpha^{-1}}$ $t=O(\eta \alpha^{-1} \log (k/\delta)), v=O( \eta\alpha^{-2}   \log(k/\delta))$, and $\rho={\sqrt{(1+\alpha)\ln(1/\delta)}}/{\varepsilon}$. 
	 Sample  $\bN_1\sim \cN(0,\rho^2)^{n \times t}, \bN_2\sim \cN(0,\rho^2)^{v \times n}$.  
	 Sample $\bPhi \in \R^{n \times t}$ from $\cD_R$ as in~\lemref{phi} and $\bS \in \R^{v \times n}$ from $\cD_R$ as in~\lemref{S}. %(here, $\beta=\delta$). 
	 Initialize an all zero  $n \times t$ matrix $\bY'$ and an all zero $v \times n$ matrix $\bZ'$.

\medskip
\textbf{Update rule.}  Suppose at time $\tau$, the stream is an index-row tuple $(i_\tau, \bA_{i_\tau})$, where $i_\tau\in [m] $ and $ \bA_{i_\tau} \in \R^n$. Let $\bA_\tau$ be a matrix with only non-zero row $\bA_{i_\tau}$ in the row $i_\tau$. Update the matrices by the following rule: $\bY' = \bY' + \bA_\tau^{\mathsf T} \bA_\tau \bPhi $ and  $\bZ' = \bZ' + \bS \bA_\tau^{\mathsf T} \bA_\tau $. %, and (iii) $\bZ_{i,j}= \bZ_{i,j} + \bS_{i,i_\tau} \bA_\tau \bT_{j,j_\tau}$ for all $1 \leq i,j \leq v$.

\medskip
\textbf{Computing the factorization.} Once the matrix is streamed, we follow the following steps.
\begin{enumerate}
	\item  Compute $\bY= \bY' + \bN_1$ and $\bZ=\bZ'+\bN_2$. % in a single pass.
	\item Compute a matrix $\bU \in \R^{m \times t}$ whose columns are an orthonormal basis  for the column space of $\bY$.
	\item Compute the singular value decomposition of $\mathbf{S} \bU \in \R^{v \times t}$. Let it be $\widetilde{\bU} \widetilde{\bSigma} \widetilde{\bV}^{\mathsf T}.$ 
	\item Compute the singular value decomposition of $ \widetilde{\bV} \widetilde{\bSigma}^{\dagger} [\widetilde{\bU}^{\mathsf T}\bZ]_k$. %, where $[\cdot]_k$ denotes the best rank-$k$ approximation. 
	Let it be $\bU' \bSigma' \bV'^{\mathsf T}$. 
	\item Output $\widetilde{\bU}= \bU \bU'$, $\widetilde{\bSigma}= \bSigma'$ and $\widetilde{\bV}=\bV'$. %Let $\mathbf{M}_k = \widetilde{\bU} \widetilde{\bSigma} \widetilde{\bV}^{\mathsf T}.$
\end{enumerate}
}
\end{minipage}
} \caption{Differentially private Covariance Approximation Under $\priv_2$  in Row-wise Update Model} \label{fig:covariance_private}
\end{center}
\end{figure}

 \begin{theorem} \label{thm:streaming_covariance}
 	Let  $m, n \in \N$ and $\alpha,\varepsilon,\delta$ be the input parameters (with $m>n$).  Let $k$ be the desired rank of the factorization and $\eta=\max \set{k,\alpha^{-1}}$. Given a private input matrix $\bA \in \R^{m \times n}$ recieved in a row wise update model, the factorization $\widetilde{\bU}, \widetilde{\bSigma}, \widetilde{\bV}$ outputted by  the algorithm, {\scshape Private-Covariance-$\lrf$}, presented in~\figref{streaming_private}, is a $k$-rank factorization and 
	satisfies the following properties:
	\begin{enumerate}
	\item  {\scshape Private-Covariance-$\lrf$} is $(\varepsilon,\delta)$-differentially private under $\priv_2$. \label{frob_covariance_privacy}
	\item Let $\mathbf{M}_k:=\widetilde{\bU} \widetilde{\bSigma}\widetilde{\bV}^{\mathsf T}$. With probability $9/10$ over the coins of {\scshape Private-Covariance-$\lrf$}, \label{frob_covariance_correctness}
	 \begin{align*}
	   \| \bA^{\mathsf T} \bA - \mathbf{M}_k \|_F \leq (1+\alpha) \| \bA^{\mathsf T} \bA - [\bA^{\mathsf T} \bA]_k \|_F + {O}\paren{  \frac{\sqrt{n \eta \log(1/\delta)}}{\varepsilon \alpha}  } .
	  \end{align*} 
	\item The space used by {\scshape Private-Covariance-$\lrf$} is $O(n\alpha^{-2} \eta   \log(k/\delta))$. \label{covariance_space}
%	\item The time required to compute the factorization is $$O\paren{n^2 + \frac{n	   \eta^2  \log^2(k/\delta)}{\alpha^{2}} +\frac{ \eta^3  \log^3(k/\delta)}{\alpha^{3}}}.$$ \label{frob_streaming_time}
	\end{enumerate}
 \end{theorem}
\begin{proof}
 The space required by the algorithm is the space required to store $\bY$ and $\bZ$, which is $mt + nv = O((m+n\alpha^{-1})k\alpha^{-1} \log k \log(1/\delta))$. This proves part~\ref{covariance_space} of~\thmref{streaming_covariance}. %For the update time in part~\ref{time} of~\thmref{space}, note that $\bA_\tau$ has only one non-zero entry; therefore,  $\bPhi$ and $\bS$ needs to generate sketch of only one row and column, respectively. That is $n=1$ and $\mathsf{nn}(\bA)=1$ in~\lemref{phi} and~\lemref{S}. This implies that the update time is  $O(\eta \alpha^{-2} \log(k/\delta))$.   
\begin{comment} For the running time of part~\ref{frob_time} of~\thmref{streaming_naive}, we have the following.
%\begin{enumerate}
	 %\item 
	 Computing the sketch $\bY$ requires $O(\mathsf{nn}(\bA)\log(k/\delta))  +mt^2$ time and computing the sketch $\bZ$ requires $O(\mathsf{nn}(\bA) \log(k/\delta)) + nv^2$. %Therefore, computing $\bY$ and $\bZ$ requires $O(\mathsf{nn}(\bA)\log(k/\delta) +m \eta^2 \alpha^{-2} \log^2(k/\delta))$ and $O(\mathsf{nn}(\bA)\log(k/\delta) + n \eta \alpha^{-2} \log^2(k/\delta))$, respectively.
	 %\item 
	 Computing the orthonormal basis $\bU$ requires $mt^2 = O(m \eta^2 \alpha^{-2}\log^2(k/\delta))$ time.
	 %\item 
	 Computing a SVD of the matrix $\bS \bU$ requires $vt^2=O(\eta^3 \alpha^{3}\log^3(k/\delta)).$
	 %\item 
	 Computation of $[\widetilde{\bU}^{\mathsf T}\bZ]_k$ requires $O(n\eta^{2} \alpha^{-2}\log^2(k/\delta)).$
	 %\item 
	 Computing a SVD %of $ \widetilde{\bV} \widetilde{\bSigma}^{\dagger} [\widetilde{\bU}^{\mathsf T}\bZ]_k$ 
	 in Step 4 requires $nt^2=O(n \eta^2 \alpha^{-2}\log^2(k/\delta))$ time.
%\end{enumerate}
Combining all these terms, we have our claim on running time.
\end{comment}

Part~\ref{frob_covariance_correctness} follows from~\lemref{naiveN_2},~\lemref{naiveN_1}, and~\thmref{complete}.
For part~\ref{frob_covariance_privacy}, 
first notice that since every row has bounded norm $1$, the sensitivity of the function $\bA^{\mathsf T}\bA$ is at most $1$; i.e., the sensitivity of the vector form of $\bA^{\mathsf T}\bA$ is at most $1$. 
Part~\ref{frob_covariance_privacy} then follows from~\lemref{rho}. 
This completes the proof of~\thmref{streaming_covariance}.
\end{proof}

\section{Case Study 2: Low-rank factorization Under Continual Release Model} \label{app:continual}
In this section, we are interested in computing a low-rank factorization of a private matrix in the continual release model while preserving differential privacy. In this setting, we are allowed only one pass over the private matrix, and at every time epoch, we are required to output a low-rank factorization (see~\ldefref{dpcontinual} for a formal definition).  In~\appref{continual_priv_1}, we give a differentially private low rank factorization under $\priv_1$. We give a differentially private low rank factorization under a much stronger privacy guarantee, $\priv_2$, in~\appref{continual_priv_2}.

In past, there are known algorithms for converting any ``one-shot" algorithm for any monotonic function to an algorithm that continually release the output~\cite{DNPR10}. Since optimization function like low-rank factorization are not monotonic, it is not clear whether we can use the generic transformation. Our algorithm generates and maintains linear sketches during the updates and later compute low-rank factorization using these sketches. This allows us to use the generic transformation to  maintain the updates. For computing the factorization, we collect all the sketches for any range using range queries.

\subsection{Differentially Private Continual Release Low Rank Factorization Under $\priv_2$} \label{app:continual_priv_1}
We start by giving a differentially private algorithm under $\priv_2$ that continually release a low rank factorization. We first give an overview of our algorithm with the details of the algorithm appearing in~\figref{continual}. 

The idea behind our algorithm for continual release is the fact that the factorization stage only uses a small space sketches of the matrix and the sketches are linear sketches. Since the  sketches are linear, we can use the binary tree mechanism~\cite{CSS,DNPR10} to get low-rank factorization under continual release model. 
The algorithm stores the sketches of matrix generated at various time epochs %We present the algorithm in~\figref{continual} for the sake of completion. 
 %We give an intuition of the algorithm by visualizing  the data-structure that stores all the sketches 
 in the form of a binary tree. Every leaf node $\tau$  stores the sketches of $\bA_\tau$, where $\bA_\tau$ is the stream at time $\tau$. The root node stores the sketch of the entire matrix streamed in $[0,T]$, and every other node $\mathsf{n}$ stores the sketch corresponding to the updates in  a time range represented by the leaves of the subtree rooted at $\mathsf{n}$, i.e., $\widehat{\bY}_i$ and $\widehat{\bZ}_i$ stores sketches involving $2^i$ updates to $\bA$. If a query is to compute the low-rank factorization of the matrix from a particular time range $[1,\tau]$, we find the nodes that uniquely cover the time range $[1,\tau]$. We then use the value of $\bY(\tau)$ and $\bZ(\tau)$ formed using those nodes to compute the low-rank factorization. From the binary tree construction, every time epoch appears in exactly $O(\log T)$ nodes (from the leaf to the root node). Moreover, every range $[1,\tau]$ appears in at most $O(\log T)$  nodes of the tree (including leaves and root node). 
\begin{figure} [t]
\begin{center} 
\fbox
{
\begin{minipage}[l]{6in}
{
\begin{center} \underline{\scshape Private-Frobenius-Continual-$\lrf$} \end{center}
 \noindent \textbf{Input:} A time upper bound T , privacy parameters $\varepsilon,\delta$, and a stream $\mathbf{s} \in \R^T$. 

\medskip \noindent \textbf{Output:} At each time step $\tau$, output a factorization $\h{\bU}_k(\tau)$, $\h{\bSigma}_k(\tau)$, and $\h{\bV}_k(\tau)$.

\medskip \noindent \textbf{Initialization:} Set $t,v, \bPhi, \bS$ as in~\figref{private}.  Every $\h{\bY}_i$ and $\h{\bZ}_i$ are initialized to an all zero matrices for  $i \in [\log T]$. Set $\varepsilon' = \varepsilon/\sqrt{\log T}, \delta'=\delta/2 \log T$ and $\rho={\sqrt{(1+\alpha) \ln(1/\delta)}}/{\varepsilon'}$.

\medskip \noindent \textbf{Estimating the $\lrf$ at time $t$.}  
On receiving an input $(\mathsf{r},\mathsf{c},\mathbf{s}_\tau)$ where $\mathbf{s}_\tau  \in \R$ at $1\leq \tau \leq T$, form a matrix $\bA_\tau \in \R^{m \times n}$ which is an all zero matrix except with only non-zero entry $ \mathbf{s}_\tau$ at location $(\mathsf{r},\mathsf{c}) \in [m] \times [n]$. 
\begin{enumerate}
	\item Compute $i := \min \set{j : \tau_j \neq 0}$, where  $\tau =\sum_j \tau_j \cdot 2^j$  is the binary expansion of $\tau$.  
	\item Compute $\h{\bY}_i := \bA_\tau \bPhi + \sum_{j <i} \h{\bY}_j$ and $\h{\bZ}_i := \bS \bA_\tau  + \sum_{j <i} \h{\bZ}_j.$ 
	\item For $j:= 0, \cdots, i-1$, set $ \bY_j = \h{\bY}_j  = \mathbf{0}$ and $\bZ_j= \h{\bZ}_j= \mathbf{0}.$ Compute $ {\bY}_i = \h{\bY}_i + \cN(0,\rho^2)^{ \times n}$ and ${\bZ}_i =\h{ \bZ}_i + \cN(0,\rho^2)^{m \times v}.$ \label{item:update}
 Compute $  {\bY}(\tau) = \sum_{j: \tau_j=1} {\bY}_j $ and $ {\bZ}(\tau) = \sum_{j: \tau_j=1} {\bZ}_j .$ \label{item:estimate}
		\item Compute a matrix $\bU \in \R^{m \times t}$ whose columns are an orthonormal basis  for the column space of $\bY(\tau)$.
	\item Compute the singular value decomposition of $\mathbf{S} \bU \in \R^{v \times t}$. Let it be $\widetilde{\bU} \widetilde{\bSigma} \widetilde{\bV}^{\mathsf T}.$ 
	\item Compute the singular value decomposition of $ \widetilde{\bV} \widetilde{\bSigma}^{\dagger} \widetilde{\bU}^{\mathsf T} [\widetilde{\bU} \widetilde{\bU}^{\mathsf T}\bZ(\tau)]_k$. %, where $[\cdot]_k$ denotes the best rank-$k$ approximation. 
	Let it be $\bU' \bSigma' \bV'^{\mathsf T}$. 
	\item Output $\h{\bU}_k(\tau):=\bU \bU'$, $\h{\bSigma}_k(\tau):= \bSigma'$ and $\h{\bV}_k(\tau):=\bV'$. 
	\item Let $\mathbf{M}_k(\tau) := \bU_k(\tau)  \bSigma_k(\tau) \bV_k(\tau)^{\mathsf T}.$
\end{enumerate}	
}\end{minipage}
} \caption{Differentially private Low-rank Factorization Under Continual Release} \label{fig:continual}
\end{center}
\end{figure}
 A straightforward application of the analysis of Chan {\it et al.}~\cite{CSS} to~\thmref{streaming_naive} gives us the following
\begin{theorem}  \label{thm:continualfrob} Let $\bA$ be an $m \times n$ matrix with $\mathsf{nn}(\bA)$ non-zero entries with $m \leq n$. Let $\eta=\max \set{k,\alpha^{-1}}$. Then there is an $(\varepsilon,\delta)$-differentially private algorithm, {\scshape Private-Frobenius-Continual-$\lrf$} defined in~\figref{continual}, under $\priv_2$ that receives $\bA$ as a stream and outputs a rank-$k$ factorization $\widetilde{\bU}:=\h{\bU}_k(\tau), \widetilde{\bSigma} := \h{\bSigma}_k(\tau), \widetilde{\bV}:=\h{\bV}_k(\tau)$ under the continual release for  $T$ time epochs such that, with probability $9/10$,
	%The factorization outputted by the algorithm presented in~\figref{spacelow} computes a $k$-rank factorization of a matrix $\bA$ such that, with probability $99/100$ over $\bPhi, \bPsi \sim \cD_R$ and $\bS, \bT \sim \cD_A$, 
	 \begin{align*}
  %\| \bA - \bU_k \bSigma_k \bV_k^{\mathsf T}\|_F  \leq (1+\alpha) \|\bA - [\bA]_k\|_F + 
  %\| \bA - \bU_k \bSigma_k \bV_k^{\mathsf T}\|_F  \leq (1+\alpha) \|\bA - [\bA]_k\|_F + 
  \|  \bA  - \mathbf{M}_k \|_F \leq (1+\alpha) \| \bA - [\bA]_k \|_F  +  {O}\paren{\paren{ \sqrt{km} + \sqrt{ \frac{n \eta}{\alpha^2} } }  \frac{\sqrt{ \log(1/\delta)} \log T}{\varepsilon}  } 
  %O \paren{ \frac{ \sqrt{\log(1/\delta)}}{\varepsilon}\paren{ \sqrt{  kp \paren{1+\frac{1}{\alpha}}} +  \sqrt{\frac{kmn}{ (p + p\alpha)}}  } }.
   \end{align*} %where $O_\delta$ hides the $\poly \log(1/\delta)$ factor. 
%   where   $\bA_k$ is the best rank-$k$ approximation. %The space used by the algorithm is . 
%	The total time to perform the factorization is .%, where  denotes the sparsity of the matrix $\bA.$
where $\mathbf{M}_k=\widetilde{\bU} \widetilde{\bSigma} \widetilde{\bV}^{\mathsf T}$ and $\bA(\tau)$ is the matrix received till time $\tau$.
\end{theorem}

%The following theorem follows from the straightforward application of the analysis of Chen {\it et al.}~\cite{CSS12} 
\subsection{Differentially Private Continual Release Low Rank Factorization Under $\priv_1$} \label{app:continual_priv_2}
We can also convert the algorithm {\scshape Private-Space-Optimal-$\lrf$} to one that outputs a low-rank factorization under continual release by using less space than {\scshape Private-Continual-Frobenius-$\lrf$} and secure under $\priv_2$. We make the following changes to {\scshape Private-Continual-Frobenius-$\lrf$}: 
(i) Initialize $(\h{\bY}_c)_i, (\h{\bY}_r)_i,$ and $(\h{\bZ})_i$ as we initialize $\bY_c, \h{\bY}_r$ and $\h{\bZ}$ in~\figref{streaming_spacelowprivate}  for all $i \in [\log T]$,
(ii) we maintain $ (\bY_c)_j, (\h{\bY_c})_j $, $ (\bY_r)_j, (\h{\bY_r})_j$,  $\bZ_j,$ and $\h{\bZ}_j.$ 
A  straightforward application of the analysis of Chan {\it et al.}~\cite{CSS}  to~\thmref{low_space_private} gives us the following theorem.
\begin{theorem}  \label{thm:continuallowspace}Let $\bA$ be an $m \times n$ matrix with $\mathsf{nn}(\bA)$ non-zero entries with $m \leq n$. Let $\eta=\max \set{k,\alpha^{-1}}$.  Let $s=\max\{m,n\}, u=\min \set{m,n}$, $\kappa=(1+\alpha)/(1-\alpha)$,  and $\sigma_\mathsf{min}={16 \log(1/\delta) \sqrt{t \kappa \ln(1/\delta)}}/{\varepsilon}$.
Then there is an $(\varepsilon,\delta)$-differentially private algorithm under $\priv_1$ that receives $\bA$ as a stream and outputs a rank-$k$ factorization $\widetilde{\bU}:=\h{\bU}_k(\tau), \widetilde{\bSigma} := \h{\bSigma}_k(\tau), \widetilde{\bV}:=\h{\bU}_k(\tau)$ under the continual release for  $T$ time epochs such that, with probability $9/10$,
	%The factorization outputted by the algorithm presented in~\figref{spacelow} computes a $k$-rank factorization of a matrix $\bA$ such that, with probability $99/100$ over $\bPhi, \bPsi \sim \cD_R$ and $\bS, \bT \sim \cD_A$, 
	 \begin{align*}
  %\| \bA - \bU_k \bSigma_k \bV_k^{\mathsf T}\|_F  \leq (1+\alpha) \|\bA - [\bA]_k\|_F + 
  \| \begin{pmatrix} \bA(\tau) & \mathbf{0} \end{pmatrix} - \mathbf{M}_k \|_F \leq (1+\alpha) \| \bA(\tau) - [\bA(\tau)]_k \|_F  + O \paren{ \paren{ \sigma_\mathsf{min} \sqrt{u} + \varepsilon^{-1}  {\sqrt{ks  \ln(1/\delta)}} } \log T },
  %O \paren{ \frac{ \sqrt{\log(1/\delta)}}{\varepsilon}\paren{ \sqrt{  kp \paren{1+\frac{1}{\alpha}}} +  \sqrt{\frac{kmn}{ (p + p\alpha)}}  } }.
   \end{align*} %where $O_\delta$ hides the $\poly \log(1/\delta)$ factor. 
   where $\mathbf{M}_k=\widetilde{\bU} \widetilde{\bSigma} \widetilde{\bV}^{\mathsf T}$ and $\bA(\tau)$ is the matrix received till time $\tau$.
%   where   $\bA_k$ is the best rank-$k$ approximation. %The space used by the algorithm is . 
%	The total time to perform the factorization is .%, where  denotes the sparsity of the matrix $\bA.$
\end{theorem}

%!TEX root = low.tex

\section{Space Lower Bound for Low-rank Factorization When $\gamma \neq 0$} \label{app:lower}
This section is devoted to proving a lower bound on the space requirement for low-rank factorization with non-trivial additive error. It is well known that any private algorithm (not necessarily differentially private) incurs an additive error $o(\sqrt{k(m+n)})$~\cite{HR12} due to linear reconstruction attack. 
On the other hand, the only known space lower bound of Clarkson and Woodruff~\cite{CW09} holds when $\gamma =0$; therefore, one might hope to construct an improve space algorithm when we allow $\gamma \neq 0$. 
In this section, we  show that for any non-trivial values of $\gamma$, this is not the case. This directly implies that  our algorithm uses optimal space for a large range of parameters. %In~\appref{lower}, we prove the following lower bound. % is shown in~\appref{lower}.
\begin{theorem} \label{thm:lower}
Let $m,n,k \in \N$ and $\alpha >0$. Then the space used by any randomized single-pass algorithm for $(\alpha,5/6,O(m+n),k)$-$\lra$ in the general turnstile model is at least $\Omega((m+n)k/\alpha)$.
\end{theorem}
%We prove~\thmref{lower} in~\appref{lower}. %, we prove the following lower bound. ~
\thmref{lower} shows that {\scshape Private-Space-Optimal}-$\lrf$ uses optimal space when $\gamma=O(m+n)$ and $k \geq 1/\alpha$. 
If we set  $\alpha=\sqrt{2}-1$ (as in Hardt and Roth~\cite{HR12}) and note any non-trivial result implies that $\gamma = o(m+n)$, we have a matching lower bound for all $k \geq 3$. %If we consider the parameters of Hardt and Roth~\cite{HR12}, i.e., $\alpha= \sqrt{2}-1$, then we achieve optimal space when $k \geq 3$. % (they considered, ).

The space lower bound in the turnstile update model is shown by showing that any algorithm $\mathsf{Alg}$ in the turnstile model yields a single round communication protocol for some function $f$. The idea is as follows. On input $\bx$, Alice invokes $\mathsf{Alg}$ on its input to compute $\mathsf{Alg}(x)$. She then sends the state $\mathsf{st}$ to Bob, who computes $\mathsf{Alg}(\bx \| \by)$ using his input $\by$ and $\mathsf{st}$, and uses this to compute the function $f$. The communication is therefore the same as  the space required by the algorithm. In what follows, we use the notation $\bC_{:i}$ to denote the $i$-th column of the matrix $\bC$. %In our case, since we want a lower bound on one-pass streaming algorithms, we set $r=1$.

We give a reduction to the augmented indexing problem, $\aind$. It is defined as follows.

\begin{definition} ($\aind$ problem). Alice is given an $N$-bit string $\bx$ and Bob is given an index $\ind \in [N]$ together with $\bx_{\ind+1}, \cdots, \bx_N$. The goal of Bob is to output $\bx_{\ind}$. 
\end{definition}

The communication complexity for solving $\aind$ is well known due to the result of Miltersen {\it et al.}~\cite{MNSW98}.
\begin{theorem} \label{thm:aind}
The minimum  bits of communication required to solve $\aind$ with probability $2/3$, when the message is sent only in one direction, i.e., either from Alice to Bob or from Bob to Alice, is  $\Omega(n)$.  This lower bound holds  even if the index, $\ind$, and the string, $\bx$, is chosen uniformly at random.
\end{theorem}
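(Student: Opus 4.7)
The plan is to prove \thmref{aind} by the standard information-theoretic argument combining Yao's minimax principle, Fano's inequality, and the chain rule for entropy. By Yao, it suffices to exhibit a hard distribution over inputs under which every deterministic one-way protocol succeeding with probability at least $2/3$ must send $\Omega(N)$ bits; I take $\bx \in \{0,1\}^N$ and $\ind \in [N]$ uniform and independent, which is exactly the setting permitted by the last sentence of the theorem statement.

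First I would handle the Alice-to-Bob direction. Fix a deterministic protocol, let $M = M(\bx)$ be Alice's message, and let $\widehat{\bx}_\ind = f(M, \ind, \bx_{>\ind})$ be Bob's output, where $\bx_{>\ind} := (\bx_{\ind+1}, \ldots, \bx_N)$. Because $\Pr[\widehat{\bx}_\ind = \bx_\ind] \ge 2/3$ over the joint distribution, an averaging argument over $\ind$ shows that $\Pr[\widehat{\bx}_\ind = \bx_i \mid \ind = i] \ge 1/2 + \Omega(1)$ for a constant fraction of indices $i$; applying Fano's inequality at each such index gives $H(\bx_i \mid M, \bx_{>i}) \le c$ for some constant $c<1$, while $H(\bx_i \mid M, \bx_{>i}) \le 1$ holds trivially at the remaining indices. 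Summing over $i$ by the chain rule in reverse order,
\[
H(\bx \mid M) \;=\; \sum_{i=1}^{N} H(\bx_i \mid M, \bx_{>i}) \;\le\; c' N
\]
for some constant $c'<1$. Since $H(\bx) = N$, this yields $I(M;\bx) = N - H(\bx\mid M) \ge (1-c')N = \Omega(N)$, and hence $|M| \ge H(M) \ge I(M;\bx) = \Omega(N)$, as required.

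The Bob-to-Alice direction is symmetric: Bob's input is $(\ind, \bx_{>\ind})$ and his message $M'$ is a function of these; Alice then reconstructs $\bx_\ind$ from $M'$ together with her own input $\bx$. Fano applied to this reconstruction, with the chain rule taken over the coordinates of $\bx_{>\ind}$ after conditioning on $\bx_{\le \ind}$ and averaging over $\ind$, produces a bound of the form $H(\bx_{>\ind}\mid M', \bx_{\le \ind}) \le c'' N$, and the same conclusion $|M'| = \Omega(N)$ follows. The lift from deterministic to randomized protocols is by the usual Yao minimax argument against the uniform hard distribution. The only step that deserves some care is the averaging that passes from the $2/3$-success guarantee over the joint distribution to a per-index success probability strong enough to feed into Fano; this is routine and only costs constant factors in the final $\Omega(N)$ bound.
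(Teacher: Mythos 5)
The paper does not itself prove this statement; it cites Miltersen, Nisan, Safra, and Wigderson~\cite{MNSW98} for it, so you are supplying a proof where the paper supplies only a citation. Your argument for the Alice-to-Bob direction is the standard Fano-plus-chain-rule proof, and it is essentially correct: with $\bx$ and $\ind$ independent and uniform, independence of $M$ from $\ind$ lets you write
\[
H(\bx \mid M) = \sum_{i=1}^N H(\bx_i \mid M, \bx_{>i}),
\]
Markov's inequality applied to the $2/3$-success guarantee yields a constant fraction of indices $i$ with per-index error probability $p_e^{(i)}$ bounded away from $1/2$, and Fano then bounds the corresponding terms strictly below $1$, giving $I(M;\bx) = \Omega(N)$ and hence $|M| = \Omega(N)$. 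The averaging step you defer is routine but does bear spelling out, since the raw guarantee gives only $\E_{\ind}[p_e^{(\ind)}] \le 1/3$, not a per-index bound; Markov is what converts this into, say, at least $N/6$ indices with $p_e^{(i)} \le 2/5$.

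Your treatment of the Bob-to-Alice direction is not right, however. You assert it is ``symmetric'' and sketch a bound on $H(\bx_{>\ind}\mid M', \bx_{\le\ind})$, but this misreads the setting. If Alice receives Bob's message and produces the output, she already holds all of $\bx$ and needs only $\ind$, which Bob can send in $O(\log N)$ bits, so there is no $\Omega(N)$ lower bound in this reading. If instead Bob must produce the output while only sending a message and never receiving one, then he has no information about $\bx_{\ind}$ beyond his prior, so no protocol achieves success $2/3$ and the lower bound is vacuous. Either way the Bob-to-Alice case is not the mirror image of Alice-to-Bob, the Fano-style argument you gesture at does not apply there, and the ``or from Bob to Alice'' clause in the paper's statement is either vacuous or false depending on who outputs. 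The substantive content of the theorem is the Alice-to-Bob bound, and it was a mistake to try to reproduce the other direction by symmetry rather than noticing it is degenerate.
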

 
Before we state our result and its proof, we fix a notation. For a matrix $\bA$ and set of indices $C$, we use the notation $\bA(C)$ to denote the submatrix formed by the columns indexed by $C$.

%\begin{theorem} %\label{thm:lower}
%\noindent {\bf Restatement of~\thmref{lower}.} 
%Let $m,n,k \in \N$ and $\alpha >0$. Then any randomized single-pass algorithm for $(\alpha,5/6,O(m+n),k)$-$\lra$ in the general turnstile model uses $\Omega((m+n)k/\alpha)$ words.
 %\end{theorem} 
  \begin{proof}[Proof of~\thmref{lower}.]
We adapt the proof of Clarkson and Woodruff~\cite{CW09} for the case when $\gamma \neq 0$. Suppose $m \geq n$ and let $a=k/20 \alpha$. Without loss of generality, we can assume that $a$ is at most $n/2$. Let $\ell$ be the word size. We assume Alice has  a string $\bx \in \set{-1,+1}^{(m-a)a}$ and Bob has an index $\ind \in [(m-a)a]$. The idea is  to define the matrix $\bA$ with high Frobenius norm. The matrix $\bA$ is the summation of the matrix $\widetilde{\bA}$ constructed by Alice and $\bar{\bA}$ constructed by Bob. We first define how Alice and Bob construct the instant $\bA= \widetilde{\bA}+\bar{\bA}$.

 Alice constructs its matrix $\widetilde{\bA}$ as follows. Alice partitions the set $\set{1,\cdots, a}$ in to $\ell$ disjoint sets $I_1, \cdots, I_\ell$ such that $I_i:= \set{(i-1)a/\ell +1, \cdots ia/\ell}.$
Let $\bM \paren{I_i}$ be an $(m-a) \times a/\ell$ matrix for all $1 \leq i \leq \ell$. We form a bijection between entries of $\bx$ and the entries of $\bM$ in the following manner. Every entry of $\bM\paren{I_i}$ is defined by a unique bit of $\bx$, i.e.,  $\bM\paren{I_i}_{j,k} = (-1)^{\bx_{d}} (10)^i$ for $d=(i-1)(m-a)a/\ell + (k-1)(m-a) + j$. The matrix $\widetilde{\bA}$ is now defined as follows.
\[ \widetilde{\bA} = \begin{pmatrix}  \mathbf{0}^{a \times a} &  \mathbf{0}^{a \times (n-a)} \\  \mathbf{M} &  \mathbf{0}^{(m-a) \times (n-a)} \end{pmatrix}, \]
$\text{where}~\mathbf{M} = \begin{pmatrix} \bM_{I_1} & \cdots & \bM_{I_\ell} \end{pmatrix}$.

Suppose Bob is given an index $\ind \in [(m-a)a]$ such that $\bx_\ind$ corresponds to the sub-matrix $\bM\paren{I_\theta}$ for some $1 \leq \theta \leq \ell$. Then we can assume that Bob also knows every entry in the sub-matrix $\bM\paren{I_{\theta'}}$ for $\theta' >\theta$.  Bob forms a second level partition of the columns of $\bM\paren{I_\theta}$ in to equal size groups $G_1, \cdots, G_{a/k \ell}$. Due to our construction, there exists a unique $r$ such that $\bx_\ind$ maps to an entry in the sub-matrix formed by columns indexed by one of the second level partition $G_r$. Let  $C = \set{c, c+1, \cdots, c+k-1}$ be the columns corresponding to the $k$-size group of $I_\theta$ in which $\ind$ is present. As its input, Bob streams a matrix $\bar{\bA}$ which is an all-zero matrix, except for entries $\bar{\bA}_{c+i,c+i} = \zeta$ for $0 \leq i \leq k-1$ and $\zeta$ to be chosen later. In other words, Bob inserts a scaled identity matrix in the stream, where the scaling parameter $\zeta$ is large enough to make sure that most of the error of any randomized algorithm is due to other columns of $\bA$. As we shall see later, we set the value of $\zeta$ as a large polynomial in the approximation error of the algorithm. 

Let $\cA$ be the algorithm that computes $\lra$ under the turnstile model. Alice feeds its matrix $\widetilde{\bA}$ to $\cA$ in the turnstile manner and send the state of the algorithm by the end of her feed to Bob. Bob  uses the state received by Alice and feed the algorithm $\cA$ with its own matrix $\bar{\bA}$ in a turnstile manner. Therefore, the  algorithm $\cA$ gets as input a matrix $\bA = \widetilde{\bA} + \bar{\bA}$ and it is required to output a rank-$k$ matrix $\bB$ with additive error $\gamma=O(m+n)$. We will show that any such   output allows us to solve $\aind$. Denote by ${\bA}(C)$ the sub-matrix formed by the columns $C:=\set{c, c+1, \cdots, c+k-1}$.

Let us first understand the properties of the constructed matrix $\bA$. To compute the Frobenius norm of this matrix, we need to consider two cases: the case for sub-matrices in which $\ind$ belongs, i.e, $\bM\paren{I_r}$, and the rest of the matrix. For the sub-matrix corresponding to the columns indexed by ${C}$, the columns of $\bA\paren{I_\theta}$ have Euclidean length $(\zeta^2 + (m-a)100^\theta)^{1/2}$. %The rest of the columns have Euclidean length $10^\theta(m-a)^{1/2}$. 
For $\theta' <\theta$, every columns have Euclidean norm  $(a (m-a))^{1/2}10^{\theta'}$. 
Therefore, we have the following:
\begin{align*}
	\| {\bA} - [{\bA}]_k \|_F^2 &\leq \frac{((a-k)(m-a)100^\theta}{\ell} + \sum_{\theta' <\theta} \frac{a (m-a)100^{\theta'}}{\ell} \\
		& \leq \frac{((a-k)(m-a)100^\theta}{\ell} +  \frac{a (m-a)100^\theta}{99 \ell} \\
		&\leq  2 \cdot (100)^\theta m^2/\ell = \Gamma
\end{align*}

In order to solve $(\alpha,\beta,\gamma,k)$-$\lrf$, the algorithm needs to output a matrix $\bB$ of rank at most $k$ such that, with probability $5/6$ over its random coins, 
\begin{align*}
	\| {\bA} - \bB \|_F^2 &\leq \sparen{ (1+\alpha) \sqrt{\Gamma} + \gamma }^2 \leq 2(1+\alpha) \Gamma + 2 \gamma^2 \\
	& \leq 2\Gamma + 100^\theta k(m-a) \paren{\frac{1}{10} + \frac{1}{99}}  + 2\gamma^2\\
		&\leq 4 \cdot (100)^\theta m^2/\ell  + \frac{100^\theta k(m-a) }{5}  + 2\gamma^2
\end{align*}

Let us denote by $\Upsilon:=4 \cdot (100)^\theta m^2/\ell + 100^\theta k(m-a) \paren{\frac{1}{10} + \frac{1}{99}}+ 2\gamma^2$. 
%Since $ 2 \cdot (100)^j m^2/\ell \leq \Gamma$, we have $\| \widetilde{\bA} - \bB \|_F^2 \leq 2 \cdot (100)^j m^2/\ell + 2 \gamma^2 = \Upsilon + 2 \gamma^2.$ 
The proof idea is now to show the following:
\begin{description}
	\item [(i)] Columns of $\bB$ corresponding to index set in $C$ are linearly independent. 
	\item [(ii)] Bound the error incurred by $\| {\bA} - \bB \|_F$ in terms of the columns indexed by $G_r$.
\end{description}

The idea is to show that most of the error is due to the other columns in $\bB$; and therefore, sign in the submatrix $\bA(C)$ agrees with that of the signs of those  in the submatrix $\bB(C)$. This allows Bob to solve the $\aind$ problem as Bob can just output the sign of the corresponding position.

Let $$R:=\set{ra/k+1, \cdots, (r+1)a/k}$$ and $$C:=\set{c,\cdots, c+k-1}.$$ Let $\bY$ be the submatrix of $\bB$ formed by the rows indexed by $R$ and columns indexed by $C$.

The following lemma proves that when $\zeta$ is large enough, then the columns of $\bB$ corresponding to index set $C$ are linearly independent. This proves part~(i) of our proof idea.
\begin{lemma} \label{lem:independent}
Let $\bB(C):= [\begin{matrix} \bB_{:c} & \cdots \bB_{:c+k-1} \end{matrix}]$ be the columns corresponding to the sub-matrix formed by columns $c, \cdots, c+k-1$ of $\bB$. If $\zeta \geq 2\Upsilon^2$, then the columns of $\bB(C)$ spans the column space of $[\bA]_k$.
\end{lemma}
\begin{proof}
We will prove the lemma by considering the $k \times k$ sub-matrix, say $\bY$. Recall that $\bY$ is a submatrix  of $\bB$ formed by the rows indexed by $R$ and the columns indexed by $C$. For the sake of brevity and abuse of notation, let us denote the restriction of $\bB$ to this sub-matrix $\bY:=[\bY_{:1}, \cdots , \bY_{:k}]$. In what follows, we prove a stronger claim that the submatrix $\bY$ is a rank-$k$ matrix. 

Suppose, for the sake of contradiction that the vectors $\set{\bY_{:1}, \cdots , \bY_{:k}}$ are linearly dependent. In other words, there exists a vector $\bY_{:i}$ and real numbers $a_1, \cdots, a_k$, not all of which are identically zero, such that 
\[ \bY_{:i} = \sum_{j=1, j \neq i}^k a_j \bY_{:j}. \]

From the construction, since Bob inserts a sub-matrix $\zeta \I_k$, we know that 
\begin{align}
	\sum_{j=1}^k (\bY_{j,j} - \zeta)^2 &\leq \| \bA - \bB \|_F^2 \leq \Upsilon . \label{eq:equal} \\
	\sum_{j=1}^k \sum_{p \neq j} \bY_{p,j}^2 &\leq \| \bA - \bB \|_F^2 \leq \Upsilon. \label{eq:neq} 	
\end{align}
From~\eqnref{equal} and choice of $\zeta$,  for all $j$, we have $\bY_{j,j} \geq \Upsilon^2$. Further,~\eqnref{neq} implies that $\bY_{p,j} \leq \sqrt{\Upsilon} .$  We have 
\[ \bY_{i,i} = \sum_{j=1, j \neq i}^k a_j \bY_{i,j} \geq \Upsilon^2  
\]
{imply that there is an $p \in \set{1,\cdots, k} \backslash \set{i}$ such that}~ $|a_{p}| \geq  \frac{\Upsilon^2 }{k\sqrt{\Upsilon}}.$ 

Let  $\tilde{i}$ be the index in $\set{1,\cdots, k} \backslash \set{i}$ for which $|a_{\tilde{i}} |$ attains the maximum value. We have $|a_{\widetilde{i}}\bY_{\tilde{i},\tilde{i}}| \geq |a_{\widetilde{i}}| \Upsilon^2$ and  $|a_j \bY_{\tilde{i},j}| \leq |a_{\widetilde{i}}| \sqrt{\Upsilon}$.
Now consider the $\tilde{i}$-entry of $\bY_{:i}$. Note that $\tilde{i} \neq i$. Since $\Upsilon$ depends quadratically on $m$ and $\gamma$, we have
\[ \left|  \sum_{j=1, j \neq i}^k a_j \bY_{\tilde{i},j}  \right|  \geq | a| ( \Upsilon^2 - k \sqrt{\Upsilon}  ) \geq  ( \Upsilon^2 - k \sqrt{\Upsilon}  ) \frac{\Upsilon^2 }{k\sqrt{\Upsilon}} > \sqrt{\Upsilon} .\] This is a contradiction because $\bY_{p,j} \leq \sqrt{\Upsilon} $ due to~\eqnref{neq} for $p \neq j$. This completes the proof.
\end{proof}

For the sake of brevity, let $\bV_{:1}, \cdots, \bV_{:k} $ be the columns of $\bB(C)$ and $\widetilde{\bV}_{:1}, \cdots, \widetilde{\bV}_{:k}$ be the restriction of these column vectors to the rows $a+1, \cdots, m$. In other words, vectors $\widetilde{\bV}_{:1}, \cdots, \widetilde{\bV}_{:k}$ are the column vectors corresponding to the columns in $\mathbf{M}$. We showed in~\lemref{independent} that the columns  $\bB(C)$ spans the column space of $\bB$. We can assume that the last $n-a$ columns of $\bB$ are all zero vectors because $\bB$ is a rank-$k$ matrix. We can also assume without any loss of generality that, except for the entries in the row indexed by $R$, all the other entries of $\bB(C)$ are zero. This is because we have shown in~\lemref{independent}, we showed that the submatrix of $\bB(C)$ formed by rows indexed by $R$ and columns indexed by $C$ have rank $k$. 

Now any row $i$ of $\bB$ can be therefore represented as $\sum \eta_{i,j} \bV_{:j}$, for real numbers $\eta_{i,j}$, not all of which are identically zero. The following lemma proves part~(ii) of our proof idea. For

\begin{lemma} \label{lem:a}
 Let $\bV_{:1}, \cdots, \bV_{:k} $ be as defined above. Then  column $i$ of $\bB$ can be written as linear combination of real numbers $\eta_{i,1}, \cdots \eta_{i,k}$ of the vectors $\bV_{:1}, \cdots, \bV_{:k} $ such that, for all $j$ and $i \in R$, $ \eta_{i,j}^2 \leq 4/\Upsilon^3 $.
\end{lemma}
\begin{proof}
Let $\bM_{:1}, \cdots \bM_{:a}$ be the columns of $\bM$, where $\bM$ is the $(m-a)\times a$ submatrix of the matrix $\widetilde{\bA}$ corresponding to the input of Alice. We have
\begin{align*}
\Upsilon & \geq \| \bA - \bB \|_F^2  \sum_{i=1}^k (\zeta - \bV_{r(a/k)+i,i})^2 + \sum_{i=1}^k \sum_{j \neq i} \bV_{r(a/k)+i,j}^2 + \sum_{i=1}^k \| \bM_{:r(a/k) + i} - \widetilde{\bV}_{:i} \|^2  \nonumber \\
	&\quad + \sum_{i \notin R} \sum_{j=1}^k \paren{ \eta_{i,j} \bV_{ra/k + j,j} + \sum_{j' \neq j} \eta_{i,j'} \bV_{ra/k+j,j'}  }^2 + \sum_{i \notin R} \left\| \bM_{:i} - \sum_{j=1}^k \eta_{i,j} \widetilde{\bV}_{:j} \right\|^2.
\end{align*}
As in the proof of~\lemref{independent}, we have $|\bV_{r(a/k)+i,j}^2 | \leq \sqrt{\Upsilon}$ and $|\bV_{r(a/k)+i,i}| \geq \Upsilon^2$. Let $ {j}_i$ be the index such that $|\eta_{i, j_i} |$ is the maximum. Then the above expression is at least $| \eta_{i,j_i}|^2 (\Upsilon^2  - k \sqrt{\Upsilon} )^2 \geq | \eta_{i,j_i}|^2 \Upsilon^4 /4$. Since this is less than $\Upsilon $, the result follows from the definition of $j_i$.
 \end{proof}

We can now complete the proof. First note that since $\bM$ is a signed matrix, each $\widetilde{\bV}_i$ in the third term of the above expression is at least $\sqrt{\Upsilon}$. Therefore, for all $i \notin S$ and all $j$ 
$$ \left| \sum_{j=1}^k \eta_{i,j} \widetilde{\bV}_{:j} \right| \leq \frac{4k\Upsilon^{1/2} }{\Upsilon^{3/2}} = \frac{4k}{\Upsilon}. $$ 

As $\bM_{:i}$ is a sign vector and if $\gamma=O(m+n) = O(m)$, this implies that 
\begin{align*}
 \sum_{i \notin R} \left\| \bM_{:i} - \sum_{j=1}^k \eta_{i,j} \widetilde{\bV}_{:j} \right\|^2 &\geq \sum_{i \notin R} \| \bM_{:i} \|^2 \paren{1 - \frac{4k}{\Upsilon} }  \geq O(  (100)^\theta m^2 /\ell) - O(100^\theta a) \\
 %	& \geq  \frac{((a-k)(m-a)100^\theta}{\ell} +\frac{a (m-a)100^{j}}{99 \ell} + O(100^\theta a) \\
\sum_{i=1}^k \left\| \bM_{:r(a/k) + i} - \widetilde{\bV}_{:i} \right\|^2 &= \sum_{i=1}^k \sum_{j=1}^{m-a} ( \bM_{j,r(a/k) + i} - (\widetilde{\bV}_i)_j )^2\leq  \frac{100^\theta k(m-a)}{5}  + O(100^\theta a) 
 \end{align*}
 
 Now, since there are in total $k(m-a)$ entries in the submatrix formed by the columns indexed by $C$, at least $1- \paren{\frac{1}{10} + \frac{1}{99} +o(1)}$ fraction of the entries have the property that the sign of $\bM_{j,ra/k+i}$ matches the sign of $\widetilde{\bV}_{j,i}$. Since $\ind$ is in one of the columns of $\bM_{:ra/k+1}, \cdots \bM_{:ra/k+k}$, with probability at least $1- \paren{\frac{1}{10} + \frac{1}{99} +o(1)}$, if Bob outputs the sign of the corresponding entry in $\bB$, then Bob succeeds in solving $\aind$. This gives a lower bound of $\Omega((m-a)a) =\Omega(mk \ell/\alpha)$ space. The case when $m \leq n$ is analogous and gives a lower bound of $\Omega(nk \ell/\alpha)$. Thus, there is a lower bound of $\Omega((m+n)k \ell/\alpha)$.
\end{proof}

%!TEX root = low.tex

\section{Noninteractive Local Differentially Private PCA} \label{app:local}
In this section, we give our noninteractive local differentially private principal component analysis (see~\ldefref{dppca}). 
In the \emph{local} model \cite{EGS03}, each individual applies a differentially
private algorithm locally to their data and shares only the output of
the algorithm---called a report---with a server that
aggregates users' reports. %Generally, local model entails a different set of algorithmic techniques from the streaming model. 
In principle, one could
also use cryptographic techniques such as secure function evaluation
to simulate central model algorithms in a local model, but such
algorithms currently impose bandwidth and liveness constraints that
make them impractical for large deployments.
  A long line of work studies what is achievable by local differentially private algorithms~\cite{agrawal2009frapp,BS15,DJW13,rappor,EGS03,HKR,KLNRS11,MS06,W65}.
% \cite{DworkMNS06,MishraS06,KLNRS08,etc} (arguably, the study of such
% algorithms predates differential privacy, since randomized
% response~\cite{Warner65} and $\gamma$-amplification~\cite{EGS03}
% each satisfy some version of LDP).
Tight
upper and lower bounds known on the achievable accuracy for many
problems; however, low-rank factorization (and even low-rank approximation) has not been studied in this model. The naive approach to convert existing algorithms to locally private algorithms leads to a large additive error and are interactive. On the other hand, low-rank factorization is a special optimization problem and the role of interaction in local differentially private optimization was recently investigated by Smith {\it et al.}~\cite{STU16}. 

\begin{comment}
\subsection{Noninteractive Local Differentially Private $\lrf$ Under $\priv_1$} 
\begin{theorem} \label{thm:local}
 	Let  $m, n \in \N$ and $\alpha,\varepsilon,\delta$ be the input parameters.  Let $k$ be the desired rank of the factorization,  $s=\max\{m,n\}$, and $\eta=\max \set{k,\alpha^{-1}}$. Given a private input matrix $\bA \in \R^{m \times n}$ distributed in a row-wise manner amongst $n$ users, the output ${\bU}$ of  the algorithm, {\scshape Private-Local-$\lrf$}, presented in~\figref{streaming_private}, is a $k$-rank factorization and 
	satisfies the following properties:
	\begin{enumerate}
	\item  {\scshape Private-Local-$\lrf$} is a non-interactive $(\varepsilon,\delta)$-local differentially private under $\priv_2$. \label{local_privacy}
	\item  With probability $9/10$ over the coins of {\scshape Private-Local-$\lrf$}, \label{local_utility}
	 \begin{align*}
	   \| \bA - \bB \|_F \leq (1+ O(\alpha)) \| \bA - [\bA]_k \|_F + {O}\paren{ v\sqrt{m\rho_2} + \sqrt{vtm  \rho_2 } } .
	  \end{align*} 
	\item The bits of communication used by every users in {\scshape Private-Streaming-Frobenius-$\lrf$} is $O(v^2 \log (mn))$. \label{local_bits}
%	\item The time required to compute the factorization is $O\paren{(\mathsf{nn}(\bA) + \frac{(m+n	 \alpha^{-2})  \eta^2  \log^2(k/\delta)}{\alpha^{2}} +\frac{ \eta^3  \log^3(k/\delta)}{\alpha^{s3}}}$. \label{local_time}
	\end{enumerate}
 \end{theorem}

 \end{comment}

\subsection{Noninteractive Local Differentially Private $\lrf$ Under $\priv_2$} 

\begin{theorem} \label{thm:local}
 	Let  $m, n \in \N$ and $\alpha,\varepsilon,\delta$ be the input parameters.  Let $k$ be the desired rank of the factorization and $\eta=\max \set{k,\alpha^{-1}}$. Let $t=O(\eta \alpha^{-1}  \log (k/\delta))$ and $v=O(\eta \alpha^{-2} \log(k/\delta))$. Given a private input matrix $\bA \in \R^{m \times n}$ distributed in a row-wise manner amongst $m$ users, the output ${\bU}$ of  the algorithm, {\scshape Private-Local-$\lrf$}, presented in~\figref{local_spacelow}, is a $k$-rank orthonormal matrix such that
	\begin{enumerate}
	\item  {\scshape Private-Local-$\lrf$} is a non-interactive $(\varepsilon,\delta)$-local differentially private under $\priv_2$. \label{local_privacy}
	\item  With probability $9/10$ over the coins of {\scshape Private-Local-$\lrf$}, \label{local_utility}
	 \begin{align*}
	   \| \bA - \bU \bU^{\mathsf T}\bA \|_F \leq (1+ O(\alpha)) \| \bA - [\bA]_k \|_F + {O}\paren{ v  \sqrt{m \log (1/\delta)}/\eps } .
	  \end{align*} 
	\item The words of communication used by every users in {\scshape Private-Local-$\lrf$} is $O(v^2)$ words. \label{local_bits}
%	\item The time required to compute the factorization is $O\paren{(\mathsf{nn}(\bA) + \frac{(m+n	 \alpha^{-2})  \eta^2  \log^2(k/\delta)}{\alpha^{2}} +\frac{ \eta^3  \log^3(k/\delta)}{\alpha^{s3}}}$. \label{local_time}
	\end{enumerate}
 \end{theorem}

 \begin{figure}[t!]
 \begin{center}
\fbox{
\begin{minipage}[l]{6in}
{
\begin{center} \underline{\scshape Private-Local-$\lrf$} \end{center}
{\bf Initialization.} Let $\eta=\max\set{k,\alpha^{-1}}$, $t=O(\eta \alpha^{-1}  \log (k/\delta)), v=O(\eta \alpha^{-2} \log(k/\delta))$. Let  $\rho_1={\sqrt{(1+\alpha)\ln(1/\delta)}}/{\varepsilon},~\rho_2={(1+\alpha)\sqrt{\ln(1/\delta)}}/{\varepsilon}$. 
	 Sample $\bPhi \sim \cN(0,1)^{n \times t}$, $\bPsi \sim \cN(0,1)^{t \times m}$,  $\bS \sim \cN(0,1)^{v \times m}$,  and $\bT \sim \cN(0,1)^{n \times v}$. Make them public.

\medskip \noindent \textbf{User-$i$ computation.} On input the row $\bA_{i:}$, user-$i$ does the following:
\begin{enumerate}
	 \item Sample  $\bN_{1,i} \sim \cN(0,\rho_1^2)^{1 \times t}$, $\bN_{2,i} \sim \cN(0,\rho_2^2)^{t \times v}$ and $\bN_{3,i} \sim \cN(0,\rho_2^2)^{v \times v}$. %, $\bN_1 \sim \cN(0,\rho_2^2)^{v \times v}$, and $\bN_1 \sim \cN(0,\rho_2^2)^{v \times v}$. 	
%	 \item Set $\sigma_\mathsf{min}={16 \log(1/\delta) \sqrt{t (1+\alpha)(1-\alpha)^{-1} \ln(1/\delta)}}/{\varepsilon}$.
	\item Set $\widehat{\bA}_{i:} \in \R^{m \times n}$ such that every row other than row-$i$ is an all zero vector. 
	Compute $\bY_{i:} = {\bA}_{i:} \bPhi + \bN_{1,i}$, $\widetilde{\bY}_{i:} = \bPsi \widehat{\bA}_{i:} \bT + \bN_{2,i} $, and $\bZ_{i:} = \bS \widehat{\bA}_{i:} \bT + \bN_{3,i}.$
\end{enumerate}

\medskip \noindent \textbf{Server side computation.} Once the server receives the reports from all the users, it follows the following steps.
\begin{enumerate}
	\item Form $\bY$ whose row-$i$ is $\bY_{i:}$. Compute $\bZ = \sum \bZ_{i:}$  and $\widetilde \bY  = \sum \widetilde{\bY}_{i:}$. Compute $\widehat \bY = \bS \bY$.
%	Compute a matrix $\bU \in \R^{m \times t}$ whose columns are orthonormal basis  for the column space of $\bY$.
%	\item Compute a SVD of $\mathbf{S} \bY := \widetilde{\bU}_s \widetilde{\bSigma}_s \widetilde{\bV}_s^{\mathsf T} \in \R^{v \times t}$.
%and a SVD of $ \bV\mathbf{T}^{\mathsf T}:=\widetilde{\bU}_t \widetilde{\bSigma}_t \widetilde{\bV}_t^{\mathsf T} \in \R^{t \times v}.$ 
	\item  Compute $\widetilde \bX:=\argmin_{\mathsf{rk}(\bX)\leq k} \| \widehat \bY   \bX  \widetilde \bY  - \bZ \|_F$. 
	Compute a SVD of $\widetilde \bX$. Let it is be $\bU' \bSigma' \bV'^{\mathsf T}$. 
	\item Output the orthonormal basis $\bU$ for the span of $\bY \bU'$. 
\end{enumerate}

}
\end{minipage}
} \caption{Non-interactive Local Differentially private $\lrf$ Under $\priv_1$} \label{fig:local_spacelow}
\end{center}
\end{figure}

\begin{proof}
The local privacy is easy to follow from the Gaussian mechanism and as in~\lemref{low_space_private} with the choice of $\rho_1$ and $\rho_2$.  For the communication cost, note that every user $i$ has to send a sketch $\bY_{i:}$, $\widehat \bY_{i:}$, and $\bZ_{i:}$. The sketch $\bY_{i:}$ is a real $1 \times t$ matrix, $\widehat \bY_{i:}$ is an $t \times v$ real matrix, and $\bZ_{i:}$ is a $v \times v$ real matrix. %Therefore, if we make an assumption (made in earlier works like Clarkson and Woodruff~\cite{CW09} and Boutsidis {\it et al.}~\cite{BWZ16}) that every entries can be represented by $\log (nm)$ bits, t
The total communication cost is $O((tv +v^2) \log (nm))$ words. Since $t \leq v$, the result on the communication cost follows.

We now prove \textcolor{red}{Part}~\ref{local_utility} of~\thmref{local}. Let $\bN_1$ be a random Gaussian matrices whose row-$i$ is $\bN_{1,i}$. Let $\bN_2 = \sum \bN_{2,i}, \bN_3 = \sum \bN_{3,i}$. Note that $\bN_1 \sim \cN(0,\rho_1^2)^{m \times t}$, $\bN_2 \sim \cN(0,m\rho_2^2)^{t \times v}$, and $\bN_2 \sim \cN(0,m\rho_2^2)^{v \times v}$. Let $\bY$  be the matrix whose row-$i$ is $\bY_{i:}$. Further, $\bZ = \sum \bZ_{i:}$  and $\widetilde \bY  = \sum \widetilde{\bY}_{i:}$. If the matrix distributed among the users is $\bA$, then it means that $\bY = \bPhi \bA + \bN_1$, $\widetilde \bY = \bPsi \bA \bT + \bN_3$ and $\bZ = \bS \bA \bT + \bN_2$.

 Let the singular value decomposition of $[\bA]_k$ be $[\bA]_k = \bU_k \bSigma_k \bV_k^\mathsf T$. Let $\bC= \bPsi( \bA  + \bPsi^\dagger \bN_2 \bT^\dagger )$. We will use~\lemref{phi} to relate  $\min_{\mathsf{rk}(\bX)\leq k} \| \bY \bX  \bC  -(\bA  + \bS^\dagger \bN_3 \bT^\dagger) \|_F$ with $\| \bA - [\bA]_k \|_F$. Set $\bPhi= \bPsi$, $\mathbf{P}=[\bA]_k$, $\mathbf{Q}=\bA  +  \bPsi^\dagger\bN_2 \bT^\dagger$ in~\lemref{phi}.
For 
$$\widetilde{\bX}: = (\bPsi [\bA]_k)^\dagger (\bPsi \bA  + \bN_2 \bT^\dagger) = (\bPsi [\bA]_k)^\dagger \bC =\argmin_{\bX} \|\bPsi ([\bA]_k \bX -  \bA + (\bPsi^\dagger\bN_2 \bT^\dagger)) \|,$$ 
we have with probability $1-\delta$ over $\bPsi \sim \cD_R$,
  \begin{align} 
  \|  [\bA]_k \widetilde{\bX} - ( \bA +  \bPsi^\dagger\bN_2 \bT^\dagger) \|_F &\leq (1+\alpha) \min_{\bX} \| [\bA]_k \bX - ( \bA +  \bPsi^\dagger\bN_2 \bT^\dagger) \|_F  \nonumber \\
  		&\leq (1+\alpha)  \| [\bA]_k - \bA\|_F + (1+\alpha)\|  \bPsi^\dagger\bN_2 \bT^\dagger \|_F \label{eq:phipsi2}. %\leq (1+\alpha) \| \bA^{\mathsf T} - [\bA]_k^{\mathsf T} \|_F
  \end{align}
  In the above, the second inequality follows by setting $\bX = \bV_k  \bV_k^\mathsf T$. %Note that by construction, $\bX$ has rank at most $k$. 
   %$\bX = \begin{pmatrix}  \I_k & \mathbf{0}^{k \times (n-k)}  \\ \mathbf{0}^{(n-k) \times k} & \mathbf{0}^{(n-k) \times (n-k)}   \end{pmatrix}.$ 
%    \begin{align*} 
%  \|  [\bA]_k \widetilde{\bX} - \bA \|_F \leq (1+\alpha)  \| [\bA]_k - \bA\|_F + (2+\alpha)\|  \bPsi^\dagger\bN_2 \bT^\dagger\|_F %\leq (1+\alpha) \| \bA^{\mathsf T} - [\bA]_k^{\mathsf T} \|_F
 % \end{align*}
%Further, with probability $1-\delta$ over $\bT \sim \cD_A$%Since Frobenius norm is preserved under transpose, we have 
%\[
%  \| [\bA]_k \widetilde{\bX} - \bA  \|_F \leq (1+\alpha)  \| \bT^{\mathsf T} [\bA]_k^{\mathsf T} \widetilde{\bX} - \bT^{\mathsf T}\bA^{\mathsf T}  \|_F \leq (1+\alpha)^3  \| [\bA]_k - \bA\|_F + (1 + \alpha)(2+\alpha)\|  \bPsi^\dagger\bN_2 \|_F %\leq (1+\alpha) \| \bA^{\mathsf T} - [\bA]_k^{\mathsf T} \|_F
%\]  
% \begin{align}
 % \| [\bA]_k (\bPsi [\bA]_k )^\dagger  \bC   - \bA  \|_F \leq (1+\alpha) \| \bA - [\bA]_k \|_F + (2+\alpha) \| \bPsi^\dagger \bN_2 \bT^\dagger \|_F. 
% \end{align}
 
 Let $\mathbf{W}^\mathsf T := \widetilde \bX = (\bPsi [\bA]_k)^\dagger \bC$. 
 We now use~\lemref{phi} on the following regression problem: 
  \[\min_\bX \| \bPhi^{\mathsf T} (\mathbf{W} \bX - \bB)  \| \quad \text{and} \quad \min_{\bX} \|  \mathbf{W} \bX - \bB \|_F, \quad \text{where}~\bB = (\bA + \bN_1 \bPhi^\dagger)^{\mathsf T} \]
  with the candidate solutions
  \[ \widehat{\bX} = \argmin_{\bX} \| \bPhi^{\mathsf T} (\mathbf{W} \bX - \bA) \|_F \quad \text{and} \quad \widetilde \bX = \argmin_{\bX} \| (\mathbf{W} \bX - \bA) \|_F \]

  One of the candidate solutions to $\argmin_{\bX} \| \bPhi^{\mathsf T} (\mathbf{W} \bX - \bA) \|_F$ is $\widehat{\bX} := (\bPhi^{\mathsf T} \mathbf{W})^\dagger (\bPhi^{\mathsf T}\bB)$.  Since $[\bA]_k$ has rank $k$,~\lemref{phi} and~\eqnref{phipsi2} gives with probability $1-\delta$ over $\bPsi \sim \cD_R$ %using ,  we have
  \begin{align*}
%  	\| (\bN_1 \bPhi^\dagger + \bA) \bPhi (\mathbf{W}^{\mathsf T} \bPhi)^\dagger \mathbf{W}^{\mathsf T} - \bB^{\mathsf T}\|_F 	&\leq (1+\alpha)\min_\bX \| \bX^{\mathsf T} \mathbf{W}^{\mathsf T} - \bB^{\mathsf T} \|_F \\
  	\| \widehat \bX^\mathsf T \mathbf{W}^{\mathsf T} - \bB^{\mathsf T}\|_F 	&\leq (1+\alpha)\min_\bX \| \bX^{\mathsf T} \mathbf{W}^{\mathsf T} - \bB^{\mathsf T} \|_F 
	\leq (1+\alpha) \| [\bA]_k (\bPsi [\bA]_k)^\dagger \bC  - \bB^{\mathsf T} \|_F \\
	& \leq (1+\alpha) \| [\bA]_k (\bPsi [\bA]_k)^\dagger \bC  - \bA  \|_F +(1+\alpha) \| \bN_1 \bPhi^\dagger \|_F \\
	&\leq (1+\alpha)^2 \| \bA - [\bA]_k \|_F + (1+\alpha)(2+\alpha) \| \bPsi^\dagger \bN_2 \bT^\dagger \|_F +(1+\alpha) \| \bN_1 \bPhi^\dagger \|_F 
	\intertext{This in particular implies that}
 \| \widehat \bX^\mathsf T \mathbf{W}^{\mathsf T} - \bA\|_F &\leq (1+\alpha)^2 \| \bA - [\bA]_k \|_F + (1 + \alpha)(2+\alpha) \| \bPsi^\dagger \bN_2 \bT^\dagger\|_F +(2+\alpha) \| \bN_1 \bPhi^\dagger \|_F . 
   \end{align*}

Let 
\[  \tau_1 = (1 + \alpha)(2+\alpha) \| \bPsi^\dagger \bN_2 \bT^\dagger\|_F +(2+\alpha) \| \bN_1 \bPhi^\dagger \|_F \]
be the additive error due to the effect of noise $\bN_1$ and $\bN_2$. 

 {Substituting the value of $\widehat{\bX}^{\mathsf T} := (\bB^{\mathsf T} \bPhi) (\mathbf{W}^{\mathsf T} \bPhi)^\dagger  = (\bA \bPhi + \bN_1)  (\mathbf{W}^{\mathsf T} \bPhi)^\dagger = \bY  (\mathbf{W}^{\mathsf T} \bPhi)^\dagger$, with probability $1-2\delta$ over $\bPhi^{\mathsf T}, \bPsi \sim \cD_R$, we have}
  \begin{align*}
  \|  \bY  (\mathbf{W}^{\mathsf T} \bPhi)^\dagger (\bPsi [\bA]_k)^\dagger  \bC  - \bA \|_F \leq (1+\alpha)^2 \| \bA - [\bA]_k \|_F + \tau_1.  %+ \|\bN_1 \bPhi (\mathbf{W}^{\mathsf T} \bPhi)^\dagger \mathbf{W} \|_F
  \end{align*}
 
% Since $ ([\bA]_k \bPhi)^{\dagger} (\bPsi \bA \bPhi ([\bA]_k \bPhi)^{\dagger})^\dagger$ has rank at most $k$ and scaling the value of $\alpha$ by a constant, we have the result. 
Let $ \bX_*:=  (\mathbf{W}^{\mathsf T} \bPhi)^\dagger (\bPsi [\bA]_k)^\dagger$, i.e.,
\[ 
  \|  \bY  \bX_*  \bC  - \bA \|_F \leq (1+\alpha)^2 \| \bA - [\bA]_k \|_F + \tau_1.  %+ \|\bN_1 \bPhi (\mathbf{W}^{\mathsf T} \bPhi)^\dagger \mathbf{W} \|_F
\]
Let $\mathbf{E} =\bA  + \bS^\dagger \bN_3 \bT^\dagger$. Since $\bX_*$ has rank at most $k$, this implies that  %completes the proof because $\bPhi ([\bA]_k \bPhi)^{\dagger})^\dagger \bPsi$ is a rank-$k$ matrix.
\begin{align}
 	 \min_{\bX \atop \mathsf{rk}(\bX)\leq k} \| (\bY \bX  \bC  - \mathbf{E}) \|_F & \leq  \| (\bY \bX_* \bC  - \mathbf{E}) \|_F \nonumber \\
	  & \leq  \| (\bY \bX_* \bC  - \bA) \|_F  +  \|  \bS^\dagger \bN_3 \bT^\dagger  \|_F \nonumber \\
	 &\leq (1+\alpha)^2 \| \bA - [\bA]_k \|_F + \tau_1 +  \|  \bS^\dagger \bN_3 \bT^\dagger  \|_F.  \nonumber
	 \end{align}

Since $\alpha \in (0,1)$ and substituting the value of $\tau_1$, we can get an upper bound on the additive terms.
\begin{align}
 	 \min_{\bX \atop \mathsf{rk}(\bX)\leq k} \| (\bY \bX  \bC  - \mathbf{E}) \|_F & \leq 
		(1+\alpha)^2 \| \bA - [\bA]_k \|_F + O \paren{ \tau_1 +  \|  \bS^\dagger \bN_3 \bT^\dagger  \|_F}. \label{eq:YXC}
\end{align}
	 
%Then 
%\begin{align}
% 	 & \min_{\mathsf{rk}(\bX)\leq k} \| \bY \bX \bC - (\bA\bT + \bS^\dagger \bN_3) \|_F \leq  \| \bY \bX_* \bC - (\bA\bT + \bS^\dagger \bN_3) \|_F \nonumber
%	 \\ &\quad \leq(1+\alpha)^2 (\| \bA - [\bA]_k \|_F + 2\| \bN_1 \bPhi^\dagger \|_F)  + (1+\alpha)\|\bN_2\bT^\dagger \|_F + \|\bS^\dagger \bN_3) \|_F \nonumber 
%	 	&\leq (1+\alpha)^2 (\| \bA - [\bA]_k \|_F + 2\| \bN_1 \bPhi^\dagger \|_F)  + (1+\alpha)\|\bN_2\bT^\dagger \|_F \\
%		&\quad  + \| \bA \bPhi ((\bPsi [\bA]_k)^\dagger \bPsi \bA \bPhi)^\dagger (\bPsi [\bA]_k)^\dagger  \bN_2 \bT^\dagger \|_F + \| \bS^\dagger \bN_3) \|_F.
%\end{align}

Now consider the following two regression problems:
\begin{align}
&	\min_{\bX \atop \mathsf{rk}(\bX)\leq k} \|\bS \bY   \bX \bC \bT  - \bS \mathbf{E}\bT \|_F  \quad \text{and} \quad 		\min_{\bX \atop \mathsf{rk}(\bX)\leq k} \| \bY  \bX \bC   - \mathbf{E} \|_F 
%\intertext{with candidate solutions}
%&	\widetilde{\bX}:= \argmin_{\bX \atop \mathsf{rk}(\bX)\leq k} \|\bS \bY   \bX \bC \bT  - \bS \mathbf{E}\bT \|_F  \quad \text{and} \quad 		\widehat \bX := \argmin_{\bX \atop \mathsf{rk}(\bX)\leq k} \| \bY  \bX \bC   - \mathbf{E} \|_F
\end{align}
with candidate solutions $\widetilde{\bX}:= \argmin_{\bX \atop \mathsf{rk}(\bX)\leq k} \|\bS \bY   \bX \bC \bT  - \bS \mathbf{E}\bT \|_F$ and $\widehat \bX := \argmin_{\bX \atop \mathsf{rk}(\bX)\leq k} \| \bY  \bX \bC   - \mathbf{E} \|_F$, respectively.  
	Set $p=k/\alpha$, $\mathbf{D}=\bY   \bX \bC $ and $\mathbf{E}=\bA + \bS^\dagger \bN_3 \bT^\dagger$ in the statement of~\lemref{S}. %Let us restrict our attention to  rank $k$ matrices $\bX$. % and denote by $\widehat{\bX}= \argmin_{\bX, \mathsf{r}(\bX)=k} \| \bU \bX -\mathbf{E} \|_F$ and $\widetilde{\bX}= \argmin_{\bX, \mathsf{r}(\bX)=k} \| \bS(\bU \bX -\bB) \|_F$. 
	Then we have with probability $1-2\delta$ over $\bS. \bT \sim \cD_A$, 
	\begin{align}  
		 \min_{\bX \atop \mathsf{r}(\bX)=k} \| \bY \bX \bC    - \mathbf{E}  \|_F^2 
		 & = %\min_{\bX \atop \mathsf{r}(\bX)=k} 
		 \| \bY \widehat{\bX}  \bC    - \mathbf{E}   \|_F^2 %\nonumber \\
		  = \| \bY\widehat{\bX}  \bC  - (\bA + \bS^\dagger \bN_3 \bT^\dagger) \|_F^2 \nonumber \\
		 &  \geq (1 + \alpha) \| \bS(\bY \widehat{\bX}  \bC   - (\bA + \bS^\dagger \bN_3 \bT^\dagger)) \bT  \|_F^2 \nonumber \\
			& \geq (1 + \alpha)  \min_{\bX \atop \mathsf{r}(\bX)} \|\bS \bY {\bX}  \bC \bT  - \bS (\bA + \bS^\dagger \bN_3\bT^\dagger) ) \bT \|_F \nonumber \\
			& = (1 + \alpha)   \| \widehat \bY \widetilde{\bX} \widetilde \bY  - \bZ \|_F. %s = (1+\alpha) \|\bS \bY \widetilde{\bX}   - \bZ \|_F . %\nonumber \\
			%& = (1 + \alpha)   \|\bS \bY \widetilde{\bX}  \bC \bT  - (\bS \bA \bT + \bN_3) \|_F
			 \label{eq:SAT}
%\nonumber \\
\intertext{The second and last equality follows from the definition, the first inequality follows from~\lemref{S} and the second inequality follows from  the fact that minimum is smaller than any other choice of $\bX$, more specifically $\bX = \widehat \bX$.
Since 
 $\bU$ is in the span of $\bY \bU'$, where $\bU'$ is the left singular vectors of $\widetilde \bX$, using Boutsidis {\it et al.}~\cite{BWZ16}  we have %$\bU \bU^{\mathsf T}$, where  $\bU$ is in the span of $\bY \bU'$, projects the column of $\bA$ into the column space of $\bS \bY \widetilde \bX$. In other words
$ \| (\I - \bU \bU^{\mathsf T})(\bA + \bN_1 \bPhi^\dagger ) \|_F \leq \| \widehat \bY \widetilde{\bX} \widetilde \bY  - \bZ \|_F $. 
Combining~\eqnref{SAT} and~\eqnref{YXC}, this implies that}
			 \| (\I - \bU \bU^{\mathsf T})\bA \|_F &\leq (1 + O(\alpha)) \| \bA - [\bA]_k \|_F + O \paren{ \| \bPsi^\dagger \bN_2  \bT^\dagger  \|_F +  \| \bN_1 \bPhi^\dagger \|_F +   \|  \bS^\dagger \bN_3 \bT^\dagger  \|_F}.  \nonumber %			&= (1 + \alpha)^{1/2}  \min_{\bX, \mathsf{r}(\bX)} \|\bS \bY {\bX}  (\bPsi \bA \bT + \bN_2) - (\bS \bA\bT + \bN_3) ) \|_F \nonumber \\
%			&= (1 + \alpha) \| \bY \widetilde{\bX}  \bC - ( \bA + \bS^\dagger\bN_3 \bT^\dagger) ) \bT \|_F \nonumber \\
%			&\geq (1 + \alpha) \| \bY \widetilde{\bX}  \bC -  \bA \|_F  - (1+\alpha)\| \bS^\dagger\bN_3 \bT^\dagger \|_F\nonumber 
%			& = (1 + \alpha)^{1/2}  \| (\bY \widetilde{\bX}  \bC - \bA  )\bT \|_F \nonumber \\
%			& \geq    \| (\bY \widetilde{\bX} \bC \bT - (\bA \bT + \bN_3)  ) \|_F  \\%		\min_{\bX, \mathsf{r}(\bX)\leq k}\|\bS( \bU {\bX} - \bA) \|_F &\leq \min_{\bX, \mathsf{r}(\bX) \leq k} \| \bU \bX - \bA \|_F 
%			& \geq (1+\alpha) \| (\bY \widetilde{\bX} (\bPsi \bA \bT + \bN_2) - (\bA \bT + \bN_3)  ) \|_F.  \label{eq:S2}  
	\end{align}
%where the last inequality follows from 
%Combining~\eqnref{SAT} and~\eqnref{YXC}, we have
%\[  \| (\I - \bU \bU^{\mathsf T})\bA \|_F \leq (1 + O(\alpha)) \| \bA - [\bA]_k \|_F + O \paren{ \| \bPsi^\dagger \bN_2  \bT^\dagger  \|_F +  \| \bN_1 \bPhi^\dagger \|_F +   \|  \bS^\dagger \bN_3 \bT^\dagger  \|_F}.  \]

As in the proof of~\thmref{meta}, for the choice of $v$ and $t$, for every matrices $\bD$, $\| \bS \bD \|_F \leq (1+\alpha) \| \bD \|_F$, $\| \bD \bT \|_F \leq (1+\alpha) \| \bD \|_F$, $\| \bPsi \bD \|_F \leq (1+\alpha) \| \bD \|_F$, and $\| \bD \bPhi \|_F \leq (1+\alpha) \| \bD \|_F$. In other words, 
$ \| \bPsi^\dagger \bN_2  \bT^\dagger  \|_F \leq (1+\alpha) \| \bN_2 \|_F, 
\| \bN_1 \bPhi^\dagger \|_F  \leq \sqrt{1+\alpha} \| \bN_1  \|_F ,
 \|  \bS^\dagger \bN_3 \bT^\dagger  \|_F  \leq (1+\alpha) \| \bN_3 \|_F.
$
Since $\alpha \in (0,1)$, 
\[  \| (\I - \bU \bU^{\mathsf T})\bA \|_F \leq (1 + O(\alpha)) \| \bA - [\bA]_k \|_F + O \paren{ \|  \bN_2   \|_F +  \| \bN_1  \|_F +   \|  \bN_3  \|_F}.  \]
The result follows using~\lemref{N}, Markov's inequality, and values of $\rho_1$ and $\rho_2$. 
\end{proof}

%\input{prelims}

%\input{BWZ16}
%\input{BWZ16gaussian}
%\section{Results in Non-private setting}
%\input{CW09private.tex}

%\section{Missing Proofs}

%\input{section3}

%!TEX root = low.tex
 
\section{Empirical Evaluation of Our Algorithms} \label{app:empirical}
Any algorithm to compute the low-rank factorization 
In this section, we give the experimental evaluation of our algorithms and compare it with the best known results. 
We ran our algorithms on a 2.7 GHz Intel Core i5 processor with 16 GB 1867 MHz DDR3 RAM. Our algorithms  keep on sampling a random  matrix  randomly until we sample a matrix with number of columns  more than $200$.

\subsection{Empirical Evaluation of {\scshape Private-Optimal-Space-$\lrf$}}
We first start with the discussion on the empirical evaluation of {\scshape Private-Optimal-Space-$\lrf$} (see~\figref{spacelowprivate} for the detail description and~\appref{codeprivate} for the source code). Since the error incurred by {\scshape Private-Frobenius-$\lrf$} is strictly less than that by {\scshape Private-Optimal-Space-$\lrf$}, we only concern ourselves with {\scshape Private-Optimal-Space-$\lrf$}. 
For the private setting, we sampled matrices from the following  distributions:
\begin{enumerate}
	\item All the entries are sampled uniformly from the interval $[1,5000]$
%	\item All the entries are sampled uniformly from a Gaussian distribution with variance $5000$.
	\item All the entries are integers sampled uniformly from the interval $[1,5000]$.
\end{enumerate}
 In our experimental set-up, we keep the value of $\alpha = 0.25$ and $k=10$ fixed to get a better understanding of how the approximation changes with the changing values of dimensions. 

%Before we start understanding the evaluation of the additive error terms, we note that the additive error incurred by our algorithm and $\| \bA - [\bA]_k \|_F$ is not that too far. It is an expected phenomenon  because the matrices that we use to do the experiments are random matrices with every entries picked i.i.d. 

We start by explaining what every columns in Table~\ref{tab:empirical} means. The first two columns are the dimension of the private matrix, the third column is the desired rank of the output matrix, and the fourth column is the value of multiplicative approximation. For the ease of comparison, we have set $k$ and $\alpha$ to be a constant parameter in this experiment and let the dimension to be the free parameters. 

Recall that the problem of the low-rank factorization is to output a singular value decomposition $\widetilde \bU, \widetilde \bSigma, \widetilde \bV$ such that $\bM_k = \widetilde \bU \widetilde \bSigma \widetilde \bV^\mathsf T$ is a rank-$k$ matrix and
\[  \| \bA - \bM_k \|_F \leq (1+\alpha) \| \bA - [\bA]_k \|_F + \gamma, \]
where $\gamma$ is the additive error. The fifth and the sixth columns enumerate the value of the expression resulting from running our algorithm {\scshape Private-Optimal-Space-$\lrf$} and that by Hardt and Roth~\cite{HR12}, respectively. The last column represents the optimal low-rank factorization, $\| \bA - [\bA]_k \|_F$. 

There is no way to compute the actual additive error, $\gamma$, empirically. This is because there is a factor of multiplicative error and it is tough to argue what part of error is due to the multiplicative factor alone. In other words, the best we can compute is $\alpha \Delta_k + \gamma$ or the total approximation error incurred by our algorithm. From the practitioner point of view, the total error is a much useful parameter than the $\alpha \Delta_k + \gamma$. Therefore, in this section, we use the total approximation error ($\| \bA - \bM_k \|_F$) as the measure for our evaluation.

\begin{table}[h]
\small{
 \begin{center}
\begin{tabular}{|c|c|c|c|c|c|c|c|}
\hline
\multirow{2}{*}{Distribution of $\bA$} & \multirow{2}{*}{Rows} & \multirow{2}{*}{Columns} & \multirow{2}{*}{$k$} & \multirow{2}{*}{$\alpha$} & Our error & Hardt-Roth~\cite{HR12} & Optimal Error \\ 
& & & & & $\| \bA - \bM_k \|_F$ & $\| \bA - \bM_k \|_F$ & {$\| \bA - [\bA]_k \|_F$} \\ \hline
%509 & 50 & 10 & 0.25 & 408874.518852 & 339452.533572 & 367393.824173  \\ \hline
%791 & 50 & 10 & 0.25 & 529098.371698 & 434448.130532 & 475809.883546  \\ \hline
%1031 & 100 & 10 & 0.25 & 917591.773312 & 795213.98087 & 810058.854226  \\ \hline
\multirow{15}{*}{Uniform real} 
& 535 & 50 & 10 & 0.25 & 223649.755822 & 552969.553361 & 190493.286508  \\ \cline{2-8}
& 581 & 57 & 10 & 0.25 & 254568.54093 & 491061.894752 & 213747.405532  \\ \cline{2-8}
& 671 & 65 & 10 & 0.25 & 295444.274372 & 470153.533646 & 250629.568178  \\ \cline{2-8}
& 705 & 70 & 10 & 0.25 & 317280.295345 & 546149.007321 & 269647.886009  \\ \cline{2-8}
& 709 & 68 & 10 & 0.25 & 309627.397618 & 664748.40864 & 265799.14431  \\ \cline{2-8}
& 764 & 74 & 10 & 0.25 & 344154.666385 & 529618.155224 & 291053.598305  \\ \cline{2-8}
& 777 & 50 & 10 & 0.25 & 270458.465497 & 436864.395454 & 235057.184632  \\ \cline{2-8}
& 861 & 57 & 10 & 0.25 & 311968.552859 & 494331.526734 & 269761.822539  \\ \cline{2-8}
& 1020 & 65 & 10 & 0.25 & 367175.274642 & 562322.74973 & 317998.616149  \\ \cline{2-8}
& 1054 & 70 & 10 & 0.25 & 389357.219211 & 490379.45171 & 338605.316751  \\ \cline{2-8}
& 1061 & 68 & 10 & 0.25 & 386772.176623 & 497648.401337 & 334581.574424  \\ \cline{2-8}
& 1137 & 74 & 10 & 0.25 & 413134.221292 & 528692.808214 & 364187.907915  \\ \cline{2-8}
& 1606 & 158 & 10 & 0.25 & 736233.063187 & 848827.366953 & 654600.528481  \\ \cline{2-8}
& 1733 & 169 & 10 & 0.25 & 786963.961154 & 932695.219591 & 706550.246496  \\ \cline{2-8}
%& 2056 & 202 & 10 & 0.25 & 935554.153189 & 960832.690366 & 847111.986848  \\ \cline{2-8}
%& 2390 & 158 & 10 & 0.25 & 905746.603072 & 894155.504131 & 815751.192694  \\ \cline{2-8}
%& 2538 & 169 & 10 & 0.25 & 969913.760459 & 966138.504368 & 872361.760538  \\ 
\hline 
\hline 
\begin{comment}
\multirow{15}{*}{Gaussian} 
& 1501 & 150 & 10 & 0.25 & 4602806.01335 & 4148333.8764 & 4233988.70948  \\ \cline{2-8}
%& 1536 & 100 & 10 & 0.25 & 3867812.23365 & 3365649.86155 & 3501179.60765  \\ \cline{2-8}
& 2036 & 200 & 10 & 0.25 & 6256519.40033 & 5723696.33783 & 5807101.86344  \\ \cline{2-8}
& 2269 & 150 & 10 & 0.25 & 5818838.38048 & 5221143.72392 & 5342121.60541  \\ \cline{2-8}
& 2536 & 250 & 10 & 0.25 & 7790301.96524 & 7221500.5958 & 7304136.70345  \\ \cline{2-8}
& 3022 & 300 & 10 & 0.25 & 9301037.8819 & 8697647.63381 & 8778077.56091  \\ \cline{2-8}
& 3043 & 200 & 10 & 0.25 & 7760100.42522 & 7128222.89822 & 7248602.29635  \\ \cline{2-8}
& 3514 & 350 & 10 & 0.25 & 10792293.7501 & 10182342.8479 & 10261542.7524  \\ \cline{2-8}
& 3775 & 250 & 10 & 0.25 & 9618719.41286 & 8971121.86973 & 9082792.39343  \\ \cline{2-8}
%& 4035 & 400 & 10 & 0.25 & 12363882.778 & 11721408.9175 & 11800466.5266  \\ \cline{2-8}
& 4527 & 450 & 10 & 0.25 & 13888227.5366 & 13206823.4083 & 13283211.7009  \\ \cline{2-8}
& 4535 & 300 & 10 & 0.25 & 11581426.1882 & 10869977.8977 & 10987315.2944  \\ \cline{2-8}
& 5262 & 350 & 10 & 0.25 & 13455812.2201 & 12701741.7912 & 12817496.3225  \\ \cline{2-8}
& 6028 & 400 & 10 & 0.25 & 15339865.4039 & 14587908.1663 & 14702613.0343  \\ \cline{2-8}
& 6758 & 450 & 10 & 0.25 & 17265782.3533 & 16452224.5641 & 16565784.1246  \\ \hline \hline
\end{comment}

\multirow{17}{*}{Uniform integers} 
& 522 & 50 & 10 & 0.25 & 217497.498819 & 496080.416815 & 185817.742179  \\ \cline{2-8}
& 555 & 51 & 10 & 0.25 & 229555.549295 & 463022.451669 & 195569.953293  \\ \cline{2-8}
& 605 & 60 & 10 & 0.25 & 267625.256679 & 525350.686285 & 225614.671569  \\ \cline{2-8}
%& 621 & 61 & 10 & 0.25 & 276406.199399 & 584841.66188 & 232943.943879  \\ \cline{2-8}
& 714 & 70 & 10 & 0.25 & 316232.378407 & 477066.707503 & 270968.006565  \\ \cline{2-8}
%& 731 & 70 & 10 & 0.25 & 321869.892107 & 531216.224669 & 276396.528235  \\ \cline{2-8}
%& 751 & 50 & 10 & 0.25 & 265946.284047 & 460986.196632 & 230633.206268  \\ \cline{2-8}
& 804 & 51 & 10 & 0.25 & 284102.975661 & 548426.535153 & 241720.509615  \\ \cline{2-8}
& 899 & 86 & 10 & 0.25 & 402886.168791 & 554702.285328 & 346840.731082  \\ \cline{2-8}
& 906 & 60 & 10 & 0.25 & 328747.816311 & 455091.762984 & 284433.77154  \\ \cline{2-8}
& 913 & 90 & 10 & 0.25 & 412114.948358 & 634520.151202 & 358345.162361  \\ \cline{2-8}
%& 939 & 61 & 10 & 0.25 & 343016.060866 & 473355.866477 & 294206.669129  \\ \cline{2-8}
& 1061 & 106 & 10 & 0.25 & 486139.117249 & 618819.626784 & 423775.149619  \\ \cline{2-8}
& 1063 & 70 & 10 & 0.25 & 395772.128472 & 485655.074685 & 339950.706212  \\ \cline{2-8}
%& 1092 & 70 & 10 & 0.25 & 395671.266166 & 566066.849848 & 345234.070301  \\ \cline{2-8}
& 1305 & 86 & 10 & 0.25 & 488729.886028 & 551863.893152 & 427234.256941  \\ \cline{2-8}
& 1383 & 90 & 10 & 0.25 & 513573.18853 & 595195.801858 & 451019.165808  \\ \cline{2-8}
& 1486 & 145 & 10 & 0.25 & 677118.945777 & 776008.62945 & 600584.597101  \\ \cline{2-8}
%& 2205 & 145 & 10 & 0.25 & 834628.19002 & 843799.137201 & 748256.963149  \\ \cline{2-8}
& 1481 & 146 & 10 & 0.25 & 670290.341074 & 733574.295922 & 600877.636254  \\ \cline{2-8}
%& 2193 & 146 & 10 & 0.25 & 834987.433938 & 847596.326192 & 748615.462639  \\ \cline{2-8}
& 1635 & 106 & 10 & 0.25 & 616323.217861 & 652624.510827 & 541305.826364  \\ \cline{2-8}
& 1848 & 180 & 10 & 0.25 & 836139.102987 & 884143.446663 & 755160.753156  \\ \cline{2-8}
%& 1853 & 182 & 10 & 0.25 & 848887.883125 & 879530.140298 & 761601.90124  \\ \cline{2-8}
& 1983 & 194 & 10 & 0.25 & 896926.450848 & 1005652.63777 & 814717.343468  \\ \cline{2-8}
%& 2950 & 194 & 10 & 0.25 & 1124659.77061 & 1095737.74951 & 1014459.24704  \\ \cline{2-8}
%& 2778 & 182 & 10 & 0.25 & 1049347.07263 & 1014315.98962 & 951626.597442  \\ \cline{2-8}
%& 2717 & 180 & 10 & 0.25 & 1031417.45361 & 1030208.46553 & 934633.95405  \\   
\hline
\end{tabular}
 \caption{Empirical Comparison Between {\scshape Private-Optimal-Space-$\lrf$} and Hardt and Roth~\cite{HR12}.} \label{tab:empirical}
\end{center}
}
\end{table}

\begin{table}[h]
\small{
 \begin{center}
\begin{tabular}{|c|c|c|c|c|c|c|}
\hline
 {Rows} & {Columns} &{$k$} & {$\alpha$} & Our additive error  & Hardt-Roth Additive Error& Expected Additive Error \\ \hline
546 & 50 & 10 & 0.25 & 665.797531323 & 13971.0499468 & 818.4149601308452 \\ \hline
780 & 50 & 10 & 0.25 & 777.586619111 & 16772.4716145 & 915.2974642186384 \\ \hline
532 & 51 & 10 & 0.25 & 719.492368601 & 23512.4449181 & 817.9937262895351 \\ \hline
808 & 51 & 10 & 0.25 & 796.220653146 & 14613.575971 & 932.3276903711178 \\ \hline
655 & 62 & 10 & 0.25 & 845.550304391 & 34161.5584705 & 941.6210056415899 \\ \hline
951 & 62 & 10 & 0.25 & 903.21367849 & 14101.7225434 & 1055.244933017033 \\ \hline
891 & 89 & 10 & 0.25 & 1040.90463257 & 17863.8728746 & 1190.7199415503355 \\ \hline
1344 & 89 & 10 & 0.25 & 1273.05275389 & 18646.717977 & 1342.904603982863 \\ \hline
522 & 50 & 10 & 0.25 & 691.996294265 & 23193.0915951 & 806.8851715645378 \\ \hline
791 & 50 & 10 & 0.25 & 764.535817382 & 16245.8487264 & 919.3095213779045 \\ \hline
1449 & 140 & 10 & 0.25 & 1392.08606822 & 20835.1618122 & 1639.1144500265145 \\ \hline
2143 & 140 & 10 & 0.25 & 1518.90720786 & 13521.4077062 & 1827.1906485042728 \\ \hline
834 & 80 & 10 & 0.25 & 969.883327308 & 22051.2853027 & 1119.4759720856953 \\ \hline
1234 & 80 & 10 & 0.25 & 1005.64736332 & 13617.4286089 & 1257.2282266407094 \\ \hline
682 & 64 & 10 & 0.25 & 833.555604169 & 19875.8713339 & 965.2425226826088 \\ \hline
967 & 64 & 10 & 0.25 & 872.993347497 & 15774.1091244 & 1073.4637306190796 \\ \hline
924 & 90 & 10 & 0.25 & 1024.67412984 & 20018.1421648 & 1208.8365389072335 \\ \hline
1374 & 90 & 10 & 0.25 & 1168.33857697 & 15267.6596425 & 1357.3931639793143 \\ \hline
1981 & 194 & 10 & 0.25 & 2845.12484193 & 19457.1056713 & 2035.7863227397222 \\ \hline
2945 & 194 & 10 & 0.25 & 1938.83169063 & 14210.6828353 & 2263.796573983761 \\ \hline
1022 & 100 & 10 & 0.25 & 1130.29734323 & 14839.3883841 & 1298.0001587502882 \\ \hline
1530 & 100 & 10 & 0.25 & 1289.31236852 & 14886.7349415 & 1458.2654318931454 \\ \hline
1867 & 182 & 10 & 0.25 & 1806.04962492 & 13443.8218792 & 1952.5511080639412 \\ \hline
2757 & 182 & 10 & 0.25 & 1983.37270829 & 13509.2925192 & 2168.928742648721 \\ \hline
\end{tabular}
 \caption{Empirical Comparison of Additive Error of {\scshape Private-Optimal-Space-$\lrf$} and Hardt and Roth~\cite{HR12}.} \label{tab:additive_empirical}
\end{center}
}
\end{table}

The empirical evaluations, listed in Table~\ref{tab:empirical}, reflect that our algorithm perform consistently  better than that of Hardt and Roth~\cite{HR12} for all the dimension range. This agrees  with our analysis and our discussion in~\secref{compare}. In particular, we showed  that theoretically we perform better than Hardt and Roth~\cite{HR12} by a factor of $O(c \sqrt{k})$, where $c$ is the largest entry in their projection matrix. 

Another observation that one can make from our evaluation is that the error of our algorithm is quiet close to the actual error of approximation for all the dimension range. On the other hand, the error of Hardt and Roth~\cite{HR12} is close to optimal error in the large dimensional matrices. For small dimensional matrices, the error incurred by Hardt and Roth~\cite{HR12} is  lot more than the actual error. The error of Hardt and Roth~\cite{HR12} starts getting better as the dimension increases. We believe that the fact that our algorithm performs well over all the range of dimensions makes it more stable with respect to different datasets. In practice, this is highly desirable as one would like the algorithm to perform well on both large and small datasets. 

The last key observation one can gather from the empirical evaluation is that though the total error depends on the Frobenius norm, the additive error is  independent of the Frobenius norm of the original matrix. It is expected that the total error depends on the Frobenius norm because of the multiplicative factor, but if we see the difference between the errors (of both our and Hardt and Roth's algorithm) and the optimal error, the difference scales proportional to the dimensions of the matrices. 
%Our error is slightly worse than that of Hardt and Roth~\cite{HR12}, but the ratio gets smaller as the dimension increases. This we believe is due to the asymptotic nature of the two results and the factor of $\sqrt{k}$, which in our experiments is $\sqrt{10}$ does not play any important role.

\subsubsection{Empirical Evaluation of Additive Error  for various dimension}
%\paragraph{Empirical Evaluation of Additive Error for various dimension.}
As we mentioned earlier, if the matrix has rank greater than $k$, it is not possible to empirically evaluate the additive error. However, we believe it is still imperative to analyze the effect of differential privacy on the low-rank approximation of matrices. This in turn implies that one should also define experiments to empirically evaluate the additive error. One easy way to do this is to take as input a matrix with rank exactly $k$ and compare the error incurred with that of expected error promised by our theoretical results. In the next experiment we do the same and prune the last $n-k$ columns of the matrix and make it identically zero (see~\figref{spacelowprivate} for the detail description and~\appref{codeprivateadditive} for the source code). The result of our experiment is presented in Table~\ref{tab:additive_empirical}. Since every entries of the matrix is identically zero, we notice that the same trend as in Table~\ref{tab:empirical}:
\begin{enumerate}
	\item The additive error incurred by our algorithm is way less than the additive error incurred by Hardt and Roth~\cite{HR12} for all ranges of the dimension. We note that the matrices are highly incoherent as all the entries are sampled i.i.d. We believe the reason for this behavior is the fact that the theoretical result provided by Hardt and Roth~\cite{HR12} for incoherent matrices depended on the Frobenius norm of the input matrix. 
	\item Our algorithm consistently perform better than the additive error guaranteed by the theoretical results, but the difference becomes smaller as the dimension increases. This trend can be seen as due to the fact that our results are asymptotic and we believe as $m$ and $n$ are sufficiently large, our theoretical result would match the empirical results.  
\end{enumerate}

%\paragraph
\subsubsection{Empirical Evaluation of Additive Error for various values of $\alpha$.}
An important parameter that comes in our bounds and is absent in the bounds of Hardt and Roth~\cite{HR12} is the factor of $\alpha$. This is because Hardt and Roth~\cite{HR12} consider a constant $\alpha$. Therefore, we feel it is important to analyze the additive error with respect to the change in $\alpha$ in order to better understand the effect of differential privacy on the low-rank approximation of matrices. Again, we take as input a matrix with rank exactly $k$ and compare the error incurred with that of expected error promised by our theoretical results while keeping the dimensions and the value of $k$ constant (see~\figref{spacelowprivate} for the detail description and~\appref{codeprivateadditive} for the source code).  The result of our experiment is presented in Table~\ref{tab:additive_empirical_alpha}. Since every entries of the matrix is identically zero, we notice that the same trend as in Table~\ref{tab:empirical}:
\begin{enumerate}
	\item The additive error incurred by our algorithm is way less than the additive error incurred by Hardt and Roth~\cite{HR12} for all ranges of the dimension. We note that the matrices are highly incoherent as all the entries are sampled i.i.d. We believe the reason for this behavior is the fact that the theoretical result provided by Hardt and Roth~\cite{HR12} for incoherent matrices depended on the Frobenius norm of the input matrix. 
	\item Our algorithm consistently perform better than the additive error guaranteed by the theoretical results, except for certain values of the dimensions and multiplicative error ($m=2800, n=200,\alpha = 0.24$). Even in these cases, the error is not that far from what is predicted from the theoretical analysis. 
	% This trend can be seen as due to the fact that our results are asymptotic and we believe as $m$ and $n$ are sufficiently large, our theoretical result would match the empirical results.  
\end{enumerate}

\begin{table}[h]
\small{
 \begin{center}
\begin{tabular}{|c|c|c|c|c|c|c|}
\hline
 {Rows} & {Columns} &{$k$} & {$\alpha$} & Expected Additive Error & Our additive error  & Hardt-Roth Additive Error \\ \hline
1800 & 200 & 10 & 0.1 & 7889.477972559828 & 8056.91337611 & 18155.7964938 \\ \hline
1800 & 200 & 10 & 0.12 & 7114.371042156239 & 6986.62461896 & 16748.9933963 \\ \hline
1800 & 200 & 10 & 0.14 & 6234.5851506132285 & 6171.60624904 & 26904.2257957 \\ \hline
1800 & 200 & 10 & 0.16 & 6234.5851506132285 & 5982.88510433 & 16983.3495414 \\ \hline
1800 & 200 & 10 & 0.18 & 5190.996544903864 & 4956.753081 & 15746.884528 \\ \hline
1800 & 200 & 10 & 0.2 & 5190.996544903864 & 5044.67253402 & 14124.190425 \\ \hline
1800 & 200 & 10 & 0.22 & 5190.996544903864 & 5012.95041479 & 18030.3508985 \\ \hline
1800 & 200 & 10 & 0.24 & 5190.996544903864 & 4951.97119364 & 18013.7573095 \\ \hline
2800 & 200 & 10 & 0.1 & 8690.661799154163 & 8550.34968943 & 12281.3954039 \\ \hline
2800 & 200 & 10 & 0.12 & 7846.020358487806 & 7607.18833498 & 14296.1174909 \\ \hline
2800 & 200 & 10 & 0.14 & 6887.309251977769 & 6494.49329799 & 11674.4990144 \\ \hline
2800 & 200 & 10 & 0.16 & 6887.309251977769 & 6603.16942717 & 13860.6899516 \\ \hline
2800 & 200 & 10 & 0.18 & 5750.100760072804 & 5417.53433303 & 13425.7590356 \\ \hline
2800 & 200 & 10 & 0.2 & 5750.100760072804 & 5612.34884207 & 12731.6645942 \\ \hline
2800 & 200 & 10 & 0.22 & 5750.100760072804 & 5524.92292528 & 10703.6065701 \\ \hline
2800 & 200 & 10 & 0.24 & 5750.100760072804 & 6450.77223767 & 12610.5718019 \\ \hline
\end{tabular}
 \caption{Empirical Comparison of Additive Error of {\scshape Private-Optimal-Space-$\lrf$} for various values of $\alpha$ with Hardt and Roth~\cite{HR12}.} \label{tab:additive_empirical_alpha}
\end{center}
}
\end{table}

%\paragraph
\subsubsection{Empirical Evaluation of Additive Error for various values of $k$.}
The last parameter that comes in our bounds and in the bounds of Hardt and Roth~\cite{HR12} is the factor of $k$. Therefore, we feel it is important to analyze the additive error with respect to the change in $k$ in order to better understand the effect of differential privacy on the low-rank approximation of matrices. Again, we take as input a matrix with rank exactly $k$ and compare the error incurred with that of expected error promised by our theoretical results while keeping the dimensions and the value of $k$ constant (see~\figref{spacelowprivate} for the detail description and~\appref{codeprivateadditive} for the source code). The result of our experiment is presented in Table~\ref{tab:additive_empirical_k}. Since every entries of the matrix is identically zero, we notice that the same trend as in Table~\ref{tab:empirical}:
\begin{enumerate}
	\item The additive error incurred by our algorithm is way less than the additive error incurred by Hardt and Roth~\cite{HR12} for all ranges of the dimension. We note that the matrices are highly incoherent as all the entries are sampled i.i.d. We believe the reason for this behavior is the fact that the theoretical result provided by Hardt and Roth~\cite{HR12} for incoherent matrices depended on the Frobenius norm of the input matrix. 
	\item Our algorithm is almost the same as the additive error guaranteed by the theoretical results. % Even in these cases, the error is not that far from what is predicted from the theoretical analysis. 
	% This trend can be seen as due to the fact that our results are asymptotic and we believe as $m$ and $n$ are sufficiently large, our theoretical result would match the empirical results.  
\end{enumerate}

\begin{table}[h]
\small{
 \begin{center}
\begin{tabular}{|c|c|c|c|c|c|c|}
\hline
 {Rows} & {Columns} &{$k$} & {$\alpha$} & Expected Additive Error & Our additive error  & Hardt-Roth Additive Error \\ \hline
450 & 50 & 10 & 0.25 & 1955.5571620375354 & 1898.46860098 & 20016.9761114 \\ \hline
450 & 50 & 11 & 0.25 & 1967.6211924065349 & 2067.40361185 & 24063.2388836 \\ \hline
450 & 50 & 12 & 0.25 & 1979.148228191017 & 1915.44710901 & 23731.1221157 \\ \hline
450 & 50 & 13 & 0.25 & 1990.2041707005583 & 1975.07412336 & 28148.8596261 \\ \hline
450 & 50 & 14 & 0.25 & 2000.8424524636212 & 1937.32846282 & 44746.5335036 \\ \hline
450 & 50 & 15 & 0.25 & 2395.058604918562 & 2656.30764859 & 38466.5764635 \\ \hline
450 & 50 & 16 & 0.25 & 2404.9864394488554 & 2396.43074838 & 51068.6496986 \\ \hline
450 & 50 & 17 & 0.25 & 2414.6085814624234 & 2518.70267919 & 59951.048053 \\ \hline
450 & 50 & 18 & 0.25 & 2423.9516422006986 & 2412.15253004 & 49652.4140161 \\ \hline
450 & 50 & 19 & 0.25 & 2433.0385823622364 & 2421.29320351 & 60475.2111995 \\ \hline
700 & 50 & 10 & 0.25 & 2210.505167206932 & 2137.57732175 & 18322.4177099 \\ \hline
700 & 50 & 11 & 0.25 & 2225.805492035416 & 2244.60952111 & 15915.3100262 \\ \hline
700 & 50 & 12 & 0.25 & 2240.424768215734 & 2341.16599056 & 18904.9300369 \\ \hline
700 & 50 & 13 & 0.25 & 2254.4465757300545 & 2356.95623935 & 23088.1856217 \\ \hline
700 & 50 & 14 & 0.25 & 2267.9386809066186 & 2251.25863623 & 29899.7509467 \\ \hline
700 & 50 & 15 & 0.25 & 2707.304866721297 & 2948.78705396 & 33654.4497548 \\ \hline
700 & 50 & 16 & 0.25 & 2719.895940227483 & 2906.80323902 & 36885.4983615 \\ \hline
700 & 50 & 17 & 0.25 & 2732.0993162013783 & 2677.54649011 & 36275.4890195 \\ \hline
700 & 50 & 18 & 0.25 & 2743.9487446109097 & 2849.04550002 & 37581.6388795 \\ \hline
700 & 50 & 19 & 0.25 & 2755.473345587197 & 2658.52670606 & 47314.5277845 \\ \hline
\end{tabular}
 \caption{Empirical Comparison of Additive Error of {\scshape Private-Optimal-Space-$\lrf$}  and Hardt and Roth~\cite{HR12} for various values of $k$.} \label{tab:additive_empirical_k}
\end{center}
}
\end{table}

\subsection{Empirical Evaluation of {\scshape Private-Local-$\lrf$}}
In this section, we understand the result of our empirical evaluation of {\scshape Private-Local-$\lrf$}  (see~\figref{spacelowprivate} for the detail description and~\appref{codelocal} for the source code). Recall that in this case, we output a rank-$k$ orthonormal matrix $\bU$ such that $\bU \bU^{\mathsf T} \bA$ well approximates the matrix $\bA$ with high probabiliy, i.e., 
$$  { \| \bA - \bU \bU^{\mathsf T}\bA \|_F \leq (1+\alpha) \| \bA - [\bA]_k  \|_F +\gamma }$$ with probability at least $1-\beta$.

\begin{table}[h]
\small{
 \begin{center}
\begin{tabular}{|c|c|c|c|c|c|c|}
\hline
 {Rows} & {Columns} &{$k$} & {$\alpha$} & Our error ($\| \bA - \bU \bU^{\mathsf T} \bA \|_F$)  & {Optimal Error ($\| \bA - [\bA]_k \|_F$)} \\ \hline
%509 & 50 & 10 & 0.25 & 408874.518852 & 339452.533572 & 367393.824173  \\ \hline
%791 & 50 & 10 & 0.25 & 529098.371698 & 434448.130532 & 475809.883546  \\ \hline
%1031 & 100 & 10 & 0.25 & 917591.773312 & 795213.98087 & 810058.854226  \\ \hline
 460 & 50 & 10 & 0.25 & 26730.7062683 & 18376.5128345  \\ \hline
 486 & 50 & 10 & 0.25 & 28080.8322915 & 18964.3784188  \\ \hline
 516 & 55 & 10 & 0.25 & 30526.641347 & 20870.0427927  \\ \hline
 553 & 56 & 10 & 0.25 & 29787.7727748 & 21862.0631087  \\ \hline
 568 & 59 & 10 & 0.25 & 31632.1354094 & 22760.6941581  \\ \hline
 616 & 64 & 10 & 0.25 & 33981.2524409 & 25159.7572035  \\ \hline
 709 & 50 & 10 & 0.25 & 36449.213243 & 23110.1192303  \\ \hline
 730 & 50 & 10 & 0.25 & 35232.7419048 & 23414.630713  \\ \hline
 742 & 50 & 10 & 0.25 & 36167.200779 & 23613.9643873  \\ \hline
 805 & 56 & 10 & 0.25 & 38175.3014108 & 26613.9920423  \\ \hline
 817 & 56 & 10 & 0.25 & 38604.715515 & 26721.6604408  \\ \hline
 846 & 59 & 10 & 0.25 & 39702.5550961 & 28179.965914  \\ \hline
 907 & 64 & 10 & 0.25 & 41889.0912424 & 30717.3035814  \\ \hline
 924 & 101 & 10 & 0.25 & 50376.7374653 & 40511.7157875  \\ \hline
 1195 & 130 & 10 & 0.25 & 66049.3867783 & 53122.2489582  \\ \hline
 1262 & 138 & 10 & 0.25 & 70596.6678822 & 56526.8303644  \\ \hline
 1433 & 101 & 10 & 0.25 & 63976.8587336 & 50810.8395315  \\ \hline
 1698 & 186 & 10 & 0.25 & 91159.7299285 & 77358.5067276  \\ \hline
 1857 & 130 & 10 & 0.25 & 85812.7647106 & 66588.7353796  \\ \hline
 1956 & 138 & 10 & 0.25 & 82720.7186764 & 70710.4258919  \\
\hline 
\end{tabular}
 \caption{Empirical Evaluation of {\scshape Private-Local-$\lrf$}.} \label{tab:local_empirical}
\end{center}
}
\end{table}

 For empirical evaluation of our locally-private algorithm, we sampled matrices such that every entries are uniform real number between $[0,500]$.  In our experimental set-up, we keep the value of $\alpha = 0.25$ and $k=10$ fixed to get a better understanding of how the approximation changes with the changing values of dimensions. 
Our empirical results are listed in Table~\ref{tab:local_empirical}. From the table, we can immediately see the effect of the role of the dimension $m$ and can see the difference between the optimal approximation and our approximation scale faster than in the case of {\scshape Private-Optimal-Space-$\lrf$}. However, even in the case of very large matrices, our approximation error is very small compared to the actual approximation error. In other words, this evaluation gives us a hint that our algorithm does not pay a lot for local model of computation.  We make this more explicit in our next set of experiments given in the following sections when we empirically evaluate the additive error to better understand the effect of local differential privacy on the accuracy of the algorithm.
%\clearpage

\begin{table}[h]
\small{
 \begin{center}
\begin{tabular}{|c|c|c|c|c|c|c|}
\hline
 {Rows} & {Columns} &{$k$} & {$\alpha$} & Our Additive Error  & Expected Additive Error \\ \hline
454 & 50 & 10 & 0.25 & 1591.70826988 & 2236.0029137367846  \\ \hline
494 & 53 & 10 & 0.25 & 1659.12206578 & 2348.466431926056  \\ \hline
511 & 52 & 10 & 0.25 & 1737.01336788 & 2395.0391957271527  \\ \hline
562 & 60 & 10 & 0.25 & 2000.5063552 & 2530.799729601783  \\ \hline
622 & 66 & 10 & 0.25 & 1580.25894963 & 2683.7137939487484  \\ \hline
643 & 70 & 10 & 0.25 & 1962.46122303 & 2735.6746706859394  \\ \hline
645 & 70 & 10 & 0.25 & 2172.80349046 & 2740.583809665359  \\ \hline
702 & 50 & 10 & 0.25 & 1989.93430285 & 2877.768275821911  \\ \hline
728 & 52 & 10 & 0.25 & 2059.26581077 & 2938.6953727048567  \\ \hline
743 & 53 & 10 & 0.25 & 1993.78092528 & 2973.406277353555  \\ \hline
844 & 60 & 10 & 0.25 & 2328.32881246 & 3199.4689043324324  \\ \hline
853 & 94 & 10 & 0.25 & 2557.29517064 & 3219.0131110666084  \\ \hline
925 & 66 & 10 & 0.25 & 2484.75888129 & 3372.181278187486  \\ \hline
981 & 70 & 10 & 0.25 & 2198.25471982 & 3487.6698882062387  \\ \hline
983 & 70 & 10 & 0.25 & 2597.33562269 & 3491.7393627228817  \\ \hline
1055 & 113 & 10 & 0.25 & 3217.07445615 & 3635.863197534171  \\ \hline
1190 & 131 & 10 & 0.25 & 4220.63858322 & 3894.742583462755  \\ \hline
1320 & 94 & 10 & 0.25 & 2808.85540377 & 4131.88571345083  \\ \hline
1583 & 113 & 10 & 0.25 & 3870.14181092 & 4581.675389464531  \\ \hline
1713 & 186 & 10 & 0.25 & 4445.32063486 & 4791.555177594492  \\ \hline
1835 & 131 & 10 & 0.25 & 5447.86868861 & 4982.106342591863  \\ \hline
1900 & 206 & 10 & 0.25 & 4732.78855445 & 5081.305279633226  \\ \hline
2213 & 158 & 10 & 0.25 & 4145.11357428 & 5539.0037396773205  \\ \hline
2888 & 206 & 10 & 0.25 & 6234.47976474 & 6436.032325032722  \\ \hline
\hline 
\end{tabular}
 \caption{Empirical Evaluation of Additive Error of {\scshape Private-Local-$\lrf$} for varying values of dimensions.} \label{tab:local_additive}
\end{center}
}
\end{table}

\subsubsection{Empirical Evaluation of Additive Error for Various Dimension}
%\paragraph{Empirical Evaluation of Additive Error for various dimension.}
As we argued earlier,  it is still imperative to analyze the effect of differential privacy on the principal component analysis of matrices. This in turn implies that one should also define experiments to empirically evaluate the additive error. As earlier, we take as input a matrix with rank exactly $k$ and compare the error incurred with that of expected error promised by our theoretical results. In the next experiment we do the same and prune the last $n-k$ columns of the matrix and make it identically zero (see~\figref{spacelowprivate} for the detail description and~\appref{codelocaladditive} for the source code). The result of our experiment is presented in Table~\ref{tab:local_additive}. 
Recall that the additive error incurred by our algorithm is 
$$ \gamma := {O}\paren{ \eta \alpha^{-2} \log(k/\delta)  \sqrt{m \log (1/\delta)}/\eps }$$ for $\eta = \max \set{k,1/\alpha}.$

Since every entries of the matrix $ \bA - [\bA]_k $ is identically zero, the error listed in the Table~\ref{tab:local_additive} is the additive error. We note that the trend of Table~\ref{tab:local_additive} shows the same trend as in Table~\ref{tab:local_empirical}.

 Our algorithm consistently perform better than the additive error guaranteed by the theoretical results (except for $m=1835$ and $n= 131$), but the difference becomes smaller as the dimension increases. This trend can be seen as due to the fact that our results are asymptotic and we believe when $m$ is sufficiently large, our theoretical result would match the empirical results.

%\paragraph
\subsubsection{Empirical Evaluation of Additive Error for various values of $\alpha$.}
An important parameter that comes in our bounds  is the factor of $\alpha$. Our theoretical result shows a tradeoff  between the additive error and the multiplicative approximation factor. This makes it important to analyze the additive error with respect to the change in $\alpha$ in order to better understand the effect of differential privacy on the low-rank approximation of matrices. Again, we take as input a matrix with rank exactly $k$ and compare the error incurred with that of expected error promised by our theoretical results while keeping the dimensions and the value of $k$ constant. The result of our experiment is presented in Table~\ref{tab:local_additive_alpha}. 

Recall that the additive error incurred by our algorithm is 
$$ \gamma := {O}\paren{ \eta \alpha^{-2} \log(k/\delta)  \sqrt{m \log (1/\delta)}/\eps }$$ for $\eta = \max \set{k,1/\alpha}.$

Since every entries of the matrix $ \bA - [\bA]_k $ is identically zero, the error listed is the additive error incurred by our algorithm. We notice that the same trend as in Table~\ref{tab:local_empirical}.

	We ran our algorithm for $m=\set{450,700}$ and $n=50$ with $k=10$ and $\alpha$ ranging from $0.10$ to $0.24$ in the step-size of $0.02$. Our algorithm consistently perform better than the additive error guaranteed by the theoretical results.  The empirical error is way better than the error predicted by the theoretical result for small values of $\alpha$ and it starts getting closer as the values of $\alpha$ increases. %Even in these cases, the error is not that far from what is predicted from the theoretical analysis. 
	% This trend can be seen as due to the fact that our results are asymptotic and we believe as $m$ and $n$ are sufficiently large, our theoretical result would match the empirical results.  

\begin{table}[h]
\small{
 \begin{center}
\begin{tabular}{|c|c|c|c|c|c|c|}
\hline
 {Rows} & {Columns} &{$k$} & {$\alpha$} & Our Additive Error  & Expected Additive Error \\ \hline
450 & 50 & 10 & 0.10 & 750.710811389 & 14088.628533821675  \\ \hline
450 & 50 & 10 & 0.12 & 748.91024846 & 9639.587944193776  \\ \hline
450 & 50 & 10 & 0.14 & 764.598714683 & 7415.067649379828  \\ \hline
450 & 50 & 10 & 0.16 & 727.388618355 & 5190.54735456588  \\ \hline
450 & 50 & 10 & 0.18 & 748.286674358 & 4449.040589627896  \\ \hline
450 & 50 & 10 & 0.20 & 722.269126382 & 3707.533824689914  \\ \hline
450 & 50 & 10 & 0.22 & 695.354504806 & 2966.027059751931  \\ \hline
450 & 50 & 10 & 0.24 & 687.993333996 & 2224.520294813948  \\ \hline
700 & 50 & 10 & 0.1 & 932.294964215 & 18195.922623848826  \\ \hline
700 & 50 & 10 & 0.12 & 937.676945315 & 12449.841795264987  \\ \hline
700 & 50 & 10 & 0.14 & 939.872486613 & 9576.801380973067  \\ \hline
700 & 50 & 10 & 0.16 & 942.039074746 & 6703.760966681148  \\ \hline
700 & 50 & 10 & 0.18 & 909.247538784 & 5746.08082858384  \\ \hline
700 & 50 & 10 & 0.20 & 948.147219877 & 4788.400690486534  \\ \hline
700 & 50 & 10 & 0.22 & 860.725294579 & 3830.7205523892267  \\ \hline
700 & 50 & 10 & 0.24 & 882.673634462 & 2873.04041429192  \\ \hline
\end{tabular}
 \caption{Empirical Evaluation of Additive Error of {\scshape Private-Local-$\lrf$} for various values of $\alpha$.} \label{tab:local_additive_alpha}
\end{center}
}
\end{table}

\begin{table}[h]
\small{
 \begin{center}
\begin{tabular}{|c|c|c|c|c|c|c|}
\hline
 {Rows} & {Columns} &{$k$} & {$\alpha$} & Our Additive Error  & Expected Additive Error \\ \hline
450 & 50 & 10 & 0.2 & 739.297184071 & 2966.027059751931  \\ \hline
450 & 50 & 11 & 0.2 & 792.139775851 & 3707.533824689914  \\ \hline
450 & 50 & 12 & 0.2 & 827.193502967 & 3707.533824689914  \\ \hline
450 & 50 & 13 & 0.2 & 857.238846842 & 4449.040589627896  \\ \hline
450 & 50 & 14 & 0.2 & 881.890196768 & 4449.040589627896  \\ \hline
450 & 50 & 15 & 0.2 & 877.407208023 & 5190.54735456588  \\ \hline
450 & 50 & 16 & 0.2 & 955.688935848 & 5190.54735456588  \\ \hline
450 & 50 & 17 & 0.2 & 942.773082147 & 5932.054119503862  \\ \hline
450 & 50 & 18 & 0.2 & 1019.96432587 & 5932.054119503862  \\ \hline
450 & 50 & 19 & 0.2 & 1033.82124639 & 6673.560884441846  \\ \hline
700 & 50 & 10 & 0.2 & 921.415775041 & 3830.7205523892267  \\ \hline
700 & 50 & 11 & 0.2 & 903.693226831 & 4788.400690486534  \\ \hline
700 & 50 & 12 & 0.2 & 961.550364155 & 4788.400690486534  \\ \hline
700 & 50 & 13 & 0.2 & 1055.58902486 & 5746.08082858384  \\ \hline
700 & 50 & 14 & 0.2 & 1102.54543656 & 5746.08082858384  \\ \hline
700 & 50 & 15 & 0.2 & 1139.854348 & 6703.760966681148  \\ \hline
700 & 50 & 16 & 0.2 & 1188.983938 & 6703.760966681148  \\ \hline
700 & 50 & 17 & 0.2 & 1216.1836631 & 7661.441104778453  \\ \hline
700 & 50 & 18 & 0.2 & 1207.76296999 & 7661.441104778453  \\ \hline
700 & 50 & 19 & 0.2 & 1303.58983727 & 8619.121242875759  \\ \hline
\end{tabular}
 \caption{Empirical Evaluation of Additive Error of {\scshape Private-Local-$\lrf$} for various values of $k$.} \label{tab:local_additive_k}
\end{center}
}
\end{table}

%\paragraph
\subsubsection{Empirical Evaluation of Additive Error for various values of $k$.}
The last parameter that comes in our bounds for local differentially private algorithm is the factor of $k$. Therefore, we feel it is important to analyze the additive error with respect to the change in $k$ in order to better understand the effect of differential privacy on the low-rank approximation of matrices. Again, we take as input a matrix with rank exactly $k$ for varying values of $k$ and compare the error incurred with that of expected error promised by our theoretical results while keeping the dimensions and the value of $k$ constant. The result of our experiment is presented in Table~\ref{tab:local_additive_k}. 

Recall that the additive error incurred by our algorithm is 
$$ \gamma := {O}\paren{ \eta \alpha^{-2} \log(k/\delta)  \sqrt{m \log (1/\delta)}/\eps }$$ for $\eta = \max \set{k,1/\alpha}.$

Since every entries of the matrix $ \bA - [\bA]_k $ is identically zero, the error of our algorithm is due to the additive error. Again as predicted by our results, we notice that the same trend as in Table~\ref{tab:local_empirical}:

	We run our algorithm for $m=\set{450,700}$ and $n=50$ with $\alpha = 0.25$ and $k$ ranging from $10$ to $19$. Our algorithm consistently perform better than the additive error guaranteed by the theoretical results.  The empirical error is way better than the error predicted by the theoretical result for small values of $\alpha$ and it starts getting closer as the values of $\alpha$ increases. %Even in these cases, the error is not that far from what is predicted from the theoretical analysis. 
	% This trend can be seen as due to the fact that our results are asymptotic and we believe as $m$ and $n$ are sufficiently large, our theoretical result would match the empirical results.  

\subsection{Empirical Evaluation of {\scshape Optimal-Space-$\lrf$}}
\begin{table}[t]
\small{
 \begin{center}
\begin{tabular}{|c|c|c|c|c|c|}
\hline
Rows & Columns & $k$ & $\alpha$ & Total error ($\| \bA - \bM_k \|_F$) & Optimal Error ($\| \bA - [\bA]_k \|_F$) \\ \hline
498 & 52 & 10 & 0.25 & 203636.487171 & 197577.81058 \\ \hline
565 & 62 & 10 & 0.25 & 241322.216245 & 234585.873351 \\ \hline
600 & 66 & 10 & 0.25 & 258293.615653 & 251581.697045 \\ \hline
634 & 68 & 10 & 0.25 & 272545.864945 & 265331.39875 \\ \hline
701 & 50 & 10 & 0.25 & 236960.63674 & 230358.636959 \\ \hline
719 & 76 & 10 & 0.25 & 311278.453125 & 302812.926139 \\ \hline
736 & 50 & 10 & 0.25 & 243505.180195 & 236083.969219 \\ \hline
775 & 52 & 10 & 0.25 & 255720.933052 & 248391.690752 \\ \hline
780 & 86 & 10 & 0.25 & 348405.00368 & 340148.071775 \\ \hline
818 & 90 & 10 & 0.25 & 364630.812101 & 355410.694701 \\ \hline
888 & 62 & 10 & 0.25 & 306043.651247 & 297740.00906 \\ \hline
954 & 66 & 10 & 0.25 & 330351.384387 & 321254.463405 \\ \hline
966 & 68 & 10 & 0.25 & 337962.399198 & 329606.663985 \\ \hline
1102 & 76 & 10 & 0.25 & 385372.431742 & 376530.056249 \\ \hline
1149 & 127 & 10 & 0.25 & 529627.844251 & 516435.343938 \\ \hline
1184 & 130 & 10 & 0.25 & 545340.762064 & 531377.485355 \\ \hline
1206 & 86 & 10 & 0.25 & 434705.450777 & 424823.785653 \\ \hline
1288 & 90 & 10 & 0.25 & 461257.277745 & 451382.89648 \\ \hline
1549 & 169 & 10 & 0.25 & 729124.244066 & 701894.583657 \\ \hline
1612 & 113 & 10 & 0.25 & 587309.340404 & 575305.302626 \\ \hline
1802 & 127 & 10 & 0.25 & 666645.03627 & 650960.768662 \\ \hline 
1866 & 130 & 10 & 0.25 & 686638.036543 & 669968.400955 \\ \hline
2367 & 169 & 10 & 0.25 & 904435.683545 & 870912.869511 \\ \hline
\end{tabular}
 \caption{Empirical Evaluation of  {\scshape Optimal-Space-$\lrf$}.} \label{tab:nonprivate_evaluation}
\end{center}
}
\end{table}
In this section, we understand the result of our empirical evaluation of {\scshape Private-Local-$\lrf$}  (see~\figref{spacelow} for the detail description and~\appref{codenonprivate} for the source code).
Recall that in the non-private setting, we want to output a low-rank factorization such that its product $\bM_k$ satisfies the following inequality with high probability:
$$ \| \mathbf{M}_k - \bA \|_F  \leq (1+\alpha) \|\bA - [\bA]_k\|_F. $$

 For the empirical evaluation of our non-private algorithm, we sampled matrices such that every entries are uniform real number between $[0,5000]$. In our experimental set-up, we keep the value of $\alpha = 0.25$ and $k=10$ fixed to get a better understanding of how the approximation changes with the changing values of dimensions. 
Our empirical results are listed in Table~\ref{tab:nonprivate_evaluation}. We see that for all the ranges of dimensions we evaluated on, the value in the column marked as $\| \bA - \bM_k \|_F$ is well within a $(1 + \alpha )$ factor of the corresponding entries marked under the column $\| \bA - [\bA]_k \|_F$. In fact, empirical evidence suggests that our algorithms gives a much better approximation than $\alpha = 0.25$ (in fact, it is closer to $\alpha = 0.05$).  This gives a clear indication that our algorithm performs as the theoretical bound suggests for a large range of dimensions.

%\input{section4}
%\input{section5}

%!TEX root = low.tex

\renewcommand{\bL}{\mathbf{L}}
\section{Analysis of Boutsidis {\it et al.} Under Noisy Storage} \label{app:BWZ16}
The algorithm of Boutsidis {\it et al.}~\cite{BWZ16} maintains fives sketches, $\bM = \bT_l \bA \bT_r$, $\bL=\bS \bA \bT_r$, $\bN=\bT_l \bA \bR$, $\bD=\bA \bR$ and $\bC = \bS \bA$, where $\bS$ and $\bR$ are the embedding for the generalized linear regression and $\bT_l$ and $\bT_r$ are affine embedding matrices. It then computes 
\[ \bX_* = \argmin_{\mathsf{r}(\bX)\leq k } \| \bN \bX \bL - \bM \|_F \] and its SVD as $\bX_* = \bU_* \bSigma_* \bV_*^{\mathsf T}$. It then outputs $\bD \bU_*$, $\bSigma_* $, and $\bV_*^{\mathsf T} \bC$. Note that it does not compute a singular value decomposition of the low-rank approximation, but a different form of factorization. 

Boutsidis {\it et al,}~\cite{BWZ16} prove the following three lemmas, combining which they get their result.
\begin{lemma} \label{lem:BWZ1}
	For all matrices $\bX \in \R^{t \times t}$, with probability at least $98/100$, we have
	\begin{align*}
		(1-\alpha)^2 \| \bA \bR \bX \bS \bA - \bA \|_F^2 & \leq \| \bT_l (\bA \bR \bX \bS \bA - \bA) \bT_r \|_F^2 \leq (1+\alpha)^2 \| \bA \bR \bX \bS \bA - \bA \|_F^2.
	\end{align*} 
\end{lemma}

\begin{lemma} \label{lem:BWZ2}
	Let $\widetilde{\bX}= \argmin_{\mathsf{r}(\bX)\leq k } \| \bA \bR \bX \bS \bA  - \bA \|_F^2$. Then with probability at least $98/100$, we have 
 \begin{align*}
 	\min_{\mathsf{r}(\bX)\leq k } \| \bT_l( \bA \bR \bX \bS \bA  - \bA) \bT_r \|_F^2 & \leq  \| \bA \bR \bX_* \bS \bA  - \bA \|_F^2 \leq  \| \bA \bR \widetilde{\bX} \bS \bA  - \bA \|_F^2.
 \end{align*}	
\end{lemma}

\begin{lemma} \label{lem:BWZ3}
	Let $\widetilde{\bX}= \argmin_{\mathsf{r}(\bX)\leq k } \| \bA \bR \bX \bS \bA  - \bA \|_F^2$. Then with probability at least $98/100$, we have 
 \begin{align*}
 	\|  \bA \bR \widetilde{\bX} \bS \bA  - \bA \|_F^2 & \leq  (1+\alpha) \Delta_k(\bA)^2.
 \end{align*}	
\end{lemma}

To compute the value of $\widetilde{\bX}$, they use a result in generalized rank-constrained matrix approximations, which says that 
\begin{align}
\bN^\dagger \sparen{ \bU_\bN \bU_\bN^{\mathsf T}  \bM \bV_\bL \bV_\bL^{\mathsf T}  }_k \bL^\dagger=  \argmin_{\mathsf{r}(\bX)\leq k } \| \bN \bX \bL - \bM \|_F,   \label{eq:generalizedconstrained}
\end{align}
where $\bU_\bN$ is the matrix whose columns are the left singular vectors of $\bN$ and $\bV_\bL$ is the matrix whose columns are the right singular vectors of $\bL$.

Now, in order to make the above algorithm differentially private, we need to use the same trick as we used in the algorithm for {\scshape Private-Optimal-Space-$\lrf$}; otherwise, we would suffer from the problems mentioned earlier in~\secref{techniques}. More precisely, we compute $\widehat{\bA} = \begin{pmatrix} \bA & \sigma_{\mathsf min} \I \end{pmatrix}$ and then store the following sketches:
$\bM = \bT_l \widehat{\bA} \bT_r + \bN_1$, $\bL=\bS \widehat{\bA} \bT_r + \bN_2$, $\bN=\bT_l \widehat{\bA} \bR + \bN_3$, $\bD=\widehat{\bA} \bR$ and $\bC = \bS \widehat{\bA} + \bN_4$, where $\bR = t^{-1} \bOmega \bPhi$ with $\bOmega$ being a Gaussian matrix.

%is $\widetilde{\bX} = \bN^\dagger \sparen{ \bU_\bN \bU_\bN^{\mathsf T}  \bM \bV_\bL \bV_\bL^{\mathsf T}  }_k \bL^\dagger$.

As in the proof of~\thmref{low_space_private}, there are three terms that contributes to the additive error. The first and the third term are the same as in the proof of~\thmref{low_space_private}. However, the second term differs. 
In order to compute the additive error incurred due to the second term, we note that $\bX$ in~\lemref{BWZ1} needs to be $$\bX:= \widehat{\bX} = \argmin_{\mathsf{r}(\bX)\leq k } \| \bT_l(\widehat{\bA} \bR \bX \bS \widehat{\bA}  - \widehat \bA)\bT_r \|_F^2.$$ 
The reason for this value of $\widehat{\bX}$ is the same as in the proof of~\claimref{thirdspace}.
Moreover, all the occurrence of $\bA$ is replaced by $\widehat{\bA}$. In other words, we get the following by combining~\lemref{BWZ1}, \lemref{BWZ2}, and \lemref{BWZ3}.
\begin{align*}
(1-\alpha)^2 \| \widehat{\bA} \bR \widehat{\bX} (\bS  \widehat{\bA} + \bN_4) -  \widehat{\bA} \|_F^2 \leq  (1+\alpha) \Delta_k(\widehat{\bA})^2.
\end{align*}

In other words, the additive error incurred by this expression is $(1-\alpha) \| \widehat{\bA} \bR \widehat{\bX} \bN_4 \|_F$. Using~\eqnref{generalizedconstrained}, we have 
$\gamma_2= (1-\alpha) \| \widehat{\bA} \bR   \bN^\dagger \sparen{ \bU_\bN \bU_\bN^{\mathsf T}  \bM \bV_\bL \bV_\bL^{\mathsf T}  }_k \bL^\dagger  \bN_4 \|_F$, where $\bN,\bM,$ and $\bL$ are as defined above. This term depends on the  singular values of $\widehat \bA$ and hence can be arbitrarily large.%\footnote{An intuitive () way to see this is by assuming that all the random projection matrices are identity matrices. In that case, the expression reduces to $(1-\alpha) \| \widehat{\bA}   \widehat \bA^\dagger \sparen{ \bU_\bN \bU_\bN^{\mathsf T}  \widehat \bA \bV_\bL \bV_\bL^{\mathsf T}  }_k \widehat \bA^\dagger  \bN_4 \|_F = (1-\alpha) \| \sparen{ \bU_\bN \bU_\bN^{\mathsf T}  \widehat \bA \bV_\bL \bV_\bL^{\mathsf T}  }_k \widehat \bA^\dagger  \bN_4 \|_F$. The term inside $[\cdot ]_k$ just cancels out the last $k$ singular components of $\widehat \bA^\dagger$ leaving the larger singular values in the additive term.}. 

\begin{comment}
\begin{table}[h]
\small{
% \begin{center}
\begin{tabular}{|c|c|c|}
\hline
  & Space & $\#$ sketches \\ \hline
 Clarkson-Woodruff~\cite{CW13} 
 %& ${O}(\mathsf{nn}(\bA) + mk^2/\alpha^2 + nk^{2}/\alpha^4)$ 
 & ${O}(mk/\alpha + nk/\alpha^2)$ & $2$\\ \hline
 Boutsidis {\it et al.}~\cite{BWZ16} 
 %& $O(\mathsf{nn}(\bA) + \poly(k/\alpha) + (m + n)k^2/\alpha)$ 
 & $O((n+m)k /\alpha)$ & $5$\\ \hline
 This work 
 %& ${O}(\mathsf{nn}(\bA) + (m+n)k^2/\alpha^2 + k^3 /\alpha^5)$ 
 & $O((n+m)k /\alpha)$ & $3$\\ \hline
 Lower Bound 
 %& $\Omega(\mathsf{nn}(\bA))$ 
 & $\Omega((n+m)k /\alpha)$~\cite{CW09} & $-$ \\ \hline
 \end{tabular}
 \caption{Comparison of  Non-private Algorithms for $(\alpha, 1/10, 0,k)\mbox{-}\mathsf{LRF}$} \label{table:nonprivate}
%\end{center}
}
\end{table}
\end{comment}

%\end{extended}

%\end{appendix}

%\end{document}

%\input{section3}

%!TEX root = low.tex

\section{Source Codes for Our Experiments} \label{app:sourcecodes}
We run our algorithm of random matrices of size that is randomly sampled. In other words, all the private matrices used in our evaluation are dense matrices. As a result, there is no benefit of using the random projection matrices of Clarkson and Woodruff~\cite{CW13}. Therefore, to ease the overload on the code, we use random Gaussian matrices with appropriate variance as all our random projection matrices. 

\lstset{language=Python}          % Set your language (you can change the language for each code-block optionally)

\subsection{Source Code for {\scshape Optimal-Space-$\lrf$}} \label{app:codenonprivate}
\small
{
\begin{lstlisting}[frame=single]  % Start your code-block
import numpy as np
import math
from numpy import linalg as la

def randomGaussianMatrix(rows,columns,variance):
    G = np.zeros((rows, columns))
    for i in range(rows):
        for j in range(columns):
            G[i][j] = np.random.normal(0, variance)
    return G

def computeSingular(singular, Matrix):
    Sigma = np.zeros(np.shape(Matrix))
    i = 0
    while i < len(Matrix) and i < len(Matrix[0]):
        Sigma[i][i] = singular[i]
        i += 1
    return Sigma

def lowrank(k, A):
    a = np.zeros(np.shape(A))
    Left, singular, Right = np.linalg.svd(A,full_matrices=1, compute_uv=1)
    singularmatrix = computeSingular(singular,A)
    top_right = Right[:k,:]
    top_left = Left[:,:k]
    top_singular = singularmatrix[:k,:k]
    a +=  np.dot(np.dot(top_left, top_singular),top_right)
    return a

def Upadhyay(Ycolumn, Yrow, Z, S, T):
    U, sc, Vc = np.linalg.svd(Ycolumn)
    Ur, sr, V = np.linalg.svd(Yrow)
    t = len(Yrow)
    v = len(S)
    U = U[:, :t]
    V = V[:t, :]
    TopU = np.dot(S, U)
    TopV = np.dot(V, T)
    Us, Ss, Vs = np.linalg.svd(TopU)
    Ut, St, Vt = np.linalg.svd(TopV)
    inn = np.dot( Us.T, np.dot( Z, Vt.T) )
    innlow = lowrank(k,inn)
    SingularS = computeSingular( np.reciprocal(Ss), TopU )
    SingularT = computeSingular( np.reciprocal(St), TopV )
    outerT = np.dot(innlow ,np.dot(SingularT.T, Ut.T) )
    outerS = np.dot(Vs.T,np.dot( SingularS.T, outerT))
    B = np.dot(U, np.dot(outerS,V))
    return B

def Initialization(A,m,n,k,alpha):
    Ak = lowrank(k, A)
    t = int(k/alpha)
    v = int(k/(alpha ** 2))
    Phi = randomGaussianMatrix(n,t,1/t)
    Psi = randomGaussianMatrix(t,m,1/t)
    S = randomGaussianMatrix(v,m,1/v)
    T = randomGaussianMatrix(n,v,1/v)
    Ycolumn = np.dot(A, Phi)
    Yrow = np.dot(Psi, A)
    Z = np.dot(S, np.dot(A, T))
    B =Upadhyay(Ycolumn, Yrow, Z, S, T)
    frobupadhyay = la.norm(A - B, 'fro')
    actual = la.norm(A - Ak, 'fro')
    print (m, n, k,alpha, frobupadhyay,actual)

i=1
n=50
while n< 200:
    m=9*n+int(50*np.random.rand())
    alpha=0.25
    k=10
    Initialization(randomMatrix(m,n,1,5000), m,n,k,alpha)
    Initialization(5000*np.random.rand(m, n), m,n,k,alpha)
    m = 14 * n + int(50*np.random.rand())
    Initialization(randomMatrix(m,n,1,5000), m,n,k,alpha)
    Initialization(5000*np.random.rand(m, n),m,n,k,alpha)
    n=50 + (i * int(20 * np.random.rand()))
    i=i + 1
\end{lstlisting}
}

\subsection{Source Code for Comparing Private Algorithms} \label{app:codeprivate}
In this section, we present our source code to compare our  algorithm with that of Hardt and Roth~\cite{HR12}.  
\begin{lstlisting}[frame=single]  % Start your code-block
import numpy as np
import math
from numpy import linalg as la

def randomGaussianMatrix(rows,columns,variance):
    G = np.zeros((rows, columns))
    for i in range(rows):
        for j in range(columns):
            G[i][j] = np.random.normal(0, variance)
    return G

def randomMatrix(rows,columns,a,b):
    A = np.zeros((rows, columns))
    for i in range(rows):
        for j in range(columns):
            A[i][j] = np.random.randint(a, b)
    return A
    
def computeSingular(singular, Matrix):
    Sigma = np.zeros(np.shape(Matrix))
    i = 0
    while i < len(Matrix) and i < len(Matrix[0]):
        Sigma[i][i] = singular[i]
        i += 1
    return Sigma

def lowrank(k, A):
    a = np.zeros(np.shape(A))
    Left, singular, Right = np.linalg.svd(A,full_matrices=1, compute_uv=1)
    singularmatrix = computeSingular(singular,A)
    top_right = Right[:k,:]
    top_left = Left[:,:k]
    top_singular = singularmatrix[:k,:k]
    a +=  np.dot(np.dot(top_left, top_singular),top_right)
    return a

def Upadhyay(Ycolumn, Yrow, Z, S, T):
    U, sc, Vc = np.linalg.svd(Ycolumn)
    Ur, sr, V = np.linalg.svd(Yrow)
    t = len(Yrow)
    v = len(S)
    U = U[:, :t]
    V = V[:t, :]
    TopU = np.dot(S, U)
    TopV = np.dot(V, T)
    Us, Ss, Vs = np.linalg.svd(TopU)
    Ut, St, Vt = np.linalg.svd(TopV)
    inn = np.dot( Us.T, np.dot( Z, Vt.T) )
    innlow = lowrank(k,inn)
    SingularS = computeSingular( np.reciprocal(Ss), TopU )
    SingularT = computeSingular( np.reciprocal(St), TopV )
    outerT = np.dot(innlow ,np.dot(SingularT.T, Ut.T) )
    outerS = np.dot(Vs.T,np.dot( SingularS.T, outerT))
    B = np.dot(U, np.dot(outerS,V))
    return B

def Initialization(A,m,n,k,alpha):
    Ak = lowrank(k, A)
    actual = la.norm(A - Ak, 'fro')
    HR = HardtRoth(A,m,n,k)
    frobHR = la.norm(A - HR, 'fro')
    m = m+n
    epsilon = 1
    delta = 1/(m)
    t = int(k/alpha)
    v = int(k/(alpha ** 2))
    Phi = randomGaussianMatrix(n,t,1/t)
    Psi = randomGaussianMatrix(t,m,1/t)
    S = randomGaussianMatrix(v,m,1/v)
    T = randomGaussianMatrix(n,v,1/v)
    sigma = 4*math.log(1/delta)*math.sqrt(t*math.log(1/delta)) / epsilon
    Ahat = A
    scaledI = sigma*np.identity(n)
    for i in range(n):
        Ahat = np.vstack([Ahat, scaledI[i]])
    for i in range(n):
        A = np.vstack([A, scaledI[i] - scaledI[i]])
    rho1 = math.sqrt(-4*math.log(delta)/epsilon**2)
    rho2 = math.sqrt(-6*math.log(delta)/epsilon**2)
    Ycolumn = np.dot(Ahat, Phi) + randomGaussianMatrix(m,t,rho1)
    Yrow = np.dot(Psi, Ahat)
    Z = np.dot(S, np.dot(Ahat, T)) + randomGaussianMatrix(v,v,rho2)
    L = np.dot(Yrow, T)
    N = np.dot(S, Ycolumn)
    B =Upadhyay(Ycolumn, Yrow, Z, S, T)
    frobupadhyay = la.norm(Ahat - B, 'fro')
    print (m, n, k,alpha,frobupadhyay,frobHR,actual)

def HardtRoth(A,m,n,k):
    Omega = randomGaussianMatrix(n,2*k,1)
    Y = np.dot(A, Omega)
    epsilon = 1
    delta = 1/(m)
    sigma = - 32 * k * math.log(delta)/epsilon**2
    N = randomGaussianMatrix(m, 2*k , sigma)
    Y += N
    Q = np.zeros(Y.shape)
    for i in range(Y.shape[1]):
        avec = Y[:, i]
        q = avec
        for j in range(i):
            q = q - np.dot(avec, Q[:, j]) * Q[:, j]

        Q[:, i] = q / la.norm(q)
    a = Q[0][0]
    for i in range(Y.shape[0]):
        for j in range(Y.shape[1]):
            if a < Q[i][j]:
                a = Q[i][j]
    rho =  32*(a**2)*k*math.log(8*k/delta)*math.log(1/delta)/epsilon**2
    inner = np.dot(Q.T, A) + randomGaussianMatrix(2*k,n,rho)
    output = np.dot(Q, inner)
    return output

i=1
n=50
while n < 200:
    m = 9 * n + int(50*np.random.rand())
    alpha = 0.25
    k = 10
    Initialization(randomMatrix(m,n,1,5000), m,n,k,alpha)
    Initialization(5000*np.random.rand(m, n), m,n,k,alpha)
    m = 14 * n + int(50*np.random.rand())
    Initialization(randomMatrix(m,n,1,5000), m,n,k,alpha)
    Initialization(5000*np.random.rand(m, n),m,n,k,alpha)
    n = 50 + (i * int(20 * np.random.rand()))
    i = i + 1
\end{lstlisting}

\subsection{Source Code for Analyzing Additive Error for {\scshape Optimal-Space-Private-$\lrf$}} \label{app:codeprivateadditive}
\begin{lstlisting}[frame=single]
import numpy as np
import math
from numpy import linalg as la

def randomGaussianMatrix(rows, columns, variance):
    G = np.zeros((rows, columns))
    for i in range(rows):
        for j in range(columns):
            G[i][j] = np.random.normal(0, variance)
    return G

def randomMatrixUnit(rows, columns, a, b):
    return (b-a) ** np.random.rand(rows, columns)

def randomMatrix(rows, columns,k, a, b):
    A = np.zeros((rows, columns))
    for i in range(rows):
        a = 0
        for j in range(k):
            A[i][j] = np.random.randint(a, b)
    return A

def computeSingular(singular, Matrix):
    Sigma = np.zeros(np.shape(Matrix))
    i = 0
    while i < len(Matrix) and i < len(Matrix[0]):
        Sigma[i][i] = singular[i]
        i += 1
    return Sigma

def lowrank(k, A):
    a = np.zeros(np.shape(A))
    Left, singular, Right = np.linalg.svd(A, full_matrices=1, compute_uv=1)
    singularmatrix = computeSingular(singular, A)
    top_right = Right[:k, :]
    top_left = Left[:, :k]
    top_singular = singularmatrix[:k, :k]
    a += np.dot(np.dot(top_left, top_singular), top_right)
    return a

def Upadhyay(A, m, n, k, alpha, epsilon, delta):
    m = m + n
    t = int(0.05 * k / alpha)
    v = int(0.05 * k / (alpha ** 2))
    Phi = randomGaussianMatrix(n, t, 1 / t)
    Psi = randomGaussianMatrix(t, m, 1 / t)
    S = randomGaussianMatrix(v, m, 1 / v)
    T = randomGaussianMatrix(n, v, 1 / v)
    sigma = 4*math.log(1 / delta)*math.sqrt(t*math.log(1 / delta)) / epsilon
    Ahat = A
    scaledI = sigma * np.identity(n)
    for i in range(n):
        Ahat = np.vstack([Ahat, scaledI[i]])
    for i in range(n):
        A = np.vstack([A, scaledI[i] - scaledI[i]])
    rho1 = math.sqrt(-4 * math.log(delta) / epsilon ** 2)
    rho2 = math.sqrt(-6 * math.log(delta) / epsilon ** 2)
    Ycolumn = np.dot(Ahat, Phi) + randomGaussianMatrix(m, t, rho1)
    Yrow = np.dot(Psi, Ahat)
    Z = np.dot(S, np.dot(Ahat, T)) + randomGaussianMatrix(v, v, rho2)
    U, sc, Vc = np.linalg.svd(Ycolumn)
    Ur, sr, V = np.linalg.svd(Yrow)
    U = U[:, :t]
    V = V[:t, :]
    TopU = np.dot(S, U)
    TopV = np.dot(V, T)
    Us, Ss, Vs = np.linalg.svd(TopU)
    Ut, St, Vt = np.linalg.svd(TopV)
    inn = np.dot(Us.T, np.dot(Z, Vt.T))
    innlow = lowrank(k, inn)
    SingularS = computeSingular(np.reciprocal(Ss), TopU)
    SingularT = computeSingular(np.reciprocal(St), TopV)
    outerT = np.dot(innlow, np.dot(SingularT.T, Ut.T))
    outerS = np.dot(Vs.T, np.dot(SingularS.T, outerT))
    B = np.dot(U, np.dot(outerS, V))
    expected =sigma * math.sqrt(n) + math.sqrt(-k *m * math.log(delta))
    return (B, Ahat, expected)

def HardtRoth(A, m, n, k, epsilon, delta):
    Omega = randomGaussianMatrix(n, 2 * k, 1)
    Y = np.dot(A, Omega)
    epsilon = 1
    delta = 1 / (m)
    sigma = - 32 * k * math.log(delta) / epsilon ** 2
    N = randomGaussianMatrix(m, 2 * k, sigma)
    Y += N
    Q = np.zeros(Y.shape)
    for i in range(Y.shape[1]):
        avec = Y[:, i]
        q = avec
        for j in range(i):
            q = q - np.dot(avec, Q[:, j]) * Q[:, j]

        Q[:, i] = q / la.norm(q)
    a = Q[0][0]
    for i in range(Y.shape[0]):
        for j in range(Y.shape[1]):
            if a < Q[i][j]:
                a = Q[i][j]
    rho = 32*(a**2)*k*math.log(8*k/delta)*math.log(1/delta)/epsilon**2
    inner = np.dot(Q.T, A) + randomGaussianMatrix(2 * k, n, rho)
    output = np.dot(Q, inner)
    return output

def Initialization(m, n, k, alpha):
    epsilon = 1
    delta = 1 / (m ** 2)
    A = randomMatrix(m,n,k,1,20)
    Ak = lowrank(k, A)
    actual = la.norm(A - Ak, 'fro')

    HR = HardtRoth(A, m, n, k, epsilon, delta)
    frobHR = la.norm(A - HR, 'fro')

    (Upadhyay16, Ahat, expected) = Upadhyay(A, m,n, k, alpha, epsilon, delta)
    frobupadhyay = la.norm(Ahat - Upadhyay16, 'fro')
     print (expected, frobupadhyay, frobHR)

#Use this part of the code for varying dimension
i = 1
n = 50
while n < 200:
    m = 9 * n + int(50 * np.random.rand())
    alpha = 0.1
    k = 10
    Initialization(randomMatrix(m, n, 0, 5), m, n, k, alpha)
    m = 14 * n + int(5 * np.random.rand())
    Initialization(randomMatrix(m, n, 0, 5), m, n, k, alpha)
    n = 50 + (i * int(5 * np.random.rand()))
    i = i + 1

#Use this part of the code for varying alpha
n=200
alpha = 0.10
while alpha < 0.25:
    m = 9 * n 
    k = 10
    Initialization(m,n,k,alpha)
    m = 14 * n
    Initialization(m,n,k,alpha)
    alpha = alpha + 0.02

#Use this part of the code for varying k
n=50
k = 10
alpha = 0.25
while k<20:
    m = 9 * n
    Initialization(m,n,k,alpha)
    m = 14 * n
    Initialization(m,n,k,alpha)
    k +=1
\end{lstlisting}

\subsection{Source Code for {\scshape Private-Local-$\lrf$}} \label{app:codelocal}
\begin{lstlisting}[frame=single]  % Start your code-block
import numpy as np
import math
from numpy import linalg as la

def randomGaussianMatrix(rows,columns,variance):
    G = np.zeros((rows, columns))
    for i in range(rows):
        for j in range(columns):
            G[i][j] = np.random.normal(0, variance)
    return G

def randomMatrix(rows,columns,a,b):
    A = np.zeros((rows, columns))
    for i in range(rows):
        for j in range(columns):
            A[i][j] = np.random.randint(a, b)
    return A

def lowrank(k, A):
    a = np.zeros(np.shape(A))
    Left, singular, Right = np.linalg.svd(A,full_matrices=1, compute_uv=1)
    top_left = Left[:,:k]
    a +=  np.dot(np.dot(top_left, top_left.T),A)
    return a

def Local(A,Ycolumn, L, Z, S):
    N = np.dot(S,Ycolumn)
    Un, Sn, Vn = np.linalg.svd(N)
    Ul, Sl, Vl = np.linalg.svd(L)
    inner = np.dot(Un,np.dot(Un.T,np.dot(Z,np.dot(Vl.T,Vl))))
    innerlowrank = lowrank(k,inner)
    output = np.dot(la.pinv(N),np.dot(innerlowrank,la.pinv(L)))
    U, s, V = np.linalg.svd(output)
    Y = np.dot(Ycolumn, U)
    Q = np.zeros(Y.shape)
    for i in range(Y.shape[1]):
        avec = Y[:, i]
        q = avec
        for j in range(i):
            q = q - np.dot(avec, Q[:, j]) * Q[:, j]
        Q[:, i] = q / la.norm(q)
    B = np.dot(Q, np.dot(Q.T,A))
    return B

def Initialization(A,m,n,k,alpha):
    t = int(k / (alpha))
    v = int(k / (alpha ** 2))
    Ak = lowrank(k, A)
    actual = la.norm(A - Ak, 'fro')
    epsilon = 0.1
    delta = 1/(m**10)
    Phi = randomGaussianMatrix(n, t, 1 / t)
    Psi = randomGaussianMatrix(t, m, 1 / t)
    S = randomGaussianMatrix(v, m, 1 / v)
    T = randomGaussianMatrix(n, v, 1 / v)
    rho1 = math.sqrt(-4 * math.log(delta)/epsilon**2)
    rho2 = m * math.sqrt(-6 * math.log(delta)/epsilon**2)
    Ycolumn = np.dot(A, Phi) + randomGaussianMatrix(m,t,rho1)
    Yrow = np.dot(np.dot(Psi, A), T) + randomGaussianMatrix(t,v,rho2)
    Z = np.dot(S, np.dot(A, T)) + randomGaussianMatrix(v,v,rho2)
    B = Local(A, Ycolumn, Yrow, Z, S)
    frobupadhyay = la.norm(A - B, 'fro')
    print (m, n, k, alpha, frobupadhyay, actual)

i=1
n=50
while n < 200:
    m = 9 * n + int(50*np.random.rand())
    alpha = 0.25
    k = 10
    Initialization(randomMatrix(m,n,1,5000), m,n,k,alpha)
    Initialization(5000*np.random.rand(m, n), m,n,k,alpha)
    m = 14 * n + int(50*np.random.rand())
    Initialization(randomMatrix(m,n,1,5000), m,n,k,alpha)
    Initialization(5000*np.random.rand(m, n),m,n,k,alpha)
    n = 50 + (i * int(20 * np.random.rand()))
    i = i + 1
\end{lstlisting}

\subsection{Source Code for Analyzing Additive Error for {\scshape Private-Local-$\lrf$}} \label{app:codelocaladditive}
\begin{lstlisting}[frame=single]  % Start your code-block
import numpy as np
import math
from numpy import linalg as la

def randomGaussianMatrix(rows,columns,variance):
    G = np.zeros((rows, columns))
    for i in range(rows):
        for j in range(columns):
            G[i][j] = np.random.normal(0, variance)
    return G

def randomMatrix(rows,columns,k,a,b):
    A = np.zeros((rows, columns))
    for i in range(rows):
        for j in range(k):
            A[i][j] = np.random.randint(a, b)
    return A

def lowrank(k, A):
    a = np.zeros(np.shape(A))
    Left, singular, Right = np.linalg.svd(A,full_matrices=1, compute_uv=1)
    top_left = Left[:,:k]
    a +=  np.dot(np.dot(top_left, top_left.T),A)
    return a

def Local(A,Ycolumn, L, Z, S):
    N = np.dot(S,Ycolumn)
    Un, Sn, Vn = np.linalg.svd(N)
    Ul, Sl, Vl = np.linalg.svd(L)
    inner = np.dot(Un , np.dot(Un.T , np.dot(Z , np.dot(Vl.T,Vl ) ) ) )
    innerlowrank = lowrank(k,inner)
    output = np.dot( np.linalg.pinv(N),np.dot(innerlowrank,np.linalg.pinv(L)))
    U, s, V = np.linalg.svd(output)
    Y = np.dot(Ycolumn, U)
    Q = np.zeros(Y.shape)
    for i in range(Y.shape[1]):
        avec = Y[:, i]
        q = avec
        for j in range(i):
            q = q - np.dot(avec, Q[:, j]) * Q[:, j]

        Q[:, i] = q / la.norm(q)
    B = np.dot(Q, np.dot(Q.T,A))
    return B

def Initialization(m,n,k,alpha):
    t = int(0.05*k / (alpha))
    v = int(0.02*k / (alpha ** 2))
    A = randomMatrix(m,n,k,1,20)
    Ak = lowrank(k, A)
    actual = la.norm(A - Ak, 'fro')
    epsilon = 0.1
    delta = 1/(m**2)
    Phi = randomGaussianMatrix(n, t, 1 / t)
    Psi = randomGaussianMatrix(t, m, 1 / t)
    S = randomGaussianMatrix(v, m, 1 / v)
    T = randomGaussianMatrix(n, v, 1 / v)
    rho1 = math.sqrt(-4 * math.log(delta)/epsilon**2)
    rho2 = m * math.sqrt(-6 * math.log(delta)/epsilon**2)

    Ycolumn = np.dot(A, Phi) + randomGaussianMatrix(m,t,rho1)
    Yrow = np.dot(np.dot(Psi, A), T) + randomGaussianMatrix(t,v,rho2)
    Z = np.dot(S, np.dot(A, T)) + randomGaussianMatrix(v,v,rho2)
    B = Local(A, Ycolumn, Yrow, Z, S)
    frobupadhyay = la.norm(A - B, 'fro')
    expected = v * math.sqrt(-m * math.log(delta))/epsilon
    print (frobupadhyay,  expected)


#Use this part of the code for varying dimension
i = 1
n = 50
while n < 200:
    m = 9 * n + int(50 * np.random.rand())
    alpha = 0.1
    k = 10
    Initialization(randomMatrix(m, n, 0, 5), m, n, k, alpha)
    m = 14 * n + int(5 * np.random.rand())
    Initialization(randomMatrix(m, n, 0, 5), m, n, k, alpha)
    n = 50 + (i * int(5 * np.random.rand()))
    i = i + 1

#Use this part of the code for varying alpha
i=1
n=200
alpha = 0.10
while alpha < 0.25:
    m = 9 * n
    k = 10
    Initialization(m,n,k,alpha)
    m = 14 * n
    Initialization(m,n,k,alpha)
    alpha = alpha + 0.02

#Use this part of the code for varying k
n=50
k = 10
alpha = 0.25
while k<20:
    m = 9 * n 
    Initialization(m,n,k,alpha)
    m = 14 * n
    Initialization(m,n,k,alpha)
    k +=1
\end{lstlisting}

\end{appendix}

\addcontentsline{toc}{section}{References}

\bibliographystyle{alpha}
{ \bibliography{low}}

\pagebreak

\end{document}